\documentclass
  [
    DIV=14,
    fontsize=12,
    abstract=true,
    bibliography=totoc
  ]
  {scrartcl}
\setkomafont{date}{\small\rmfamily}
\setkomafont{author}{\large\scshape}

\usepackage{authblk}

\newcommand
{\NameEmail}[2]
{%
  \stackunder%
  {#1}%
  {%
   \clap{%
   \normalfont%
   \footnotesize%
   \texttt{#2}%
  }}%
  \hspace{-3.5pt}% 
}

\newcommand\keywords[1]{%
  \begingroup
  \renewcommand\thefootnote{}% Remove footnote number
  \footnote{{keywords:} #1}%
  \addtocounter{footnote}{-1}% Reset footnote counter
  \endgroup
}
\newcommand\MSCcodes[1]{%
  \begingroup
  \renewcommand\thefootnote{}% Remove footnote number
  \footnote{{MSC codes:} #1}%
  \addtocounter{footnote}{-1}% Reset footnote counter
  \endgroup
}

\usepackage{scrlayer-scrpage}
\clearpairofpagestyles 
\ihead{Superspace Supergravity} 
\ohead{\headmark}
\automark{section}

\NewCommandCopy{\oldparagraph}{\paragraph}
\RenewDocumentCommand{\paragraph}{s O{#3} m}{%
  \IfBooleanTF{#1}
    {\oldparagraph*{#3.}}       % Handles \paragraph*{...}
    {\oldparagraph[#2]{#3.}}    % Handles \paragraph{...} and \paragraph[...]{...}
}

\usepackage{caption}
\usepackage
  [leftcaption]
  {sidecap}

\usepackage
 [inline]
 {enumitem}
\setlist[enumerate,1]{label=\textbf{\textup{(\roman*)}}}
\setlist{
  topsep=1pt,
  itemsep=0pt,
  leftmargin=1cm,
  before=\leavevmode
}

\usepackage[
  backend=biber,
  style=alphabetic,
  giveninits=true, 
  backref=true
]{biblatex}
\usepackage[T1]{fontenc}
\usepackage{lmodern}
\usepackage{anyfontsize}

\usepackage{multirow}
\usepackage{hhline}

\usepackage{tabularray}
\UseTblrLibrary{booktabs}

\usepackage{adjustbox}
\usepackage{tcolorbox}

\newtcolorbox{standout}{
  colback=gray!15,
  boxrule=0pt,
  left=.3cm,
  right=.3cm,
  top=.18cm,
  bottom=.18cm,
  boxsep=0pt
}

\usepackage{amsmath}
\allowdisplaybreaks
\usepackage{mathtools}
\usepackage{amssymb}
\usepackage{amsthm}
\usepackage{xfrac}

\usepackage{stmaryrd} % for \sslash
\SetSymbolFont{stmry}{bold}{U}{stmry}{m}{n}

\usepackage{slashed} % for \slashed
\usepackage{scalerel} % for scaling brackets
\usepackage{stackengine} % for incremental scaling
\usepackage[safe]{tipa} % for \esh
\DeclareFontFamilySubstitution{T3}{lmr}{cmr}
\usepackage[
  cal=euler,
  scr=dutchcal
]
 {mathalpha} % for control over script fonts

 \newcommand{\bracket}[3]{%
  \stretchleftright
    {#1}
    {%
      \ensurestackMath{\addstackgap[1pt]{#2}}%
      \vrule width 0pt depth 2pt height 0pt
    }
    {#3}%
} 
\newcommand{\scaledbracket}[3]{%
  \ThisStyle{%
    \stretchleftright
      {#1}
      {
        \ensurestackMath{\addstackgap[1pt]{\SavedStyle #2}}%
        \vrule width 0pt depth 1.5pt height 0pt
      }
      {#3}%
  }%
}
\newcommand{\bracketmid}[4]{%
  \stretchleftright{#1}{%
    \ensurestackMath{%
      \addstackgap[2pt]{#2}%
      \,\stretchrel*{|}{\addstackgap[2pt]{#2#3}}\,%
      \addstackgap[2pt]{#3}%
    }%
  }{#4}%
}

\theoremstyle{plain}
\newtheorem{theorem}{Theorem}[section]
\newtheorem{lemma}[theorem]{Lemma}
\newtheorem{proposition}[theorem]{Proposition}

\theoremstyle{definition}
\newtheorem{definition}[theorem]{Definition}
\newtheorem{example}[theorem]{Example}

\theoremstyle{remark}
\newtheorem{remark}[theorem]{Remark}
\usepackage[
  colorlinks=true,
  linkcolor=darkgreen,
  citecolor=darkgreen,
  urlcolor=darkgreen
]{hyperref}
\usepackage{xurl} % allow line breaks in URLs

\usepackage{cleveref}
\crefname{equation}{}{}
\crefname{section}{\S}{\S\S}
\crefname{subsection}{\S}{\S\S}
\crefname{subsubsection}{\S}{\S\S}
\crefname{definition}{Def.}{Defs.}
\crefname{theorem}{Thm.}{Thms.}
\crefname{corollary}{Cor.}{Cors.}
\crefname{lemma}{Lem.}{Lems.}
\crefname{proposition}{Prop.}{Props.}
\crefname{remark}{Rem.}{Rems.}
\crefname{notation}{Ntn.}{Ntns.}
\crefname{fact}{Fact}{Fact}
\crefname{example}{Ex.}{Exs.}
\crefname{figure}{Fig.}{Figs.}
\crefname{table}{Tab.}{Tabs.}
\crefname{footnote}{ftn.}{ftns.}
\Crefname{footnote}{Ftn.}{Ftns.}

\usepackage{tikz}
\usetikzlibrary{
  cd,
  calc,
  nfold,
  arrows.meta,
  backgrounds,
  decorations,
  decorations.pathmorphing, 
}

\tikzcdset{
  arrow style=tikz, 
  diagrams={
    trim left=
    {([xshift=5pt]current bounding box.west)},
    trim right=
    {([xshift=-5pt]current bounding box.east)},
    baseline=
    {([yshift=-2pt]current bounding box.center)},
    >={
      Computer Modern Rightarrow[
        length=4pt, width=4pt
      ]
    },
    hook/.style={
      {Hooks[right, length=2pt, width=8pt]}->
    },
    hook'/.style={
      {Hooks[left, length=2pt, width=8pt]}->
    },
    shorten=0pt,
  }
}

\definecolor{darkblue}{rgb}{0.05,0.25,0.65}
\definecolor{darkgreen}{RGB}{20,140,10}
\definecolor{lightgray}{rgb}{0.9,0.9,0.9}
\definecolor{darkorange}{RGB}{200,100,5}
\definecolor{darkyellow}{rgb}{.91,.91,0}
\definecolor{lightolive}{RGB}{225, 220, 185}

\let\originalsslash\sslash
\renewcommand{\sslash}{\mathord{\originalsslash}}

\makeatletter
\newcommand{\cpt}{\mathpalette\cpt@inner\relax}
\newcommand{\cpt@inner}[2]{%
  \scalebox{0.5}[0.9]{$#1\cup$}% Scales the cup relative to the current style
  #1\{\infty\}% Typesets the infinity set in the current style
}
\makeatother

\makeatletter
\newcommand{\plus}{\mathpalette\sqcpt@inner\relax}
\newcommand{\sqcpt@inner}[2]{%
  \scalebox{0.5}[0.9]{$#1\sqcup$}% Scales the cup relative to the current style
  #1\{\infty\}% Typesets the infinity set in the current style
}
\makeatother

\newcommand{\grayunderbrace}[2]{\mathcolor{gray}{\underbrace{\mathcolor{black}{#1}}}_{\mathcolor{gray}{#2}}}
\newcommand{\grayoverbrace}[2]{\mathcolor{gray}{\overbrace{\mathcolor{black}{#1}}}^{\mathcolor{gray}{#2}}}

\tikzset{
  snake left/.style={
    rounded corners,
    to path={
      let \p1 = (\tikztostart.east),
          \p2 = (\tikztotarget.west),
          \p3 = ($(\p1)!0.5!(\p2)$),
          \n1 = {8pt} 
      in
      (\p1)
      -- (\x1 + \n1, \y1)
      -- (\x1 + \n1, \y3)
      -- (\x2 - \n1, \y3) \tikztonodes
      -- (\x2 - \n1, \y2)
      -- (\p2)
    }
  }
}

\tikzset{
  uphordown/.style={
    rounded corners,
    to path={
      let \p1 = (\tikztostart.north),
          \p2 = (\tikztotarget.north),
          \n1 = {max(\y1,\y2) + 8pt}
      in
      (\p1)
      -- (\x1, \n1)
      -- (\x2, \n1) \tikztonodes 
      -- (\p2)
    }
  }
}

\tikzset{
  downhorup/.style={
    rounded corners,
    to path={
      let \p1 = (\tikztostart.south),
          \p2 = (\tikztotarget.south),
          \n1 = {min(\y1,\y2) - 8pt}
      in
      (\p1)
      -- (\x1, \n1)
      -- (\x2, \n1) \tikztonodes 
      -- (\p2)
    }
  }
}

\tikzset{
  rightvertleft/.style={
    rounded corners,
    to path={
      let \p1 = (\tikztostart.east),
          \p2 = (\tikztotarget.east),
          \n1 = {max(\x1,\x2) + 8pt}
      in
      (\p1)
      -- (\n1, \y1)
      -- (\n1, \y2) \tikztonodes 
      -- (\p2)
    }
  }
}

\tikzset{
  leftvertright/.style={
    rounded corners,
    to path={
      let \p1 = (\tikztostart.west),
          \p2 = (\tikztotarget.west),
          \n1 = {min(\x1,\x2) - 8pt}
      in
      (\p1)
      -- (\n1, \y1)
      -- (\n1, \y2) \tikztonodes 
      -- (\p2)
    }
  }
}

\newcommand{\inlinetikzcd}[1]{\begin{tikzcd}[sep=small, ampersand replacement=\&]#1\end{tikzcd}}

\newcommand{\CyclicGroup}[1]{\mathbb{Z}_{/#1}}

\renewcommand{\setminus}{-}

\newcommand{\defneq}{\equiv}

\newcommand{\shape}{%
  {\mathord{\scalerel*{\raisebox{0.1ex}{\textup{\textesh}}}{f}}%
  \mkern 2mu}
}

\newcommand{\ii}{\mathrm{i}}

\newcommand{\dir}[2]{\overset{\scalebox{.54}{$\mathclap{\;#1}$}}{#2}}

\newcommand{\dd}{\mathrm{d}}

\newcommand{\FR}{\mathbb{R}}

\newcommand\bosonic[1]{\mathstrut\mkern0mu#1\mkern-14mu\raise1.7ex%
  \hbox{$\scriptstyle\rightsquigarrow$}}

\newcommand{\evencoordinateindex}{r}
\newcommand{\oddcoordinateindex}{\rho}

\newcommand{\BosonicCounit}{{\epsilon^{\rightsquigarrow}}}

\newcommand{\FullSpacetimeFlux}{F}

\usepackage{amsmath}

\makeatletter
\newcommand\makebig[2]{%
  \@xp\newcommand\@xp*\csname#1\endcsname{\bBigg@{#2}}%
  \@xp\newcommand\@xp*\csname#1l\endcsname{\@xp\mathopen\csname#1\endcsname}%
  \@xp\newcommand\@xp*\csname#1r\endcsname{\@xp\mathclose\csname#1\endcsname}%
}
\makeatother

\makebig{biggg} {3.0}
\makebig{Biggg} {3.5}
\makebig{bigggg}{4.0}
\makebig{Bigggg}{4.5}
\makebig{biggggg}{5.0}
\makebig{Biggggg}{5.5}

\begin{document}
%%%%%%%%%%%%%%%%%%%%%%%%%

%%%%%%%%%%%%%%%%%%%%%%%%%%%%%%%%%%
%vertical spacing around displayed equations %
\setlength{\abovedisplayskip}{3.5pt}
\setlength{\belowdisplayskip}{3.5pt}
\setlength{\abovedisplayshortskip}{-6pt}
\setlength{\belowdisplayshortskip}{4pt}
%%%%%%%%%%%%%%%%%%%%%%%%%%%%%%%%%%%%%

\titlehead{
  \href{https://ncatlab.org/nlab/show/Center+for+Quantum+and+Topological+Systems}{CQTS} Lecture Notes
}

\subject{
  Introduction to modernized
}
\title{
  Higher Superspace Supergravity
}
\subtitle{
  \& its IR Completions by Flux Quantization
  \\[.3cm]
  \phantom{A}
}

\author[1]
{\NameEmail
  {Grigorios Giotopoulos}
  {gg2658@nyu.edu}
}

\author[1,2]
{\NameEmail
  {Hisham Sati}
  {hsati@nyu.edu}
}

\author[1]
{\NameEmail
  {Urs Schreiber}
  {us13@nyu.edu}
}

\affil[1]{Mathematics Program and Center for Quantum and Topological Systems (CQTS), \newline New York University Abu Dhabi, UAE}
\affil[2]{The Courant Institute for Mathematical Sciences, \newline New York University, New York, USA}

\maketitle

\begin{abstract}
  It is an old idea that higher-dimensional super-gravity (SuGra) is put on-shell just by imposing Bianchi identities on super-field strengths over super-spacetimes subject to super-torsion constraints.
  We give a modernized, rigorous account and review recent developments, pointing out how this perspective lends itself to the construction of infrared completions of SuGra by electromagnetic flux quantization in differential nonabelian cohomology theories $\mathcal{A}_{\mathrm{dff}}$ lifting the coefficient $L_\infty$-algebra $\mathfrak{l}\mathcal{A}$ of the super-Bianchi identities. 
  After surveying necessary background, we
  highlight:
  \begin{enumerate}
  \item our recent proof that solutions of 11D SuGra are equivalent to the $\mathfrak{l}S^4$-Bianchi identities on super $C$-field flux over super-torsion-free 11D super-spacetimes, 

  \item how from this the nonlinearly self-dual gauge sector on 6D M5-brane worldvolumes is equivalent to the $\mathfrak{l}_{S^4}S^7$-Bianchi identity on super $B$-field flux,
  
  \item and the recent complete discussion of superspace dimensional reduction of this situation to 10D IIA SuGra via $\mathfrak{l}\mathrm{Cyc}(S^4)$-Bianchi identities on super NS/RR-flux, together with its extension to further reduction to 9 SuGra via $\mathfrak{l}\mathrm{Tor}(S^4)$-Bianchis on the corresponding super-fluxes. 
  \end{enumerate}
  We close by indicating how this implies consistent infrared completions 
  of 11D SuGra by $C$-field flux quantization in 4-Cohomotopy, 
  of 10D IIA SuGra by NS/RR-field flux-quantization in a form of twisted unstable K-theory, 
  and
  of M5-brane worldvolumes by $B$-field flux-quantization in twisted (twistorial) relative Cohomotopy. The latter admits geometric engineering of experimentally relevant topological quantum orders.
\end{abstract}

\keywords{
 supergravity,
 superspace,
 higher gauge theory,
 dimensional reduction,
 branes,
 flux-quantization,
 IR-completion
}

\MSCcodes{
  Primary:
  83E50, %Supergravity	
  58A50, %Supermanifolds and graded manifolds
  81T30; %String and superstring theories; other extended objects (e.g., branes) in quantum field theory 
  Secondary:
  55P62, %Rational homotopy theory
  55Q55, %Cohomotopy groups
  81T70, %Quantization in field theory; cohomological methods 
  17B55 %Homological methods in Lie (super)algebras
}

\setcounter{tocdepth}{2}
\tableofcontents

\vfill

%%%%%%%%%%%%
\paragraph
{Acknowledgements}
%%%%%%%%%%%
We thank the organizers and the participants of the workshop \emph{\href{https://www.theory-challenges.eu/events/spqaecis}{Symmetries, Protected Quantities, and Exact Computations in Supergravity}} (Nesselwang, 29/06--03/07 2026), where parts of these lectures were first presented. 

The authors acknowledge funding by \textit{Tamkeen UAE} under the \textit{Abu Dhabi Research Institute Grant} \texttt{CG008}.

\newpage

%%%%%%%%%%%%%
%%%%%%%%%%%%%
\section
{Overview}
%%%%%%%%%%%%%

Super-gravity (SuGra) originated over 50 years ago with \cite{FreedmanEtAl1976,DeserZumino1976} following \cite{VolkovAkulov1972,VolkovSoroka1973}. Its superspace formulation was pioneered by \cite{AkulovVolkovSoroka1977,WessZumino1977,GrimmWessZumino1979,SiegelGates1979,GatesStelleWest1980,Howe1982,DAuriaFre1982,Gates1983,CarrGatesOerter1987,GatesLuOerter1989}. General introductions and reviews include: \parencites{vanNieuwenhuizen1981}{Cremmer1982}{West1986}{DuffNilssonPope1986}{SalamSezgin1989}{CDF1991}{WessBagger1992}{BuchbinderKuzenko1995}[\S31]{Weinberg2000}{FreedmanVanProeyen2012}
[\S6--9]{Fre2013}{Ortin2015}{TraubenbergEtAl2020}{Sezgin2023}{DomingoTraubenberg2023}. We follow \cite{GSS24-SuGra,GSS24-M5,GSS25-Embedding,GiotopoulosSati2026}.

%%%%%%%%%%%%%%%
\subsection
{Motivation}
%%%%%%%%%%%%%%%

\textbf{The theoretical idea of SuGra} is rather suggestive: Where ordinary gravity is (cf. \parencites[\S 3]{Krasnov2020}) Cartan geometry for the Spin subgroups of Poincar{\'e} groups (\emph{local Lorentz symmetry}), the latter have natural supergroup extensions (\emph{super-symmetry}, SuSy, cf. \parencites{Freed1999}{Varadarajan2004}), and super-gravity is the Cartan geometry for the Spin subgroups of these super-Poincar{\'e} groups (\emph{local super-symmetry}) combined with higher gauge fields dictated by the Chevalley-Eilenberg structure of these groups. We discuss this in \cref{Supergravity}.

%%%%%%%%%%%%%
\paragraph
{Physical motivations}
%%%%%%%%%%%%
Less widely appreciated may be that discussion of super-gravity has \textit{concrete physical motivations} (beyond the widely discussed application of \emph{holography} to solid-state physics and quark-gluon plasmas):
\begin{description}

\item[\textbf{In fundamental physics},] the high-energy local SuSy embodied by SuGra remains phenomenologically viable, in contrast to the low-energy global SuSy that has notoriously not been seen in experiment. In particular, SuGra is known \cite{nLab:CosmicInflationInSupergravity} to improve models of inflationary cosmology, by naturally furnishing not only the required scalar fields themselves but also explaining their observationally favored plateau potentials (the \emph{$\alpha$-attractor} mechanism) and their stability against quantum corrections (resolving the \emph{$\eta$-problem}). 

\item[\textbf{In particle physics},] despite common perception, (approximate) super-symmetry is experimentally seen in the form of \emph{hadron super-symmetry} \cite{nLab:HadronSupersymmetry}, in the ``flavor sector'' instead of the commonly considered ``color sector'' of the standard model. Here it naturally fits into holographic models realizing quantum chromodynamics on supergravitational branes.

\item[\textbf{In solid-state physics},] \emph{topological order} in quantum materials is a 21st-century holy grail, plausibly necessary for stabilizing future utility-scale quantum computing hardware. To date, a single class of candidate systems has been experimentally realized with certainty: fractional quantum Hall (FQH) liquids. 

Remarkably, it has been established (cf. \cite{nLab:SuSyInFQHReferences} and see \cref{Applications} below) that the gapped excitations of certain non-abelian FQH liquids fall into multiplets comprising a (massive graviton-like) \emph{spin-2 magneto-roton} and a (massive gravitino-like) \emph{spin-3/2 neutral fermion mode} which are partners under an emergent super-symmetry, suggestive of an effective (2D massive) super-gravity description.

Furthermore, the effective symmetry of FQH excitations is ``special'' diffeomorphism invariance ($W_\infty$-symmetry). In combination with SuSy this matches the characteristic symmetry on super $p$-brane probes of super-gravity backgrounds, suggesting that the effective field theory of FQH liquids may naturally be understood via \emph{geometric engineering} of strongly coupled quantum systems on branes probing super-gravity orbifolds \parencites{SS25-Srni}{SS25-Seifert}.

\item[\textbf{In mathematical metaphysics},] the \emph{brane bouquet} \parencites{HS18-Superpoint}[Fig. 1]{FSS19-RationalM}[Prop. 4.14]{FSS15-WZW} shows that out of the \emph{superpoint} --- the classifying space for fermion fields \emph{per se} --- grows a hierarchy of (the local structures of) ever higher-dimensional super-gravity theories and their super $p$-brane content, culminating in the local structure of 11D super-gravity with its M2/M5-branes. Therefore, super-gravity and branes are fundamentally hard-coded into the BIOS of mathematical physics. The surprising emergence of SuGra in the effective theories of hadrons and of FQH liquids, noted above, may be an instance of this universality. 
\end{description}

We further relate to some of these applications in \cref{Applications}.  We will arrive there by revisiting the theoretical foundations related to the appearance of \emph{higher gauge sectors} in \emph{higher-dimensional} formulations of super-gravity (cf. \cref{LogicOfSuGra}).

\begin{figure}[htb]
\caption{\label{LogicOfSuGra}
It is widely appreciated that lower-dimensional SuGra tends to be best understood as dimensional reduction of higher-dimensional SuGra (cf. \cref{SomeReductionsOf11DSuGra}) involving higher gauge fields. Less attention has been devoted to the fact that the latter only make global sense after a choice of electromagnetic flux quantization. Here we discuss how such admissible (infrared) completions of SuGra are controlled by the Bianchi identities on superspace.
}
\centering
\adjustbox{
  rndfbox=4pt,
  scale=1.2
}{
\;\;\;\;
\begin{tikzcd}[
  column sep=50pt
]
 \colorbox{lightgray}{$\substack{
  \text{low dim} 
  \\
  \text{SuGra}
}$}
\ar[
  r,
  "{
    \text{reduction of}
  }"
]
&
 \colorbox{lightgray}{$
\substack{
  \text{higher-dim} 
  \\
  \text{SuGra}
}$}
\ar[
  r,
  "{
    \text{governed by}
  }"
]
& 
\colorbox{lightgray}{$
\substack{
  \text{higher}
  \\
  \text{gauge sector}
}$}
\ar[
  dll,
  snake left,
  "{
    \text{requiring}
  }"{description}
]
\\
 \colorbox{lightgray}{$
\substack{
  \text{global completion}
  \\
  \text{by flux quantization}
}
$}
\ar[
  r,
  "{ \text{controlled by} }"
]
&
 \colorbox{lightgray}{$
\substack{
  \text{Bianchi identities}
  \\
  \text{on superspace}
}
$}
\end{tikzcd}
\;\;\;\;
}
\end{figure}

%%%%%%%%%%%%%
\paragraph
{Higher dimensions}
%%%%%%%%%%%%%
Local SuSy drastically constrains the possible field content and couplings of super-gravity theories. 
As a result, it is a famous phenomenon that lower-dimensional SuGra theories tend to be most transparently understood as (Kaluza-Klein, KK) \emph{dimensional reductions} of higher-dimensional SuGra theories (cf. \parencites{Cremmer1982}{DuffNilssonPope1986}[\S22.1]{Ortin2015}{DuffEtAl2026}). Here the extra dimensions serve as a neat geometric encoding of consistent fields and couplings in lower dimensions (cf. \cref{SomeReductionsOf11DSuGra}).

At the top of this hierarchy appears $D=11$ $\mathcal{N}=1$ super-gravity, usually just called \emph{11D SuGra} (cf. \cref{Supergravity}).

\begin{SCfigure}[.5][htb]
\caption{\label{SomeReductionsOf11DSuGra}
A small selection of super-gravity theories arising by dimensional reduction from 11D SuGra. The label $\mathcal{N}$ is the ``number of super-symmetries'' in the given dimension, indicating the $\mathrm{Spin}$-representation on the gravitino field. The label on an arrow indicates the nature of the relevant fiber space.
\newline
\newline
(Not all super-gravity theories descend this way from 11D SuGra. For instance, 10D heterotic SuGra does not.)
}
\centering
\adjustbox{
  scale=.9,
  rndfbox=4pt
}{
\begin{tikzcd}[row sep=15pt, 
  column sep=-2pt
]
  D = 11
  &[+5pt] 
  &[-10pt] 
  \adjustbox{
    rndfbox=4pt,
    bgcolor=lightgray
  }{$\mathcal{N}=1$}
  \ar[
    to=D11N2A,
    "{ 
      S^1 
    }"{swap}
  ]
  \ar[
    to=D9N2,
    "{ 
      T^2 
    }"{description}
  ]
  \ar[
    to=D4N1,
    "{ 
      \text{$G_2$-Mfd} 
    }"{description}
  ]
  \ar[
    to=D4N2,
    "{ 
      \mathrm{CY}_3 \times S^1 
    }"{description, pos=.6}
  ]
  \ar[
    to=D4N4,
    "{ 
      \mathrm{K3} \times T^3 
    }"{description}
  ]
  \ar[
    to=D4N8,
    "{ 
      T^7 
    }"{description, pos=.42}
  ]
  \ar[
    to=D3N1,
    "{ 
      \text{$\mathrm{Spin}(7)$-Mfd} 
    }"{description, pos=.65}
  ]
  \\
  D = 10
  &
  |[alias=D11N2A]|
  \adjustbox{
    rndfbox=4pt,
    bgcolor=lightgray
  }{$\mathcal{N}=(1,1)$}  
  \ar[
    to=D4N2,
    crossing over,
    bend right=9,
    "{ 
      \mathrm{CY}_3 
    }"{description, pos=.2}
  ]
  \\
  D=9
  &&&&& % Pushes the node precisely into the 6th column
  |[alias=D9N2]|
  \adjustbox{
    rndfbox=4pt,
    bgcolor=lightgray
  }{$\mathcal{N}=2$}
  \ar[
    to=D4N8,
    "{ 
      T^5 
    }"{description}
  ]
  \\
  D=8
  \\
  D=7
  \\
  D=6
  \\
  D=5
  \\
  D=4
  &&
  |[alias=D4N1]|
  \adjustbox{
    rndfbox=4pt,
    bgcolor=lightgray
  }{$\mathcal{N}=1$}   
  \ar[
    to=D3N2,
    "{ 
      S^1 
    }"{pos=.45}
  ]
  & 
  |[alias=D4N2]|
  \adjustbox{
    rndfbox=4pt,
    bgcolor=lightgray
  }{$\mathcal{N}=2$}   
  \ar[
    to=D3N4,
    "{ 
      S^1 
    }"{pos=.45}
  ]
  & 
  |[alias=D4N4]|
  \adjustbox{
    rndfbox=4pt,
    bgcolor=lightgray
  }{$\mathcal{N}=4$}   
  \ar[
    to=D3N8,
    "{ 
      S^1 
    }"{ pos=.45}
  ]
  & 
  |[alias=D4N8]|
  \adjustbox{
    rndfbox=4pt,
    bgcolor=lightgray
  }{$\mathcal{N}=8$}   
  \ar[
    to=D3N16,
    "{ 
      S^1 
    }"{pos=.45}
  ]
  \\
  D=3
  &
  |[alias=D3N1]|
  \adjustbox{
    rndfbox=4pt,
    bgcolor=lightgray
  }{$\mathcal{N}=1$}   
  &
  |[alias=D3N2]|
  \adjustbox{
    rndfbox=4pt,
    bgcolor=lightgray
  }{$\mathcal{N}=2$}   
  &
  |[alias=D3N4]|
  \adjustbox{
    rndfbox=4pt,
    bgcolor=lightgray
  }{$\mathcal{N}=4$}   
  &
  |[alias=D3N8]|
  \adjustbox{
    rndfbox=4pt,
    bgcolor=lightgray
  }{$\mathcal{N}=8$}   
  &
  |[alias=D3N16]|
  \adjustbox{
    rndfbox=4pt,
    bgcolor=lightgray
  }{$\mathcal{N}=16$}   
\end{tikzcd}
}
\end{SCfigure}

Hence, even if one is \emph{a priori} interested in low-dimensional SuGra, analysis of the theory tends to involve higher-dimensional versions.

%%%%%%%%%%%%
\paragraph
{Higher gauge sectors}
%%%%%%%%%%%%

Now, a key phenomenon of higher-dimensional SuGra is the necessary appearance of further bosonic fields%
\footnote{
A quick way to understand the necessity of further fields is that the dimension $\sim 2^{\lfloor D/2 \rfloor}$ of the spacetime spinor representations grows faster with the spacetime dimension $D$ than the number of (metric) components $\sim D(D-1)/2$ of the gravitational field, hence requiring further bosonic fields to satisfy super-symmetry between bosonic and fermionic fields.}, specifically of \emph{higher gauge fields} (cf. \parencites{Szabo2013}{SS26-HigherGauge}).

Concretely, what appears in the equations of motion of higher-dimensional SuGra are \emph{flux density} differential forms $F_{p}$ (cf. \cite[\S 2]{SS25-Flux}) of higher degree $p$. 

For instance, 11D SuGra involves a 4-flux density $G_4$ and a 7-flux density $G_7$ (the flux of the \emph{$C$-field}) subject to these equations of motion (\cite[\S 3.1.3]{MiemiecSchnakenburg2006}, cf. \cref{lS4IdentityFailsOnSpacetimeSalvagedOnSuperspace}):
\begin{equation}
\label
{CFieldEoM}
  \substack{%
    \color{gray}%
    \text{duality-symmetric}
    \\
    \color{gray}%
    \text{Bianchi identities}
  }
  \;\;\;
  \begin{aligned}
    \mathrm{d}\, G_4 & = 0
    \mathrlap{\,,}
    \\
    \mathrm{d}\, G_7 & =
    \tfrac{1}{2} G_4 \wedge G_4
    \mathrlap{\,,}
  \end{aligned}
  \;\;\;\;\;\;
  G_7 = \star G_4
  \;\;\;
  \substack{
    \color{gray}
    \text{duality constraint}
  }
  \mathrlap{\,.}
\end{equation}
This \cref{CFieldEoM} is clearly a higher-degree version of \emph{Maxwell's equations} for the electromagnetic flux density $F_2$ in 4D (the flux of the \emph{$A$-field}), which famously read as follows \cite[\S80]{Cartan1924}:
\begin{equation}
\label
{BosonicCFieldEoMs}
  \substack{%
    \color{gray}%
    \text{duality-symmetric}
    \\
    \color{gray}%
    \text{Bianchi identities}
  }
  \;\;\;
  \begin{aligned}
    \mathrm{d}\, F_2 & = 0
    \mathrlap{\,,}
    \\
    \mathrm{d}\, G_2
    & = J_3
    \mathrlap{\,,}
  \end{aligned}
  \;\;\;\;\;\;\;\;\;\;\;\;\;
  G_2 = \star F_2
  \;\;\;
  \substack{
    \color{gray}
    \text{duality constraint}
  }
\end{equation}
(for some prescribed \emph{electric current density} $J_3$).

It follows that \emph{locally} on a chart $\inlinetikzcd{ U \ar[r, hook, "{\iota}"] \& X^{1,10} }$ of spacetime the higher gauge field of 11D SuGra is given by a differential 3-form $C_3$ and a differential 6-form $C_6$:
\begin{equation}
\label
{CFieldPotential3Form}
  \begin{aligned}
    \mathrm{d}\, C_3 
      &= \iota^\ast G_4
      \mathrlap{\,,}
    \\
    \mathrm{d}\, C_6 + \tfrac{1}{2} C_3 \wedge \iota^* G_4 
      &= \iota^\ast G_7
      \mathrlap{\,,}
  \end{aligned}
\end{equation}
(originally called the \emph{3-index photon} and its dual, now usually just called the \emph{C-field}) in higher analogy to how the electromagnetic field is locally
on a chart $\inlinetikzcd{ U \ar[r, hook, "{\iota}"] \& X^{1,3} }$ given by a differential 1-form $A_1$ (the \emph{vector potential}) with 
\begin{equation}
\label
{OrdinaryVectorPotential}
  \mathrm{d}A_1 = \iota^\ast F_2
  \mathrlap{\,.}
\end{equation}

%%%%%%%%%%%%%
\paragraph
{Infrared completions}
%%%%%%%%%%%%%

However, this raises a crucial question that has remained largely underappreciated.
What is actually needed in applications (at least to topological quantum orders) are \emph{globally defined} SuGra theories --- meaning defined at large scales, jargon: in the \emph{infrared} (IR) --- namely with well-defined topological (brane) charges.

For the example of the ordinary electromagnetic field, it is well-known that in addition to the local (chartwise) vector potentials $A_1$ \cref{OrdinaryVectorPotential}, it involves further data, given by specified gauge transformations (\emph{transition functions}) between gauge potentials associated with different charts where these overlap. Moreover, where three charts overlap, these gauge transformations are required to satisfy a consistency constraint that implements the \emph{quantization} (in the sense of: discretization) of electromagnetic flux/charge.

This highlights that the global definition of the C-field in 11D SuGra also must involve more data than the local 3-form potentials $C_3$ \cref{CFieldPotential3Form} that are usually considered, exhibiting a higher form of \emph{flux quantization}.

Hence, already classically (before entering the topic of quantum field theory) we find that:
\begin{standout}
  To completely define a higher-dimensional super-gravity theory requires an admissible choice of global completion of its higher gauge field sector by flux quantization.
\end{standout}

Since such completions concern the behavior of fields at large spacetime scales (not coverable by a single coordinate chart), we may speak of \emph{infrared completions}.

Hence, after 55 years of SuGra theory, it is time to ask for the actual globally complete formulation of these theories --- providing for super-gravity what Dirac's charge quantization argument began to provide in 1931 for Maxwell's theory (from 1865).

%%%%%%%%%%%%%
\paragraph
{Questions and Answers}
%%%%%%%%%%%%%

Hence here we ask -- \textit{and answer} -- the following questions:

\vspace{-2mm} 
\begin{center}
\emph{What is the $C$-field, globally (``model of the C-field'')?}
\end{center}
Hence:
\vspace{-5mm} 
\begin{center}
\emph{What is 11D SuGra globally, what are its IR-completions?}
\end{center}
\vspace{-5mm} 
\begin{center}
\emph{What are its compatibly IR-complete dimensional reductions?}
\end{center}
Then:
\vspace{-5mm} 
\begin{center}
\emph{What is the ``self-dual'' field on M5-branes, globally?}
\end{center}
Hence: 
\vspace{-5mm} 
\begin{center}
\emph{What is the M5 globally, what are its IR-completions (``M5-brane model'')?}
\end{center}

\vspace{-3mm} 
%%%%%%%%%%%%%%
\paragraph
{Flux Quantization on Superspace}
%%%%%%%%%%%%%%

This leads us to revisit the higher gauge sectors of higher-dimensional SuGra. Here we encounter a miracle 
\cite[Thm. 3.1]{GSS24-SuGra}\cite[Thm. 3.4]{GiotopoulosSati2026} previously not fully appreciated: 

\begin{standout}
When formulated \emph{on superspace}, the equations of motion of 11D SuGra and its KK-descendants are \emph{equivalent} to the gauge sector equations \cref{BosonicCFieldEoMs}, thus allowing for infrared completions by covariant flux quantization.
\end{standout}

We next survey what this means.

%%%%%%%%%%%%%%
\subsection
{Formulation}
%%%%%%%%%%%%%%

%%%%%%%%%%%%
\paragraph
{Superspace super-gravity}
%%%%%%%%%%%%%
The SuGra field content and equations of motion tend to appear intricate and opaque when expressed in terms of fields over ordinary spacetime.

But just as plain gravity is intrinsically a theory of \emph{spacetime geometry} rather than a theory of fields over spacetime, so super-gravity becomes conceptually more transparent as a theory of \emph{super}-spacetime geometry (cf. \cref{Superspace}). This perspective geometrizes the symmetry principle, realizing local super-symmetry as super-diffeomorphism invariance.   

Or rather, such a picture is what the practice of super-gravity suggests \cite{CDF1991}, even if precise formulations have remained elusive in the physics literature.

%%%%%%%%%%%%
\paragraph
{Super-Spacetime}
%%%%%%%%%%%%

More concretely, for $\mathbf{N}$ a real linear representation of $\mathrm{Spin}(1,d)$ with equivariant pairing 
\begin{equation}
  \bracket({
    \overline{(-)}
    \Gamma
    (-)
  })
  :
  \inlinetikzcd{
    \mathbf{N} 
      \otimes_{_{\mathbb{R}}}
    \mathbf{N}
    \ar[r]
    \&
    \mathbb{R}^{1,d}
    \mathrlap{\,,}
  }
\end{equation}
the \emph{$(1,d\vert \mathbf{N})$-dimensional super-Minkowski spacetime} is the superspace whose algebra of smooth functions is freely generated from $1 + d$ even-graded coordinate functions $x^a$ and $N$ odd-graded coordinate functions $\theta^i$:
\begin{equation}
\label{SuperMinkowskiFunctionAlgebra}
  C^\infty\bracket({
    \mathbb{R}^{1,d\vert \mathbf{N}}
  })
  \defneq
  C^\infty\bracket({
    \mathbb{R}^{1,d}
  })
  \otimes_{_{\mathbb{R}}}
  \wedge^\bullet \bracket({
    \mathbf{N}^\ast
  })
  \mathrlap{\,,}
\end{equation}
and equipped with the super-coframe field $(E,\Psi)$ given by
\begin{equation}\label{FlatSuperCoframe}
  \left.
  \begin{aligned}
    E^a 
    & :=
    \mathrm{d} x^a
    +
    \bracket({
      \overline{\theta}
      \Gamma^a
      \mathrm{d}\theta
    })
    \\
    \Psi^\alpha
    & :=
    \mathrm{d} \theta^\alpha
    \mathrlap{\,,}
  \end{aligned}
  \right\}
  \;\;\;
  \text{hence satisfying}
  \;\;\;
  \left\{
  \begin{aligned}
    \mathrm{d}\,
    E^a
    & =
    \bracket({
      \overline{\Psi}
      \Gamma^a
      \Psi
    })
    \\
    \mathrm{d}\,
    \Psi^\alpha
    & = 0
    \mathrlap{\,.}
  \end{aligned}
  \right.
\end{equation}
Regarded as left-invariant super 1-forms, this witnesses $\mathbb{R}^{1,d\vert\mathbf{N}}$ as a supergroup (cf. \cite[\S 3.1]{GSS26-Hidden}), and as such, the identities from \eqref{FlatSuperCoframe} are precisely the defining cochain conditions of its tangent super-Lie algebra structure. Its semidirect product $\mathbb{R}^{1,d\vert \mathbf{N}} \rtimes \mathrm{Spin}(1,d)$ is the \emph{super-Poincar{\'e} group} in dimension $(1,d\vert \mathbf{N})$, traditionally called ``$D = 1+d$, $\mathcal{N} = \mathbf{N}$ \emph{super-symmetry}''.   

A general $(1,d\vert \mathbf{N})$-dimensional super-spacetime $X^{1,d\vert \mathbf{N}}$ (\cref{Frames}) is a super-manifold equipped with a super coframe
\begin{equation}
\label
{SuperCoframeInIntro}
  \bracket({
    E,\Psi
  })_{(-)}
  :
  \inlinetikzcd{
    T_{(-)} X^{1,d\vert \mathbf{N}}
    \ar[
      rr,
      "{ \sim }"
    ]
    \&\&
    \mathbb{R}^{1,d\vert \mathbf{N}} \, ,
  }
\end{equation}
and a Spin-connection $\Omega$ that satisfies the \textit{(super-)torsion-free} condition in the bosonic direction
\begin{equation}\label{TorsionCondition}
\dd\, E^{a} + \Omega^{a}{}_b\, E^b \, = \, \bracket({
      \overline{\Psi}
      \Gamma^a
      \Psi
    }) \, . 
\end{equation}

We stress that this is nothing but the curved and globalized Cartan geometric extension of the corresponding (bosonic) tangent-space-wise cochain condition from \eqref{FlatSuperCoframe}, now to arbitrary super-manifolds.

% {\color{gray}G: A small fact that I've noticed is the following. All constraints we impose, be it the torsion free constraint, or the Bianchis, are all imposed on derivatives of *bosonic* coframes and fluxes. Perhaps we can also state  this as a principle. Namely, one takes the flat spacetime cocycle conditions of 11D, single out the bosonic ones, canonically extend them to curved spacetimes by adding the natural terms.}

% U: Maybe briefly like this:

In this vein it is natural to also consider flat-space avatars for the bosonic flux densities and then promote these to curved spacetime:

%%%%%%%%%%%%%
\paragraph
{Avatar super-flux}
%%%%%%%%%%%%%

Remarkably, super-Minkowski spacetimes turn out to carry super-symmetric super-forms whose de Rham differential models the duality-symmetric Bianchi identities of the corresponding super-gravity theories. 

For instance, in 11D the following invariant superforms (we notationally suppress the wedge product symbol)
\begin{equation}
\label
{11DAvatarSuperFluxInIntro}
  \left.
  \begin{aligned}
    G_4^0
    & 
    :=
    \mathcolor{darkorange}{\tfrac{1}{2}\bracket({
      \overline{\Psi}
      \Gamma_{a_1 a_2}
      \Psi
    })
    E^{a_1} E^{a_2}
    }
    \\
    G_7^0
    & 
    :=
    \mathcolor{darkorange}{\tfrac{1}{5!}\bracket({
      \overline{\Psi}
      \Gamma_{a_1 \cdots a_5}
      \Psi
    })
    E^{a_1} \cdots E^{a_5}
    }
  \end{aligned}
  \right\}
  \;\;\;
  \text{on $\mathbb{R}^{1,10\vert \mathbf{32}}$}
\end{equation}
miraculously satisfy the following equations (\cref{TheAvatarSuperFLuxesIn11D}, cf. \cite[\S 2]{FSS17-Sphere}):
\begin{equation}
\label
{BianchiOf11DSuperFluxesInIntro}
  \begin{aligned}
    \mathrm{d}\, 
    G_4^0 & = 0
    \\
    \mathrm{d}\,
    G_7^0 & =
    \tfrac{1}{2}
    G_4^0 \wedge G_4^0
    \mathrlap{\,,}
  \end{aligned}
\end{equation}
of the same form as \cref{CFieldEoM}!

%%%%%%%%%%%%
\paragraph
{The Role of Higher Super Cartan Geometry}
%%%%%%%%%%%%%
% Equivalently, these forms \cref{11DAvatarSuperFluxInIntro} are left-invariant extensions of corresponding cochains on the super-Lie algebra at the tangent space of the identity of the super-group $\mathbb{R}^{1,10\vert \mathbf{32}} \rtimes \mathrm{Spin}(1,10)$. In this sense, we may equivalently think of this (...) 

Besides solving the expected equations of motion, these avatar super-flux forms
\cref{11DAvatarSuperFluxInIntro} and their analogs in other super-dimensions
turn out to exhibit expected \emph{duality} relations (cf. \parencites[\S 6]{Duff1999World}[\S 17]{West2012}{Polchinski2014}), such as \emph{M/IIA duality} \cite{FSS17-Sphere}, \emph{T-duality} \cite{FSS18-TD,GSS25-TD}, \emph{S-duality} \cite[Prop. 8.6]{FSS18-TD}, and \emph{U-duality} \parencites[\S 4]{SatiVoronov2024}{GSS24-Exceptional}. Hence, in some sense the avatar super-fluxes preconfigure the web of super-gravity theories (exhibited by the \emph{brane bouquet} \parencites[Prop. 4.14]{FSS15-WZW}[Fig. 3]{HSS2019}[Fig. 1]{FSS19-RationalM}).

Viewing these relations through the lens of Cartan geometry with respect to the subgroup inclusion $\inlinetikzcd{\mathrm{Spin}(1,10) \ar[r, hook] \& \mathrm{Iso}(\mathbb{R}^{1,10\vert \mathbf{32}})}$, whereby a super-manifold $X^{1,10\vert \mathbf{32}}$ is supplied with a super-coframe $(E,\Psi)$, we may interpret them as  {\color{darkorange}``tangent-space-wise relations''} when pulled back along the corresponding point-wise isomorphism \eqref{SuperCoframeInIntro}.

In this vein, it is natural to ask for \textit{global extensions} of the tangent-space-wise defined avatar forms \cref{11DAvatarSuperFluxInIntro} by ``curved'' \textit{purely bosonic} components:
\begin{equation}
\label
{11DSuperFluxFormsInIntro}
  \left.
  \begin{aligned}
    G_4^s
    &:=
    \tfrac{1}{4!}
    \bracket({
      G_4
    })_{a_1 \cdots a_4}
    E^{a_1} \cdots E^{a_4}
    +
    \mathcolor{darkorange}{
    \tfrac{1}{2}
    \bracket({
      \overline{\Psi}
      \Gamma_{a_1 a_2}
      \Psi
    })
    E^{a_1} E^{a_2}
    }
    \\
    G_7^s
    &:=
    \tfrac{1}{7!}
    \bracket({
      G_7
    })_{a_1 \cdots a_7}
    E^{a_1} \cdots E^{a_7}
    +
    \mathcolor{darkorange}{
    \tfrac{1}{5!}
    \bracket({
      \overline{\Psi}
      \Gamma_{a_1 \cdots a_5}
      \Psi
    })
    E^{a_1} \cdots E^{a_5}}
  \end{aligned}
  \right\}
  \;\;\;
  \text{on $X^{1,10\vert \mathbf{32}}$.}
\end{equation}

%%%%%%%%%%%%%
\paragraph
{The Miracle}
%%%%%%%%%%%%%

It now turns out (\cite[Thm. 3.1]{GSS24-SuGra}, reviewed as \cref{11DSuGraAslS4BianchiOnSuperspace} in \cref{Equations}) that solutions of 11D super-gravity are equivalent simply to super-torsion-free super-coframes \cref{SuperCoframeInIntro} (encoding the super-gravitational field) supporting super-flux forms \cref{11DSuperFluxFormsInIntro} (encoding the $C$-field) subject to the Bianchi identities \cref{CFieldEoM}:
\begin{equation}
\label
{TheMiracleEquivalence}
\adjustbox{
  scale=1.3
}{$
\left\{
\colorbox{lightgray}{$
\substack{
\text{Super-torsion-free $(1,10\vert\mathbf{32})$-coframe}
\\
\text{\& super-fluxes satisfying Bianchis \cref{CFieldEoM}}
}
$}
\right\}
$}
\;\;
\Leftrightarrow
\;\;
\adjustbox{
  scale=1.3
}{$
\left\{
\colorbox{lightgray}{$
\substack{
  \text{Solutions to 11D SuGra}
}
$}
\right\}
$}
\mathrlap{\,.}
\end{equation}
This equivalence is a veritable miracle: Its proof involves pages of analysis and excessive Clifford algebra computations that a contemporary personal computer takes about 11 minutes to verify \cite{GSS24-SuGraAncillary}: After analysis of the many components of the superspace Bianchi identities, the result still hinges on a massive conspiracy of combinatorial prefactor cancellations. 

%%%%%%%%%%%%%
\paragraph
{Rheonomy}
%%%%%%%%%%%%
Moreover, this result \cref{TheMiracleEquivalence} involves a phenomenon called \emph{rheonomy} (so named in \cite[\S III.3.3]{CDF1991}, discussed in \cref{Rheonomy}): The gravitational and flux fields on super-spacetime \emph{a priori} have many more components than the actual fields on ordinary spacetime. But the super-Bianchi identities imply \emph{differential flow equations} (whence ``rheonomy'') along the odd-graded directions of super-spacetime, whereby the superfield solutions turn out to be uniquely determined by their restriction to ordinary spacetime.

%%%%%%%%%%%%%%
\paragraph
{Role of the Bianchi Identities}
%%%%%%%%%%%%%%

It was early on realized that super-gravity equations of motion tend to be at least close to being equivalent to superspace Bianchi identities subject to certain torsion constraints (cf. \cite{GrimmWessZumino1979}). This concerns the classical Bianchi identities \cite{Cartan1923} satisfied by the curvature and the torsion tensor of a Spin connection (now on superspace), but also the analogous equations for the higher flux densities (cf. \cref{TypesOfBianchiIdentities}).

Here the classical Bianchi identities (on curvature and torsion) are of course implied (solved) once a Spin-connection is given, which we will take to be part of the data of a super-spacetime. The remaining Bianchi identities on the flux densities are the equations of motion in the higher gauge sector.

\begin{SCtable}[.9][htb]
\caption{\label{TypesOfBianchiIdentities}
The field content of higher-dimensional super-gravity has a gravity sector (modeled by a super-coframe field and Spin-connection) and a higher gauge sector. 
}
\adjustbox{
  rndfbox=4pt
}{
\begin{tikzcd}[
  sep=2pt
]
  &[-6pt]
  \text{Gravity sector}
  &
  \text{Gauge sector}
  \\
    \text{Bianchis on:}
  &
  \substack{
    \text{curvature}
    \\
    \text{\& torsion}
  }
  &
  \substack{
    \text{duality-symmetric}
    \\
    \text{flux densities}
  }
\end{tikzcd}
}
\end{SCtable}

%%%%%%%%%%%%%%%
\paragraph
{Role of Duality-Symmetry}
%%%%%%%%%%%%%%
What drives the statement \cref{TheMiracleEquivalence} is the imposition of the superspace Bianchi identities specifically on the ``\emph{duality symmetric}'' flux densities \cref{CFieldEoM}. While this possibility has briefly been considered in \cite[\S III.8.5]{CDF1991} (cf. also \cite[\S 6]{CandielloLechner1994}), the understanding that the duality-symmetric super-flux Bianchi is the master constraint to put the theory on-shell has only been realized in \parencites[Thm. 3.1]{GSS24-SuGra}{GiotopoulosSati2026}. This perspective puts the higher gauge sector into the center of attention and hence lends itself to the further global completion of the theory (\cref{LogicOfSuGra}).

%%%%%%%%%%%%%%
\paragraph
{Characteristic $L_\infty$-Algebras}
%%%%%%%%%%%%%%

To systematize this, a key observation (\cref{Characteristics}) is that duality-symmetric Bianchi identities are equivalently the closure conditions for differential forms with coefficients in a \emph{characteristic $L_\infty$-algebra}% 
\footnote{
  Basic background on $L_\infty$-algebras is recalled in  \cref{LInfinityAlgebra}.
  Beware (\cref{DifferentRolesOfLInfinityAlgebras}) that these characteristic $L_\infty$-algebras $\mathfrak{a}$
  \cref{IdeaOfCharacteristicLInfinityAlgebras} serve as coefficients for the (globally defined) flux densities, and thus play a different role than the super $L_\infty$-algebras which may be identified \cite{FSS19-RationalM,FSS15-WZW} as implicit in \cite{CDF1991}, where they serve as coefficients for (locally defined) gauge potentials, instead.
}
$\mathfrak{a}$:
\begin{equation}
\label
{IdeaOfCharacteristicLInfinityAlgebras}
  \adjustbox{
    scale=1.3
  }{$
  \left\{
  \colorbox{lightgray}{$
  \substack{
    \text{Duality-symmetric}
    \\[2pt]
    \text{Bianchi identities}
  }
  $}
  \right\}
  $}
  \Leftrightarrow
  \adjustbox{
    scale=1.3
  }{$
  \left\{
  \colorbox{lightgray}{$
  \substack{
    \text{Closed forms with} 
    \\[2pt]
    \text{coefficients in $\mathfrak{a}$}
  }
  $}
  \right\}
  $}
\end{equation}

For instance, the flux Bianchi identities \cref{CFieldEoM} for 11D SuGra are characterized by the  $L_\infty$-algebra  $\mathfrak{a} = \mathfrak{l}S^4$, also known as the \emph{C-field gauge algebra} \cite[(2.6)]{CremmerEtAl1998}\cite[(3.4)]{LavrinenkoEtAl999}\cite[(75)]{KalkkinenStelle2003}\cite[(86)]{BandosEtAl2004}\cite[(4.9)]{Sati2010}:
\begin{equation}
\label
{lS4InIntroduction}
  \mathfrak{l}S^4
  \;\simeq\;
  \mathrm{Free}_{ L_\infty \mathrm{Alg}}\!%
  \left(
  \begin{aligned}
    & v_3
    \\
    & v_6
  \end{aligned}
  \right)
  \big/
  \left(
    \begin{aligned}
      [v_3] & = 0
      \mathrlap{}
      \\
      [v_6] & = 0
      \mathrlap{\,,}
    \end{aligned}
    \;
    \begin{aligned}
      [v_3, v_3] &= v_6
      \\
      [v_6, -] &= 0
    \end{aligned}
  \right)
\end{equation}
in that:
\vspace{1mm} 
\begin{equation}
\label
{11DSuGraFluxesAslS4ClosedFormsInIntro}
  \left\{
  \begin{aligned}
    & G_4
    \in 
    \Omega^4_{\mathrm{dR}}(X)
    \\
    & G_7
    \in 
    \Omega^7_{\mathrm{dR}}(X)
  \end{aligned}
  \;\middle\vert\;
  \begin{aligned}
    \mathrm{d}\, G_4 & = 0
    \\
    \mathrm{d}\, G_7 & = 
    \tfrac{1}{2} G_4 \wedge G_4
  \end{aligned}
  \right\}
  \simeq
  \Omega^1_{\mathrm{cl}}\bracket({
    X; 
    \mathfrak{l}S^4
  })
  \;\;
  \substack{
    \color{gray}%
    \text{closed (flat, Maurer-Cartan)}
    \\
    \color{gray}%
    \text{$L_\infty$-valued differential forms}
  }
  \mathrlap{\,.}
\end{equation}

As the notation indicates, these characteristic $L_\infty$-algebras tend to be associated with \emph{classifying spaces} $\mathcal{A}$, 
\begin{equation}
\label
{RealHomotopyGroupsInIntro}
  \bracket({
    \mathfrak{l}\mathcal{A}
  })_\bullet
  \defneq
  \pi_\bullet\bracket({
    \Omega \mathcal{A}
  })
  \otimes_{_{\mathbb{Z}}}
  \mathbb{R}
  \mathrlap{\,,}
\end{equation}
carrying the \emph{higher Whitehead brackets} on the real homotopy groups of the loop space. This encodes the \emph{$\mathbb{R}$-rational homotopy type} of $\mathcal{A}$ (\cref{Charges}).

%%%%%%%%%%%%%
\paragraph
{Dimensional Reduction}
%%%%%%%%%%%%%

One insight gained from identifying the characteristic $L_\infty$-algebra \cref{IdeaOfCharacteristicLInfinityAlgebras} is that it determines the minimal set of super-flux Bianchi identities which puts the theory on-shell after dimensional reduction (\cref{Reductions}).

Namely, if $\mathfrak{a} = \mathfrak{l}\mathcal{A}$ \cref{RealHomotopyGroupsInIntro} is the characteristic $L_\infty$-algebra of a SuGra theory, then the Bianchi identities of its dimensional reduction along a circle fiber are characterized by $\mathfrak{l}\mathrm{Cyc}\bracket({\mathcal{A}})$, 
where
\begin{equation}
  \mathrm{Cyc}\bracket({
    \mathcal{A}
  })
  :=
  \mathrm{Map}\bracket({
    S^1, 
    \mathcal{A}
  }) 
  \sslash
  S^1
\end{equation}
is the \emph{cyclic loop space} (the homotopy quotient of free loop space by rigid rotation of loops).

Generally, for dimensional reduction along the fibers of a $G$-principal bundle, the resulting Bianchi identities are characterized by $\mathfrak{l}\mathrm{Cyc}_G(\mathcal{A})$, where
\begin{equation}
  \mathrm{Cyc}_G\bracket({
    \mathcal{A}
  })
  :=
  \mathrm{Map}\bracket({
    G, 
    \mathcal{A}
  }) 
  \sslash
  G
  \mathrlap{\,.}
\end{equation}
Nevertheless, here we shall be concerned only with toroidal reductions.

For example, closed differential forms with coefficients in the cyclification of \cref{lS4InIntroduction} yield
\begin{equation}
\label{cycS4ClosedForms}
  \Omega^1_{\mathrm{cl}}
  \bracket({
    X;
    \mathfrak{l}
    \mathrm{Cyc}\bracket({S^4})
  })
  \simeq
  \left\{
  \begin{aligned}
    F_2 & 
    \in \Omega^2_{\mathrm{dR}}(X)
    \\
    F_4 & 
    \in \Omega^4_{\mathrm{dR}}(X)
    \\
    F_6 & 
    \in \Omega^6_{\mathrm{dR}}(X)
    \\
    H_3 & 
    \in \Omega^3_{\mathrm{dR}}(X)
    \\
    H_7 & 
    \in \Omega^7_{\mathrm{dR}}(X)
  \end{aligned}
  \;\middle\vert\;
  \begin{aligned}
    \mathrm{d}\, F_2
    & = 0
    \\
    \mathrm{d}\, F_4
    & = -  F_2 \wedge H_3
    \\
    \mathrm{d}\, F_6
    & = F_4 \wedge H_3
    \\
    \mathrm{d}\, H_3
    & = 0
    \\
    \mathrm{d}\, H_7
    & =
    \tfrac{1}{2}
    F_4 \wedge F_4
    +
    F_2 \wedge F_6
  \end{aligned}
  \right\}
  \mathrlap{\,,}
\end{equation}
which are the solutions to the (duality-symmetric) NS/RR-flux Bianchi identities of 10D $\mathcal{N} = (1,1)$ (type IIA) SuGra (cf. \cref{SomeReductionsOf11DSuGra}). 

Dimensionally reducing once more, or better, directly from 11D to 9D, yields closed differential forms with coefficients in the \emph{toroidification} of \cref{lS4InIntroduction} \parencites[p. 10]{SatiVoronov2025}[\S 2.4]{GSS25-TD}:
\begin{equation}\label{torS4ClosedForms}
  \Omega^1_{\mathrm{cl}}
  \bracket({
    X;
    \mathfrak{l}
    \mathrm{Tor}\bracket({S^4})
  })
  \simeq
  \left\{
  \begin{aligned}
     \dir{2}{F}_2 & \in \Omega^2_{\mathrm{dR}}(X)
    \\
    \dir{1}{F}_2 & \in \Omega^2_{\mathrm{dR}}(X)
    \\
    F_4 & \in \Omega^4_{\mathrm{dR}}(X)
    \\
    \dir{2}{H}_3 & \in \Omega^3_{\mathrm{dR}}(X)
    \\
    \dir{1}{H}_3 & \in \Omega^3_{\mathrm{dR}}(X)
    \\
    H_2 & \in \Omega^2_{\mathrm{dR}}(X)
    \\
    H_7 & \in \Omega^7_{\mathrm{dR}}(X)
    \\
    \dir{2}{F}_6 & \in \Omega^6_{\mathrm{dR}}(X)
    \\
    \dir{1}{F}_6 & \in \Omega^6_{\mathrm{dR}}(X)
    \\
    F_5 & \in \Omega^5_{\mathrm{dR}}(X)
  \end{aligned}
  \;\middle\vert\;
  \begin{aligned}
    \mathrm{d}\, \dir{1}{F}_2
    & = 0
    \\
    \mathrm{d}\, \dir{2}{F}_2
    & = 0
    \\
    \mathrm{d}\, F_4
    & = \dir{1}{F}_2 \wedge \dir{1}{H}_3 + \dir{2}{F}_2 \wedge \dir{2}{H}_3
    \\  
    \mathrm{d}\, \dir{2}{H}_3 
    & = -  \dir{1}{F}_2 \wedge H_2
    \\
     \mathrm{d}\, \dir{1}{H}_3 
    & =   \dir{2}{F}_2 \wedge H_2
    \\
    \mathrm{d}\, H_2
    & = 0
    \\
    \mathrm{d}\, H_7
    & = \tfrac{1}{2}F_4\wedge F_4 + \dir{1}{F}_2 \wedge \dir{1}{F}_6  + \dir{2}{F}_2 \wedge \dir{2}{F}_6 
    \\
    \mathrm{d}\, \dir{2}{F}_6
    & = -
    F_4 \wedge \dir{2}{H}_3
    -
    \dir{1}{F}_2 \wedge F_5
    \\
    \mathrm{d}\, \dir{1}{F}_6
    & = -
    F_4 \wedge \dir{1}{H}_3
    +
    \dir{2}{F}_2 \wedge F_5
    \\
    \mathrm{d}\, F_5
    & = \dir{2}{H}_3 \wedge \dir{1}{H}_3
    +
    F_4 \wedge H_2
  \end{aligned}
  \right\}
  \mathrlap{\,.}
\end{equation}
These are the solutions to the duality-symmetric flux Bianchis of 9D $\mathcal{N}=2$ SuGra, which serve as the basis for T-duality between 10D IIA and IIB supergravities. 

This pattern persists in a straightforward manner when reducing to the gauge sector of lower $(11-k)$-dimensional SuGras, albeit with a huge proliferation in the number of (higher) gauge fields, with the corresponding characteristic $L_\infty$-algebras being the rational \emph{$k$-toroidi-}\emph{fications} of $S^4$. 

\medskip 
In this fashion, superspace geometry together with the miracle properties of super-flux yields a transparent description of the equations of motion of SuGra theories descending from 11D SuGra. But perhaps most importantly, it paves the way to the global (IR) completion of these theories: \cref{Globalization}.

%%%%%%%%%%%%%%%%
\subsection
{Globalization}
\label
{Globalization}
%%%%%%%%%%%%%%%%

While the above discussion concerns a modernized streamlining of decade-old ideas and completion of their proofs, we now come to a crucial aspect that has previously received little attention: The global definition of the higher gauge fields appearing in higher-dimensional SuGra, and with it the global (IR) completion of the entire super-gravity theory.

\medskip

For the remainder of this section, we need to assume that the reader has at least a rough idea of higher groupoids --- we recall basics in \cref{Charges}. Further exposition of the following construction is in \cite{SS25-Flux,SS26-HigherGauge}.

%%%%%%%%%%%%%%%
\paragraph
{Global Nature of Higher Gauge Fields}
%%%%%%%%%%%%%%

Given the flux densities and their Bianchi identities in a higher-dimensional super-gravity theory, the task is to specify the nature of corresponding full field data (cf. \cref{GluingDataSchematics}):
\begin{enumerate}
\item
\emph{gauge potentials} $A$ on an open cover $\tilde X$ of spacetime by charts, 
\item
\emph{gauge transformations} $g$ where pairs of charts overlap, 
\item
\emph{gauge-of-gauge transformations} between these where triples of charts overlap,
\item and so on,
eventually subject to a condition which ``quantizes'' (discretizes) the \emph{topological charge} embodied by this data,
\end{enumerate}
and all compatible with the flux densities and their Bianchi identities.

\begin{figure}[htb]
\caption{%
\label{GluingDataSchematics}%
A higher gauge field is only locally (on charts of an open cover) given by gauge potentials. Its global structure furthermore involves gauge transformations between these where pairs of charts overlap, gauge-of-gauge transformations between those where triples of charts overlap, and so on.
}
\adjustbox{
  rndfbox=5pt
}{
\includegraphics
  [width=16cm]
  {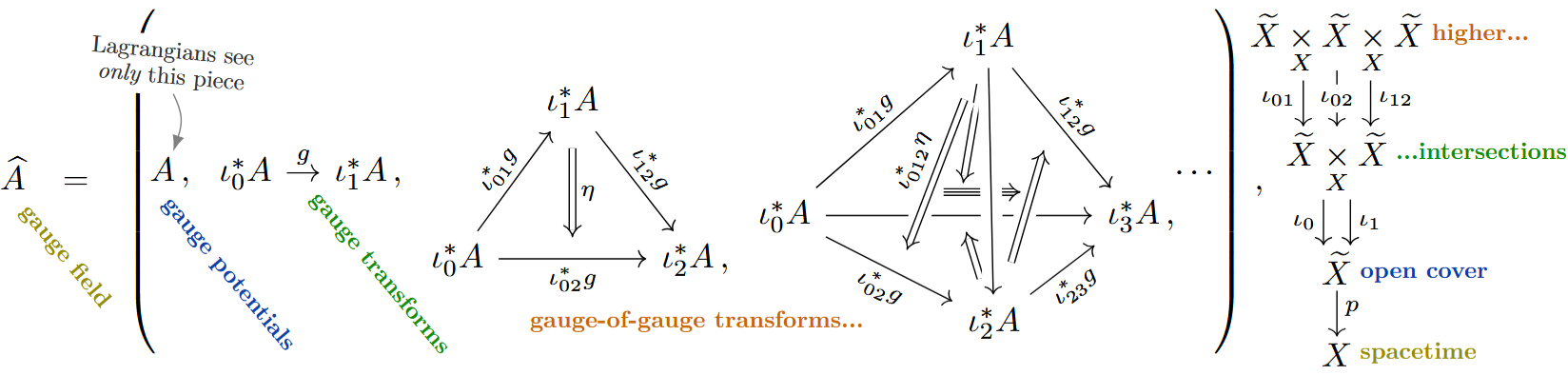}
}
\end{figure}

Such global completion of higher gauge fields is necessary to make global sense of (``cancel anomalies'' in) the Lagrangian mechanics of branes charged under these fields.

For example:
\begin{enumerate}
\item
Dirac's quantization of magnetic flux 
is motivated and justified by the fact (cf. \parencites{Alvarez1985}[\S 2]{Gawedzki1988}) that it gives consistent global meaning to the topological term 
  \begin{equation}
  \label
  {ChargedParticleTopologicalTerms}
    \text{``{}}
    \exp\bracket({
    2\pi
    \mathrm{i} \, \int_{\Sigma^{1,0}} \phi^\ast {A_1}
    })
    \text{{}''}
  \end{equation}
  in the exponentiated action functional for an electron worldline $\inlinetikzcd{\Sigma^{1,0} \ar[r, "{ \phi }"] \& X^{1,d}}$ propagating in the given magnetically fluxed background.

\item
In the same vein, one should ask:
\emph{What flux quantization condition on the combined C-field \textup{(locally: $C_3$, $C_6$)} in 11D SuGra and the tensor field \textup{(locally: $B_2$)} on an M5-brane} worldvolume
  $\inlinetikzcd{ \Sigma^{1,5} \ar[r, "{ \Phi }"] \& X^{1,10} }$ 
  ensures that the M5's exponentiated Hopf-WZW action
  \cite[(1)]{BandosEtAl1997}
  \begin{equation}
  \label
  {HopfWZTerm}
    \text{``{}}
    \exp\bracket({ 
    4\pi \mathrm{i}
    \,
    \int_{\Sigma^{1,5}} \bracket({ \Phi^\ast C_6 - \tfrac{1}{2} \mathrm{d}B_2 \wedge \Phi^\ast C_3 }) 
    })
    \text{{}''}
    \mathrlap{\,,}
  \end{equation} 
  \emph{is globally well-defined?} 

  This question was only answered in \cite{FSS21-Hopf}, using the method \cite{FSS23-Char,SS25-Flux} that we are concerned with here.

\end{enumerate}

%%%%%%%%%%%%%%
\paragraph
{Gauge Potentials as Concordance}
%%%%%%%%%%%%%%

To motivate how this works, we may easily observe (cf. \cite[Prop. 6.4]{FSS23-Char}) that an ordinary abelian higher gauge potential, namely a $p$-form potential $A_p$ on $\widetilde X$ for a flux density $F_{p+1}$ on $X$, is equivalently a \emph{concordance} $\widehat{F}$ of closed forms from $0$ to $F_{p+1}$, namely:
\begin{equation}
\label
{AbelianNullConcordance}
  \left.
  \begin{aligned}
  & A_p \in 
  \Omega^p_{\mathrm{dR}}\bracket({
    \widetilde X
  })
  \\
  & \text{s.t.}\;
  \mathrm{d}\, A_p
  =
  \iota^\ast F_{p+1}
  \end{aligned}
  \right\}
  \;\leftrightarrow\;
  \left\{
  \begin{aligned}
  &
  \widehat{F}
  \in
  \Omega^{p+1}_{\mathrm{cl}}
  \bracket({
    [0,1]
      \times
    \widetilde X
  })
  \\
  & 
  \text{s.t.}\;
  \left\{
  \begin{aligned}
    \widehat{F}_{\vert 0}
    & = 0
    \\
    \widehat{F}_{\vert 1}
    & = 
    \iota^\ast F_{p+1}
    \mathrlap{\,.}
  \end{aligned}
  \right.
  \end{aligned}
  \right.
\end{equation}
Here one side exists iff the other does, and the choices correspond to each other up to (finite) gauge transformation ($A_p \mapsto A_p + \mathrm{d} A_{p-1}$) and concordance-of-concordances, respectively.

But the concordance picture on the right of \cref{AbelianNullConcordance} immediately generalizes from ordinary closed differential forms to the closed $\mathfrak{a}$-valued forms \cref{IdeaOfCharacteristicLInfinityAlgebras}: 
\begin{equation}
\label
{CorrspndncBetweenGaugePotentialsAndConcordances}
  \left.
  \adjustbox{scale=1.3}{$
  \colorbox{lightgray}{$  \substack{
    \text{Duality-symmetric gauge potentials for}
    \\
    \text{flux densities $\vec F$ subject to $\mathfrak{a}$-Bianchis}
  }
  $}
  $}
  \right\}
  \;\leftrightarrow\;
  \left\{
  \begin{aligned}
    & 
    \widehat{F}
    \in
    \Omega^1_{\mathrm{cl}}\bracket({
      [0,1] \times 
      \widetilde{X};
      \mathfrak{a}
    })
    \\
    &
    \text{s.t.\;}
    \left\{
    \begin{aligned}
      \widehat{F}_{\vert 0}
      & 
      = 0
      \\
      \widehat{F}_{\vert 1}
      & =
      \iota^\ast \vec F
      \mathrlap{\,.}
    \end{aligned}
    \right.
  \end{aligned}
  \right.
\end{equation}
Analogously, gauge transformations between such gauge potentials correspond to concordances between such concordances (\cite[Def. 2.46]{GSS24-SuGra}), and so on.

One checks that these concordances correspond to the expected local expressions (\cref{Potentials}). For the case $\mathfrak{a} = \mathfrak{l}S^4$ \cref{lS4InIntroduction} of 11D SuGra we have  (\cref{ConcordancesEquivalentToGaugeTransf}, following \parencites[Prop. 2.48]{GSS24-SuGra}, cf. \cite[\S 3.1]{Banerjee2025-Potentials}):
\begin{equation}
  \begin{tikzcd}[
    row sep=-5pt, column sep=large 
  ]
 \colorbox{lightgray}{$ \substack{
    \text{gauge}
    \\
    \text{potentials}
  }
  $}
  \ar[
    rr,
    "{
     \colorbox{lightgray}{$ \substack{
        \text{gauge}
        \\
        \text{transformation}
      }$}
    }"{description, pos=.35}
  ]
  &&
\colorbox{lightgray}{$  \substack{
    \text{gauge}
    \\
    \text{potentials}
  }
  $}
  \\
    \bracket({
      \substack{
        C_3
        \\
        C_6
      }
    })
    \ar[
      dr,
      |->
    ]
    \ar[
      rr,
      |->,
      "{\color{darkblue}
        \scaledbracket({
        \substack{
          B_2
          \\
          B_5
        }
        })
      }"{description}
    ]
    &&
    \bracket({
      \begin{subarray}{l}
        C'_3
        \,=\,
        C_3 + \mathrm{d} B_2
        \\
        C'_6
        \,=\,
        C_6 
          + \mathrm{d} B_5
          + \sfrac{1}{2}\,C'_3 C_3
      \end{subarray}
    })
      \mathrlap{\,.}
    \ar[
      dl,
      |->
    ]
    \\[10pt]
    &
    \bracket({
      \begin{subarray}{l}
        G_4
        \,=\, \mathrm{d}\, C_3
        \\
        G_7
        \,=\,
        \mathrm{d}\, C_6 +
        \sfrac{1}{2}\, C_3 G_4
      \end{subarray}
    })
  \end{tikzcd}
\end{equation}

%%%%%%%%%%%%%%%
\paragraph
{Gauge Potentials as Homotopy}
%%%%%%%%%%%%%%%
To understand how to consistently glue such local data to a global gauge field as in \cref{GluingDataSchematics}, we observe that the system of higher $\mathfrak{a}$-concordances forms a higher smooth groupoid, denoted%
\footnote{
  The \emph{esh} symbol $\shape(-)$ stands for ``shape'' as in ``shape theory'' (referring to homotopy types of spaces), which in the phonetic alphabet reads: \textipa{/SeIp/}. 
}
\begin{equation}
\label
{ShapeOfSmoothSetOfClosedFormsInIntro}
  \shape 
  \mathbf{\Omega}^1_{\mathrm{cl}}
  ({
    \ast;
    \mathfrak{a}
  })
  \;\;\;
  \colorbox{lightgray}{$
  \substack{
    \text{whose $n$-morphisms are}
    \\
    \text{$n$-fold concordances of}
    \\
    \text{closed $\mathfrak{a}$-valued forms.}
  }
  $}
\end{equation}
This is the object that classifies concordances in that
\begin{enumerate}
  \item smooth maps $\inlinetikzcd{\vec F :  X \ar[r]\& \shape \mathbf{\Omega}^1_{\mathrm{cl}}(\ast; \mathfrak{a})}$
  correspond to 
  $\vec F \in \Omega^1_{\mathrm{cl}}
  \bracket({
    X;
    \mathfrak{a}
  })$;

  \item
  homotopies
  $
    \begin{tikzcd}
      X
      \ar[
        rr,
        bend left=20,
        "{ \vec F }"{
          description,
          name=source
        }
      ]
      \ar[
        rr,
        bend right=20,
        "{ \vec F' }"{
          description,
          name=target
        }
      ]
      \ar[
        from=source,
        to=target,
        Rightarrow, 
        "{\color{darkgreen} \, \widehat{F} }"
      ]
      &&
      \shape
      \mathbf{\Omega}^1_{\mathrm{cl}}
      (\ast; \mathfrak{a})
    \end{tikzcd}
  $
  correspond to concordances 
  $\widehat{F}$ between these;
  
  \item and so on. 
\end{enumerate}

Therefore, gauge potentials as in \cref{CorrspndncBetweenGaugePotentialsAndConcordances} correspond to homotopies of this form:
\begin{equation}
  \begin{tikzcd}
    &&
    \ast
    \ar[
      dd,
      "{ 
        0 
      }"
    ]
    \\
    \widetilde{X}
    \ar[
      d,
      "{ \iota }"
    ]
    \ar[
      dr,
      "{
        \iota^\ast \vec F
      }"{
        name=flux
      }
    ]
    \ar[
      urr,
      bend left=20,
      "{ 0 }"{
        swap,
        name=charge,
        pos=.35
      }    
    ]
    \ar[
      from=charge,
      to=flux,
      Rightarrow,
      "{\color{darkgreen} A }"
    ]
    \\[+5pt]
    X
    \ar[
      r,
      "{ \vec F }"
    ]
    &
    \mathbf{\Omega}^1_{\mathrm{cl}}
    ({
      \ast;
      \mathfrak{a}
    })
    \ar[
      r
    ]
    &
    \shape
    \mathbf{\Omega}^1_{\mathrm{cl}}
    ({
      \ast;
      \mathfrak{a}
    })\mathrlap{\,.}
  \end{tikzcd}
\end{equation}

%%%%%%%%%%%%%%
\paragraph
{Gluing Gauge Transformations}
%%%%%%%%%%%%%%

Next, on double intersections of charts, hence over the fiber product $\widetilde{X}^{\times^2_X} := \widetilde X \times_X \widetilde X$, these gauge potentials need to be glued by gauge transformations $g$ (still according to \cref{GluingDataSchematics}). In terms of the classifying object \cref{ShapeOfSmoothSetOfClosedFormsInIntro}, these correspond to homotopies-of-homotopies as shown on the left here:
\begin{equation}
  \begin{tikzcd}
    &&
    \ast
    \ar[
      dd,
      "{ 
        0 
      }"
    ]
    \\
    \widetilde{X}^{\times_X^2}
    \ar[
      d,
      "{ \iota }"
    ]
    \ar[
      dr,
      "{
        \iota^\ast \vec F
      }"{
        name=flux
      }
    ]
    \ar[
      urr,
      bend left=20,
      "{ 0 }"{
        swap,
        name=charge,
        pos=.35
      }    
    ]
    \ar[
      from=charge,
      to=flux,
      Rightarrow,
      bend right=40, 
      "{ \color{darkgreen}
        \iota_0^\ast A 
      }"{
        description,
        name=i0A
      }
    ]
    \ar[
      from=charge,
      to=flux,
      Rightarrow,
      bend left=80,
      "{ \color{darkgreen}
        \iota_1^\ast A 
      }"{
        description,
        name=i1A
      }
    ]
    \ar[
      from=i0A,
      to=i1A,
      Rightarrow,
      shorten=-2pt,
      "{\color{darkorange}
        g
      }"{
        pos=.3
      }
    ]
    \ar[
      from=i0A,
      to=i1A,
      -,
      shorten <=-2pt
    ]
    \\[+5pt]
    X
    \ar[
      r,
      "{ \vec F }"
    ]
    &
    \mathbf{\Omega}^1_{\mathrm{cl}}
    ({
      \ast;
      \mathfrak{a}
    })
    \ar[
      r
    ]
    &
    \shape
    \mathbf{\Omega}^1_{\mathrm{cl}}
    ({
      \ast;
      \mathfrak{a}
    })
    \mathrlap{\,.}
  \end{tikzcd}
\end{equation}
Moreover, these gauge transformations need to be compatible, in that transforming in two steps through a third overlapping chart is equivalent (concordant) to transforming directly between a pair of overlapping charts.

This may be expressed by understanding the double overlaps $\widetilde X^{\times^2_X}$ as morphisms in a smooth groupoid, the \emph{{\v C}ech 1-groupoid}
\begin{equation}
  C\bracket({
    \widetilde X
  })_1
  :=
  \Big(
  \begin{tikzcd}
    \widetilde X^{\times^2_X}
    \ar[
      rr,
      shift left=8pt,
      "{
        \iota_0
      }"{description}
    ]
    \ar[
      rr,
      <-,
      "{ \Delta }"{description}
    ]
    \ar[
      rr,
      shift right=8pt,
      "{
        \iota_1
      }"{description}
    ]
    &&
    \widetilde X
  \end{tikzcd}
  \Big)
\end{equation}
and then asking for a single homotopy but now over this {\v C}ech 1-groupoid:
\begin{equation}
  \begin{tikzcd}
    &[-10pt]
    &
    \ast
    \ar[
      dd,
      "{ 
        0 
      }"
    ]
    \\
    C\bracket({
      \widetilde{X}
    })_1
    \ar[
      d,
      "{ \iota }"
    ]
    \ar[
      dr,
      "{
        \iota^\ast \vec F
      }"{
        name=flux
      }
    ]
    \ar[
      urr,
      bend left=20,
      "{ 0 }"{
        swap,
        name=charge,
        pos=.35
      }    
    ]
    \ar[
      from=charge,
      to=flux,
      Rightarrow,
      "{ \color{darkgreen}
        (A,g) 
      }"{
        description,
      }
    ]
    \\[+2pt]
    X
    \ar[
      r,
      "{ \vec F }"
    ]
    &
    \mathbf{\Omega}^1_{\mathrm{cl}}
    ({
      \ast;
      \mathfrak{a}
    })
    \ar[
      r
    ]
    &
    \shape
    \mathbf{\Omega}^1_{\mathrm{cl}}
    ({
      \ast;
      \mathfrak{a}
    })
    \mathrlap{\,.}
  \end{tikzcd}
\end{equation}

In this formulation it is clear how to deal at once with the whole tower of higher gauge transformations over higher intersections of charts:  We simply ask for such a homotopy over the full {\v C}ech $\infty$-groupoid $C\bracket({\widetilde{X}})$. But this is \emph{equivalent} to $X$ itself (since it is a \emph{resolution} of $X$ and is a good resolution if $\widetilde X$ is a good open cover, cf. \parencites[\S4.1.7]{SS26-Orb}[Ex. 4.3.45]{SS26-Bun}):
\begin{equation}
  C\bracket({
    \widetilde X
  })
  :=
  \begin{tikzcd}
  \Bigg(
    \cdots
    \ar[
      rr,
      phantom,
      shift left=16pt,
      "{ \vdots }"{
        scale=.7,
        description
      }
    ]
    \ar[
      rr,
      shift left=8pt,
      dotted
    ]
    \ar[
      rr,
      <-,
      dotted
    ]
    \ar[
      rr,
      shift right=8pt,
      dotted
    ]
    \ar[
      rr,
      phantom,
      shift right=14pt,
      "{ \vdots }"{
        scale=.7,
        description
      }
    ]
    &&
    \widetilde X^{\times^3_X}
    \ar[
      rr,
      shift left=18pt,
      "{  \iota_{01} }"{description}
    ]
    \ar[
      rr,
      <-,
      shift left=9pt,
      "{ \Delta_{01} }"{description}
    ]
    \ar[
      rr,
      "{  \iota_{02} }"{description}
    ]
    \ar[
      rr,
      <-,
      shift right=9pt,
      "{ \Delta_{12} }"{description}
    ]
    \ar[
      rr,
      shift right=18pt,
      "{  \iota_{12} }"{description}
    ]
    &&
    \widetilde X^{\times^2_X}
    \ar[
      rr,
      shift left=8pt,
      "{
        \iota_0
      }"{description}
    ]
    \ar[
      rr,
      <-,
      "{ \Delta }"{description}
    ]
    \ar[
      rr,
      shift right=8pt,
      "{
        \iota_1
      }"{description}
    ]
    &&
    \widetilde X
  \Bigg)
  \ar[
    r,
    "{
      \iota
    }",
    "{ \sim }"{swap}
  ]
  &
  X
  \mathrlap{\,.}
  \end{tikzcd}
\end{equation}
This way we find that the gauge potentials (on any open cover $\widetilde X$, which we do not even need to prescribe anymore), together with their whole tower of higher transition data by higher gauge transformations over higher intersections, correspond simply to a homotopy in the following diagram of smooth $\infty$-groupoids:
\begin{equation}
\label
{GaugePotentialsWithVanishingTopologicalCharge}
  \begin{tikzcd}
    &&
    \ast
    \ar[
      dd,
      "{ 
        0 
      }"
    ]
    \\
    X
    \ar[
      dr,
      "{
        \vec F
      }"{
        name=flux
      },
      "{
        \text{\color{darkblue}fluxes}
      }"{swap, sloped}
    ]
    \ar[
      urr,
      bend left=20,
      "{ 0 }"{
        swap,
        name=charge,
        pos=.35
      }    
    ]
    \ar[
      from=charge,
      to=flux,
      Rightarrow,
      "{ \widehat{A} }"{swap},
      "{
        \substack{
        \text{\color{darkgreen}potentials}
        }
      }"{swap, sloped, yshift=-2pt}
    ]
    \\[+5pt]
    &
    \mathbf{\Omega}^1_{\mathrm{cl}}
    ({
      \ast;
      \mathfrak{a}
    })
    \ar[
      r
    ]
    &
    \shape
    \mathbf{\Omega}^1_{\mathrm{cl}}
    ({
      \ast;
      \mathfrak{a}
    })
  \end{tikzcd}
\end{equation}
Or almost: This covers the case of \emph{trivial topological charge}. But in this form it is now straightforward to also bring in this last piece of data.

%%%%%%%%%%%%%%%%%
\paragraph
{The Case of Ordinary Magnetic Charge}
%%%%%%%%%%%%%%%%%

In the example of ordinary magnetic flux, where $\mathfrak{a} = \mathfrak{l}B^2 \mathbb{Z}$, the data encoded in \cref{GaugePotentialsWithVanishingTopologicalCharge} is equivalently of this form:
\begin{equation}
  \begin{aligned}
    A \in \Omega^1_{\mathrm{dR}}
    \bracket({
      \widetilde X
    })
    &
    \text{ s.t. }
    \mathrm{d}\, A
    =
    \iota^\ast F_2
    \\
    \lambda \in 
    \Omega^0_{\mathrm{dR}}
    \bracket({
      \widetilde X^{\times^2_X}
    })
    &
    \text{ s.t. }
    \mathrm{d}\lambda
    =
    \iota_1^\ast A
    -
    \iota_0^\ast A
    \\
    &
    \phantom{\text{ s.t. }}
    \;0
    =
    \iota_{12}^\ast \lambda
    -
    \iota_{02}^\ast \lambda
    +
    \iota_{01}^\ast \lambda
    \mathrlap{\,.}
  \end{aligned}
\end{equation}
But the general electromagnetic field contains more data, as follows (a \emph{{\v C}ech-Deligne 2-cocycle}, cf. \parencites{Alvarez1985}[\S 2]{Gawedzki1988} and \cref{0CocycleInCechDeligne2Cohomology} below): 
\begin{equation}
\label
{CechDeligne2CocyclesInIntro}
  \begin{aligned}
    A \in \Omega^1_{\mathrm{dR}}
    \bracket({
      \widetilde X
    })
    &
    \text{ s.t. }
    \mathrm{d}\, A
    =
    \iota^\ast F_2
    \\
    \lambda \in 
    \Omega^0_{\mathrm{dR}}
    \bracket({
      \widetilde X^{\times^2_X}
    })
    &
    \text{ s.t. }
    \mathrm{d}\lambda
    =
    \iota_1^\ast A
    -
    \iota_0^\ast A
    \\
    \mathcolor{purple}{
    n \in 
    C^\infty\bracket({
      \widetilde X^{\times_X^3};
      \mathbb{Z}
    })
    }
    &
    \text{ s.t. }
    \;
    \mathcolor{purple}{n}
    =
    \iota_{12}^\ast \lambda
    -
    \iota_{02}^\ast \lambda
    +
    \iota_{01}^\ast \lambda
    \\
    &
    \phantom{\text{ s.t. }}
    \;
    \mathcolor{purple}{
    0 =
    \iota_{123}^\ast n
    -
    \iota_{023}^\ast n
    +
    \iota_{013}^\ast n
    -
    \iota_{012}^\ast n
    }
    \mathrlap{\,.}
  \end{aligned}
\end{equation}
Here the cocycle condition for the real-valued gauge transformations $\lambda$ needs to hold only up to a locally constant integer $n$ (this makes these $\mathbb{R}$-valued functions represent the expected $\mathrm{U}(1)$-valued transition functions) and that integer function itself satisfies a cocycle condition on quadruple overlaps, which makes it a representative of a class in integral cohomology
\begin{equation}
  [n]
  \in
  H^2\bracket({
    X; \mathbb{Z}
  })
  \simeq
  \pi_0\, \mathrm{Map}\bracket({
    X,
    B^2 \mathbb{Z}
  })
  \mathrlap{\,,}
\end{equation}
this being the ``quantized'' (discretized) magnetic charge reflected by the electromagnetic field.

In the usual 2-groupoid model for $B^2 \mathbb{Z}$ there is an evident map
\begin{equation}
  \inlinetikzcd{
    B^2 \mathbb{Z}
    \ar[
      rrr,
      "{
        \mathbf{ch}^{B^2 \mathbb{Z}}
      }"
    ]
    \&\&\&
    \shape
    \mathbf{\Omega}^1_{\mathrm{cl}}
    \bracket({
      \ast;
      \mathfrak{l}
      B^2 \mathbb{Z}
    })
  }
\end{equation}
which sends the 2-morphism labeled $n$ to a differential 2-form on the 2-simplex integrating to $n$. This map models the insertion seen in \cref{CechDeligne2CocyclesInIntro}, making this data be equivalent to a homotopy in the following diagram of groupoids:
\begin{equation}
\label
{MagneticGaugePotentialsAsHomotopy}
  \begin{tikzcd}[
    column sep=20pt
  ]
    &&
    \mathcolor{purple}{B^2 \mathbb{Z}}
    \ar[
      dd,
      "{ 
        \mathcolor{purple}
        {\mathbf{ch}^{B^2 \mathbb{Z}}}
      }"
    ]
    \\
    X
    \ar[
      dr,
      "{
        \vec F
      }"{
        name=flux
      },
      "{
        \text{\color{darkblue}fluxes}
      }"{swap, sloped}
    ]
    \ar[
      urr,
      bend left=20,
      "{ 
        \mathcolor{purple}{\chi} 
      }"{
        swap,
        name=charge,
        pos=.35
      },
      "{
        \text{\color{purple}charges}
      }"{
        sloped,
        pos=.35
      }
    ]
    \ar[
      from=charge,
      to=flux,
      Rightarrow,
      "{ \widehat{A} }"{swap},
      "{
        \substack{
        \text{\color{darkgreen}potentials}
        }
      }"{swap, sloped, yshift=-2pt}
    ]
    \\[+5pt]
    &
    \mathbf{\Omega}^1_{\mathrm{cl}}
    ({
      \ast;
      \mathfrak{l}B^2\mathbb{Z}
    })
    \ar[
      r
    ]
    &
    \shape
    \mathbf{\Omega}^1_{\mathrm{cl}}
    ({
      \ast;
      \mathfrak{l}B^2\mathbb{Z}
    })
    \mathrlap{\,.}
  \end{tikzcd}
\end{equation}

%%%%%%%%%%%%%%
\paragraph
{General Flux Quantization}
%%%%%%%%%%%%%

From this example, the general case becomes clear: We need --- and there canonically exists (\cref{Charges}) --- for every (connected, nilpotent, of rational finite type)
classifying space $\mathcal{A}$ a map
\begin{equation}
\label
{CharacterMapInIntroduction}
  \begin{tikzcd}
    \mathcal{A}
    \ar[
      r,
      "{ \mathbf{ch}^{\mathcal{A}} }"
    ]
    &
    \shape
    \mathbf{\Omega}^1_{\mathrm{cl}}
    ({
      \ast;
      \mathfrak{l}\mathcal{A}
    })
  \end{tikzcd}
\end{equation}
which embeds $\mathcal{A}$-cocycle data into concordance data.
(Concretely, this map is essentially what, in rational homotopy theory, is the \emph{rationalization unit} on $\mathcal{A}$.)

Thereby, a choice of classifying space $\mathcal{A}$ is \emph{admissible} for duality-symmetric Bianchi identities characterized by $\mathfrak{a}$ if
\begin{equation}
\label
{AdmissibilityConditionInIntro}
  \mathfrak{l}\mathcal{A}
  \simeq
  \mathfrak{a}
  \mathrlap{\,,}
\end{equation}
in which case it defines globally completed gauge fields as the data embodied by dashed cones in the following diagram, encoding cocycles in \emph{differential nonabelian $\mathcal{A}$-cohomology} (\cite[\S9]{FSS23-Char} discussed in \cref{Charges}):
\begin{equation}
\label
{FullFieldDiagramInIntro}
  \begin{tikzcd}[column sep=30pt]
    &&
    \mathcal{A}
    \ar[
      dd,
      "{ 
        \mathbf{ch}^{\mathcal{A}}
      }"
    ]
    \\
    X
    \ar[
      dr,
      dashed,
      "{
        \vec F
      }"{
        name=flux
      },
      "{
        \text{\color{darkblue}fluxes}
      }"{swap, sloped}
    ]
    \ar[
      urr,
      dashed,
      bend left=20,
      "{ 
        \chi 
      }"{
        swap,
        name=charge,
        pos=.35
      },
      "{
        \text{\color{purple}charges}
      }"{
        sloped,
        pos=.35
      }
    ]
    \ar[
      from=charge,
      to=flux,
      Rightarrow,
      dashed,
      "{ 
        \widehat{A} 
      }"{swap},
      "{
        \text{\color{darkgreen}potential}
      }"{swap,sloped, yshift=-2pt}
    ]
    \\[+5pt]
    &
    \mathbf{\Omega}^1_{\mathrm{cl}}
    ({
      \ast;
      \mathfrak{a}
    })
    \ar[
      r
    ]
    &
    \shape
    \mathbf{\Omega}^1_{\mathrm{cl}}
    ({
      \ast;
      \mathfrak{a}
    })
    \mathrlap{\,.}
  \end{tikzcd}
\end{equation}

In the case that $\mathcal{A} \simeq E_n$ is a stage in a \emph{spectrum of spaces} $E_\bullet$, this construction reduces \cite[Ex. 9.1]{FSS23-Char} to exhibiting the gauge fields as cocycles in Whitehead-generalized \emph{differential abelian cohomology} $E^n_{\mathrm{dff}}$ (\cite[\S 4]{HopkinsSinger2005}, cf. \cite{Bunke2012}), such as differential K-theory (cf. \parencites{Freed2002}{Szabo2013}{GS22-KTheory}). But beware that these abelian quantization laws are admissible \cref{AdmissibilityConditionInIntro} only for linear Bianchi identities (cf. \cite[Ex. 5.6]{FSS23-Char}), hence not for examples like \cref{CFieldEoM} of interest here.
\vspace{2mm} 
\begin{equation}
  \adjustbox{scale=1.3}{$
  \colorbox{lightgray}{$
  \substack{
    \text{Nonlinear}
    \\[2pt]
    \text{Bianchis}
  }
  $}
  $}
  \;\;\Rightarrow\;\;
  \adjustbox{scale=1.3}{$
  \colorbox{lightgray}{$
  \substack{
    \text{Whitehead-generalized}
    \\[1pt]
    \text{differential abelian cohomology}
    \\[1pt]
    \text{\emph{not} admissible as flux quantization.}
  }
  $}
  $}
\end{equation}
Indeed, the popular idea that 10D IIA SuGra should be flux quantized in K-theory (cf. \parencites[\S 3.7]{Freed2001}[\S 4]{SS25-Flux}) tacitly ignores the non-linear Bianchi identity of the $H_7$-flux (cf. \parencites{BaSS26-UnstableK}{GiotopoulosSati2026}{BMSS2019}).

This is the crucial advance provided by the above construction \cref{FullFieldDiagramInIntro}, that it generalizes flux quantization to flux with non-linear Bianchis as they actually appear in higher super-gravity.

%%%%%%%%%%%%%
\paragraph
{IR Completed 11D super-gravity}
%%%%%%%%%%%%%

And this construction works perfectly well in the generality that $X$ is a supermanifold. Therefore it combines with the miracle \cref{TheMiracleEquivalence} to the following remarkable statement:

\begin{standout}
  The global IR completions of 11D SuGra are parameterized by classifying spaces $\mathcal{A}$ of the rational homotopy type of the 4-sphere,
  $
    \mathfrak{l}\mathcal{A}
    \simeq
    \mathfrak{l}S^4
  $.
\end{standout}
\begin{standout}
  Given such a choice of \emph{C-field flux quantization}, we have:
\begin{equation}
\hspace{-4mm}
  \Bigg\{
  \adjustbox{scale=1.3}{$
  \substack{
    \text{Global solutions of 11D SuGra}
    \\[2pt]
    \text{with C-field fluxes $(G_4,G_7)$}
    \\[1pt]
    \text{quantized in $\mathcal{A}$-cohomology }
  }
  $}
  \Bigg\}
  \,
  \leftrightarrow
  \,
  \left\{
  \begin{aligned}
  &
  X^{1,10\vert \mathbf{32}} 
  \in 
  \left\{
  \substack{
    \text{super-torsion-free}
    \\
    \text{super-spacetimes}
  }
  \right\},
  \\
  &
  \begin{tikzcd}[column sep=35pt, 
    ampersand replacement=\&
  ]
    \&\&[-8pt]
    \mathcal{A}
    \ar[
      dd,
      "{ 
        \mathbf{ch}^{\mathcal{A}}
      }"
    ]
    \\
    X^{\mathrlap{1,10\vert \mathbf{32}}} \;\;\;
    \ar[
      dr,
      dashed,
      "{
        (G^s_4, G^s_7)
      }"{
        sloped, pos=.6, 
        name=flux
      },
      "{
        \text{\color{darkblue}fluxes}
      }"{swap, sloped}
    ]
    \ar[
      urr,
      dashed,
      bend left=20,
      "{ 
        \chi 
      }"{
        swap,
        name=charge,
        pos=.35
      },
      "{
        \text{\color{purple}M-brane\phantom{\clap{g}}}
      }"{
        sloped,
        pos=.23
      },
      "{
        \text{\color{purple}charges}
      }"{
        sloped,
        pos=.45
      }
    ]
    \ar[
      from=charge,
      to=flux,
      Rightarrow,
      dashed,
      "{ 
        \widehat{C} 
      }"{swap},
      "{
        \text{\color{darkgreen}C-field}
      }"{swap,sloped, yshift=-2pt}
    ]
    \\[+5pt]
    \&
    \mathbf{\Omega}^1_{\mathrm{cl}}
    ({
      \ast;
      \mathfrak{l}S^4
    })
    \ar[
      r
    ]
    \&
    \shape
    \mathbf{\Omega}^1_{\mathrm{cl}}
    ({
      \ast;
      \mathfrak{l}S^4
    })
    \mathrlap{\,.}
  \end{tikzcd}
  \end{aligned}
  \right.
\end{equation}  
\end{standout}

This baseline understanding of global completion of 11D super-gravity by C-field flux quantization generalizes to various kinds of further structure, notably to the presence of probe M5-branes \cite{GSS24-M5,FSS21-StrStruc} and to orbifold spacetimes \cite{SS20-Tad,SS25-Seifert}.

%%%%%%%%%%%%%%%
\paragraph
{Brane Charges in Nonabelian Cohomology}
%%%%%%%%%%%%%%

With the global completion of SuGra in hand, we have access to the \emph{topological charges} of the theory, which for 11D SuGra are the charges of M-branes.

One finds \parencites[\S2.1]{SS25-Complete}[(189)]{SS26-HigherGauge} that these constitute the \emph{nonabelian cohomology} (\parencites[Def. 6.0.6]{Toen2002}[Def. 6]{Lurie2014}[\S 2]{FSS23-Char}, cf. \parencites[\S 1]{SS25-TEC}[\S 4]{SS26-Orb}) of spacetime with coefficients in $\Omega \mathcal{A}$:
\begin{equation}
  H^1\bracket({
    X; 
    \Omega
    \mathcal{A}
  })
  := 
  \pi_0\, \mathrm{Map}\bracket({
    X, 
    \mathcal{A}
  })
  \mathrlap{\,.}
\end{equation}

In particular this means that the charges of singular (``black'') $p$-branes (those whose singular worldvolume is deleted from spacetime, cf. \cite[\S 2.2]{SS25-Flux}) are in the $\Omega\mathcal{A}$-cohomology of the $d-p-1$-spheres surrounding them and hence
fall into the homotopy group of $\mathcal{A}$ in degree $d-p-1$ (cf. \cite{SS23-Mf}), in that (for simply connected $\mathcal{A}$):
\begin{equation}
  \begin{aligned}
  H^1\bracket({
    \mathbb{R}^{1,d}
    \setminus
    \mathbb{R}^{1,p};
    \Omega \mathcal{A}
  })
  & \defneq
  \pi_0\, \mathrm{Map}\bracket({
    \mathbb{R}^{1,d}
    \setminus
    \mathbb{R}^{1,p},
    \mathcal{A}
  })
  \\
  & \simeq
  \pi_0\, \mathrm{Map}\bracket({
    S^{d-p-1},
    \mathcal{A}
  })
  \\
  &
  \simeq 
  \pi_{d-p-1}\bracket({
    \mathcal{A}
  })
  \mathrlap{\,.}
  \end{aligned}
\end{equation}
Due to the admissibility condition \cref{AdmissibilityConditionInIntro} and by basic facts of rational homotopy theory, these charge groups may have integer subgroups $\mathbb{Z}$ --- signifying non-fractional brane species --- precisely if $d-p-1$ equals the degree of one of the flux densities $F_{d-p-1}$ in the theory. 

This is what is expected in the physics literature: That non-fractional (non-torsion) $p$-brane charges are witnessed by a nontrivial integral of the appropriate flux density over any $(d-p-1)$-brane surrounding them. 

In particular, Dirac's original argument for quantization of the charge of magnetic monopoles (the 0-branes of electromagnetism) here comes down to:
\begin{equation}
  \mathllap{
  \substack{
    \text{\color{gray}Dirac monopole}
    \\
    \text{\color{gray}charges}
  }}
  \;\;
  H^1\bracket({
    \mathbb{R}^{1,3}
    \setminus
    \mathbb{R}^{1,0};
    \Omega 
    B^2 \mathbb{Z}
  })
  \simeq
  \pi_0\, 
  \mathrm{Map}\bracket({
    S^2, B^2 \mathbb{Z}
  })
  \simeq
  \pi_2\bracket({B^2\mathbb{Z}})
  \simeq
  \mathbb{Z}
  \mathrlap{\,.}
\end{equation}

%%%%%%%%%%%%%
\paragraph
{Global Completions of 11D SuGra}
%%%%%%%%%%%%%
Of the (infinitely) many admissible choices for the global completion of 11D SuGra, we here briefly mention the following three (further discussed in \cref{Applications}):

For the $\mathfrak{l}S^4$-Bianchi identity \cref{11DSuGraFluxesAslS4ClosedFormsInIntro}, admissible flux quantization laws $\mathcal{A}$ \cref{AdmissibilityConditionInIntro}, $\mathfrak{l}\mathcal{A} \simeq \mathfrak{l}S^4$, include:  
\begin{enumerate}

\item
 \colorbox{lightgray}{
the homotopy fiber space of the rationalized cup square on $B^4 \mathbb{Z}$}:
\begin{equation}
\label
{TheBaselineChoice}
  \mathcal{A}
  \defneq
  \mathrm{hfib}\Big(
    \inlinetikzcd{
      B^4 \mathbb{Z}
      \ar[
        rr,
        "{
          (-)^{\cup_2}  
        }"
      ]
      \&\&
      B^8 \mathbb{Z}
      \ar[r]
      \&
      B^8 \mathbb{Q}
    }
  \Big)
  \mathrlap{\,.}
\end{equation}

This choice  completes the $G_4$-flux to a differential integral 4-cocycle (a ``$\mathrm{U}(1)$-bundle 2-gerbe connection'', as direct higher generalization of how $F_2$ is quantized in \cref{CechDeligne2CocyclesInIntro} by a $\mathrm{U}(1)$-bundle connection)  and essentially imposes the non-linear second Bianchi identity \cref{CFieldEoM} only on $G_7$ itself, but not on its integral charge structure.

Essentially, this choice is tacitly the C-field flux quantization traditionally considered in the literature  
\cite{AschieriJurco2004,DFM2007,FSS15-ModuliStack}.

Of course, with $G_7$ remaining essentially unquantized with this choice, it cannot imply the (Page) charge quantization of M2-branes.
In any case, there are other choices one may consider:

\item
 \colorbox{lightgray}{the 4-sphere itself}:
\begin{equation}
\label
{ChoiceOfS4}
  \mathcal{A}
  \defneq
  S^4
  \mathrlap{\,.}
\end{equation}
This choice, originally proposed in \cite[\S 2.5]{Sati2018}, is the \emph{minimal} choice in number of cells (hence it is ``universal'' in that its charges map into the charges for all other choices). 

In \cite{FSS20-H,FSS21-Hopf,SS23-Mf} it was shown that this choice implies a whole list of subtle topological conditions that are expected in M-theory, notably it implies:
\begin{enumerate}
\item
\cite[Prop. 3.13]{FSS20-H}: the $\tfrac{1}{4}p_1$-shifted integral quantization of $G_4$ argued in \cite{Witten1997Flux},

\item
\cite[Thm. 4.8]{FSS21-Hopf}: global consistency of the M5's topological term \cref{HopfWZTerm}.

\end{enumerate}

Closely related to the second point is the more immediate statement that $\mathcal{A} \defneq S^4$ quantizes not just the charge of black M5-branes but also of black M2-branes, because,
\begin{equation}
\label
{CohomotopicalMBraneChargesInIntro}
  \begin{aligned}
  \pi_0\, \mathrm{Map}\bracket({
    \mathbb{R}^{1,10}
    \setminus
    \mathbb{R}^{1,5},
    S^4
  })
  & 
  \simeq
  \pi_4\bracket({S^4})
  \simeq
  \mathbb{Z}
  \,,
  \\
  \pi_0\, \mathrm{Map}\bracket({
    \mathbb{R}^{1,10}
    \setminus
    \mathbb{R}^{1,2},
    S^4
  })
  & 
  \simeq
  \pi_7\bracket({S^4})
  \simeq
  \mathbb{Z}
  \mathcolor{gray}{
    \oplus \mathbb{Z}_{/12}
  }
  \mathrlap{\,.}
  \end{aligned}
\end{equation}

These results support the hypothesis that \cref{ChoiceOfS4} is the ``right'' choice of global completion for the purpose of ``M-theory'' and as such the choice \cref{ChoiceOfS4} has been called \emph{Hypothesis H} (\cite{FSS20-H}, since it implies that M-brane charges are in the nonabelian cohomology called \emph{co-Homotopy} \cite[\S VII]{STHu59}\cite[Ex. 2.7]{FSS23-Char}, hence ``H-cohomology'' analogous to the more familiar K-theoretic ``K-cohomology''). 

This is the choice we will be concerned with below. But there are still other choices of interest:

\item
 \colorbox{lightgray}{ the homotopy fiber of the squared second $\mathrm{SU}(2)$-Chern class}:
\begin{equation}
  \mathcal{A}
  \defneq
  \mathrm{hfib}\Big(
    \inlinetikzcd{
      B \mathrm{SU}(2)
      \ar[
        r,
        "{ c_2 }"
      ]
      \&
      B^4 \mathbb{Z}
      \ar[
        rr,
        "{
          (-)^{\cup_2}
        }"
      ]
      \&\&
      B^8 \mathbb{Z}
    }
  \Big)
  \mathrlap{\,.}
\end{equation}
Since $S^4 \simeq \mathbb{H}P^1$ and $B \mathrm{SU}(2) \simeq \mathbb{H}P^\infty$ (quaternionic projective spaces), one may understand this choice as a kind of hybrid of the previous two choices \cref{ChoiceOfS4,TheBaselineChoice}.

Its noteworthy property is \cite{BaSS26-UnstableK} that under dimensional reduction to 10D, it implies a flux quantization of type IIA super-gravity which is a form of twisted unstable K-theory. 

The topological charges predicted by this choice are actually indistinguishable from those predicted by Hypothesis H on spacetime compactifications of the form $X^{1,10} \simeq \mathbb{R}^{1,3} \times X^7$, since the canonical comparison map
\begin{equation}
  \begin{tikzcd}[
    column sep=-3pt
  ]
    S^4 
    \ar[r]
    &[25pt]
    \mathrm{hfib}
    \Big(
      B \mathrm{SU}(2)
      \ar[
        r,
        "{ c_2 }"
      ]
      &[15pt]
      B^4 \mathbb{Z}
      \ar[
        r,
        "{ (-)^{\cup_2} }"
      ]
      &[30pt]
      B^8 \mathbb{Z}
    \Big)
  \end{tikzcd}
\end{equation}
is a 7-equivalence \cite[Thm. 2.6]{BaSS26-UnstableK}. In particular, this choice predicts the same M-brane charges as in \cref{CohomotopicalMBraneChargesInIntro}.
\end{enumerate}

%%%%%%%%%%%%%
\paragraph
{Conclusions}
%%%%%%%%%%%%
\begin{enumerate}
\item
Global (infrared) completion of (higher) gauge fields is part of what it means to actually define a field theory, including its topological (brane) charges. 

\item 
For theories like higher-dimensional super-gravity with its non-linear Bianchi identities (in duality-symmetric form) such completion requires flux/charge quantization in differential \emph{nonabelian} cohomology going beyond differential abelian (stable) cohomology theories like ordinary cohomology and K-theory.

\item The choice of global completion/flux quantization has direct implications on physical observables, notably it determines the fractional brane species through the torsion charges these carry. 

\item
Hence different global completions of the same equations of motion yield distinct field theories. If one takes the physics to be described as fixed (such as: ``M-theory'') then a choice of global completion of its field content is a \emph{hypothesis} about its correct theoretical description.

\item
This IR-resolved theory space is waiting to be explored. Notably it remains to be determined which IR-completion of 11D SuGra is the actual low-energy limit of ``M-theory''.

\item Once a choice of global completion is made in higher dimensions (notably in 11D SuGra), this implies compatible global completions of the dimensionally reduced descendants (at least for reductions along abelian principal fibers).

\end{enumerate}

%%%%%%%%%%%%

%%%%%%%%%%%%%%
%%%%%%%%%%%%%%%
\section
{Superspace}
\label
{Superspace}
%%%%%%%%%%%%%%%

We begin the technical part of these notes by surveying a treatment of super-geometry that is both mathematically rigorous as well as practical, and which seamlessly lends itself to the higher-geometric generalization needed to discuss flux quantization in \cref{Completions}.

The impatient reader may want to skip ahead to the discussion of super-gravity in \cref{Supergravity} and come back here only as need arises.

%%%%%%%%%%%%
\paragraph
{Outline}
%%%%%%%%%%%
In the time-honored spirit of Grothendieck's \emph{functorial geometry} \cite{nLab:FunctorialGeometry} we proceed to:
\begin{enumerate}
\item discuss simple building-block superspaces (\cref{Charts}), 
\item
bootstrap from these a good notion of generalized super-spaces (\cref{Spaces}), 

among them the smooth \emph{super-manifolds} (Ex. \ref{SmoothSupermanifoldsAsSmoothSuperSet}),
\item 
which become super-spacetimes when equipped with suitable super-coframes (\cref{Frames}). 
\end{enumerate}

%%%%%%%%%%%%%%%%
\paragraph
{Terminology}
%%%%%%%%%%%%%%%
Beware that the qualifier ``super'' carries different connotations in the math and the physics literature.

In mathematics, \emph{super-algebra} refers generally to algebra internal to the symmetric monoidal category of super-vector spaces with its $\mathbb{Z}_{/2}$-graded commutativity. Dually, \emph{super-geometry} in mathematics is broadly concerned with generalized spaces whose algebras of functions are super-algebras. In particular, the phase spaces of all field theories with fermions are super-spaces in this sense (cf. \cite{BerezinMarinov1977,Schmitt1997a,Schmitt1997b}), irrespective of whether these theories exhibit super-symmetry. Indeed,  parts of the mathematical literature are concerned with super-geometry without even considering fermions. (In mathematical generality, a \emph{super-group} is any group internal to super-manifolds or super-schemes; cf. \parencites[\S 7.1]{Varadarajan2004}[\S 3.1]{GSS26-Hidden}). This generality of \emph{super-geometry} we consider here first. 

In contrast, in physics, \emph{super-symmetry} refers much more narrowly to super-geometric extensions specifically of Poincar{\'e} groups and to field theories with such \emph{super Poincar{\'e} groups} as (global or local) symmetry groups. We come to this further below in \cref{11DSuGra}, cf. \cref{11DSuperMinkowski}.

%%%%%%%%%%%%%%
\paragraph
{Literature}
%%%%%%%%%%%%%%
Early influential discussion of super-geometry (super-manifolds) includes \parencites{BerezinLeites1975}{Batchelor1979}{DeWitt1984}{Manin1985}{Berezin1987}.
Comprehensive review includes \parencites{DeligneMorgan1999}{Varadarajan2004}{Rogers2007}{CarmeliEtAl2011}{HohnholdStolzTeichner2011}.
The powerful functorial perspective that we use was first articulated by \parencites{Schwarz1984}{Molotkov1984}[\S A]{KonechnySchwarz1998}, further discussed in \parencites{sachse2008}{BalduzziEtAl2010}.
We follow \parencites[\S4.6]{Sc13-dcct}{Sc18-Durham}{JurcoEtAl2019}{Schreiber2025}[\S2]{GSS24-SuGra}[\S 9.1.3]{GSS26-Hidden}{SS26-Orb}.

The super-Poincar{\'e} Lie algebra originates with \cite{GolfandLikhtman1972} (reprinted in \cite[pp. 44--53]{Shifman2000}). Early review of super(-Poincar{\'e}) symmetry includes \parencites{FayetFerrara1977}.
Mathematical discussion of super-Poincar{\'e} groups includes \parencites[\S 1.1]{DeligneFreed1999}[\S 5]{Freed1999}[\S 7.5]{Varadarajan2004}[\S 6]{Freed2006}.

%%%%%%%%%%%%%%%%
\subsection
{Charts}
\label
{Charts}
%%%%%%%%%%%%%%%%

%%%%%%%%%%%%%%
\paragraph
{Differential geometry}
%%%%%%%%%%%%%%
Ordinary differential geometry is, of course, modeled on the Cartesian spaces $\mathbb{R}^n$ with smooth maps between them. We denote this category by
\begin{equation}
\label
{CartesianSpaces}
  \mathrm{CrtSp}
  :=
  \big\{
  \inlinetikzcd{
    \mathbb{R}^n
    \ar[rrr, "{ \text{smooth} }"]
    \&\&\&
    \mathbb{R}^{n'}
  }
  \big\}
  \mathrlap{\,.}
\end{equation}

Remarkably, the functor that sends these to the formal duals \cref{AffineSchemes} of their commutative $\mathbb{R}$-algebras of smooth functions is fully faithful \cref{FullSubcategoryEmbedding} --- a statement sometimes known as ``Milnor's exercise'', cf. \cite{nLab:MilnorExercise}:
\begin{equation}
\label
{MilnorExercise}
  \begin{tikzcd}[
  row sep=-4pt,  column sep=10pt
  ]
    \mathrm{CrtSp}
    \ar[
      rr,
      hook
    ]
    &&
    \mathrm{CAlg}^{\mathrm{op}}_{_{\mathbb{R}}}
    \\
    \mathbb{R}^n
    \ar[
      d,
      "{ f }"
    ]
    &\longmapsto&
    C^\infty\bracket({\mathbb{R}^n})
    \\[20pt]
    \mathbb{R}^{n'}
    &\longmapsto&
    C^\infty\bracket({\mathbb{R}^{n'}})
    \ar[
      u,
      "{ f^\ast }"'
    ]
  \end{tikzcd}
  \;\;\;
  \text{and generally:}
  \;\;
  \begin{tikzcd}[
 row sep=-4pt,  column sep=10pt
  ]
    \mathrm{SmthMfd}
    \ar[
      rr,
      hook
    ]
    &&
    \mathrm{CAlg}^{\mathrm{op}}_{_{\mathbb{R}}}
    \\
    X
    \ar[
      d,
      "{ f }"
    ]
    &\longmapsto&
    C^\infty\bracket({X})
    \\[20pt]
    X'
    &\longmapsto&
    C^\infty\bracket({X'})
    \mathrlap{\,.}
    \ar[
      u,
      "{ f^\ast }"'
    ]
  \end{tikzcd}
\end{equation}

This provides a dual algebraic incarnation \cref{AffineSchemes} of smooth manifolds in which form that may be generalized.

%%%%%%%%%%%%%%
\paragraph
{Infinitesimal Geometry}
%%%%%%%%%%%%%
For instance, to make sense of the \emph{$k$-th order infinitesimal neighbourhood of the origin in $\mathbb{R}^n$},
\begin{equation}
\label
{InfinitesimalDisk}
  \mathbb{D}^n_k
  \subset 
  \mathbb{R}^n
  \mathrlap{\,,}
\end{equation}
we observe that its algebra of smooth functions ought to be
\begin{equation}
  \begin{aligned}
  C^\infty\bracket({
    \mathbb{D}^n_k
  })
  &
  \defneq
  C^\infty\bracket({
    \mathbb{R}^n
  })/
  \bracket({
    (x^1, \cdots, x^n)^{k+1}
  })
  \\
  &
  \simeq
  \mathbb{R}\bracket[{
    \epsilon^1,
    \cdots,
    \epsilon^n
  }]/
  \bracket({
    (\epsilon^1, \cdots, \epsilon^n)^{k+1}
  }) 
  \mathrlap{\,,}
  \end{aligned}
\end{equation}
and then understand the \emph{infinitesimal halo} $\mathbb{D}^n_k$ as its formal dual, hence as the corresponding object in the opposite category:
\begin{equation}
  \begin{tikzcd}[
    row sep=-3pt, column sep=0pt
  ]
    \mathrm{Hl}
    \ar[
      rr,
      hook
    ]
    &&
    \mathrm{CAlg}^{\mathrm{op}}_{_{\mathbb{R}}}
    \\
    \mathbb{D}^n_k
    &\longmapsto&
    C^\infty\bracket({
      \mathbb{D}^n_k
    })
    \mathrlap{\,.}
  \end{tikzcd}
\end{equation}
More generally, we may take \emph{infinitesimally thickened Cartesian spaces} $\mathbb{R}^n \times \mathbb{D}^{n'}_k$ to be the formal duals of the tensor products of these algebras:
\begin{equation}
  \begin{tikzcd}[row sep=-4pt, 
    column sep=0pt
  ]
    \mathrm{CrtSp}
    \ar[
      rr,
      hook
    ]
    &&
    \mathrm{HldCrtSp}
    \ar[
      rr,
      hook
    ]
    &&
    \mathrm{CAlg}^{\mathrm{op}}_{_{\mathbb{R}}}
    \\
    \mathbb{R}^n
      &\longmapsto&
    \mathbb{R}^n
    \phantom{
      \times
      \mathbb{D}^{n'}_k
    }
    \\
    &&
    \mathbb{R}^n 
    \times
    \mathbb{D}^{n'}_k
    &\longmapsto&
    C^\infty\bracket({
      \mathbb{R}^n
    })
    \otimes_{_{\mathbb{R}}}
    C^\infty\bracket({
      \mathbb{D}^{n'}_k
    }).
  \end{tikzcd}
\end{equation}

%%%%%%%%%%%%%%%%%
\paragraph
{Supergeometry}
%%%%%%%%%%%%%%%%%
Yet more generally we want to describe would-be spaces whose infinitesimal coordinate functions skew-commute among themselves. 

Write 
\begin{equation}
  \begin{tikzcd}
    \mathrm{CAlg}_{_{\mathbb{R}}}
    \ar[
      r,
      hook
    ]
    &
    \mathrm{sCAlg}_{_{\mathbb{R}}}
  \end{tikzcd}
\end{equation}
for the further full inclusion into the category of super-commutative $\mathbb{R}$-algebras, 
hence of $\mathbb{Z}_{/2} \simeq \{\mathrm{evn}, \mathrm{odd}\}$-graded algebras whose elements of homogeneous degree $n \in \mathbb{Z}_{/2}$ satisfy the \emph{super-com\-mu\-ta\-tiv\-i\-ty condition}:
\begin{equation}
  a \cdot b
  =
  (-1)^{n_a n_b}
  \,
  b \cdot a
  \mathrlap{\,.}
\end{equation}

Archetypical examples are the \emph{Grassmann algebras}
\begin{equation}
\label
{GrassmannAlgebra}
  C^\infty\bracket({
    \mathbb{R}^{0\vert q}
  })
  :=
  \wedge^\bullet_{\mathbb{R}}
  \bracket({
    \mathbb{R}^q
  })^\ast
  \in 
  \mathrm{sCAlg}_{_{\mathbb{R}}}
\end{equation}
which are those free on $q$ generators $\bracket({\theta_i})_{i=1}^q$ in odd degree, hence satisfying 
\begin{equation}
  \theta_i \cdot \theta_j 
    =
  - \theta_j \cdot \theta_i
  \mathrlap{\,.}
\end{equation}
This implies in particular that $(\theta_i)^2 = 0$, hence that these Grassmann algebras are akin to smooth functions on the halo $\mathbb{D}^q_1$ \cref{InfinitesimalDisk}, except that they also skew-commute among each other.

As before, we may take a \emph{super-Cartesian space} 
\begin{equation}
  \mathbb{R}^{n\vert q}
  \defneq
  \mathbb{R}^n
  \times
  \mathbb{R}^{0\vert q}
\end{equation}
to be the formal dual to its intended algebra of functions:
\begin{equation}
\label
{SuperCartesianSpaces}
  \begin{tikzcd}[row sep=-4pt, 
  column   sep=0pt
  ]
    \mathrm{sCrtSp}
    \ar[
      rr,
      hook
    ]
    &&
    \mathrm{sCAlg}_{_{\mathbb{R}}}^{\mathrm{op}}
    \\
    \mathbb{R}^{n\vert q}
    &\longmapsto&
    C^\infty\bracket({
      \mathbb{R}^n
    })
    \otimes_{_{\mathbb{R}}}
    C^\infty\bracket({
      \mathbb{R}^{0\vert q}
    })
    \mathrlap{\,.}
  \end{tikzcd}
\end{equation}

%%%%%%%%%%%%%%%%%
\subsection
{Spaces}
\label
{Spaces}
%%%%%%%%%%%%%%%%%

The notion of (super) \emph{smooth sets} that we turn to now originates as such in \cite{Sc13-dcct}. Further development as a foundation for field theory is in \parencites{GS25-FieldsI,GS25-FieldsII}; further exposition includes \cite{Giotopoulos2025, Schreiber2025,IbortMas2025}.

%%%%%%%%%%%%%%
\paragraph
{Smooth Sets}
%%%%%%%%%%%%%%
A general smooth space $\mathbf{X}$ modeled on the Cartesian spaces \cref{CartesianSpaces} should be characterized by the sets $\mathrm{Plt}\bracket({ \mathbb{R}^n, \mathbf{X}})$ of smooth maps (\emph{plots}) $\inlinetikzcd{ \mathbb{R}^n \ar[r] \& \mathbf{X} }$  from Cartesian \emph{probe} spaces into it, subject to the conditions that: 
\begin{enumerate}
\item
plots may consistently be precomposed with smooth maps of Cartesian spaces,
\item
the space is already determined by germs of plots.
\end{enumerate}
In mathematical jargon, the first condition says that $\mathrm{Plt}\bracket({-,\mathbf{X}})$ is a \emph{presheaf} on $\mathrm{CrtSp}$ \cref{CategoryOfPresheaves} and the second condition says that it is in fact a \emph{sheaf} \cref{CategoryOfSheaves}.

Since its system $\mathrm{Plt}\bracket({-;\mathbf{X}})$ of plots is meant to fully characterize $\mathbf{X}$ we say that the category of such \emph{smooth sets} is the sheaf topos over $\mathrm{CrtSp}$:
\begin{equation}
\label
{CategoryOfSmoothSets}
\begin{tikzcd}[row sep=-3pt,
  column sep=0pt
]
  \mathrm{SmthSet}
  & := &
  \mathrm{Sh}\bracket({\mathrm{CrtSp}})
  \\
  \mathbf{X}
  &\leftrightarrow&
  \mathrm{Plt}\bracket({
    -,
    \mathbf{X}
  })
  \mathrlap{\,.}
\end{tikzcd}
\end{equation}

\begin{example}
\label[example]
{SmoothManifoldsAsSmoothSets}
A smooth manifold becomes a smooth set via the ordinary smooth functions into it:
\begin{equation}
  \begin{tikzcd}[row sep=-2pt, 
  column sep=0pt
  ]
    \mathrm{SmthMfd}
    \ar[
      rr,
      hook
    ]
    &&
    \mathrm{SmthSet}
    \\
    X 
      &\longmapsto&
    C^\infty(-,X)
    \mathrlap{\,.}
  \end{tikzcd}
\end{equation}
Recall here that by \cref{MilnorExercise} we have, for our smooth manifold $X$:
\begin{equation}
\label
{PlotsOfSmoothManifold}
  \mathrm{Plt}\bracket({
    \mathbb{R}^n,
    X
  })
  \defneq
  C^\infty\bracket({
    \mathbb{R}^n,
    X
  })
  \simeq
  \mathrm{Hom}\bracket({
    C^\infty(X),
    C^\infty\bracket({\mathbb{R}^n})
  })
  \mathrlap{\,.}
\end{equation}
\end{example}

%%%%%%%%%%%%%
\paragraph
{Smooth Supersets}
%%%%%%%%%%%%

In this fashion it is clear that supergeometric smooth sets should be characterized 
via their sheaves of plots by super-Cartesian probe spaces:
\begin{equation}
\label
{SmoothSuperSets}
  \mathrm{sSmthSet}
  :=
  \mathrm{Sh}\bracket({
    \mathrm{sCrtSp}
  })
\end{equation}

\begin{example}
\label[example]
{SmoothSupermanifoldsAsSmoothSuperSet}
A \emph{smooth super-manifold} is a smooth superset $\mathbf{X}$ \cref{SmoothSuperSets} for which there exists a smooth vector bundle $\inlinetikzcd{E \ar[r] \& X}$ (in smooth manifolds) such that, in generalization of \cref{PlotsOfSmoothManifold}:
\begin{equation}
  \mathrm{Plt}\bracket({
    \mathbb{R}^{n\vert q}
    ,
    \mathbf{X}
  })
  \simeq
  \mathrm{Hom}
  \bracket({
    \wedge^\bullet_{C^\infty(X)} 
    \Gamma\bracket({
      E^\ast
    }),
    C^\infty\bracket({
      \mathbb{R}^{n \vert q}
    })
  })
  \mathrlap{\,.}
\end{equation}
Namely, this is equivalent to any other definition of smooth super-manifolds by Batchelor's theorem (\parencites{Batchelor1979}[\S1.13]{Batchelor1984}, cf. \parencites[\S8.2]{Rogers2007}) and using \cref{MilnorExercise}.
\end{example}

\begin{example}
[Super Minkowski Spacetime]
\label[example]
{SuperMinkowskiSpacetime}
  Any super Minkowski spacetime $\FR^{1,d\vert \mathbf{N}}$ \eqref{SuperMinkowskiFunctionAlgebra} is, in particular, a smooth super-manifold corresponding to the trivial smooth vector bundle $\mathbf{N}^\ast_{\FR^{1,d}} :=  \FR^{1,d} \times \mathbf{N}^\ast \rightarrow \FR^{1,d}$, so that 
  $$   
\wedge^\bullet_{C^\infty(\FR^{1,d})} 
    \Gamma\bracket({
      \mathbf{N}^\ast_{\FR^{1,d}}
  }) \simeq C^\infty\bracket({
    \mathbb{R}^{1,d}
  })
  \otimes_{_{\mathbb{R}}}
  \wedge^\bullet \bracket({
    \mathbf{N}^\ast
  }) \defneq C^\infty (\FR^{1,d\vert\mathbf{N}})  
  $$
  hence also being a smooth superset via
\begin{equation}
  \mathrm{Plt}\bracket({
    \mathbb{R}^{n\vert q}
    ,
    \mathbf{X}
  })
  \simeq
  \mathrm{Hom}
  \bracket({
    C^\infty(\FR^{1,d\vert\mathbf{N}}) ,
    C^\infty\bracket({
      \mathbb{R}^{n \vert q}
    })
  })
  \mathrlap{\,.}
\end{equation}

More generally, for spinor bundles $\inlinetikzcd{E \ar[r]\& X}$ with typical fiber $\mathbf{N}$, we denote the corresponding super-manifolds \cref{SmoothSupermanifoldsAsSmoothSuperSet} generally by $X^{1,d\vert \mathbf{N}}$. We say more about these in \cref{Coframes}.
\end{example}

\begin{example}
  For $\mathbf{X} \times \mathbf{Y} \in \mathrm{sSmthSet}$, their \emph{Cartesian product} has as plots pairs of plots into the factor space:
  $$
    \begin{aligned}
      \mathbf{X} \times \mathbf{Y} 
      & \in \mathrm{sSmthSet}
      \\
      \mathrm{Plt}\bracket({
        \mathbb{R}^{n \vert q}
        ,
        \mathbf{X}
        \times
        \mathbf{Y}
      })
      & =
      \mathrm{Plt}\bracket({
        \mathbb{R}^{n \vert q}
        ,
        \mathbf{X}
      })
      \times
      \mathrm{Plt}\bracket({
        \mathbb{R}^{n \vert q}
        ,
        \mathbf{Y}
      })
    \end{aligned}
  $$
\end{example}

\begin{example}
  For $\mathbf{X}, \mathbf{Y} \in \mathrm{sSmthSet}$, their \emph{mapping space} is
  \begin{equation}
  \label
  {MappingSpace}
    \begin{aligned}
      \mathbf{Map}\bracket({
        \mathbf{X},
        \mathbf{Y}
      })
      & 
      \in 
      \mathrm{sSmthSet}
      \\
      \mathrm{Plt}\bracket({
        \mathbb{R}^{n\vert q}
        ,
        \mathbf{Map}\bracket({
          \mathbf{X},
          \mathbf{Y}
        })
      })
      & 
      :=
      \mathrm{Hom}\bracket({
        \mathbb{R}^{n\vert q}
        \times
        \mathbf{X},
        \mathbf{Y}
      })
      \mathrlap{\,.}
    \end{aligned}
  \end{equation}
\end{example}

%%%%%%%%%%%%%%%%
\subsubsection
{Cohesion}
%%%%%%%%%%%%%%%

The canonical inclusion of Cartesian spaces \cref{CartesianSpaces} among super-Cartesian spaces \cref{SuperCartesianSpaces} evidently has a right adjoint \cref{AdjointFunctors}
\begin{equation}
  \left.
  \begin{tikzcd}[
    sep=0pt
  ]
    \mathrm{sCrtSp}
    \ar[
      rr,
      phantom,
      "{ \bot }"{scale=.7}
    ]
    \ar[
      rr,
      shift right=5pt,
      "{
        \mathrm{bos}
      }"'
    ]
    &&
    \mathrm{CrtSp}
    \ar[
      ll,
      hook',
      shift right=5pt
    ]
    \\
    \mathbb{R}^{n\vert q}
    &\mapsto&
    \mathbb{R}^n
  \end{tikzcd}
  \right\}
  \;\;\Leftrightarrow\;\;
  \left\{
  \begin{aligned}
  &
  \mathrm{Hom}\bracket({
    \mathbb{R}^{n'\vert 0},
    \mathbb{R}^{n\vert q}
  })
  \\
  \underset{\mathrm{ntrl}}{\simeq}
  \;
  &
  \mathrm{Hom}\bracket({
    \mathbb{R}^{n'},
    \mathbb{R}^n
  })
  \mathrlap{\,.}
  \end{aligned}
  \right.
\end{equation}
By left Kan extension, this induces a corresponding operation on smooth super-sets \cref{SmoothSuperSets}:
\begin{equation}
  \begin{tikzcd}[
    sep=15pt
  ]
    \mathrm{sSmthSet}
    \ar[
      rr,
      phantom,
      "{ \bot }"{scale=.7}
    ]
    \ar[
      rr,
      shift right=5pt,
      "{
        \mathrm{bos}
      }"'
    ]
    &&
    \mathrm{SmthSet}
    \ar[
      ll,
      hook',
      shift right=5pt
    ]
  \end{tikzcd}
  \,,\;\;\;
  \begin{tikzcd}
    \mathrm{sSmthSet}
    \ar[
      rr,
      uphordown,
      "{ 
        \text{$%
          \mathbf{X} 
            \mapsto 
          \bosonic{\mathbf{X}}%
        $} 
      }"{description}
    ]
    \ar[
      r,
      "{ \mathrm{bos} }"
    ]
    &
    \mathrm{SmthSet}
    \ar[
      r,
      hook
    ]
    &
    \mathrm{sSmthSet}
    \mathrlap{\,,}
  \end{tikzcd}
\end{equation}
equipped with a natural transformation, being the adjunction counit \cref{AdjunctionUnit}:
\begin{equation}
  \inlinetikzcd{
    \BosonicCounit
    :
    \bosonic{\mathbf{X}}
    \ar[r]
    \&
    \mathbf{X}
    \mathrlap{\,.}
  }
\end{equation}

For example, on a super-manifold $X^{1,d\vert \mathbf{N}}$ (\cref{SuperMinkowskiSpacetime}) this is the inclusion of its \emph{bosonic body}
\begin{equation}
\label
{IncludingBosonicBodyOfSupermanifold}
  \inlinetikzcd{
  X^{1,d}
  \ar[
    rr,
    "{
      \BosonicCounit
    }"
  ] 
  \&\&
  X^{1,d\vert \mathbf{N}}
  }
  \mathrlap{\,.}
\end{equation}

%%%%%%%%%%%%%%
\subsection
{Coframes}
\label
{Frames}
%%%%%%%%%%%%%

Supergravity is naturally formulated in \emph{first-order formalism for gravity} (cf. \cite{nLab:FirstOrderFormulationOfGravity}) where not the metric tensor directly but a \emph{coframe field} (aka ``vielbein'') and \emph{spin connection} represent the field of gravity, locally represented by a differential 1-form with coefficients in the (super) Poincar{\'e} Lie algebra.

%%%%%%%%%%%%%%%
\subsubsection
{Forms}
%%%%%%%%%%%%%%%

%%%%%%%%%%%%%
\paragraph
{Ordinary Differential Forms}
%%%%%%%%%%%%

To warm up before considering super-differential forms:

Write $\mathrm{dgCAlg}_{_{\mathbb{R}}}$ for \emph{differential graded-commutative algebras} over the real numbers. This means that these are $\mathbb{Z}$-graded algebras such that for elements of homogeneous degree $n \in \mathbb{Z}$ their product satisfies
\begin{equation}
  a \cdot b
  =
  (-1)^{n_a n_b}
  \,
  b \cdot a
  \mathrlap{\,,}
\end{equation}
and equipped with a differential $\mathrm{d}$ of degree +1,
\begin{equation}
  n_{\mathrm{d}a}
  =
  n_a + 1
  \mathrlap{\,,}
\end{equation}
which thus satisfies the graded Leibniz rule:
\begin{equation}
  \mathrm{d}(a \cdot b)
  \,=\,
  (\mathrm{d}a) \cdot b
  \,+\,
  (-1)^{n_b}\, 
  a \cdot \mathrm{d}b
\end{equation}
A basic example is the algebra of smooth differential forms on a Cartesian space,
which (emphasizing this basic fact for the following generalization) we may understand as 
\begin{equation}
  \Omega^1_{\mathrm{dR}}\bracket({
    \mathbb{R}^n
  })
  \simeq
    C^\infty\bracket({
      \mathbb{R}^n
    })\bracket[{
      \mathrm{d}x^1,
      \cdots,
      \mathrm{d}x^n
    }]
  \mathrlap{\,,}
\end{equation}
with generators $\mathrm{d}x^i$ in degree 1, and 
with the differential defined on generators by
\begin{equation}
  \begin{tikzcd}[row sep=-2pt, column sep=0pt]
    \Omega^\bullet_{\mathrm{dR}}
    \bracket({
      \mathbb{R}^n
    })
    \ar[
      rr,
      "{ \mathrm{d} }"
    ]
    &&
    \Omega^\bullet_{\mathrm{dR}}
    \bracket({
      \mathbb{R}^n
    })
    \\
    \mathllap{
    C^\infty\bracket({\mathbb{R}^n})
    \ni
    }
    f
    &\longmapsto&
    \sum_{i =1}^n
    \frac{\partial f}{\partial x^i}
    \mathrm{d}x^i
    \\
    \mathrm{d}x^i
    &\longmapsto&
    0
    \mathrlap{\,.}
  \end{tikzcd}
\end{equation}
With respect to pullback along smooth maps $\inlinetikzcd{\mathbb{R}^n \ar[r]\& \mathbb{R}^m}$, differential forms evidently form a sheaf, which as such is a smooth set \cref{CategoryOfSmoothSets} (in fact a smooth dgc-algebra), the \emph{smooth moduli set} of differential forms:
\begin{equation}
  \begin{aligned}
    \mathbf{\Omega}^\bullet_{\mathrm{dR}}(\ast)
    &
    \in
    \mathrm{dgCAlg}\bracket({\mathrm{SmthSet}})
    \\
    \mathrm{Plt}\bracket({
      \mathbb{R}^n,
      \mathbf{\Omega}^\bullet_{\mathrm{dR}}(\ast)
    })
    & :=
    \Omega^\bullet_{\mathrm{dR}}\bracket({
      \mathbb{R}^n
    })
    \mathrlap{\,.}
  \end{aligned}
\end{equation}

From this we obtain the definition of differential forms on any smooth set 
$\mathbf{X} \in \mathrm{SmthSet}$:
\begin{equation}
  \begin{aligned}
  \Omega^\bullet_{\mathrm{dR}}\bracket({
    \mathbf{X}
  })
  & 
  :=
  \mathrm{Hom}\bracket({
    \mathbf{X},
    \mathbf{\Omega}^\bullet_{\mathrm{dR}}(\ast)
  })
  \;
  \in
  \mathrm{dgCAlg}\bracket({
    \mathrm{Set}
  })
  \\
  \mathbf{\Omega}^\bullet_{\mathrm{dR}}
  \bracket({
    \mathbf{X}
  })
  &
  :=
  \mathbf{Map}\bracket({
    \mathbf{X},
    \mathbf{\Omega}^\bullet_{\mathrm{dR}}({
      \ast
    })
  })
  \;
  \in
  \mathrm{dgCAlg}\bracket({
    \mathrm{SmthSet}
  })
  \\
  \mathrm{Plt}\bracket({
    \mathbb{R}^n,
    \mathbf{\Omega}^\bullet_{\mathrm{dR}}\bracket({
      \mathbf{X}
    })
  })
  & \simeq
  \Omega^\bullet_{\mathrm{dR}}\bracket({
    \mathbb{R}^n
      \times 
    X 
  })
  \mathrlap{\,.}
  \end{aligned}
\end{equation}
It is a good exercise to verify that for $\mathbf{X} = X$ a smooth manifold (\cref{SmoothManifoldsAsSmoothSets}) this reduces to the traditional definition of differential forms on $X$.

%%%%%%%%%%%%%%%
\paragraph
{Super Differential Forms}
%%%%%%%%%%%%%%%

Write $\mathrm{dgcCAlg}$ for \emph{differential-graded super-commutative algebras}. These are $(\mathbb{Z} \times \mathbb{Z}_{/2})$-graded algebras, such that for elements of homogeneous bidegree $(n,\sigma) \in \mathbb{Z} \times \mathbb{Z}_{/2}$ their product satisfies the following \emph{homological super-sign rule}%
\footnote{
  Beware that there is also another sign convention used by some authors in super-mathematics, see \cite{nLab:SignInSupergeometry} for more discussion.
}
\begin{equation}
\label
{TheSuperHomologicalSignRule}
  a \cdot b
  =
  (-1)^{
    n_a n_b
    +
    \sigma_a \sigma_b
  }
  \,
  b \cdot a
  \mathrlap{\,,}
\end{equation}
and equipped with a differential of bidegree $(1,\mathrm{evn})$
\begin{equation}
  \begin{aligned}
    n_{\mathrm{d}a}
    &=
    n_a + 1
    \mathrlap{\,,}
    \\
    \sigma_{\mathrm{d}a}
    & =
    \sigma_a
    \mathrlap{\,,}
  \end{aligned}
\end{equation}
thus satisfying the graded Leibniz rule:
\begin{equation}
  \mathrm{d}(a \cdot b)
  \,=\,
  \bracket({\mathrm{d}a})
  \cdot b
  \,+\,
  (-1)^{n_a}
  \,
  a 
    \cdot 
  \bracket({\mathrm{d}b})
  \mathrlap{\,.}
\end{equation}

A basic example is the dgc-algebra of differential super-forms on a super-Cartesian space \cref{SuperCartesianSpaces}: This is 
\begin{equation}
  \Omega^\bullet_{\mathrm{dR}}
  \bracket({
    \mathbb{R}^{n\vert q}
  })
    \simeq
    C^\infty\bracket({
      \mathbb{R}^{k \vert q}
    })\bracket[{
      \substack{
        \mathrm{d}x^1,
        \cdots,
        \mathrm{d}x^k,
        \\
        \mathrm{d}\theta^1,
        \cdots,
        \mathrm{d}\theta^q
      }
    }]
\end{equation}
with generators in bi-degrees
\begin{equation}
\label
{BidegreeIfGeneratingSuperForms}
  \begin{alignedat}{2}
    n_{\mathrm{d}x^i}
    & = 
    1
    \mathrlap{\,,}
    &\;\;\;
    \sigma_{\mathrm{d}x^i}
    & =
    \mathrm{evn}
    \mathrlap{\,,}
    \\
    n_{\mathrm{d}\theta^i}
    & = 
    1
    \mathrlap{\,,}
    &
    \mathrlap{\,,}
    \sigma_{\mathrm{d}\theta^i}
    & =
    \mathrm{odd}
    \mathrlap{\,.}
    \mathrlap{\,.}
  \end{alignedat}
\end{equation}
and with differential defined on these generators as
\begin{equation}
  \begin{tikzcd}[
    row sep=-2pt, column sep=5pt
  ]
    \Omega^\bullet_{\mathrm{dR}}
    \bracket({
      \mathbb{R}^{n\vert q}
    })
    \ar[
      rr,
      "{ \mathrm{d} }"
    ]
    &&
    \Omega^\bullet_{\mathrm{dR}}
    \bracket({
      \mathbb{R}^{n\vert q}
    })
    \\
    \mathllap{
    C^\infty\bracket({
      \mathbb{R}^{n\vert q}
    })
    \ni
    \;
    }
    f 
      &\longmapsto& 
    \begin{aligned}
    &
    \textstyle{\sum_{i=1}^n}
    \tfrac{\partial f}{\partial x^i}
    \,
    \mathrm{d}x^i
    \\
    +
    &
    \textstyle{\sum_{\iota=1}^q}
    \tfrac{\partial f}{\partial \theta^\iota}
    \,
    \mathrm{d}\theta^\iota 
    \end{aligned}
    \\
    \mathrm{d}x^i &\longmapsto& 0
    \\
    \mathrm{d}\theta^\iota &\longmapsto& 0
    \mathrlap{\,.}
  \end{tikzcd}
\end{equation}
By \cref{TheSuperHomologicalSignRule,BidegreeIfGeneratingSuperForms} these generators have the following commutation relations:
\begin{equation}
\label
{CommutationRelationsOfBasicSuper1Forms}
  \begin{aligned}
    x^i 
    \wedge 
    \mathrm{d}x^j
    &
    =
    +
    \mathrm{d}x^j 
    \wedge 
    x^i
    \\
    \theta^i 
    \wedge 
    \mathrm{d}\theta^j
    &
    =
    -
    \mathrm{d}\theta^j 
    \wedge 
    \theta^i   
    \\
    x^i 
    \wedge 
    \mathrm{d}\theta^j
    &
    =
    +
    \mathrm{d}\theta^j 
    \wedge 
    x^i       
  \end{aligned}
  \;\;\;\;\;\;
  \begin{aligned}
    \mathrm{d}x^i 
    \wedge 
    \mathrm{d}x^j
    &
    =
    -
    \mathrm{d}x^j 
    \wedge 
    \mathrm{d}x^i
    \\
    \mathrm{d}\theta^i 
    \wedge 
    \mathrm{d}\theta^j
    &
    =
    \mathcolor{purple}{+}
    \mathrm{d}\theta^j 
    \wedge 
    \mathrm{d}\theta^i   
    \\
    \mathrm{d}x^i 
    \wedge 
    \mathrm{d}\theta^j
    &
    =
    -
    \mathrm{d}\theta^j 
    \wedge 
    \mathrm{d}x^i     .  
  \end{aligned}
\end{equation}
The fact that odd super-forms like $\mathrm{d}\theta^i$ commute with each other is of central importance for super-space super-gravity in \cref{Equations}.

These super-differential forms pull back along smooth maps $\inlinetikzcd{\mathbb{R}^{n\vert q} \ar[r] \& \mathbb{R}^{n'\vert q'}}$ and as such form a smooth super set \cref{SmoothSuperSets}, the \emph{smooth moduli super set} of super-differential forms:
\begin{equation}
  \begin{aligned}
    \mathbf{\Omega}^\bullet_{\mathrm{dR}}(\ast)
    &
    \in
    \mathrm{dgsCAlg}\bracket({\mathrm{sSmthSet}})
    \\
    \mathrm{Plt}\bracket({
      \mathbb{R}^{k\vert q},
      \mathbf{\Omega}^\bullet_{\mathrm{dR}}(\ast)
    })
    & :=
    \Omega^\bullet_{\mathrm{dR}}\bracket({
      \mathbb{R}^{k\vert q}
    })
  \end{aligned}
\end{equation}
Generally for $\mathbf{X} \in \mathrm{sSmthSet}$:
\begin{equation}
  \begin{aligned}
    \Omega^\bullet_{\mathrm{dR}}
    \bracket({
      \mathbf{X}
    })
    &
    :=
    \mathrm{Hom}\bracket({
      \mathbf{X},
      \mathbf{\Omega}^\bullet_{\mathrm{dR}}
      \bracket({
        \mathbf{X}
      })
    })
    \;
    \in
    \mathrm{dgsCAlg}\bracket({
      \mathrm{Set}
    })
    \\
    \mathbf{\Omega}^\bullet_{\mathrm{dR}}
    \bracket({
      \mathbf{X}
    })
    &
    :=
    \mathbf{Map}\bracket({
      \mathbf{X},
      \mathbf{\Omega}^\bullet_{\mathrm{dR}}
      \bracket({
        \mathbf{X}
      })
    })
    \; \in
    \mathrm{dgsCAlg}\bracket({
      \mathrm{sSmthSet}
    })
    \\
    \mathrm{Plt}\bracket({
      \mathbb{R}^{n\vert q},
      \mathbf{\Omega}^\bullet_{\mathrm{dR}}
      \bracket({
        \mathbf{X}
      })
    })
    &
    \simeq
    \Omega^\bullet_{\mathrm{dR}}\bracket({
      \mathbb{R}^{n\vert q} 
      \times
      \mathbf{X}
    })
    \mathrlap{\,.}
  \end{aligned}
\end{equation}

%%%%%%%%%%%%%%%
\subsubsection
{Coframes}
\label
{Coframes}
%%%%%%%%%%%%%

%%%%%%%%%%%%%%%%%%
\paragraph
{Super-symmetry}
%%%%%%%%%%%%%%%%%%

A flavor of super-symmetry is determined by a real spinor representation
\begin{equation}
  \mathbf{N}
  \in
  \mathrm{Rep}_{_{\mathbb{R}}}
  \bracket({
    \mathrm{Spin}(1,d)
  })
\end{equation}
equipped with a spin-equivariant bilinear map
\begin{equation}
  \begin{tikzcd}[row sep=-3pt, 
   column sep=0pt
  ]
    \mathbf{N} 
      \otimes_{_{\mathbb{R}}}
    \mathbf{N}
    \ar[
      rr
    ]
    &&
    \mathbb{R}^{1,d}
    \\
    \bracket({
      \Phi_1, \Phi_2
    })
    &\longmapsto&
    \bracket({
      \overline{\Phi}_1
      \,\Gamma^a\,
      \Phi_2
    })_{a = 0}^{d}
    \mathrlap{\,.}
  \end{tikzcd}
\end{equation}

%%%%%%%%%%%%%
\paragraph
{Coframe Fields}
%%%%%%%%%%%%%
A \emph{coframe field} on an ordinary manifold is a reduction of its structure group along $\inlinetikzcd{O(1,d) \ar[r, hook] \& \mathrm{GL}\bracket({ \mathbb{R}^{1,d} })}$. On a supermanifold it is a reduction of its structure group along $\inlinetikzcd{\mathrm{Spin}(1,d) \ar[r, hook] \& \mathrm{GL}\bracket({ \mathbb{R}^{1,d\vert N} }) }$.
This means equivalently that: 
\begin{definition}
\label[definition]
{SuperCoframeField}
A \emph{super-coframe field} in super-dimension $(1,d\vert\mathbf{N})$ is:
\begin{enumerate}
\item
an open cover
$
  \big\{
    \inlinetikzcd{
    U_i^{1,d\vert \mathbf{N}}
    \ar[
      rr,
      hook,
      "{ \iota_i }"
    ]
    \&\&   
    X^{1,d\vert \mathbf{N}}
    }
  \big\}_{i \in I}
  \,,
$
\item
isomorphisms \;
$
  \begin{tikzcd}[row sep=0pt, column sep=12pt]
    T U_i
    \ar[
      rr,
      "{
        (E,\Psi)_i
      }",
      "{ \sim }"{swap}
    ]
    \ar[
      dr,
      shorten =-3pt,
    ]
    &&
    U_i
    \times
    \mathbb{R}^{1,d\vert \mathbf{N}}
    \ar[dl]
    \mathrlap{\,,}
    \\
    &
    U_i
  \end{tikzcd}
$
\item
transition functions
$
  \inlinetikzcd{
    U_i \cap U_j
    \ar[
      r,
      "{ g_{i j} }"
    ]
    \&
    \mathrm{Spin}(1,d)
    \mathrlap{\,,}
  }
$
\item
and cocycle conditions
$
  (E,\Psi)_j
  =
  g_{i j}
  \cdot
  (E,\Psi)_i
  \text{ on $U_i \cap U_j$.}
$
\end{enumerate}

Concretely, these isomorphisms correspond to tuples of 1-forms
\begin{equation}
\label
{SuperCoframeForms}
  \bracket({
    E_i, \Psi_i
  })
  =
  \left({
  \begin{aligned}
  &
  \bracket({
    E^a_i
    \in
    \Omega^{1,\mathrm{evn}}
      _{\mathrm{dR}}
    \bracket({U_i})  
  })_{a=0}^{d},
  \\  
  &
  \bracket({
    \Psi^\alpha_i
    \in
    \Omega^{1,\mathrm{odd}}
      _{\mathrm{dR}}
    \bracket({U_i})  
  })_{\alpha = 1}^N
  \end{aligned}
  }\right)
  \mathrlap{,}
\end{equation}
on which the transition functions act via the given $\mathrm{Spin}(1,d)$-representation:
\begin{equation}
  \begin{aligned}
  E_j^a 
  &=
  ({
    g_{i j}
  })^a{}_{a'} 
  E_i^{a'}
  \\
  \Psi_j^\alpha 
  &=
  ({
    g_{i j}
  })^\alpha{}_{\alpha'} 
  \Psi_i^{a'}
  \end{aligned}
  \;\;\;
  \text{on $U_i \cap U_j$.}
\end{equation}
\end{definition}
\begin{remark}
Since ultimately one is focused on Spin-invariant expressions in these forms, it is safe and common to notationally suppress the chart index, as in \cref{CoordinateComponentsOfCoframeField} and all of the following.
\end{remark}

\begin{remark}
  Alternatively, it is natural to write ``$E^\alpha$'' for ``$\Psi^\alpha$'' and this is a common convention. However, the convention we adopt allows for tighter notation where spinor indices in Spin-invariant expression need not be expanded out.
\end{remark}

%%%%%%%%%%%%%%
\paragraph
{Spin-connection}
%%%%%%%%%%%%%%

Given a coframe field as above, a \emph{spin connection} is a principal connection on the corresponding $\mathrm{Spin}(1,d)$-principal bundles.
In terms of a choice of open cover as above, this means equivalently:
\begin{definition}
\label[definition]
{SpinConnection}
A Spin-connection consists of:
\begin{enumerate}
  \item
  connection forms
  $
    \bracket({\Omega_i})^{ab}
    =
    -
    \bracket({\Omega_i})^{ba}
    \;\;
    \text{in $\Omega^{1,\mathrm{evn}}_{\mathrm{dR}}\bracket({U_i})$}
  $,
  \item
  cocycle conditions
  $
    \Omega_{j}
    =
    g_{i j}
    \bracket({
      \Omega_i
      +
      \mathrm{d}
    })
    g_{ij}^{-1}
  $
  \;\;
  \text{ on $U_i \cap U_j$.}
\end{enumerate}
\end{definition}

%%%%%%%%%%%%%
\paragraph
{Component expansion}
%%%%%%%%%%%%%

On a super-chart with coordinates $(X,\Theta)$ we write the expansion of the super-coframe and spin-connection as follows (cf. \cref{IndexConvention}):
\begin{equation}
\label
{CoordinateComponentsOfCoframeField}
  \begin{alignedat}{2}
  E^a 
  &=:
  E^a_{\evencoordinateindex}
  \,
  \mathrm{d} X^\evencoordinateindex
  &+
  E^a_{\oddcoordinateindex}
  \,
  \mathrm{d} \Theta^\oddcoordinateindex
  \,
  \\
  \Psi^\alpha 
  &=:
  \Psi^\alpha_{\evencoordinateindex}
  \,
  \mathrm{d} X^\evencoordinateindex
  &+
  \Psi^\alpha_{\oddcoordinateindex}
  \,
  \mathrm{d} \Theta^\oddcoordinateindex
  \\
  \Omega^{a b}
  &=:
  \Omega^{a b}_{\evencoordinateindex}
  \,
  \mathrm{d} X^\evencoordinateindex
  &+
  \Omega^{a b}_{\oddcoordinateindex}
  \,
  \mathrm{d} \Theta^\oddcoordinateindex \,.
  \end{alignedat}
\end{equation}

\begin{SCtable}[.8][htb]
\caption{\label
{IndexConvention}%
  \textbf{Index convention,} cf. \cref{CoordinateComponentsOfCoframeField}. Here the even coordinate index ``$r$'' takes the role of the traditional ``$\mu$'', which we avoid so that Greek indices are reserved for odd components.
}
\begin{tblr}{ 
  colspec = {c|cc},
  colsep = 5pt,
  stretch = 1.1,
  hline{1,2,Z} = {solid},   
  row{2} = {bg=lightgray}, 
}
  & \textbf{Even} & \textbf{Odd} 
  \\
 \small  \textbf{Frame} 
    & $a \in \{0, \dots, d\}$ 
    & $\alpha \in \{1, \dots, N\}$ 
  \\
 \small  \textbf{Coord.} 
    & $\evencoordinateindex \in \{0, \dots, d\}$ 
    & $\oddcoordinateindex \in \{1, \dots, N\}$ 
    \\
\end{tblr}
\end{SCtable}

These form components \cref{CoordinateComponentsOfCoframeField} further expand into a (terminating) Taylor series in the odd coordinate function $\Theta^\rho$:
\begin{equation}
\label
{OddTaylorExpansionOfSuperfields}
  \begin{alignedat}{2}
    E^a_{r/\rho}
    & 
    =:
    \textstyle{\sum_{n=0}^N}
    \bracket({
      E^{(n)}
    })\rule{0pt}{11pt}^a_{r/\rho}
    && =:
    \textstyle{\sum_{n=1}^{N}}
    \tfrac{1}{n!}
    \Theta^{\rho_1}
    \cdots
    \Theta^{\rho_n}
    \bracket({
      E
        ^{(n)}
        _{\rho_1 \cdots \rho_n}
    })\rule[-4pt]{0pt}{15pt}
      ^a_{r/\rho}
    \\[4pt]
    \Psi^\alpha_{r/\rho}
    & 
    =:
    \textstyle{\sum_{n=0}^N}
    \bracket({
      \Psi^{(n)}
    })\rule{0pt}{11pt}
      ^\alpha_{r/\rho}
    && =:
    \textstyle{\sum_{n=1}^{N}}
    \tfrac{1}{n!}
    \Theta^{\rho_1}
    \cdots
    \Theta^{\rho_n}
    \bracket({
      \Psi
        ^{(n)}
        _{\rho_1 \cdots \rho_n}
    })\rule[-4pt]{0pt}{15pt}
      ^\alpha_{r/\rho}
    \\[4pt]
    \Omega^{ab}_{r/\rho}
    & 
    =:
    \textstyle{\sum_{n=0}^N}
    \bracket({
      \Omega^{(n)}
    })\rule{0pt}{11pt}
      ^{ab}_{r/\rho}
    && =:
    \textstyle{\sum_{n=1}^{N}}
    \tfrac{1}{n!}
    \Theta^{\rho_1}
    \cdots
    \Theta^{\rho_n}
    \bracket({
      \Omega
        ^{(n)}
        _{\rho_1 \cdots \rho_n}
    })\rule[-4pt]{0pt}{15pt}
      ^{ab}_{r/\rho}
    \mathrlap{\,,}
  \end{alignedat}
\end{equation}
whose components (on the far right) are functions on the ordinary underlying spacetime.

\begin{remark}
  The restriction of super-fields to ordinary spacetime $\inlinetikzcd{ U^{1,d} \ar[r, hook, "{ \epsilon^{\mathrlap{\rightsquigarrow}} }"] \& U^{1,d\vert \mathbf{N}} }$ underlying the given chart retains exactly the 0th order of the $r$-components in this expansion \cref{OddTaylorExpansionOfSuperfields}:
  \begin{equation}
  \label
  {PullbackOfExpandedFieldsToBosonic}
    \begin{aligned}
    ({
      \BosonicCounit
    })^\ast
    E
    & = E^{(0)}_r \, \mathrm{d}X^r
    \\
    ({
      \BosonicCounit
    })^\ast
    \Psi
    & = \Psi^{(0)}_r \, \mathrm{d}X^r
    \\
    ({
      \BosonicCounit
    })^\ast
    \Omega
    & = \Omega^{(0)}_r \, \mathrm{d}X^r
    \mathrlap{\,.}
    \end{aligned}
  \end{equation}
  Conversely, this means that any super-space formulation of super-gravity on $X^{1,d}$ needs to involve constraints that fix the higher components \cref{OddTaylorExpansionOfSuperfields} in terms of the 0th order spacetime component \cref{PullbackOfExpandedFieldsToBosonic}. We discuss this issue of \emph{rheonomy} in \cref{Rheonomy}.
\end{remark}

%%%%%%%%%%%%%%
\paragraph
{Super-spacetime}
%%%%%%%%%%%%%

In conclusion, we say that:
\begin{definition}
\label[definition]
{SuperSpacetime}
A \emph{super-spacetime} $X^{1,d\vert \mathbf{N}}$ of dimension $(1,d\vert\mathbf{N})$ is a supermanifold $X$ (\cref{SmoothSupermanifoldsAsSmoothSuperSet})  equipped with a $(1,d\vert\mathbf{N})$ super-coframe (\cref{SuperCoframeField}) and a Spin-connection (\cref{SpinConnection}).
\end{definition}
\begin{remark}
\label[remark]
{SuperspacetimeAsSupergravityFieldConfiguration}
Super-spacetimes (\cref{SuperSpacetime}) model the off-shell field content of supergravity:
\begin{enumerate}
\item  $X$ -- the \emph{spacetime},

\item $E$ -- the \emph{graviton field},

\item $\Psi$ -- the \emph{gravitino field},

\item $\Omega$ -- the \emph{auxiliary field}.
\end{enumerate}
\end{remark}

We proceed to discuss in \cref{Supergravity} the conditions on such data to make it the on-shell configurations of supergravity, hence satisfying the SuGra equations of motion.

%%%%%%%%%%%%%%

%%%%%%%%%%%%%%
%%%%%%%%%%%%%%%%
\section
{Supergravity}
\label
{Supergravity}
%%%%%%%%%%%%%%%

We focus on 11D SuGra (in \cref{Equations}) and its dimensional reductions (in \cref{Reductions}, cf. \cref{LogicOfSuGra}) and its probe M5-branes (in \cref{Probes}).

%%%%%%%%%%%%%%
\subsection
{Equations}
\label
{Equations}
%%%%%%%%%%%%%%

%%%%%%%%%%%%%%%%%
\subsubsection
{Higher Maxwell-Type Equations}
\label
{HigherMaxwellEquations}
%%%%%%%%%%%%%%%%%

Before discussing the concrete (super-)spacetime equations of motion of 11D SuGra (\cref{11DSuGra}) and its reductions (\cref{Reductions}) and branes (\cref{Probes}), here we briefly look at the general structure that these equations take (in the bosonic sector and generally on super-spacetime, cf. \cref{lS4IdentityFailsOnSpacetimeSalvagedOnSuperspace}). Namely these are equations of ``higher Maxwell-type'' \cite[\S 2.1]{SS24-Phase}, in that: 
\begin{enumerate}
  \item The Bianchi identities are independent of gauge potentials.

  This entails that we exclude non-abelian 1-form gauge fields from the discussion (whose Bianchi identities, $\mathrm{d}F^i = f^i{}_{j k}\, A^j \wedge F^k$, depend on gauge potentials $A^i$). In particular our discussion here does not apply to 10D heterotic SuGra. (But does apply to gauged SuGras with KK-gauge fields.)
  
  \item
  However, the Bianchi identities may involve non-linear self-sourcing terms.

  This means that electric and magnetic fluxes may couple to each other non-linearly as in \cref{CFieldEoM}. In the infrared completion of these theories (\cref{Completions}) such non-linearity entails that electromagnetic flux quantization must take place in \emph{non-abelian cohomology theories}, beyond the Whitehead-generalized cohomology theories (like K-theory) that have been considered for this purpose in the past.

\end{enumerate}

%%%%%%%%%%%%%%
\paragraph
{Plain Situation Without Sources}
%%%%%%%%%%%%%

First consider the plain situation,  without external sources but allowing self-sourcing of flux.

\begin{definition}
\label[definition]
{OnHigherMaxwellTypeBianchiIdentities}
Given a tuple of (duality-symmetric) flux densities
\begin{equation}
  \vec F
  :=
  \bracket({
  F^{(i)}
  \in
  \Omega
    ^{\mathrm{deg}_i}
    _{\mathrm{dR}}
  \bracket({X})
  })_{i \in I}
  \mathrlap{\,,}
\end{equation}
we say that \emph{Maxwell-type Bianchi identities} are differential equations of the form
\begin{equation}
\label
{HigherMaxwellTypeBianchiIdentities}
  \mathllap{
    \forall_{i \in I}
    \;\;\;
  }
  \mathrm{d}
  \,
  F^{(i)}
  \,=\,
  P^{(i)}\bracket({\vec F})
  \mathrlap{\,,}
\end{equation}
for graded-symmetric polynomials $P^{(i)}$ ($i \in I$) subject to the condition that:

The image of these equations under $\mathrm{d}$ is no further condition, hence $\mathrm{d} P^{(i)} \bracket({ \vec F }) = 0$ holds independently of special properties of $\vec F$, hence (cf. \parencites[Lem. 3.2.1]{SS26-HigherGauge}): 
\begin{equation}
\label
{IntegrabilityConditionOnBianchis}
  \forall_{i \in I}
  \;\;\;\;
  \textstyle{\sum_{j \in I}}
  P^{(j)}
  \frac{\partial}{\partial F^{(j)}}
  P^{(i)}
  =
  0
  \mathrlap{\,.}
\end{equation}
where the partial derivative is itself understood as being a graded derivation. 
\end{definition}

\begin{example}
  Examples of higher Maxwell-type duality-symmetric Bianchi identities \cref{HigherMaxwellTypeBianchiIdentities} include the following (cf. \cite[\S 2.4]{SS25-Flux}):
  \begin{description}
    \item[Vacuum Maxwell theory:]
    \begin{equation}
      \begin{aligned}
        \mathrm{d}\, F_2 
        & = 0
        \\
        \mathrm{d}\, G_2 
        & =
        0 \,.
      \end{aligned}
    \end{equation}

    \item[Gauge sector of 5D SuGra:]
    \begin{equation}
      \begin{aligned}
        \mathrm{d}\, F_2 & = 0
        \\
        \mathrm{d}\, F_3 
        & =
        \tfrac{1}{2} F_2 F_2 \,.
      \end{aligned}
    \end{equation}

    \item[Gauge sector of 10D IIA SuGra:]
    \begin{equation}\label{IIAGaugeSector}
      \begin{aligned}
        \mathrm{d}\, 
        F_2 & = 0
        \\
        \mathrm{d}\, 
        F_4 & = H_3 \wedge F_2        
        \\
        \mathrm{d}\, 
        F_6 & = - H_3 \wedge F_4    
        \\
        \mathrm{d}\, H_3 & = 0
        \\
        \mathrm{d}\, H_7
        & =
        \tfrac{1}{2}
        F_4 F_4 + F_2 F_6 \, ,
      \end{aligned}
    \end{equation}
which dimensionally reduces to the gauge sector of $(\mathcal{N}=2)$ 9D SuGra \eqref{torS4ClosedForms}, and arises itself as the reduction of the 
    \item[gauge sector of 11D SuGra:] \cref{CFieldEoM}
    \begin{equation}
      \begin{aligned}
        \mathrm{d}\, G_4 & = 0
        \\
        \mathrm{d}\, G_7 
        & =
        \tfrac{1}{2} G_4 G_4\,.
      \end{aligned}
    \end{equation}
  
  \end{description}
\end{example}

\begin{definition}
\label[definition]
{OnTheCharacteristicLInfinityAlgebra}
  Given higher Maxwell-type Bianchi identities \cref{HigherMaxwellTypeBianchiIdentities}, we say that their \emph{characteristic $L_\infty$-algebra} $\mathfrak{a} \in L_\infty \mathrm{Alg}_{_{\mathbb{R}}}$ is that whose Chevalley-Eilenberg algebra \cref{LInfinityFaithfulInDGCAlgOp} is,
  \begin{equation}
  \label
  {TheCharacteristicLInfinityAlgebra}
    \mathrm{CE}(\mathfrak{a})
    \defneq
    \mathrm{Free}_{\mathrm{dgcAlg}}
    \bracket({
    \bracket({b^{(i)}})_{i \in I}
    })
    \big/
    \bracket({
    \mathrm{d}\, b^{(i)}
    =
    P^{(i)}\bracket({\vec b\,})
    })_{i \in I}
    \mathrlap{\,,}
  \end{equation}
  for generators $b^{(i)}$ in degree $\mathrm{deg}_i$.

  That this is well-defined, in that $\mathrm{d}^2 = 0$, is equivalently the integrability condition \cref{IntegrabilityConditionOnBianchis}.
\end{definition}

It is now essentially tautological but important that:
\begin{proposition}
[Characteristic $L_\infty$-algebras]
\label[proposition]
{Characteristics}
Solutions to higher Maxwell-type Bianchi identities \cref{HigherMaxwellTypeBianchiIdentities} are in natural bijection with closed $L_\infty$-valued differential forms \cref{ClosedLInfinityValuedDifferentialForms} with coefficients in the characteristic $L_\infty$-algebra \cref{TheCharacteristicLInfinityAlgebra}:
\begin{equation}
  \bracketmid\{{
    \vec F
    \in 
    \Omega^\bullet_{\mathrm{dR}}(X)
  }{
    \mathrm{d}\, \vec F
    =
    \vec P\bracket({\vec F})
  }\}
  \,\simeq\,
  \Omega^1_{\mathrm{cl}}\bracket({
    X;
    \mathfrak{a}
  })
  \mathrlap{\,.}
\end{equation}
\end{proposition}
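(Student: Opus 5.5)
The plan is to unwind the definition of closed $L_\infty$-valued differential forms referenced as \cref{ClosedLInfinityValuedDifferentialForms}: for an $L_\infty$-algebra $\mathfrak{a}$ of finite type, these are the homomorphisms of differential graded(-super)commutative algebras
\begin{equation}
  \Omega^1_{\mathrm{cl}}\bracket({X;\mathfrak{a}})
  \;\simeq\;
  \mathrm{Hom}_{\mathrm{dgcAlg}}\bracket({
    \mathrm{CE}(\mathfrak{a}),
    \Omega^\bullet_{\mathrm{dR}}(X)
  })
  \mathrlap{\,.}
\end{equation}
With this, the statement reduces to the universal property of the presentation \cref{TheCharacteristicLInfinityAlgebra} of $\mathrm{CE}(\mathfrak{a})$ as a quotient of a free graded-commutative algebra.

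The first step is to describe maps out of the underlying free graded-commutative algebra on the generators $\bracket({b^{(i)}})_{i\in I}$ in degrees $\mathrm{deg}_i$. A graded algebra homomorphism into $\Omega^\bullet_{\mathrm{dR}}(X)$ is uniquely determined by the images $F^{(i)} := \phi\bracket({b^{(i)}}) \in \Omega^{\mathrm{deg}_i}_{\mathrm{dR}}(X)$, and every such tuple $\vec F$ arises. This is because the target is graded-commutative, so the required sign rule is automatically respected, in the super case via \cref{TheSuperHomologicalSignRule} with generators of even super-parity. The second step is to impose compatibility with the differential. Both differentials are graded derivations, and $\phi$ is an algebra map. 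Hence $\phi\circ \mathrm{d}_{\mathrm{CE}} = \mathrm{d}\circ\phi$ holds on all of $\mathrm{CE}(\mathfrak{a})$ if and only if it holds on generators, by induction on word length using the Leibniz rule. On generators the condition reads $\mathrm{d}F^{(i)} = \phi\bracket({P^{(i)}(\vec b)}) = P^{(i)}\bracket({\vec F})$, since $\phi$ commutes with polynomial expressions. This is exactly \cref{HigherMaxwellTypeBianchiIdentities}.

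The final step is to check naturality in $X$. Pullback along a map $f$ acts on both sides by postcomposition with $f^\ast$, which is a dgc-algebra map. So the bijection $\phi \mapsto \vec F$ commutes with pullback. If the smooth-set (plot-wise) version is wanted, the same argument applies on each plot $\mathbb{R}^{n\vert q}\times X$.

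The only point needing care, rather than a genuine obstacle, is that $\mathrm{CE}(\mathfrak{a})$ is well defined at all, i.e.\ that $\mathrm{d}_{\mathrm{CE}}^2=0$. This is precisely the integrability condition \cref{IntegrabilityConditionOnBianchis}, already noted after \cref{OnTheCharacteristicLInfinityAlgebra}. One must also make sure the derivation extension is checked with the correct graded signs, including super-signs when $X$ is a supermanifold. I would handle this by writing $\mathrm{d}_{\mathrm{CE}} = \sum_j P^{(j)}\partial/\partial b^{(j)}$ as a graded derivation and noting that a derivation vanishing on generators vanishes identically.
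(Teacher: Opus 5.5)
Your proposal is correct and is exactly the unwinding the paper has in mind when it calls this statement ``essentially tautological'' (it gives no further proof): a dgc-map $\mathrm{CE}(\mathfrak{a}) \to \Omega^\bullet_{\mathrm{dR}}(X)$ is fixed by the images $F^{(i)}$ of the generators, and compatibility with the differentials, checked on generators, is exactly $\mathrm{d}F^{(i)} = P^{(i)}(\vec F)$. Your remarks on naturality under pullback, on $\mathrm{d}^2_{\mathrm{CE}}=0$ being the integrability condition, and on the images having even super-parity in the super case are the right fine print.
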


\begin{example}
\label[example]
{CharacteristicLInfinityOf11DSuGra}
The characteristic $L_\infty$-algebra of the Bianchi identities \cref{CFieldEoM}, which we denote $\mathfrak{l}S^4$ (for reasons explained in \cref{Charges}), has generators $v_3$ and $v_6$ subject to one nontrivial Lie bracket relation:
\begin{equation}
\label
{MTheoryGaugeAlgebra}
  \mathfrak{l}
  S^4
  \,\simeq\,
  \mathrm{Free}_{ L_\infty \mathrm{Alg}}(
    v_3, v_6
  )
  \big/
  \left({
    \begin{aligned}
    [v_3, v_3]_2 & = v_6, 
    \\
    [-,\cdots,-]_{\neq 2} & = 0 
    \end{aligned}
  }\right)
  \mathrlap{.}
\end{equation}
This is known as the \emph{M-theory gauge algebra} \parencites[(2.6)]{CremmerEtAl1998}[(3.4)]{LavrinenkoEtAl999}[(75)]{KalkkinenStelle2003}[(86)]{BandosEtAl2004}[(4.9)]{Sati2010}.

Accordingly, we may refer to the form of the Bianchi identities \cref{CFieldEoM} of 11D Sugra as \emph{$\mathfrak{l}S^4$-Bianchis}:
\begin{equation}
\label
{lS4BianchisAsClosedlS4ValuedForms}
  \bracketmid\{{
  \begin{aligned}
    G_4
    & \in
    \Omega^4_{\mathrm{dR}}(X)
    \\
    G_7
    & \in
    \Omega^7_{\mathrm{dR}}(X)
  \end{aligned}
  }{
  \begin{aligned}
    \mathrm{d}\,
    G_4
    & =
    0
    \\
    \mathrm{d}\,
    G_7
    & =
    \tfrac{1}{2}
    G_4 G_4
  \end{aligned}
  }\}
  \simeq
  \Omega^1_{\mathrm{cl}}\bracket({
    X; \mathfrak{l}S^4
  })
  \mathrlap{\,.}
\end{equation}
\end{example}

\begin{remark}
[Different roles played by $L_\infty$-algebras]
\label[remark]
{DifferentRolesOfLInfinityAlgebras}
Beware (cf. \cref{TwoApproachesToHigherGaugeTheory}) that the role of $L_\infty$-algebras here is different from the ``higher principal connection'' approach to higher gauge theory (\parencites[\S 11-12]{Schreiber2005}{BaezSchreiber2007}{FSSt12-DiffClasses}{SSS12}{SchreiberWaldorf2013}{BorstenEtAl2025}[\S 2]{Alfonsi2025HigherGeometry}{RistSaemannWolf2026}{Waldorf2026}{BunkEtAl2026}) which in the supergravity literature corresponds (\parencites{FSS15-WZW}{FSS19-RationalM}) to the (Cartan-) geometric approach of \parencites{DAuriaFre1982}{CDF1991}: 

There $L_\infty$-algebras serve as the coefficients of gauge potential forms which are generically not flat/closed --- while here they serve as coefficients of the flux densities and their closure is no less than the electromagnetic Gauss law constraint putting the theory on shell.

Indeed, supergravity does not justify a principality constraint on its globally completed higher gauge fields. For instance differential K-theory (a candidate IR-completion of the RR-field in 10D type II SuGra, cf. \parencites{Freed2002}[\S 3]{Szabo2013}{GS22-KTheory}) is not a higher principal connection theory, but is an example (\cite[Exs. 9.2, 11.2]{FSS23-Char}) of the differential nonabelian cohomology considered here (\cref{OnGeneralChargesInDiffNonabCohomology}).
\end{remark}

\begin{SCfigure}[.85][htb]
\caption{\label
{TwoApproachesToHigherGaugeTheory} 
There are two approaches to the global construction of higher gauge fields: Either as higher generalization of connections on principal/fiber bundles, or as nonabelian generalization of ordinary differential cohomology. Here we are concerned with the latter (from \cite[\S 9]{FSS23-Char}, cf. \cite{SS26-HigherGauge}).
}
\adjustbox{
  rndfbox=4pt, scale=0.9
}{
\hspace{-6pt}
\begin{tblr}{
  colspec={c||c|c},
  row{odd} = {gray!40},
  colsep=3pt
}
  &
  \textbf{Principal}
  &
  \textbf{Cohomological}
  \\
  \hline
  \def\arraystretch{.85}
  \begin{tabular}{@{}c@{}}
   Specified
    \\
   $L_\infty$-algebra
  \end{tabular}
  &
  \def\arraystretch{.85}
  \begin{tabular}{@{}c@{}}
    coefficient of
    \\
    gauge potentials
    \\
    (connections)
  \end{tabular}
  &
  \def\arraystretch{.85}
  \begin{tabular}{@{}c@{}}
    coefficient of
    \\
    flux densities
    \\
    (curvatures)
  \end{tabular}
  \\
  \def\arraystretch{.85}
  \begin{tabular}{@{}c@{}}
    Flatness /
    \\
    closure
  \end{tabular}
  &
  non-generic
  &
  \def\arraystretch{.85}
  \begin{tabular}{@{}c@{}}
    Gauss law /
    \\
    Bianchi identity
  \end{tabular}
  \\
  Examples
  &
  \def\arraystretch{.85}
  \begin{tabular}{@{}c@{}}
    heterotic SuGra
  \end{tabular}
  &
  \def\arraystretch{.85}
  \begin{tabular}{@{}c@{}}
    type I/II SuGra 
    \\
    / 11D SuGra
  \end{tabular}
\end{tblr}
}

\end{SCfigure}

%%%%%%%%%%%%%%%
\paragraph
{Relative Situation with Sources}
%%%%%%%%%%%%%%%

More generally:
\begin{definition}
\label[definition]
{TwistedRelativeBianchis}
Given ``background'' fluxes $\vec F$ on $X$  satisfying their higher Maxwell-type equations \cref{HigherMaxwellTypeBianchiIdentities}, consider a smooth map 
\begin{equation}
  \inlinetikzcd{
    X'
    \ar[r, "{ \Phi }"]
    \&
    X
  }
\end{equation}
from another smooth manifold $X'$.
Then on a tuple of further (duality-symmetric) flux densities on this submanifold,
\begin{equation}
  \vec F'
  :=
  \bracket\{{
    F'^{{i'}}
    \in
    \Omega^{\mathrm{deg}_{i'}}_{\mathrm{dR}}
    \bracket({
      \Sigma
    })
  }\}
    _{i' \in I'}
  \mathrlap{\,,}
\end{equation}
we say that \emph{$\vec F$-twisted $\Phi$-relative} Maxwell-type Bianchi identities are differential equations of the form
\begin{equation}
\label
{TwistedRelativeMaxwellEquations}
  \forall_{i' \in I'}
  \;\;\;\;
  \mathrm{d}\,
  F'^{(i')}
  =
  P'^{i'}\bracket({
    \vec F', \Phi^\ast\vec F
  })
\end{equation}
for graded-symmetric polynomials $P'^{i'}$ ($i' \in I'$) subject to the condition that:

The image of these equations under $\mathrm{d}$ is no further condition, in that 
\begin{equation}
  \forall_{i' \in I'}
  \;\;\;\;
  \textstyle{\sum_{j' \in I'}}
  P'^{(j')}
  \frac{\partial}{\partial F^{(j')}}
  P'^{(i')}
  +
  \textstyle{\sum_{j \in I}}
  P^{(j)}
  \frac{\partial}{\partial \Phi^\ast F^{(j)}}
  P^{(i')}
  =
  0
  \mathrlap{\,.}
\end{equation}
\end{definition}

In straightforward generalization of \cref{Characteristics} we have:
\begin{proposition}
[Characteristic relative $L_\infty$-algebras]
\label[proposition]
{RelativeCharacteristics}
  Solutions to twisted relative Maxwell-type equations
  \cref{TwistedRelativeMaxwellEquations} are in natural bijection with the set
  $
    \Omega^1_{\mathrm{cl}}\bracket({
      \Phi; \pi
    })
  $ of
  \begin{enumerate}
  \item 
  those
  closed $L_\infty$-algebra valued differential forms \cref{ClosedLInfinityValuedDifferentialForms} with coefficients in the \emph{relative characteristic $L_\infty$-algebra} $\mathfrak{a}'$ given by
  $$
    \mathrm{CE}(\mathfrak{a}')
    \defneq
    \mathrm{Free}_{\mathrm{dgcAlg}}
    \left({
      \begin{aligned}
        &
        \bracket({
          b'^{(i')}
        })_{i' \in I'}
        \\
        &
        \bracket({
          b^{(i)}
        })_{i \in I}
      \end{aligned}
    }\right)
    \Big/
    \left(
    \begin{aligned}
     \mathrm{d}\, b'^{(i')}
     &=
     P^{(i')}\bracket({
       \vec b',
       \vec b \,
     })
     \\
     \mathrm{d}\, b^{(i)}
     &=
     P^{(i)}\bracket({
       \vec b \,
     })
    \end{aligned}
    \right)
  $$
  and hence naturally fibered over the characteristic $L_\infty$-algebra \cref{TheCharacteristicLInfinityAlgebra} of the background fluxes:
  \begin{equation}
  \label
  {TheFibrationOfCharacteristicLInftyAlgebras}
    \inlinetikzcd{
      \mathfrak{a}'
      \ar[rr, ->>, "{ \pi }"]
      \&\&
      \mathfrak{a}
      \mathrlap{\,,}
    }
  \end{equation}
  \item
  whose projection to $\mathfrak{a}$-coefficients along \cref{TheFibrationOfCharacteristicLInftyAlgebras} equals the pullback of the background fluxes along $\Phi$:
  \begin{equation}
    \Omega^1_{\mathrm{cl}}\bracket({
      \Phi; \pi
    })
    :=
    \bracketmid\{{
    \vec F' \in 
    \Omega^1_{\mathrm{cl}}\bracket({
      X'; \mathfrak{a'}
    })
    }{
      \pi_\ast \vec F'
      =
      \Phi^\ast \vec F
    }
    \}\,.
  \end{equation}
  \end{enumerate}
\end{proposition}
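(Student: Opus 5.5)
The plan is to reduce the relative statement to the absolute one, \cref{Characteristics}, by applying it to the combined generator set $I' \sqcup I$, and then to impose the fiber condition separately.

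First, check that $\mathrm{CE}(\mathfrak{a}')$ is well defined, i.e.\ that $\mathrm{d}^2 = 0$ on all generators. On the $b^{(i)}$ this is the integrability condition \cref{IntegrabilityConditionOnBianchis} of the background Bianchis. On the $b'^{(i')}$, apply $\mathrm{d}$ to $P'^{(i')}(\vec b', \vec b)$ using the graded Leibniz rule, viewing $\mathrm{d}$ as a graded derivation on polynomials. This gives
\[
  \textstyle{\sum_{j'}} P'^{(j')}\tfrac{\partial}{\partial b'^{(j')}} P'^{(i')}
  + \textstyle{\sum_j} P^{(j)} \tfrac{\partial}{\partial b^{(j)}} P'^{(i')} ,
\]
which vanishes exactly by the integrability condition in \cref{TwistedRelativeBianchis}. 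The sub-dgc-algebra generated by the $b^{(i)}$ is closed under $\mathrm{d}$ and is identified with $\mathrm{CE}(\mathfrak{a})$. The inclusion $\mathrm{CE}(\mathfrak{a}) \hookrightarrow \mathrm{CE}(\mathfrak{a}')$ is therefore a dgc-algebra map, dually a morphism $\pi : \mathfrak{a}' \to \mathfrak{a}$. Since $\mathrm{CE}(\mathfrak{a}')$ is free over $\mathrm{CE}(\mathfrak{a})$ on the extra generators, $\pi$ is a fibration, giving \cref{TheFibrationOfCharacteristicLInftyAlgebras}.

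Second, by the defining universal property of closed $L_\infty$-valued forms \cref{ClosedLInfinityValuedDifferentialForms}, elements of $\Omega^1_{\mathrm{cl}}(X';\mathfrak{a}')$ are dgc-algebra maps $\mathrm{CE}(\mathfrak{a}') \to \Omega^\bullet_{\mathrm{dR}}(X')$. Because the source is a quotient of a free graded-commutative algebra, such a map is uniquely a choice of forms $(F'^{(i')}, \widetilde F^{(i)})$ of the right degrees on $X'$ satisfying
\[
  \mathrm{d}F'^{(i')} = P'^{(i')}(\vec F', \vec{\widetilde F}) ,
  \qquad
  \mathrm{d}\widetilde F^{(i)} = P^{(i)}(\vec{\widetilde F}) .
\]
This is just \cref{Characteristics} applied to the combined system. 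The projection $\pi_\ast$ is precomposition with the inclusion $\mathrm{CE}(\mathfrak{a}) \hookrightarrow \mathrm{CE}(\mathfrak{a}')$, so it sends the tuple to $\vec{\widetilde F}$.

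Third, impose $\pi_\ast \vec F' = \Phi^\ast \vec F$, which sets $\widetilde F^{(i)} = \Phi^\ast F^{(i)}$. The second set of equations then holds automatically: $\Phi^\ast$ commutes with $\mathrm{d}$ and with wedge products, and $\vec F$ satisfies \cref{HigherMaxwellTypeBianchiIdentities}. The first set becomes exactly \cref{TwistedRelativeMaxwellEquations}. So $\vec F' \mapsto (\vec F', \Phi^\ast\vec F)$ is the desired bijection. Naturality in $X'$ (pullback along maps over $X$) is immediate because everything is defined by precomposition.

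I expect no step to be a genuine obstacle. The only point requiring care is the graded-sign bookkeeping in the $\mathrm{d}^2 = 0$ check, namely that $\mathrm{d}$ acts on a polynomial as the graded derivation $\sum P \,\partial/\partial b$ with the correct Koszul signs. I would handle this by noting that both sides are graded derivations agreeing on generators.
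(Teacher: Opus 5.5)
Your proposal is correct and follows the paper's approach. The paper gives no separate proof: it presents the statement as a straightforward generalization of \cref{Characteristics}, which it calls essentially tautological. In other words, it unwinds \cref{ClosedLInfinityValuedDifferentialForms} for the combined generators $I' \sqcup I$ and then imposes $\pi_\ast \vec F' = \Phi^\ast \vec F$. Your explicit checks supply the details the paper leaves implicit: that $\mathrm{d}^2 = 0$ on $\mathrm{CE}(\mathfrak{a}')$ is exactly the integrability condition of \cref{TwistedRelativeBianchis}, and that the $\mathfrak{a}$-component equations hold automatically once those components are fixed to $\Phi^\ast \vec F$.
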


\begin{example}
  Examples of twisted relative higher Maxwell-type duality-symmetric Bianchi identities (\cref{TwistedRelativeBianchis}) include the following:
  \begin{description}

  \item[Maxwell theory with sources:]
    In the inhomogeneous Maxwell equation 
    \begin{equation}
      \mathrm{d}\,
      F_2
      =
      J_3\,,
    \end{equation}
    the electric current density $J_3$ is in itself actually to be considered with compact spatial support and appearing in the above equation after forgetting this condition. 
    
    Hence on 4D Minkowski spacetime this is modeled by taking the map $\Phi$ to be the inclusion into the spatial 1-point compactification
    \begin{equation}
       \begin{tikzcd}[row sep=13pt, 
         column sep=-6pt,
         /tikz/column 5/.append style={anchor=base west}
       ]
         X'
         \ar[
           d,
           "{ \Phi }"
         ]
         & 
         :=
         &
         \mathbb{R}^{1,0} 
         &
           \times
          \ar[
             d,
             hook
           ]
         & 
        \mathbb{R}^3 
        &[15pt]
         \\
         X
         &
         :=
         &
         \mathbb{R}^{1,0}
         &\times&
         \mathbb{R}^3_{\cpt}
       \end{tikzcd}
    \end{equation}
    and taking the $L_\infty$-fibration to be as follows:
    \begin{equation}
      \begin{tikzcd}[row sep=13pt, 
        column sep=-6pt,
         /tikz/column 8/.append style={anchor=base west}
      ]
        \mathfrak{a}'
        \ar[
          d,
          ->>,
          "{ \pi }"{pos=.4}
        ]
        &:=&
        \mathfrak{l}_{_{B^3 \mathbb{Z}}}
        &[-5pt]
        E B^2 \mathbb{Z}
        \ar[
          d,
          ->>
        ]
        &
        \text{  where  }
        &
        \mathrm{CE}\bracket({
          \mathfrak{l}_{_{B^3 \mathbb{Z}}}
          EB^2 \mathbb{Z}
        })
        &\defneq&
        \mathrm{Free}_{\mathrm{dgcAlg}}
        \binom{f_2}{j_3}
        \big/
        \binom{
          \mathrm{d}
          f_2 = j_3}{
          \mathrm{d}
          j_3  = 0
      }
        \\
        \mathfrak{a}
        &:=&
        \mathfrak{l}
        &
        B B^2 \mathbb{Z}
        &
        \text{  where  }
        &
        \mathrm{CE}\bracket({
          \mathfrak{l}
          B^3 \mathbb{Z}
        })
        \ar[
          u,
          hook
        ]
        &\defneq&
        \mathrm{Free}_{\mathrm{dgcAlg}}
        (j_3)
        /
        \bracket({
          \mathrm{d}\, 
          j_3 = 0
        })
        \mathrlap{\,.}
      \end{tikzcd}
    \end{equation}

    Here the symbols indicate that this $L_\infty$-algebraic fibration is a shadow (the ``rationalization'') of the \emph{universal principal circle 2-bundle} (the universal \emph{$\mathrm{U}(1)$-bundle gerbe}); we explain this in \cref{Charges}.

  \item[Self-dual Flux on M5-Branes:]
  Here
  \begin{equation}
  \label
  {BianchiForM5WorldvolumeFlux}
    \mathrm{d}\, H_3
    =
    \Phi^\ast G_4
    \mathrlap{\,,}
  \end{equation}
  where $\Phi$ is the embedding of the 5-brane worldvolume into 11D spacetime (\cref{M5braneProbes}):
  \begin{equation}
    \begin{tikzcd}[row sep=13pt]
      \Sigma^{1,5}
      \ar[
        d,
        hook,
        "{ \Phi }"
      ]
      \\
      X^{1,10}
      \mathrlap{\,.}
    \end{tikzcd}
  \end{equation}
  This is characterized by the following fibration (cf. \cite[\S 3.7]{FSS20-H}):
  \begin{equation}
  \label
  {lS7RelativeToS4}
      \begin{tikzcd}[row sep=8pt,
        ampersand replacement=\&,
        column sep=-6pt,
         /tikz/column 8/.append style={anchor=base west}
      ]
      \mathfrak{a}'
      \ar[
        d,
        ->>
      ]
      \&:=\&
      \mathfrak{l}_{_{S^4}}
      \&[-5pt]S^7
      \ar[
        d,
        "{ 
          \mathfrak{l}
          h_{\mathbb{H}} 
        }"{swap} 
      ]
      \&\text{ where }\&
      \mathrm{CE}\bracket({
        \mathfrak{l}_{_{S^4}}S^7
      })
      \&\defneq\&
      \mathrm{Free}_{\mathrm{dgcAlg}}
      \left(
        \begin{subarray}{l}
        h_3
        \\
        g_4
        \\
        g_7
        \end{subarray}
      \right)
      \Big/
      \left(
      \begin{subarray}{l}
        \mathrm{d}\,
        h_3 = g_4
        \\
        \mathrm{d}\,
        g_4 = 0
        \\
        \mathrm{d}\,
        g_7 = \frac{1}{2}\, g_4 g_4
      \end{subarray}
      \right)
      \\
      \mathfrak{a}
      \&:=\&
      \mathfrak{l}
      \&[-5pt]
      S^4
      \&
      \text{ where }
      \&
      \mathrm{CE}\bracket({
        \mathfrak{l}S^4
      })
      \ar[
        u,
        hook
      ]
      \&\defneq\&
      \mathrm{Free}_{\mathrm{dgcAlg}}
    \binom{
        g_4}{
        g_7
      }
      \big/
      \left(
      \begin{subarray}{l}
        \mathrm{d}\,g_4
        = 0
        \\
        \mathrm{d}\, g_7
        = 
        \frac{1}{2}\, g_4 g_4
      \end{subarray}
      \right).
      \end{tikzcd}
  \end{equation}

  Here, the symbols indicate that this $L_\infty$-algebraic fibration is a shadow (the ``rationalization'') of the \emph{quaternionic Hopf fibration} $h_{\mathbb{H}}$ from the 7-sphere $S^7$ to the 4-sphere $S^4$; we explain this in \cref{Charges}.

\end{description}

\end{example}

%%%%%%%%%%%%%%%%
\subsubsection
{11D SuGra}
\label
{11DSuGra}
%%%%%%%%%%%%%%%%

We discuss the (duality-symmetric) superspace construction of 11D SuGra (\cref{11DSuGraAslS4BianchiOnSuperspace}).

%%%%%%%%%%%%%
\paragraph
{Literature on 11D SuGra}
%%%%%%%%%%%%%
The discovery or construction of 11D SuGra is attributed to \parencites{CremmerJuliaScherk1978}, early development includes \cite{DuffNilssonPope1986}. General reviews include \parencites[\S I]{Duff1999World}{MiemiecSchnakenburg2006}[\S 10]{FreedmanVanProeyen2012}[\S22.1.1]{Ortin2015}{DuffEtAl2026}.
Duality-symmetric formulations were considered in \parencites{BandosEtAl1998}[\S 2]{CremmerEtAl1998}[\S2]{BandosEtAl2004}.
The superspace constraint formulation originates with \cite{CremmerFerrara1980,BrinkHowe1980,DAuriaFre1982}, review includes \parencites[\S 2]{Cederwalletal2005}. (Alternatively there is a Lagrangian pure spinor superfield formulation due to  \parencites{Cederwall2010}, cf. \parencites{HahnerSaberi2025}.) The possibility of a duality-symmetric formulation on superspace was (after being discounted in \cite[p. 120]{DAuriaFre1982}) briefly touched upon in \parencites[\S III.8.5]{CDF1991}[\S 6]{CandielloLechner1994}. 
We follow the completed construction in \cite[\S3]{GSS24-SuGra}.

%%%%%%%%%%%%%
\paragraph
{Spinors in 11D}
%%%%%%%%%%%%%
To set the scene and establish notation, we briefly recall 11D spin geometry.

\begin{proposition}[\textbf{The 11D Spin Representation}, {cf. \parencites{BaezHuerta2010}[\S2.5]{MiemiecSchnakenburg2006}[\S2.2.1]{GSS24-SuGra}}]
There exists an irreducible real pinor representation (the \emph{Majorana representation}) 
\begin{equation}
  \mathbf{32}
  \in
  \mathrm{Rep}_{_{\mathbb{R}}}
  \bracket({
    \mathrm{Pin}^+\bracket({
      1,10
    })
  })
  \mathrlap{\,,}
\end{equation}
equipped with a $\mathrm{Spin}(1,10)$-invariant, skew-symmetric, bilinear map
\begin{equation}
\label
{TheSpinorPairing}
  \begin{tikzcd}[row sep=-3pt, column sep=0pt]
    \mathbf{32}
    \times_{_{\mathbb{R}}}
    \mathbf{32}
    \ar[
      rr
    ]
    &&
    \mathbb{R}
    \\
    \bracket({
      \psi, \phi
    })
    &\longmapsto&
    \bracket({
      \overline{\psi}
      \,
      \phi
    })
  \end{tikzcd}
\end{equation}
such that the Clifford generators $\bracket({\Gamma_a})_{a=0}^{10}$ satisfy:
\begin{equation}
\begin{aligned}
  \Gamma_a \circ \Gamma_b
  +
  \Gamma_b \circ \Gamma_a
  & =
  + 2\, \eta_{ab} 
  \\
  &
  :=
  + 2 \, 
  \mathrm{diag}\bracket({
    -1, +1, +1, \cdots, +1
  })_{a b}
  \mathrlap{\,,}
\end{aligned}
\end{equation}
and are skew-adjoint with respect to
\cref{TheSpinorPairing}:
\begin{equation}
  \overline{\Gamma_a}
  =
  -\Gamma_a
  \mathrlap{\,.}
\end{equation}
\end{proposition}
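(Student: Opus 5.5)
We construct the representation explicitly from a Clifford algebra model and then verify the listed properties directly. We begin with real $\mathrm{Cl}(1,10)$ under the convention $\Gamma_a\Gamma_b+\Gamma_b\Gamma_a=+2\eta_{ab}$, with one negative and ten positive squares. By the Atiyah--Bott--Shapiro classification, this algebra is $\mathrm{Mat}_{32}(\mathbb{R})\oplus\mathrm{Mat}_{32}(\mathbb{R})$, since $10-1\equiv 1 \bmod 8$ in this signature convention. We pick one of the two irreducible summands. This gives a real $32$-dimensional irreducible module, and restricting it along $\mathrm{Pin}^+(1,10)\subset \mathrm{Cl}(1,10)^\times$ yields the Majorana representation $\mathbf{32}$. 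On this summand the product $\Gamma_0\Gamma_1\cdots\Gamma_{10}$ acts as $\pm 1$.

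To make this explicit we build the generators inductively. We start from real $2\times 2$ matrices, for instance $\sigma_1$, $\sigma_3$ and $\ii\sigma_2$, and take iterated tensor products. Concretely, we first realize real $16\times16$ generators for $\mathrm{Cl}(1,9)$, together with its real chirality operator $\Gamma_{11}$. We then double to $32\times 32$ and set $\Gamma_{10}:=\pm\Gamma_0\cdots\Gamma_9$ on the chosen summand. This ensures that all eleven generators are real and irreducible on $\mathbb{R}^{32}$.

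The pairing is taken as $(\overline\psi\,\phi):=\psi^T C\phi$ with $C:=\Gamma_0$. Using $\Gamma_0^2=-1$ and real matrices chosen so that $\Gamma_0$ is antisymmetric and $\Gamma_i$ ($i\ge1$) symmetric, we obtain two facts. First, $C^T=-C$, so the pairing is skew-symmetric. Second, $C\Gamma_a$ is symmetric for every $a$, which is exactly the statement $\overline{\Gamma_a}=-\Gamma_a$, i.e. $C\Gamma_a=-\Gamma_a^T C$. Spin invariance then follows formally: the generators $\Gamma_{ab}$ of $\mathfrak{spin}(1,10)$ satisfy $\overline{\Gamma_{ab}}=\overline{\Gamma_b}\,\overline{\Gamma_a}=\Gamma_b\Gamma_a=-\Gamma_{ab}$, so they act skew-adjointly. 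Exponentiating, and using that $\mathrm{Spin}(1,10)$ is generated by products of even numbers of unit vectors, the pairing is invariant.

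The main obstacle is the simultaneous sign bookkeeping: a real basis in which $\Gamma_0$ is antisymmetric and all spatial $\Gamma_i$ are symmetric, while still squaring to $-1$ and $+1$ respectively. Rather than exhibiting matrices, we would argue abstractly. Choose a positive-definite inner product on $\mathbb{R}^{32}$ invariant under the finite group generated by the $\pm\Gamma_a$, by averaging. With respect to it every $\Gamma_a$ is orthogonal. An orthogonal matrix squaring to $+1$ is symmetric, and one squaring to $-1$ is antisymmetric. Then $C=\Gamma_0$ has exactly the required properties.
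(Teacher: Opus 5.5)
The paper gives no proof of this proposition; it defers to the literature, where the representation and the charge conjugation matrix are written down explicitly (gamma-matrix tables, or the octonionic construction of Baez--Huerta). Your route is different and more self-contained:
\begin{itemize}
\item the classification $\mathrm{Cl}(1,10)\cong\mathrm{Mat}_{32}(\mathbb{R})\oplus\mathrm{Mat}_{32}(\mathbb{R})$ gives the real $32$-dimensional module;
\item this module is irreducible also under $\mathrm{Pin}^+(1,10)$, since the Pin group linearly spans the Clifford algebra;
\item averaging over the finite group generated by the $\pm\Gamma_a$ yields a basis in which $\Gamma_0$ is antisymmetric and the $\Gamma_i$ are symmetric;
\item hence $C:=\Gamma_0$ satisfies $C^T=-C$ and $C\Gamma_a=-\Gamma_a^T C$.
\end{itemize}
This is correct and avoids all sign bookkeeping. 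What the explicit constructions add is concrete matrices for the Fierz identities used later.

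There is, however, a false step in your invariance argument. From $\overline{\Gamma_a}=-\Gamma_a$ one gets, for a product $g=v_1\cdots v_{2k}$ of unit vectors, $\overline{g}\,g=\prod_i\eta(v_i,v_i)$. This equals $-1$ whenever an odd number of the $v_i$ are timelike. For instance $g=\Gamma_0\Gamma_1$, which covers $\mathrm{diag}(-1,-1,1,\dots,1)$, satisfies $g^T C g=-C$. So the pairing is \emph{anti}-invariant on the non-identity component of the group of all even products of unit vectors.

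No other choice of pairing repairs this. The module $\mathbf{32}$ stays irreducible of real type under the identity component, so $C$ is its only invariant form up to scale. The statement is therefore true only with $\mathrm{Spin}(1,10)$ read as the connected group, the double cover of $\mathrm{SO}^+(1,10)$, which is the usual physics convention. Your exponentiation argument proves exactly this, because the connected group is generated by $\exp(\mathfrak{spin}(1,10))$. Drop the appeal to arbitrary even products of unit vectors, or restrict it to products with an even number of timelike factors.

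There is also a smaller slip in the explicit construction. There are no real $16\times16$ generators of $\mathrm{Cl}(1,9)$, since $\mathrm{Cl}(1,9)\cong\mathrm{Mat}_{32}(\mathbb{R})$ is simple. The real $16\times16$ objects are the chiral blocks of $32\times32$ generators $\Gamma_0,\dots,\Gamma_9$. Then $\Gamma_{10}:=\pm\Gamma_0\cdots\Gamma_9$ squares to $+1$, anticommutes with the others, and already acts on $\mathbb{R}^{32}$, with no further doubling. This paragraph is inessential in any case, since the classification already supplies existence.
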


When we speak of an $(1,10\vert\mathbf{32})$-dimensional super-spacetime (\cref{SuperSpacetime}), it means that its tangent super-spaces are regarded as equipped with this Pin-action.

%%%%%%%%%%%%%%
\paragraph
{The 11D Super-Minkowski spacetime}
%%%%%%%%%%%%%%

\begin{example}
\label
{11DSuperMinkowski}
The $(1,10\vert \mathbf{32})$-dimensional super-Minkowski spacetime super-manifold $\mathbb{R}^{1,10\vert \mathbf{32}}$ has canonical coordinates
\begin{equation}
  \begin{tikzcd}[
    row sep=0pt
  ]
    X^a : 
    \mathbb{R}^{1,10\vert \mathbf{32}}
    \ar[r]
    &
    \mathbb{R}^1
    \,,
    &
    a \in \{0,\cdots, 10\}
    \\
    \Theta^\alpha : 
    \mathbb{R}^{1,10\vert \mathbf{32}}
    \ar[r]
    &
    \mathbb{R}^{0\vert 1}
    \,,
    &
    \alpha \in \{1,\cdots, 32\}
  \end{tikzcd}
\end{equation}
and super-coframe $\bracket({E,\Psi})$ given by:
\begin{equation}
\begin{aligned}
  E^a & :=
  \mathrm{d} x^a
  + 
  \bracket({
   \overline{\Theta}
   \Gamma^a
   \mathrm{d}\Theta
  })
  \\
  \mathrm{\Psi}^\alpha
  & :=
  \mathrm{d}\Theta^\alpha\,,
\end{aligned}
\end{equation}
hence satisfying:
\begin{equation}
\label
{DifferentialOfFlat11DSuperCoframe}
 \begin{aligned}
   \mathrm{d}\,
   E^a
   & = 
   \bracket({
     \overline{\Psi}
     \Gamma^a
     \Psi
   })
   \mathrlap{\,,}
   \\
   \mathrm{d}\,
   \Psi^\alpha
   & =
   0
   \mathrlap{\,.}
 \end{aligned}
\end{equation}

By construction there is a canonical $\mathrm{Spin}(1,10)$ action on $\mathbb{R}^{1,10\vert \mathbf{32}}$. The corresponding semidirect product group is the \emph{super-Poincar{\'e}} or \emph{super-symmetry} group in dimension $(1,10\vert \mathbf{32})$:
\begin{equation}
  \mathrm{Iso}\bracket({
    \mathbb{R}^{1,10\vert \mathbf{32}}
  })
    \simeq
    \mathbb{R}^{1,10\vert \mathbf{32}}
    \rtimes
    \mathrm{Spin}\bracket({1,10})
    \mathrlap{\,.}
\end{equation}
\end{example}

\begin{proposition}
\label[proposition]
{TheAvatarSuperFLuxesIn11D}
  The differential super-forms
  \begin{equation}
  \label{Flat11Dcocycles}
    \left.
    \begin{aligned}
      G_4^0
      & :=
      \tfrac{1}{2}
      \bracket({
        \overline{\Psi}
        \Gamma_{a_1 a_2}
        \Psi
      })
      E^{a_1}
      E^{a_2}
      \\
      G_7^0
      & :=
      \tfrac{1}{5!}
      \bracket({
        \overline{\Psi}
        \Gamma_{a_1 \cdots a_5}
        \Psi
      })
      E^{a_1}
      \cdots
      E^{a_5}
    \end{aligned}
    \right\} \, 
    \in \, 
    \Omega^\bullet_{\mathrm{dR}}
    \bracket({
      \mathbb{R}^{1,10\vert \mathbf{32}}
    }) \, \simeq \mathrm{CE}(\FR^{1,10\vert \mathbf{32}})
  \end{equation}
  satisfy the equations \cref{BianchiOf11DSuperFluxesInIntro}:
  \begin{equation}
    \begin{aligned}
      \mathrm{d}\,
      G_4^0 & = 0
      \mathrlap{\,,}
      \\
      \mathrm{d}\,
      G_7^0 & =
      \tfrac{1}{2}
      G_4^0
      G_4^0
      \mathrlap{\,.}
    \end{aligned}
  \end{equation}
\end{proposition}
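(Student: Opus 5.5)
The proof is a direct computation in $\mathrm{CE}(\mathbb{R}^{1,10\vert\mathbf{32}})$. We use only the structure equations \cref{DifferentialOfFlat11DSuperCoframe}, $\mathrm{d}E^a = (\overline{\Psi}\Gamma^a\Psi)$ and $\mathrm{d}\Psi = 0$, together with the sign rule \cref{TheSuperHomologicalSignRule}. By the latter, the $\Psi^\alpha$ commute among themselves, the $E^a$ anticommute, and any bilinear $(\overline{\Psi}\Gamma_{\cdots}\Psi)$ is an even 2-form, hence central.

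By the Leibniz rule, $\mathrm{d}$ acts only on the $E$-factors, each being replaced by $(\overline{\Psi}\Gamma^a\Psi)$. This gives
\[
  \mathrm{d}G_4^0 \propto (\overline{\Psi}\Gamma_{a_1a_2}\Psi)(\overline{\Psi}\Gamma^{a_1}\Psi)E^{a_2},
\]
and similarly $\mathrm{d}G_7^0 \propto (\overline{\Psi}\Gamma_{a_1\cdots a_5}\Psi)(\overline{\Psi}\Gamma^{a_1}\Psi)E^{a_2}\cdots E^{a_5}$. The right-hand side $\tfrac12 G_4^0G_4^0$ has the same type, namely two $\Psi$-bilinears times four $E$'s. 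So both claims reduce to identities for quartic symmetric polynomials in a commuting spinor $\Psi$, which we check by Fierz rearrangement.

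\textbf{First identity.} We show $(\overline{\Psi}\Gamma_{ab}\Psi)(\overline{\Psi}\Gamma^a\Psi)=0$. Because the $\Psi$ commute, only the symmetric part of each pairing contributes. With $\overline{\Gamma_a}=-\Gamma_a$ and the pairing skew, the symmetric bilinears are exactly those built from $\Gamma_a$, $\Gamma_{a_1a_2}$ and $\Gamma_{a_1\cdots a_5}$, with dimensions $11+55+462=528=\binom{33}{2}$. The key step is the standard 11D Fierz identity for $\Psi\overline{\Psi}$ (symmetrized), which expands $\Psi\overline{\Psi}$ in the basis $\{\Gamma_a,\Gamma_{ab},\Gamma_{a_1\cdots a_5}\}$ with coefficients the corresponding bilinears. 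Inserting this expansion into $\Gamma_{ab}\Psi(\overline{\Psi}\Gamma^a\Psi)$ and contracting with Clifford identities such as $\Gamma^a\Gamma_{ab}$ and $\Gamma^a\Gamma_{a_1\cdots a_5}\Gamma_a$ yields a linear relation, from which the vanishing follows. This is the classical identity underlying the M2-brane cocycle.

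\textbf{Second identity.} Here we need
\[
  \tfrac{1}{4!}(\overline{\Psi}\Gamma_{a_1\cdots a_5}\Psi)(\overline{\Psi}\Gamma^{a_1}\Psi)E^{a_2}\cdots E^{a_5}
  = \tfrac18(\overline{\Psi}\Gamma_{a_2a_3}\Psi)(\overline{\Psi}\Gamma_{a_4a_5}\Psi)E^{a_2}\cdots E^{a_5},
\]
up to the exact sign fixed by reordering the $E$'s. Again we apply the Fierz expansion of $\Psi\overline{\Psi}$, this time to $(\overline{\Psi}\Gamma_{a_1\cdots a_5}\Psi)\Gamma^{a_1}\Psi$, and antisymmetrize over $a_2\ldots a_5$. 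We expect the Fierz expansion to produce a combination of the $\Gamma_{[2]}\otimes\Gamma_{[2]}$ term and terms that vanish by the first identity.

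\textbf{Main obstacle.} The hard step is the precise Fierz coefficients and signs for the second identity. As a shortcut, we would invoke the known result that the $\mathbf{32}$-quartics $(\overline{\Psi}\Gamma_{[1]}\Psi)(\overline{\Psi}\Gamma_{[5]}\Psi)$ and $(\overline{\Psi}\Gamma_{[2]}\Psi)^2$ obey this relation, as in \cite[\S 2]{FSS17-Sphere}, or verify it by machine as in \cite{GSS24-SuGraAncillary}. If the overall normalization comes out off, it can be fixed by evaluating both sides on a single explicit spinor.
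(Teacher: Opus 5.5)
Your proposal is correct and follows the paper's proof. Both use $\mathrm{d}E^a=(\overline{\Psi}\Gamma^a\Psi)$, $\mathrm{d}\Psi=0$ and the centrality of the even bilinears to reduce the claim to the two quartic Fierz identities for a commuting $\Psi$ (your normalizations $\tfrac{1}{4!}$ and $\tfrac18$ match the paper's), and both then defer those identities to the literature: the paper cites Naito--Osada--Fukui for a direct algebraic proof and D'Auria--Fr\'e for a representation-theoretic one. One caution: evaluating on a single spinor can fix an ``off'' normalization only once you know the two $4$-form-valued quartics are proportional, which is what the representation-theoretic argument supplies; without that, such an evaluation can test the stated coefficient $\tfrac12$ but cannot establish it.
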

\begin{proof}
  By \cref{DifferentialOfFlat11DSuperCoframe}, the claim is equivalent to the statement that the following \emph{Fierz identities}
\begin{subequations}
\label
{TheFierzIdentitiesIn11D}
  \begin{align}
    \label{FirstFierzIdentityIn11D}
    \bracket({
      \overline{\Psi}
      \Gamma_{a b}
      \Psi
    })
    \bracket({
      \overline{\Psi}
      \Gamma^{a}
      \Psi
    })
    &
    = 0
    \mathrlap{\,,}
    \\
    \label{SecondFierzIdentityIn11D}
    \tfrac{1}{4!}
    \bracket({
      \overline{\Psi}
      \Gamma_{a [b_1\cdots b_4]}
      \Psi
    })
    \bracket({
      \overline{\Psi}
      \Gamma^{a}
      \Psi
    })
    &
    =
    \tfrac{1}{2}
    \,
    \tfrac{1}{2}
    \bracket({
      \overline{\Psi}
      \Gamma_{[b_1 b_2}
      \Psi
    })
    \tfrac{1}{2}
    \bracket({
      \overline{\Psi}
      \Gamma_{b_3 b_4]}
      \Psi
    })
    \mathrlap{\,,}
  \end{align}
\end{subequations}
hold for all \emph{commuting} \cref{CommutationRelationsOfBasicSuper1Forms} spinors $\Psi \in \mathbf{32}$ (and all $b_i \in \{0,\cdots, 10\}$).

A direct algebraic proof of these identities is given in \cite[(2.28)]{NaitoOsadaFukui1986}. A representation-theoretic proof is indicated in \parencites[(3.13), (3.28a)]{DAuriaFre1982} using \cite{DAuriaEtAl1982}.
\end{proof}

%%%%%%%%%%%%%%
\paragraph
{Curved 11D Super-Spacetime}
%%%%%%%%%%%%%

\begin{definition}[Cartan structural equations]
\label[definition]
{CartanStructuralEquationsIn11D}
  Given an $(1,10\vert\mathbf{32})$-dimensional super-spacetime $X^{1,10\vert\mathbf{32}}$ with super-coframe $(E,\Psi)$ and spin-connection $\Omega$, its \emph{super-torsion}, \emph{curvature}, and \emph{gravitino field strength} are:
  \begin{subequations}
  \label
  {TheCartanStructuralEquationsIn11D}
  \begin{alignat}{4}
  \label{SuperTorsionIn11D}
     \substack{
       \text{\color{darkblue}super-}
       \\
       \text{\color{darkblue}torsion}
     }
     &
     \;\;\big(
     &
     T^a
     & 
     :=
     \mathrm{d}\, E^a
     &&
     +
     \Omega^a_b \, E^b
     -
     \bracket({
       \overline{\Psi}
       \Gamma^a
       \Psi
     })
     \big)_{a = 0}^{10}
     \mathrlap{\,,}
    \\
     \substack{
       \text{\color{darkblue}curvature}
     }
    &
     \;\;\big(
    &
    R^{ab}
    & :=
    \mathrm{d}\, 
    \Omega^{ab}
    &&
    +\Omega^a{}_c \Omega^{c b}
    \big)_{a,b = 0}^{10}
     \mathrlap{\,,}
    \\
     \substack{
       \text{\color{darkblue}gravitino}
       \\
       \text{\color{darkblue}field-strength}
     }
    &
     \;\;\big(
    &
    \rho
    & 
    :=
    \mathrm{d}\, \Psi
    &&
    +
    \tfrac{1}{4}
    \Omega^{a b}
    \Gamma_{a b}
    \Psi
    \big)_{\alpha=1}^{32}
    \mathrlap{\,.}
  \end{alignat}
  \end{subequations}
\end{definition}

\begin{definition}
\label[definition]
{SuperTorsionFreeIn11D}  
  We say that the super-spacetime is \emph{super-torsion-free} if its super-torsion tensor \cref{SuperTorsionIn11D} vanishes.
\end{definition}

\begin{definition}
\label[definition]
{OnSuperFluxFormsIn11D}
  Given a super-spacetime $X^{1,10\vert\mathbf{32}}$, we say that $C$-field \emph{super-flux} forms are pairs of differential forms $G^s_4, G^s_7 \in \Omega^\bullet_{\mathrm{dR}}\bracket({X^{1,10\vert \mathbf{32}}})$ with the following specific local expansion in the given super-coframe field $\bracket({E,\Psi})$: 
  \begin{equation}
  \label
  {SuperFluxFormsIn11D}
    \begin{aligned}
      G_4^s
      & =
      \grayoverbrace{
      \tfrac{1}{4!}
      \bracket({
        G_4
      })_{a_1 \cdots a_4}
      E^{a_1}
      \cdots
      E^{a_4}
      }{G_4}
      +
      \grayoverbrace{
      \tfrac{1}{2}
      \bracket({
        \overline{\Psi}
        \Gamma_{a_1 a_2}
        \Psi
      })
      E^{a_1}
      E^{a_2}
      }{
        G_4^0
      }
      \\[4pt]
      G_7^s
      & :=
      \grayunderbrace{
      \tfrac{1}{7!}
      \bracket({
        G_7
      })_{a_1 \cdots a_7}
      E^{a_1}
      \cdots
      E^{a_7}
      }{G_7}
      +
      \grayunderbrace{
      \tfrac{1}{5!}
      \bracket({
        \overline{\Psi}
        \Gamma_{a_1 \cdots a_5}
        \Psi
      })
      E^{a_1}
      \cdots
      E^{a_5}
      }{G_7^0}
      \mathrlap{\,,}
    \end{aligned}
  \end{equation}
  hence over charts determined by component super-functions
  \begin{equation}
    \bracket({G_4})_{a_1 \cdots a_4}
    ,\;
    \bracket({G_7})_{a_1 \cdots a_7}
    \in
    C^\infty\bracket({
      U^{1,10\vert \mathbf{32}}
    })
    \mathrlap{\,,}
  \end{equation}
  and with the odd coframe components $\bracket({G^s_4, G^s_7})$ from \cref{Flat11Dcocycles}.
\end{definition}

%%%%%%%%%%%%%
\paragraph
{The Equations of Motion}
%%%%%%%%%%%%%
With these preliminaries in hand, we now have the following most remarkable statement, the \emph{miracle} \cref{TheMiracleEquivalence} announced above:
\begin{theorem}
[{\cite[Thm. 3.1, Cor. 3.12]{GSS24-SuGra}}]
\label[theorem]
{11DSuGraAslS4BianchiOnSuperspace}
On a super-torsion-free \textup{(\cref{SuperTorsionFreeIn11D})} super-spacetime $X^{1,10\vert\mathbf{32}}$ \textup{(\cref{SuperSpacetime})}, super-flux densities \cref{SuperFluxFormsIn11D}
satisfy the $\mathfrak{l}S^4$-Bianchi identities \textup{(\cref{CharacteristicLInfinityOf11DSuGra})} iff the equations of motion of 11D supergravity hold rheonomically on super-spacetime:
\begin{subequations}
\label
{11DSugraSuperspaceEoM}
\hspace{-8mm}
  \begin{align}
  \label{SuperSpaceS4Bianchi}  
  &
  \left\{
  \begin{aligned}
    \mathrm{d}\, G_4^s
    & = 0
    \\
    \mathrm{d}\, 
    G_7^s
    & = 
    \tfrac{1}{2}
    G_4^s
    G_4^s
  \end{aligned}
  \right.
  \;\;\;
  \textup{\color{darkblue}($\mathfrak{l}S^4$-Bianchi})
  \\[+5pt]
  \label{EinsteinRSMaxwell}
  \Leftrightarrow
  &
  \left\{
  \begin{alignedat}{2}
      \partial
        \rule[-3.5pt]{0pt}{0pt}_{[b}\Omega^{ac}_{c]}
      +
      \Omega^{a e}_{[b}
      \,
      \Omega^{e' c}_{c]}
      \eta_{e e'}
    & =
      \substack{
      \phantom{-}
      \tfrac{1}{12}
      \bracket({G_4})
        _{a c_1 c_2 c_3}
      \bracket({G_4})
        _{b}{}^{c_1 c_2 c_3}
      \\
      -
      \tfrac{1}{144}
      \bracket({G_4})
        _{c_1 \cdots c_4}
      \bracket({G_4})
        ^{c_1 \cdots c_4}
      \eta_{a b}
      }
    \;\;&
    \textup{\color{darkblue}(Einstein)}
    \\
    \Gamma^{a b_1 b_2}
    \bracket({\nabla \Psi})
      _{b_1 b_2}
    & =
    0
    &
    \textup{\color{darkblue}\llap{(Rarita-Schwinger)}}
    \\
    \nabla_{[b} 
    \bracket({G_4})
      _{a_1 \cdots a_4]} 
    & 
    =\, 0
    \\
    \nabla_{[c}
    \bracket({
      G_7
    })_{a_1 \cdots a_7]}
    &
    = 
    \tfrac{1}{2}
    \bracket({
      G_4 G_4
    })_{c a_1 \cdots a_7}
    &
    \textup{\color{darkblue}(Maxwell-Chern-Simons)}
    \\
    \bracket({
      G_7
    })_{a_1 \cdots a_7}
    & 
    =
    \tfrac{1}{4!}
    \epsilon
     _{
       a_1 \cdots a_7 
       \,
       b_1 \cdots b_4
    }
    \bracket({G_4})
      ^{b_1 \cdots b_4}
    \\
    \mathrlap{\textup{rheonomically (\cref{Rheonomy})}}
  \end{alignedat}
  \right.
  \end{align}
\end{subequations}
In the case of vanishing gravitino field \cref{CoordinateComponentsOfCoframeField} on ordinary spacetime \cref{IncludingBosonicBodyOfSupermanifold}, $\bracket({\BosonicCounit})^\ast \Psi = 0$, this is equivalent there to the ordinary 11D Einstein-Maxwell-Chern-Simons equations, cf. \textup{\cref{FermionicSourceTerms}}.
\end{theorem}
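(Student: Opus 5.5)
The plan is to expand both sides of the $\mathfrak{l}S^4$-Bianchi identities \cref{SuperSpaceS4Bianchi} in the super-coframe basis $(E,\Psi)$ and sort the resulting equations by \emph{odd degree}, i.e.\ by the number of $\Psi$-factors. Two ingredients drive this. First, super-torsion-freedom gives $\mathrm{d}E^a = -\Omega^a{}_b E^b + (\overline{\Psi}\Gamma^a\Psi)$. Second, $\mathrm{d}\Psi = \rho - \tfrac14\Omega^{ab}\Gamma_{ab}\Psi$ by \cref{TheCartanStructuralEquationsIn11D}. Every super-form then splits into components of type $(p,q)$: $p$ factors of $E$, $q$ factors of $\Psi$. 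Because the $\Psi$'s commute \cref{CommutationRelationsOfBasicSuper1Forms}, the $(p,q)$-components of a Bianchi identity are Spin-covariant polynomial identities in a commuting spinor $\Psi$.

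The gravitino field strength also needs expanding in the coframe:
\[
\rho = \tfrac12 \rho_{ab}E^aE^b + H_a(\Psi)E^a + (\text{higher odd part}) .
\]
I will fix this expansion first, as a consequence of the lower-degree Bianchi components.

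The steps go in order of increasing bosonic degree.

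\textbf{(i)} The $(p,3)$ and $(p,4)$ components of $\mathrm{d}G_4^s$ and $\mathrm{d}G_7^s-\tfrac12 G^s_4G^s_4$. The purely odd pieces are exactly the flat identities of \cref{TheAvatarSuperFLuxesIn11D}: all $\Omega$-terms cancel by Spin-covariance, so only the Fierz identities \cref{TheFierzIdentitiesIn11D} remain.

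\textbf{(ii)} The $(3,2)$-component of $\mathrm{d}G_4^s$ and the $(6,2)$-component of the $G_7$-Bianchi. Here $\mathrm{d}$ hitting $\Psi$ in $G^0_4,G^0_7$ produces $H_a(\Psi)$-terms, and $\mathrm{d}E$ hitting the bosonic parts $G_4,G_7$ produces $(\overline\Psi\Gamma\Psi)$-contractions. Decomposing $\mathrm{Sym}^2\mathbf{32}$ into $\Gamma^{(1)},\Gamma^{(2)},\Gamma^{(5)}$ channels, these equations force two things:
\begin{enumerate}
\item the Hodge duality $G_7=\star G_4$;
\item the standard form $H_a\Psi = \tfrac{1}{6}\,\tfrac{1}{3!}(G_4)_{ab_1b_2b_3}\Gamma^{b_1b_2b_3}\Psi - \tfrac{1}{12}\,\tfrac{1}{4!}(G_4)_{b_1\cdots b_4}\Gamma_a{}^{b_1\cdots b_4}\Psi$, up to the paper's normalizations.
\end{enumerate}
The $(4,1)$ and $(7,1)$ components then say that the odd derivatives of $G_4,G_7$ are determined by $\rho_{ab}$. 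This is the rheonomy statement.

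\textbf{(iii)} The purely bosonic $(5,0)$ and $(8,0)$ components give the Maxwell--Chern--Simons equations directly.

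\textbf{(iv)} The remaining gravitational equations come from integrability. Apply $\mathrm{d}$ to the torsion constraint and to the definition of $\rho$, that is, use the geometric Bianchi identities $\mathrm{d}T=\dots$ and $\mathrm{d}\rho=\dots$. Substituting the now-determined $H_a$ gives two consequences:
\begin{itemize}
\item the $(2,1)$ component of $\mathrm{d}\rho$ yields the Rarita--Schwinger equation $\Gamma^{ab_1b_2}\rho_{b_1b_2}=0$ after gamma-tracing;
\item the $(1,2)$ component expresses the mixed curvature $R^{ab}$ along $\Psi$ through $\rho$. Tracing the $(2,1)$-component of $\mathrm{d}\rho$ (equivalently, the odd variation of the Rarita--Schwinger equation) then gives the Einstein equation with the stated $G_4$ stress tensor.
\end{itemize}

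The converse direction reverses these steps. Given the equations of motion in rheonomic form, i.e.\ with $\rho$ and the odd components fixed as above, every $(p,q)$-component of the Bianchi identities is identically satisfied.

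\textbf{Main obstacle.} I expect the hardest part to be steps (ii) and (iv): showing that the Fierz decomposition leaves no freedom in $H_a$, and that the resulting quadratic-in-$G_4$ terms conspire to produce exactly the coefficients $\tfrac1{12}$ and $\tfrac1{144}$. I would attack this by projecting onto irreducible $\mathrm{Spin}(1,10)$ components with the basis $\{\Gamma^{(k)}\}_{k\le5}$, reducing each identity to Clifford trace identities. I would check the prefactors by computer algebra, as in the ancillary verification cited in the overview.
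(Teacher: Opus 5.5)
Your overall strategy follows the cited proof, which the notes only quote. That strategy is: split the flux and geometric Bianchi identities into $(p,q)$-components, reduce each to Clifford/Fierz identities in the commuting $\Psi$, and check the prefactors by computer. But step (iv) fails as written. The $(2,1)$-component of $\mathrm{D}\rho=\tfrac14R^{ab}\Gamma_{ab}\Psi$ is, schematically, $\tfrac12\Psi^\alpha\nabla_\alpha\rho_{ab}+\nabla_{[a}H_{b]}\Psi+H_{[a}H_{b]}\Psi\propto R_{ab}{}^{cd}\Gamma_{cd}\Psi$. Here $\rho_{ab}$ enters only through its odd derivative, so no $\Gamma$-trace of this component can produce the algebraic constraint $\Gamma^{abc}\rho_{bc}=0$. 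Its trace instead expresses $\Gamma^{abc}\Psi^\alpha\nabla_\alpha\rho_{bc}$ through the Einstein equation. So it gives Einstein only once RS is already known to hold on all of superspace, and taking both from this one component is circular.

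RS has to be read off where $\rho_{ab}$ itself appears. In the flux sector, that is exactly what you filed as mere rheonomy in (ii). Once $G_7=\star G_4$ holds as a superfield identity, the $(4,1)$- and $(7,1)$-components give two expressions for one odd derivative: $\Psi^\alpha\nabla_\alpha(G_4)_{a_1\cdots a_4}=\pm12\,\overline\rho_{[a_1a_2}\Gamma_{a_3a_4]}\Psi$ and $\Psi^\alpha\nabla_\alpha(G_7)_{a_1\cdots a_7}=\pm42\,\overline\rho_{[a_1a_2}\Gamma_{a_3\cdots a_7]}\Psi$. Their compatibility is a linear condition on $\rho_{ab}\in\mathbf{1408}\oplus\mathbf{320}\oplus\mathbf{32}$. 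Set $A=\Gamma_{[a_1\cdots a_5}\rho_{a_6a_7]}$ and $B=\epsilon_{a_1\cdots a_7b_1\cdots b_4}\Gamma^{[b_1b_2}\rho^{b_3b_4]}$. Compatibility demands $B=\pm84\,A$, while $|B/A|$ equals $84$, $42$, $24$ on the three summands respectively. So it kills exactly the $\Gamma$-trace parts, i.e.\ it is $\Gamma^b\rho_{ab}=0\Leftrightarrow\Gamma^{abc}\rho_{bc}=0$.

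On the geometric side, RS sits in the $(1,2)$-component of $\mathrm{D}\rho$, where $\rho_{ab}$ appears undifferentiated via $\rho_{ab}(\overline\Psi\Gamma^a\Psi)$. This requires the mixed curvature to be fixed first, by the $(2,1)$-component of the torsion Bianchi identity $R^a{}_bE^b=2\,\overline\rho\,\Gamma^a\Psi$. Your (iv) assigns these roles to the wrong components. Only with RS in hand does the $\Gamma$-traced $(2,1)$-component give Einstein.

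A smaller gap is in (i). The $(2,3)$- and $(5,3)$-components are not Fierz identities. They are the conditions $\overline{\rho^{(0,2)}}\Gamma_{ab}\Psi=0=\overline{\rho^{(0,2)}}\Gamma_{a_1\cdots a_5}\Psi$ on the $\Psi\Psi$-part of $\rho$, which you promise to eliminate but never do. Adding the $(0,3)$-component $\overline{\rho^{(0,2)}}\Gamma^a\Psi=0$ of the torsion Bianchi identity settles it. The channels $\Gamma^{(1)},\Gamma^{(2)},\Gamma^{(5)}$ span all symmetric bilinear forms on $\mathbf{32}$, and testing against rank-one forms gives $\rho^{(0,2)}=0$.
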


Various remarks are in order:
\begin{remark}[\textbf{State of the literature}]
\begin{itemize}
\item[{\textbf{(i)}}]
  The equations \cref{EinsteinRSMaxwell} may be compared to \parencites[p. 131]{DAuriaFre1982}[p. 910]{CDF1991}; for vanishing gravitino cf. also \cite[(24)]{FigueroaOFarrill2013}.

\item[{\textbf{(ii)}}]
  A claim akin to \cref{11DSuGraAslS4BianchiOnSuperspace} appeared in \cite[\S III.8.5]{CDF1991}, but verification was offered there only for the simplest of a long list of increasingly intricate conditions that need to be checked. A similar claim without duality-symmetry (but imposing  ``superspace constraints'')  is due to \parencites{CremmerFerrara1980}{BrinkHowe1980}, where less verification was offered. 

  \item[{\textbf{(iii)}}]
  The full proof in \cite[Thm. 3.1, Cor. 3.12]{GSS24-SuGra} treats all required conditions and uses computer algebra \cite{GSS24-SuGraAncillary} to verify the heavy combinatorial identities involved. This shows in particular that various assumptions made in the literature are already implied (cf. \cite[Rem. 3.10]{GSS24-SuGra}, such as the scaling condition and ``rheonomic parameterization'' of the curvatures).
  \end{itemize}
\end{remark}

\begin{remark}[\textbf{Fermionic source terms}]
\label[remark]
{FermionicSourceTerms}
  Remarkably, the above equations \cref{EinsteinRSMaxwell} have the \emph{form} of the purely bosonic Einstein-Maxwell-Chern-Simons equations. The fermionic gravitino source terms are absorbed into the super-geometry.
  
  Namely, these equations \cref{EinsteinRSMaxwell} are on superspace and hence with respect to the super-coframe $(E,\Psi)$ \cref{SuperCoframeForms}. This entails in particular that the (covariant) derivatives on the left are formed with the coframe's inverse, so that with respect to a coordinate basis of local vector fields (cf. \cite[Rem. 2.83]{GSS24-SuGra}):
  \begin{equation}
  \label
  {RecallingCovariantDerivativeInCoframe}
    \nabla_a
    =
    \bracket({
      (E,\Psi)^{-1}
    })
    \mathclap{\phantom{\vert}}_a^r
    \,
    \nabla_r
    +
    \bracket({(E,\Psi)^{-1}})
      \mathclap{\phantom{\vert}}
        _a^\rho
    \,
    \nabla_\rho
    \mathrlap{\,,}
  \end{equation}
  hence with the inverse of the matrix
  \begin{equation}
    \bracket({
      (E,\Psi)^{a/\alpha}_{r/\rho}
    })
    \;=\;
    \left(
    \begin{matrix}
      \bracket({
        E^a_r
      })
      &
      \bracket({
        \Psi^\alpha_r
      })
      \\
      \bracket({
        E^a_\rho
      })
      &
      \bracket({
        \Psi^\alpha_\rho
      })
    \end{matrix}
    \right)
    \mathrlap{.}
  \end{equation}
  It is only when the spacetime gravitino field  vanishes, $\bracket({ \Psi^\alpha_r }) = 0$, that
  \begin{equation}
    \begin{aligned}
    \bracket({ \Psi^\alpha_r }) = 0
    \;
    \Leftrightarrow
    \;\;\;
    &
    \bracket({
      (E,\Psi)^{a/\alpha}_{r/\rho}
    })
    =
    \left(
    \begin{matrix}
      \bracket({
        E^a_r
      })
      &
      0^\alpha_r
      \\
      \bracket({
        E^a_\rho
      })
      &
      \bracket({
        \Psi^\alpha_\rho
      })
    \end{matrix}    
    \right)
    \\
    \Rightarrow
    \;\;\;
    &
    \bracket({
      \bracket({
        (E,\Psi)^{-1}
      })
      ^{r/\rho}_{a/\alpha}
    })
    \;=\;
    \left(
    \begin{matrix}
      \bracket({
        \bracket({E^{-1}})^r_a
      })
      &
      0^\rho_a
      \\
      \bracket({
      -
      ({
        E^{-1}
      })^r_a
      \,
        E^a_\rho 
      \,
      ({
        \Psi^{-1}
      })^\rho_\alpha
      })
      &
      \bracket({
        ({
          \Psi^{-1}
        })^\rho_\alpha
      })
    \end{matrix}    
    \right)
    \\
    \Rightarrow
    \;\;\;
    &
    \nabla_a 
      = 
    \bracket({
      E^{-1}
    })^r_a \nabla_r
    \mathrlap{\,.}
    \end{aligned}
  \end{equation}
  Hence only when the gravitino field vanishes 
  do the (covariant) super-derivatives \cref{RecallingCovariantDerivativeInCoframe} restrict to ordinary covariant derivatives,
  \begin{equation}
  \label
  {VanishingGravitinoImpliesOrdinaryDerivatives}
    \bracket({\BosonicCounit})
      ^\ast
    \Psi =  0
    \;\;\;\;\;
    \Rightarrow
    \;\;\;\;\;
    \nabla_a 
    =
    \bracket({E^{-1}})^r_a
    \nabla_r
    \;\;
    \text{on $X^{1,10}$,}
  \end{equation}
  and hence only then are the left hand sides of \cref{EinsteinRSMaxwell} the ordinary field strength tensors of bosonic Einstein-Maxwell-Chern-Simons theory.
  
  In general, the gravitino field induces corrections to \cref{VanishingGravitinoImpliesOrdinaryDerivatives} which conspire to produce fermionic source terms in the Einstein-Maxwell-Chern-Simons theory. We expand on this in \cref{ComputingFermionicSourceTerms}.
\end{remark}

\begin{remark}[\textbf{Rheonomy}]
\label[remark]
{RheonomyInTheEoMs}
  The crucial qualifier ``rheonomically'' in \cref{EinsteinRSMaxwell} refers to further differential equations (not yet explicitly stated here) specifying the odd coordinate derivatives (and hence: dependence) of the super-fields. 
  We expand on this in 
  \cref{Rheonomy}.
\end{remark}

%%%%%%%%%%%%%%%%
\subsubsection
{Fermionic Source Terms}
\label
{ComputingFermionicSourceTerms}
%%%%%%%%%%%%%%%%

As highlighted in \cref{FermionicSourceTerms}, the super-space Einstein-Maxwell-Chern-Simons equations \cref{EinsteinRSMaxwell} induce the ordinary such equations on ordinary spacetime only when the gravitino field vanishes there, in that $\bracket({\BosonicCounit})^\ast \Psi = 0$.

%%%%%%%%%%%%%%
\paragraph
{State of the Literature}
%%%%%%%%%%%%%%
While this special case of vanishing gravitino profile is predominantly considered in the literature (cf. \parencites[\S1.2.2]{Marolf2012}[\S12.6]{FreedmanVanProeyen2012}), SuGra solutions with non-vanishing gravitino field are of interest and have been discussed, at least in lower dimensions (cf. \cite{AichelburgGuven1983,AichelburgEmbacher1986,Aichelburg1994,BrooksKalloshOrtin1995,DereliSenikoglu2024}).
 
It may be underappreciated that, for nontrivial gravitino field, the 11D Sugra Einstein-Maxwell-Chern-Simons equations on spacetime pick up fermionic corrections. This was originally expressed 
\parencites[(6)]{CremmerFerrara1980}[(2.2.12-14)]{DuffNilssonPope1986} through absorbing the gravitino terms in a torsionful spin-connection (and via an accordingly modified spacetime 4-form flux), and more explicitly in  \cite[p. 4]{BandosEtAl1998}, whose formula we quickly rederive now from \cref{11DSuGraAslS4BianchiOnSuperspace}.

%%%%%%%%%%%%%%
\paragraph
{Fermion Terms in the Maxwell-Chern-Simons Equation}
%%%%%%%%%%%%%%

The systematic way would be to insert the covariant derivatives \cref{RecallingCovariantDerivativeInCoframe} into the Maxwell-Chern-Simons equation \cref{EinsteinRSMaxwell} and simplify. But this is tedious, not the least because it requires the rheonomic recursion (which we discuss in \cref{Rheonomy} but make explicit there only for the case of trivial gravitino field strength).

However, since the Maxwell-Chern-Simons equation on the flux density $G_4$ is \emph{close} to the $\mathfrak{l}S^4$-Bianchi on its super-flux version \cref{SuperSpaceS4Bianchi}, there is a shortcut:

To this end, consider the pullback of the superfluxes \eqref{SuperFluxFormsIn11D} to ordinary spacetime, using \cref{CoordinateComponentsOfCoframeField}, hence the \emph{bosonic fluxes}
\begin{subequations}
  \begin{alignat}{2}
    G^b_4 & := 
    \bracket({\BosonicCounit})^\ast
    G_4
    &&
    =
    \tfrac{1}{4!}\bracket({
      G_4
    })
      ^{\vert \Theta=0}
      _{r_1 \cdots r_4}
    \,
    \mathrm{d}X^1 \cdots \mathrm{d}X^4
    \\
    G^b_7 & := 
    \bracket({\BosonicCounit})^\ast
    G_7
    &&
    =
    \tfrac{1}{7!}\bracket({
      G_7
    })
      ^{\vert \Theta=0}
      _{r_1 \cdots r_7}
    \,
    \mathrm{d}X^1 \cdots \mathrm{d}X^7
    \mathrlap{\,,}
  \end{alignat}
\end{subequations}
and the \emph{fermionic fluxes}
\begin{subequations}
\label
{FermionicSpacetimeFluxes}
  \begin{alignat}{2}
    G^f_4
    &
    :=
    \bracket({\BosonicCounit})^\ast
    G_4^0
    &&
    =
    \tfrac{1}{2}
    \bracket({
      \overline{\Psi}_{r_1}
      \Gamma_{r_2 r_3}
      \Psi_{r_4}
    })^{\vert \Theta=0}
    \,
    \mathrm{d}X^1 \cdots \mathrm{d}X^4
    \\
    G^f_7
    &
    :=
    \bracket({\BosonicCounit})^\ast
    G_7^0
    \mathrlap{\,,}
    &&
    =
    \tfrac{1}{5!}
    \bracket({
      \overline{\Psi}_{r_1}
      \Gamma_{r_2 \cdots r_6}
      \Psi_{r_7}
    })^{\vert \Theta=0}
    \,
    \mathrm{d}X^1 \cdots \mathrm{d}X^7
    \mathrlap{\,,}
  \end{alignat}
\end{subequations}
and their combinations,
\begin{subequations}
\label
{FullSpacetimeFlux}
  \begin{alignat}{2}
  \FullSpacetimeFlux_4  
  & :=
  \bracket({\BosonicCounit})^\ast G_4^s   
  &&
  =
  G^b_4 + G^f_4
  \\
  \FullSpacetimeFlux_7 
  &:= \bracket({\BosonicCounit})^\ast G_7^s
  &&
  =
  G^b_7 + G^f_7
  \,,
  \end{alignat}
\end{subequations}
which may be understood as the ``actual'' spacetime flux densities.

Now, since the bosonic fluxes on superspace are ``super-Hodge dual'' by \cref{EinsteinRSMaxwell}, in that
\begin{equation}
  G_7 = \star \, G_4
  \,,
  \;\;\;
  \text{meaning:}
  \;\;\;
    \bracket({
      G_7
    })_{a_1 \cdots a_7}
    \,=\,
    \tfrac{1}{4!}
    \epsilon
     _{
       a_1 \cdots a_7 
       \,
       b_1 \cdots b_4
    }
    \bracket({G_4})
      ^{b_1 \cdots b_4}
    \,,
\end{equation}
and since this pulls back to actual Hodge duality on ordinary spacetime:
\begin{equation}
  \bracket(\BosonicCounit)^\ast
  \star G_4
  =
  \star \,
  \bracket({\BosonicCounit})^\ast 
  G_4
  =
  \star \, G_4^b
  \mathrlap{\,,}
\end{equation}
it follows that the full spacetime fluxes \cref{FullSpacetimeFlux} satisfy Hodge duality up to a correction by the fermionic spacetime fluxes \cref{FermionicSpacetimeFluxes}:
\begin{align}
  \label{SpacetimeFermionicHodgeDuality}
  \FullSpacetimeFlux_7
    \, = \, 
  \star 
  \FullSpacetimeFlux_4
    + 
  \bracket({
    G^f_7  
      - 
    \star 
    G^f_4
  })
  \,.
\end{align}

Therefore,  pulling back the super-spacetime $\mathfrak{l}S^4$-Bianchi identity \eqref{SuperSpaceS4Bianchi} to ordinary spacetime
yields the following \emph{gravitino-corrected Maxwell-Chern-Simons equations}:
\begin{subequations}
\label
{FullSpacetimeBianchiIdentities}
\begin{align}
  \dd\, \FullSpacetimeFlux_4 \, &= \, 0 
  \\
  \dd \star \FullSpacetimeFlux_4 
  &= \, 
  \tfrac{1}{2} \FullSpacetimeFlux_4 \, \FullSpacetimeFlux_4 +
  \dd 
  \bracket({ 
     \star G^f_4 - G^f_7
  }) 
  \mathrlap{\,.}
\end{align}
\end{subequations}
This coincides with the field equation obtained  in \cite[above (2.4)]{BandosEtAl1998} from the original Lagrangian density of \cite{CremmerJuliaScherk1978} (recognizing our $\FullSpacetimeFlux_4$ as ``$F^{(4)}$'' in \cite[(2.3), (7.1)]{BandosEtAl1998}).

\begin{remark}
\label[remark]
{lS4IdentityFailsOnSpacetimeSalvagedOnSuperspace}
  Equation \cref{FullSpacetimeBianchiIdentities} highlights that the spacetime flux densities $\bracket({\FullSpacetimeFlux_4, \star \FullSpacetimeFlux_4})$ do actually \emph{not generally} satisfy the $\mathfrak{l}S^4$-equations of motion \cref{CFieldEoM,lS4BianchisAsClosedlS4ValuedForms} (and  $\bracket({\FullSpacetimeFlux_4, \FullSpacetimeFlux_7})$ are not in general Hodge duals on spacetime). Indeed, equation \cref{FullSpacetimeBianchiIdentities} is not of higher Maxwell-type in the sense of \cref{OnHigherMaxwellTypeBianchiIdentities}; hence, when taken at face value, does not admit a global completion according to \cref{Completions}.

  The magic of \cref{11DSuGraAslS4BianchiOnSuperspace} is to restore all these properties \emph{on superspace} (where the offending gravitino terms get absorbed into the super-geometry) and thereby allow for IR-completions of 11D SuGra (discussed in \cref{Charges}).
\end{remark}

%%%%%%%%%%%%%%%%%%%
\subsubsection
{Rheonomy in 11D SuGra}
\label
{Rheonomy}
%%%%%%%%%%%%%%%%%%%

The idea of \emph{rheonomy} is evident, but the technical implementation requires care.

%%%%%%%%%%%%%
\paragraph
{The Idea}
%%%%%%%%%%%%%
Super-functions on a super-spacetime $X^{1,d\vert \mathbf{N}}$ have $2^N-1$ field components \cref{OddTaylorExpansionOfSuperfields} beyond their restriction to the ordinary spacetime $\inlinetikzcd{ \bosonic{X}^{1,d} \ar[r, hook, "{ \epsilon^{\mathrlap{\rightsquigarrow}} }"] \& X^{1,d\vert \mathbf{N}} }$. Super 1-forms have $N\bracket({2^N-1}) + N 2^N$ more components than their restriction, etc.
Hence for a super-space formulation to be equivalent to a field theory on $\bosonic{X}^{1,d}$, its on-shell conditions must include constraints that fix all these extraneous components. 

Concretely, this is the case if the on-shell super-fields are subject to differential equations prescribing their derivatives \emph{normal} to the bosonic spacetimes $\bosonic{X}{}^{1,d}$ inside super-spacetime $X^{1,d\vert \mathbf{N}}$, hence their odd coordinate derivatives in any chart.
Since this means to say that on-shell fields satisfy \emph{flow equations} away from ordinary spacetime through superspace, this idea has been called \emph{rheonomy} in \cite[\S III.3.3]{CDF1991}. 
It has been claimed \cite[p. 12]{DallAgataEtAl1999} that this is equivalent to what other authors call ``superspace constraints'', though details vary across authors.

%%%%%%%%%%%%%
\paragraph
{Concrete implementation}
%%%%%%%%%%%%%
The remaining question is how to make this idea precise. 
For instance, in \cite{CDF1991} rheonomy is meant to be imposed by hand, via imposition of ``rheonomic parameterization of the curvatures''. In \cref{11DSuGraAslS4BianchiOnSuperspace} however it is a rigorous consequence of the (duality-symmetric) Bianchi identities.

The precise implementation of the idea of rheonomy in 11D SuGra is due to \cite{Tsimpis2004} (who does not use that term, though). This relies on the existence of super-normal coordinates for the underlying $\mathrm{Spin}(1,10)$-structure, a fact that seems to not be noted in the original statement of rheonomy from \cite[\S III.3.3]{CDF1991}.

We proceed to briefly survey how this works, following the account in \cite[\S 2]{GSS25-Embedding}.

%%%%%%%%%%%%%%
\paragraph
{Wess-Zumino-Tsimpis Gauge}
%%%%%%%%%%%%%%

First --- due to the local $\mathrm{Spin}(1,10)$-gauge and diffeomorphism symmetry of the theory --- we need to fix a gauge in order to be able to talk about unique extensions of superfields over super-spacetime. A good gauge is the analog of \emph{Riemann normal coordinates} in first order formalism, namely of the \emph{radial gauge} $X^r \Omega_r = 0$, but now imposed only on the \emph{odd} coordinates. This is due to \cite[\S 3]{Tsimpis2004} (cf. related discussion by \parencites[\S A.3-4]{McArthur1984}[\S 3]{Atick1987}). 

In \cite[p. 2]{Tsimpis2004} this is motivated as ``the most natural generalization of the Wess-Zumino gauge'', whence we refer to it in \cite[Def. 2.1]{GSS25-Embedding} as \emph{Wess-Zumino-Tsimpis gauge}:
\begin{equation}
\label
{WZTGauge}
  \begin{aligned}
    \bracket({
      E^{(0)}
    })\rule[-0pt]{0pt}{10pt}
     ^a_\rho
    & \defneq
    0
    \\
    \bracket({
      \Psi^{(0)}
    })\rule[-0pt]{0pt}{10pt}
     ^\alpha_\rho
    & 
    \defneq 
    \delta^\alpha_{\rho}
    \\
    \bracket({
      \Omega^{(0)}
    })\rule[-0pt]{0pt}{10pt}
     ^{ab}_\rho
     & \defneq 0
  \end{aligned}
  \;\;\;\;
  \text{and}
  \;\;\;\;
  \forall_n 
  \;
  \left\{
  \begin{aligned}
    \bracket({
      E^{(n)}
        _{[\rho_1 \cdots \rho_n}
    })\rule[-3pt]{0pt}{15pt}
      _{\rho]}^a
    & 
    = 0
    \\
    \bracket({
      \Psi^{(n)}
        _{[\rho_1 \cdots \rho_n}
    })\rule[-3pt]{0pt}{15pt}
      _{\rho]}^{\alpha}
    & =
    0
    \\
    \bracket({
      \Omega^{(n)}
        _{[\rho_1 \cdots \rho_n}
    })\rule[-3pt]{0pt}{15pt}
      _{\rho]}^{ab}
    & =
    0
    \mathrlap{\,.}
  \end{aligned}
  \right.
\end{equation}
That these gauge conditions may always be satisfied on any chart, by applying super-symmetry and local Lorentz transformations to given fields (already off-shell), is shown in \parencites[\S 3]{Tsimpis2004}.

%%%%%%%%%%%%%
\paragraph
{Solving the Cartan Structural Equations}
%%%%%%%%%%%%%

Now it happens that in WZT gauge \cref{WZTGauge}, the Cartan structural equations \cref{TheCartanStructuralEquationsIn11D} may be solved for the odd coordinate derivatives.
For example (cf. \parencites[(59)]{Tsimpis2004}[Lem. 2.5]{GSS25-Embedding}):
\begin{equation}
\hspace{-3mm} 
  \begin{alignedat}{3}
    &\;\;&
    \mathrm{d}\, E^a
    &
    =
    - 
    \Omega^a{}_b\,
    E^b
    +
    \bracket({
      \overline{\Psi}
      \,\Gamma^a\,
      \Psi
    })
    &\;\;\;&
    \begin{subarray}{l}
      \text{by vanishing}
      \\
      \text{super-torsion}
    \end{subarray}
    \\
    \Rightarrow
    &&
    \Theta^\rho
    \partial_\rho
    E^a_r
    & =
    2
    \Theta^\rho
    \Psi^\alpha_\rho
    \Psi^{\alpha'}_r
    \Gamma^a_{\alpha\alpha'}
    +
    \Theta^\rho
    \partial_r E^a_\rho
    -
    \Theta^\rho
    \bracket({
      \Omega^a{}_b
    })_\rho
    E^b_r
    +
    \Theta^\rho
    \bracket({
      \Omega^{a}{}_b
    })_r
    E^b_\rho
    \\
    &&&
    = 2 \Theta^\rho
    \delta^\alpha_\rho
    \Gamma^a_{\alpha\alpha'}
    \Psi^{\alpha'}_r
    &&
    \begin{subarray}{l}
      \text{by gauge}
      \\
      \text{condition \cref{WZTGauge}.}
    \end{subarray}
  \end{alignedat}
\end{equation}

%%%%%%%%%%%%%
\paragraph
{Component Recursion}
%%%%%%%%%%%%%%

This way, in the WZT gauge \cref{WZTGauge}, one finds for any solution to the super-space equations \cref{11DSugraSuperspaceEoM} explicit recursion relations that express the higher orders \cref{OddTaylorExpansionOfSuperfields} of the super-fields in terms of the lower orders and of the flux density $G_4$ on ordinary spacetime \cite{Tsimpis2004}.

Concretely, for solutions with vanishing gravitino field strength, $\bracket({\nabla \Psi})_{ab} = 0$, one finds the following component recursion relation \cite[Lem. 2.5-7]{GSS25-Embedding}:
\begin{equation}
  \begin{aligned}
    \bracket({
      E^{(n\mathcolor{purple}{+1})}
    })\rule{0pt}{11pt}
      ^a_\rho
    \;\;
    & 
    =
    \tfrac{2}{n+2}
    \bracket({
      \overline{\Theta}
      \,\Gamma^a\,
      \Psi^{(n)}_\rho
    })
    \\
    \bracket({
      E^{(n\mathcolor{purple}{+1})}
    })\rule{0pt}{11pt}
      ^a_{r}
    \;\;
    & =
    \tfrac{2}{n+1}
    \bracket({
      \overline{\Theta}
      \,\Gamma^a\,
      \Psi^{(n)}_r
    })
    \\
    \bracket({
      \Psi^{(n\mathcolor{purple}{+1})}
    })\rule{0pt}{11pt}
      ^\alpha_{\rho}
    \;\;
    & =
    -
    \tfrac{1}{n+2}
    \tfrac{1}{4}
    \bracket({
      \Gamma_{ab}
      \Theta
    })\rule{0pt}{11pt}
      ^\alpha
    \bracket({
      \Omega^{(n)}
    })\rule[-2pt]{0pt}{13pt}
      ^{ab}_\rho
    +
    \tfrac{1}{n+2}
    \bracket({
      H_a \Theta
    })^\alpha
    \bracket({
      E^{(n)}
    })\rule[-2pt]{0pt}{13pt}
      ^a_\rho
    \\
    \bracket({
      \Psi^{(n\mathcolor{purple}{+1})}
    })\rule{0pt}{11pt}
      ^\alpha_{r}
    \;\;
    & =
    -
    \tfrac{1}{n+2}
    \tfrac{1}{4}
    \bracket({
      \Gamma_{ab}
      \Theta
    })\rule{0pt}{11pt}
      ^\alpha
    \bracket({
      \Omega^{(n)}
    })\rule[-2pt]{0pt}{13pt}
      ^{ab}_r
    +
    \tfrac{1}{n+1}
    \bracket({
      H_a \Theta
    })^\alpha
    \bracket({
      E^{(n)}
    })\rule[-2pt]{0pt}{13pt}
      ^a_r    
    \\
    \bracket({
      \Omega^{(n\mathcolor{purple}{+1})}
    })\rule{0pt}{11pt}
      ^{a_1 a_2}_{\rho}
    & =
    \tfrac{2}{n+2}
    \bracket({
      \overline{\Theta}
      \,K^{a_1 a_2}\,
      \Psi^{(n)}_\rho
    })
    \\
    \bracket({
      \Omega^{(n\mathcolor{purple}{+1})}
    })\rule{0pt}{11pt}
      ^{a_1 a_2}_{r}
    & =
    \tfrac{2}{n+1}
    \bracket({
      \overline{\Theta}
      \,K^{a_1 a_2}\,
      \Psi^{(n)}_r
    })
    \mathrlap{\,,}
  \end{aligned}
\end{equation}
where (cf. \cite[(135, 162)]{GSS24-SuGra})
\begin{equation}
  \begin{aligned}
    H_a\;\;
    & =
    \tfrac{1}{6}
    \tfrac{1}{3!}
    \bracket({G_4})_{a\,  b_1 b_2 b_3}
    \Gamma^{b_1 b_2 b_3}
    -
    \tfrac{1}{12}
    \tfrac{1}{4!}
    \bracket({G_4})^{b_1 \cdots b_4}
    \Gamma_{a \, b_1 \cdots b_4}
    \\
    K^{a_1 a_2}
    & =
    \tfrac{1}{6}
    \bracket({G_4})^{a_1 a_2 b_1 b_2}
    \Gamma_{b_1 b_2}
    +
    \tfrac{1}{6}
    \tfrac{1}{4!}
    \bracket({G_4})_{b_1 \cdots b_4}
    \Gamma^{a_1 a_2\, b_1 \cdots b_4}
  \end{aligned}
\end{equation}
and where the superspace flux field components are independent of the odd coordinates
(\cite[Lem. 2.4]{GSS25-Embedding}, still a consequence of the assumption that $(\nabla \Psi)_{ab} = 0$):
\begin{equation}
  \partial_{\Theta^\alpha}
  \bracket({
  \bracket({G_a})_{a_1 \cdots a_4}
  })
  =
  0
  \mathrlap{\,.}
\end{equation}

\begin{example}
[\textbf{Near-Horizon M-Brane Solutions}]
  Using this recursion one may for instance lift the M-brane near horizon spacetimes $\mathrm{AdS}^{4/7} \times S^{7/4}$ to super-space solutions of \cref{11DSugraSuperspaceEoM}. This is worked out in \cite[\S3.2, \S4.2]{GSS25-Embedding}.
\end{example}

%%%%%%%%%%%%%%%
\subsection
{Reductions}
\label
{Reductions}
%%%%%%%%%%%%%%%

We discuss (following \cite{GiotopoulosSati2026}) how the above formulation (\cref{11DSuGraAslS4BianchiOnSuperspace}) of 11D SuGra on superspace induces analogous formulations of its dimensional reductions, in particular to 10D IIA SuGra (in \cref{IIASuperspaceSuGra}, cf. \cref{SomeReductionsOf11DSuGra}).

%%%%%%%%%%%%%%%%%%
\subsubsection
{General Torus Reductions}
%%%%%%%%%%%%%%%%%%

%%%%%%%%%%%%
\paragraph
{Torus-fibered Super-Spacetimes}
%%%%%%%%%%%

We consider super-spacetimes $Y^{1,d\vert \mathbf{N}}$ which are principal bundles (cf. \parencites{CarmeliFioresiVaradarajan2018}{Eder2021}) 
\begin{equation}
  \label
  {TheKKBundle}
  \begin{tikzcd}[
    sep=17pt
  ]
    T^{k}
    \ar[
      r,
      hook
    ]
    &
    Y^{\mathrlap{1,d\vert \mathbf{N}}}
    \ar[
      d,
      ->>,
      "{ \pi }"{pos=.35}
    ]
    \\
    &
    X^{\mathrlap{1,d-k\vert \mathbf{N}}}
  \end{tikzcd}
\end{equation}
with typical fiber the $k$-torus $T^k := ({S^1})^k$,
regarded as a supermanifold, $T^k \defneq T^{k \vert 0}$.
Accordingly, the fermionic dimension $N$ remains unreduced, but in general branches as a $\mathrm{Spin}$-representation $\mathbf{N}$ upon reducing the $\mathrm{Spin}(1,d)$ structure group on $Y$ to be adapted along the principal $T^k$-structure (cf. Lem. \ref{GiotopoulosGauge}). 

Recall that a super-coframe $\bracket({E,\Psi})$ on $X^{1,d-k\vert \mathbf{N}}$ is actually defined on an open cover $\inlinetikzcd{ \tilde X \ar[rr, ->>, "{ (\iota_i)_{i \in I}\; }"] \&\& X }$ by charts $\big\{{ \inlinetikzcd{ U_i^{1,d-k\vert \mathbf{N}} \ar[r, hook, "{ \iota_i }"] \& X^{1,d-k\vert \mathbf{N}}  } }\big\}_{i \in I}$, $\tilde X := \coprod_i U_i$. We assume without restriction of generality that the $T^k$-bundle \cref{TheKKBundle} has been trivialized  over this same open cover $\tilde X$ (if not, we may refine the cover) by local sections $\sigma_i$: 
\begin{equation}
  \begin{tikzcd}
    &
    Y^{\mathrlap{1,d\vert \mathbf{N}}}
    \ar[
      d,
      ->>,
      "{ \pi }"{pos=.4}
    ]
    \\
    U^{1,d-k\vert \mathbf{N}}_i
    \ar[
      r,
      hook
    ]
    \ar[
      ur,
      "{ \sigma_i }"
    ]
    &
    X^{\mathrlap{1,d-k \vert \mathbf{N}}}
  \end{tikzcd}
\end{equation}
This means that a principal Ehresmann connection form $E'$ on $Y$ has a chartwise pullback (along the chosen trivializing sections $\sigma_i$) to $\tilde X$. Conversely, its curvature 2-form $F_2$ is a basic 2-form on $X$ whose  pullback to $Y$ (along $\pi$) we shall also often denote by the same symbol.

%%%%%%%%%%%%%%%
\paragraph
{Traditional formulation}
%%%%%%%%%%%%%%%
The traditional approach to dimensional reduction along $S^1$-fibers of supergravity theories in first-order coframe formulation may be found in \parencites{ScherkSchwarz1979}{Cremmer1982}{DuffNilssonPope1986}.
These authors essentially declare that a super-spacetime of the form $\big(X\times S^1, (E,\Psi,\Omega)\big)$ is (infinitesimally) symmetric with respect to the corresponding $\mathrm{U}(1)$-action if the fields in question are ``independent of the corresponding (locally defined) adapted coordinate''. This comes down  to saying that they are invariant under the Lie derivative of the corresponding (even) fundamental vector field $\xi$ flowing along the circle fiber:
\begin{equation}
\label{NaiveLieVanishing}
\bracket({
  L_{\xi} E,\, 
  L_{\xi} \Psi, \, 
  L_{\xi} \Omega 
})
  \, = \, 
  0 
  \mathrlap{\,.}
\end{equation}

But beware that this condition in itself is ill-defined on arbitrary topologies, since both the super-coframe $(E,\Psi)$ and the Spin-connection $\Omega$
are only locally defined (as forms valued in plain vector spaces, cf. \cref{Coframes}) and hence the naive Lie derivative does not yield a well-defined global operation on such objects, as it does not commute with the action of local Spin-transition functions (cf. \cite[\S1]{Giotopoulos2026}). 

%%%%%%%%%%%%%%
\paragraph
{Proper formulation}
%%%%%%%%%%%%%%

Instead, the a-priori well-defined ``covariant'' Lie derivative operator which appropriately describes\footnote{In particular, this covariant Lie derivative has the property that $L_\xi^K E = 0 $ if and only if $L_\xi g =0$ for the corresponding (even) metric $g= \eta_{ab} E^a\otimes E^b$. This is as expected, since a symmetry of a super-gravitational background should be, in particular, an isometry.} 
globally the symmetry condition on arbitrary topologies is that of Lichnerowicz and Kosmann \parencites{Lichnerowicz1963}{Kosmann1972}, originally introduced only for the action on spinor fields. 
We will not review details of the conceptual resolution of this issue here, but simply note that:
\begin{lemma}
[{\cite[Cor. 3.16]{Giotopoulos2026}}]
\label[lemma]
{GiotopoulosGauge}
On $(1,d\vert \mathbf{N})$-dimensional super-spacetimes which are principal $k$-torus bundles \cref{TheKKBundle} with space-like principal action,
there always exists a (partial) gauge fixing reducing the structure group along 
\begin{equation}
  \begin{tikzcd}[
    sep=small
  ]
    \mathrm{Spin}(1,d-k)
      \times 
    \mathrm{Spin}(k)
    \ar[
      r,
      hook
    ]
    &
    \mathrm{Spin}(1,d)
    \mathrlap{\,,}
  \end{tikzcd}
\end{equation}
in which the covariant symmetry condition does indeed collapse to the naive Lie derivative vanishing condition \eqref{NaiveLieVanishing}. 
\end{lemma}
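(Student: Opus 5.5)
We need to prove that, on a super-spacetime which is a principal $T^k$-bundle with space-like action, one can gauge-fix to a reduced structure group $\mathrm{Spin}(1,d-k)\times\mathrm{Spin}(k)$ in which the covariant (Kosmann) Lie derivative agrees with the naive one. The plan has three steps: adapt the coframe to the torus action, adapt the connection so that the Kosmann correction vanishes, and then check that both adaptations survive the remaining gauge freedom.

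First I adapt the coframe to the fibration. The fundamental vector fields $\xi_1,\dots,\xi_k$ of the torus action are even, commuting, and (by assumption) space-like, so $g(\xi_i,\xi_j)$ is a positive definite matrix at each point. Over each trivializing chart $U_i$ I apply a local $\mathrm{Spin}(1,d)$-transformation to the given coframe so that the last $k$ bosonic legs $E^{d-k+1},\dots,E^d$ span the annihilator-dual of the horizontal distribution. Equivalently, the $\xi_j$ lie in the span of the dual frame vectors $e_{d-k+1},\dots,e_d$. This is possible because $\mathrm{Spin}(1,d)$ acts transitively on orthonormal $k$-frames of space-like subspaces: Gram–Schmidt applied to the $\xi_j$ gives such a frame, and a lift to $\mathrm{Spin}$ exists locally. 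The stabilizer of this splitting is $\mathrm{Spin}(1,d-k)\times\mathrm{Spin}(k)$ (up to the usual $\mathbb{Z}_2$ quotient). Hence the remaining transition functions take values there, and this realizes the reduction of the structure group. The odd coframe $\Psi$ simply branches under the restricted group, so nothing further needs to be done on the fermionic side.

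Next I handle the Kosmann correction. The covariant Lie derivative reads
\[
L^K_\xi = L_\xi + \text{(gauge rotation by } \iota_\xi\Omega + \text{the skew part of } \nabla\xi\text{)}.
\]
Using the residual gauge freedom, I choose the frame to be $T^k$-invariant: average over the compact torus, or equivalently define it by equivariant pullback from local sections $\sigma_i$. With an invariant frame, the naive $L_\xi E$ is computed componentwise in the frame. The Killing condition ($L^K_\xi E=0$ being equivalent to $L_\xi g=0$, as noted in the footnote) then forces the infinitesimal rotation term to vanish. The reason is that in an invariant frame the Kosmann term equals exactly the compensating Lorentz rotation needed to make the frame invariant, and that rotation is zero. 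Applying the same identity to $\Psi$ and $\Omega$ shows $L^K_\xi=L_\xi$ on all fields, giving \eqref{NaiveLieVanishing}.

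The main obstacle is the second step: showing that the invariant adapted frame exists globally with transition functions in the reduced group and still invariant, rather than only chartwise. I would construct it chartwise from invariant frames on the base $X$, pulled back along $\pi$, with the vertical legs built from the connection form $E'$ rescaled by the vielbein of the fiber metric. I would then check that the transitions between such frames, which are gauge transformations of the base frame times $\mathrm{Spin}(k)$-rotations of the fiber metric vielbein, are $T^k$-invariant and lie in $\mathrm{Spin}(1,d-k)\times\mathrm{Spin}(k)$. The genuinely delicate point is the Spin lift of these transitions, since a Spin lift may fail to exist and that would be an obstruction. I expect this can be dealt with because it is a statement at the level of the original spin structure restricted to the subgroup.
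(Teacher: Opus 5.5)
Your argument has a genuine gap. It sits where you placed your ``main obstacle'', but you misdiagnose it there. (The paper only cites this lemma, so I judge the argument on its own.)

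The whole content of the lemma is the following existence statement. Over each $\pi^{-1}(U_i)\cong U_i\times T^k$ there must be an adapted frame \emph{of the given $\mathrm{Spin}(1,d)$-structure} that is naively invariant, $L_{\xi_J}(E,\Psi)=0$. Once that exists, the Kosmann correction vanishes and $L^K_{\xi}=L_{\xi}$ on $E$, $\Psi$, $\Omega$, as you say. You never produce such a frame:
\begin{itemize}
\item Averaging over $T^k$ does not preserve orthonormality, and it can even yield a degenerate frame.
\item Building the frame from invariant base frames is circular in the super setting. The super-coframe is not determined by $g=\eta_{ab}E^aE^b$, which is degenerate in odd directions. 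So you would need a base super-coframe $(E_X,\Psi_X)$, and in the paper that is \emph{extracted from} this lemma via the Kaluza--Klein ansatz.
\item Transport along the action from the slices $\sigma_i(U_i)$ is the right idea, but you give no justification for it.
\end{itemize}

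What justifies the transport is the following. The covariant symmetry condition says precisely that $L_{\xi_J}(E,\Psi)$ is an infinitesimal $\mathfrak{spin}(1,d)$-rotation of $(E,\Psi)$. Equivalently, the differential of the $T^k$-action preserves the $\mathrm{Spin}(1,d)$-reduction of the super-frame bundle. Since $\mathrm{Spin}(1,d)$ acts \emph{faithfully} on $\mathbb{R}^{1,d\vert\mathbf{N}}$ (the central $-1$ acts as $-1$ on $\mathbf{N}$), this reduction is an honest sub-bundle. On it, the differential of the $T^k$-action is again a genuine $T^k$-action, not an action of a double cover. Said differently, an element of $\ker(\mathbb{R}^k\to T^k)$ acting by $-1$ would force the invariant, nondegenerate $\Psi$ to equal $-\Psi$. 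This point is not cosmetic: for spinors with fibrewise anti-periodic spin structure in a purely bosonic setting it fails, and no such gauge exists.

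Given this, the transported frames have all the required properties:
\begin{itemize}
\item they are well defined by freeness of the action;
\item they lie in the $\mathrm{Spin}$-structure;
\item they are invariant by construction;
\item they stay adapted, because $\mathrm{d}R_t\,\xi_J=\xi_J$ for abelian $T^k$, so $\iota_{\xi_J}E$ is constant along orbits.
\end{itemize}
The difficulty you flag instead does not arise. Transitions between two such frames are $\mathrm{Spin}(1,d)$-valued by definition. They are $T^k$-invariant because the lifted action commutes with the right $\mathrm{Spin}$-action. They are stabilizer-valued because both frames are adapted.

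Second, your first step is misstated for supermanifolds. You cannot arrange that the $\xi_J$ lie in the span of the bosonic frame vectors $e_{d-k+1},\dots,e_d$. Since $\mathrm{Spin}(1,d)$ acts block-diagonally on $(E,\Psi)$, $\iota_{\xi_J}\Psi$ transforms covariantly and is generically nonzero. It is exactly the dilatino $\lambda_J$ in the paper's ansatz $\Psi=\Psi_X+\lambda\cdot E'$. The correct adaptedness condition is only $\iota_{\xi_J}E^a=0$ for $a\le d-k$. You achieve it by rotating the even vectors $\iota_{\xi_J}E\in\mathbb{R}^{1,d}$ into the last $k$ slots. Gram--Schmidt works here, since their Gram matrix consists of even superfunctions with positive-definite body. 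With that correction your stabilizer argument is fine.

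So your remark that ``nothing further needs to be done on the fermionic side'' is misleading. The odd coframe is what keeps $\xi_J$ from being purely bosonic, and the faithfulness of the spinor representation is what removes the only potential obstruction to the invariant gauge.
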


%%%%%%%%%%%%%
\paragraph
{Kaluza-Klein Ansatz}
%%%%%%%%%%%%%

Hence in this gauge, from \cref{GiotopoulosGauge}, the super-gravitational field $(E,\Psi,\Omega)$ on $Y$ (\cref{SuperspacetimeAsSupergravityFieldConfiguration}) takes the form of the standard ``Kaluza--Klein ansatz'', but now justifiably applicable in arbitrary such topologies. That is:
\begin{enumerate}
\item the bosonic coframe on $Y$ is
decomposed as
\begin{equation}
\label
{CoframeKaluzaKleinGauge}
  E 
    = 
  \bracket({ 
    E_X
    ,\, 
    \Phi \cdot E'
  })
  \mathrlap{\,,}
\end{equation}
where 
\begin{enumerate}

\item 
  $E_X$ is a bosonic coframe on the base $X$ (pulled back along the projection $\pi$), 

\item 
$\ii E' \in \Omega^{1}_{\mathrm{dR}}\bracket({Y;\, \ii \FR^k})$ is an Ehresmann $\mathrm{U}(1)^{k}$-connection form on \cref{TheKKBundle} with curvature $F_2 \in \Omega^2_\mathrm{cl}\bracket({X; \FR^k})$
\begin{equation}
  \pi^\ast F_2 
    \,:=\, 
  \dd E'  
  \mathrlap{\,,}
\end{equation}

\item
$\Phi \in C^\infty_\mathrm{evn}\bracket({\tilde X;\,  \FR^k\otimes\bracket({ {\FR^\ast}})^{k}})$ is the dilaton field,  being  a ``matrix-valued'' field in that
\[
  \bracket({
    \Phi \cdot E'
  })^m 
    \,=\, 
  \Phi^m{}_J \, E'^J
  \mathrlap{\,,} 
\] 
where $m,J=1,\cdots, k$,%
\footnote{To distinguish between the two indices, we denote the indices of the tangent Euclidean $\FR^k$-fiber directions by $m,n,\dots$, while those of the internal Lie algebra $\mathfrak{u}(1)^{\times k}\simeq \ii \FR^k$ by $I,J,\cdots$. The point here is that the former transform in the fundamental $\mathrm{SO}(k)$-representation of the induced $\mathrm{Spin}(k)$-structure group, while the latter in its trivial representation.}  
\end{enumerate}

\item
the fermionic coframe on $Y$ is decomposed as
\begin{equation}
  \Psi 
    = 
    \Psi_X  \, + \, \lambda \cdot E'
  \mathrlap{\,,}
\end{equation}
where 
\begin{enumerate}
\item
$\Psi_X$ is a fermionic coframe on $X$, 
\item
$\lambda \in C^\infty_\mathrm{odd}\bracket({\tilde X;\,  \mathbf{N}\vert_{\mathrm{Spin}(1,d-k)}\otimes\bracket({ {\FR^\ast}})^{k} })$ is the dilatino superpartner of $\Phi$, with components forming a linear map valued in the corresponding branching of the 11D Majorana representation, i.e.,
\begin{equation}
  \lambda \cdot E' 
    = 
  \lambda_J \, E'^J
\end{equation}
for $\big\{\lambda_J \in C^\infty_\mathrm{odd}\bracket({\tilde X,\,  \mathbf{N}\vert_{\mathrm{Spin}(1,d-k)} }) \big\}_{J = 1,\cdots,k}$,
\end{enumerate}

\item
the matrix components of Spin$(1,10)$-connection $\Omega$ on $Y$ similarly decompose into basic components as
\begin{equation}
  \Omega^{ab} 
    \,=\, 
  \Omega^{ab}_{X} 
    + 
  {\widetilde{\Omega}_X}{}_{J}{}^{ab} 
    \, 
  E'^J 
  \mathrlap{\,.}
\end{equation}

\end{enumerate}

It follows that the components $\bracket\{{\Omega_X^{ab}}\}_{a,b=0,\cdots, d-k}$ encode a Spin$(1,d-k)$-connection on $X$, while those given by $\bracket\{{\hat{\Omega}_{X}^{mn}}\}_{m,n= 1,\cdots k} := \bracket\{{\Omega_X^{ab}}\}_{a,b=d-k+1,\cdots, d}$ encode an auxiliary $\mathrm{Spin}(k)$-connection on $X$. Naturally, these correspond to the remaining local $\mathrm{Spin}(1,d-k)$-gauge invariance of the basic super-coframe $(E_X,\Psi_X)$ and the local $\mathrm{Spin}(k)$-invariance of the dilaton/dilatino and $\mathrm{U}(1)^{k}$-connection combination $(\Phi, \lambda, E')$, respectively.

%%%%%%%%%%%%%
\paragraph
{Kaluza-Klein Reduction of Torsion-free Super-Spacetimes}
%%%%%%%%%%%%%

It is not hard to see (cf. \cite[Lem. 3.1]{Giotopoulos2026} for the $S^1$-bundle case) that, when expressed in the above partially fixed gauge (\cref{GiotopoulosGauge}), the condition of vanishing super-torsion (\cref{SuperTorsionFreeIn11D}),
\[
  \dd E^a 
     + \Omega^{a}{}_b E^b 
    - (\overline{\Psi}\, \Gamma^a \, \Psi) 
  \, = \, 0
  \mathrlap{\,,}
\]
is equivalent to the following list of conditions on fields of the base super-spacetime $X$:
\begin{enumerate}
\item The basic super-torsion-free condition 
\begin{equation}
  \dd E^a_X 
    + {\Omega_X}^{a}{}_b \,  E^b_X 
    - \bracket({
      \overline{\Psi}_X\, \Gamma^a \, \Psi_X
    }) 
    \,=\, 
    0
    \mathrlap{\,.} 
\end{equation}
\item The super-coframe expansion of the $\mathrm{U}(1)^{k}$-curvature $F_2$ component 2-forms 
\begin{equation}
\label{Curvature2FormExpansion}
\hspace{-.5cm}
\begin{aligned}
  F_2^I 
    &= 
    \bracket({
      \Phi^{-1}
    })^I{}_n 
    \bracket({ 
      - \Omega_X{}^{n}{}_{a} \wedge E_{X}^{a} 
      + \bracket({
          \overline{\Psi}_X \Gamma^n \Psi_X
        })  
    })
  \\
  &= 
  \bracket({
    \Phi^{-1}
  })^I{}_n 
  \bracket({   
   \Omega_X{}_{b,a}{}^{n} E_{X}^b \, E_{X}^{a} 
   + 2\delta^{nm}
     \bracket({\Phi^{-1}})^J{}_m\,  
     \bracket({
       \overline{\Psi}_X \Gamma_a \lambda_J
     }) 
     \, E_{X}^{a} 
     +  
     \bracket({
       \overline{\Psi}_X \Gamma^{10-(n-1)} \Psi_X
      }) 
 })
 \mathrlap{\,.}
 \end{aligned}
\end{equation}
\item The following super-coframe expansion of the dilaton $\mathrm{SO}(k
)$-covariant derivative on X:
\begin{equation}
  \begin{aligned}
  \dd^{\hat{\Omega}} \Phi^m 
  &
  \equiv\,  
  \dd \Phi^m + \hat{\Omega}^m{}_n \, \Phi^n  \,  
  \\
  &  
  =\,
  \widetilde{\Omega}_{X}{}^m{}_a E^a_X 
  + 2\bracket({\overline{\Psi}_X\Gamma^m \lambda})
  \,.
  \end{aligned}
\end{equation}
\item The following prescription for the reconstruction of the remaining components of the 11D Spin connection:
\begin{equation}
\begin{aligned}
  \widetilde{\Omega}_X{}_J{}^{a}{}_{b} 
  &
  = \Phi^m{}_{J} \, \Omega_{X}{}_{b}{}^{a}{}_{m}
  \mathrlap{\,,}
  \\
  \Omega_{X \,\beta}{}^{am} \Psi_X^\beta 
  & = 
  2 \delta^{mn} 
  \bracket({\Phi^{-1}})^{I}{}_{m} 
  \bracket({
    \overline{\Psi}_X \Gamma^a \lambda_I
  }) 
  \mathrlap{\,.}
\end{aligned}
\end{equation}
\end{enumerate}
This basic calculation shows that the data of any symmetric super-torsion-free super-spacetime of arbitrary principal $T^k$-bundle topology is essentially equivalent, via the partial gauge from \cref{GiotopoulosGauge}, to the data of: 
\begin{enumerate}
\item a torsion-free super-spacetime structure on $X$ 
\item an integral (curvature) $2$-form $F_2 \in \Omega^{2}_\mathrm{dR}(X;\, \FR^k)$ of specific coframe expansion \eqref{Curvature2FormExpansion}, encoding the topology of the bundle, 
\item a dilaton/dilatino  pair $(\Phi, \lambda)$ with an auxiliary Spin$(k)$-connection $\hat{\Omega}$, satisfying the differential relation
\begin{equation}
  \dd^{\hat{\Omega}} \Phi^m 
    = 
  \partial_a \Phi^m \, E_X^a  
  + 
  2 
  \bracket({
    \overline{\Psi}_X \, \Gamma^{10-(m-1)} \, \lambda
  }) 
  \mathrlap{\,.}
\end{equation}
\end{enumerate}

\begin{equation}
\label
{ReductionOf11DTorsionFreeSuperSpacetimes}
  \adjustbox{
    scale=1.3
  }{$
  \left\{
  \colorbox{lightgray}{$
  \substack{\text{Symmetric principal $T^k$-bundle} 
  \\
    \text{super-torsion-free}
    \\
    11D \, \text{super-spacetimes}
  }
  $}
  \right\}
  $}
  \Leftrightarrow
  \adjustbox{
    scale=1.3
  }{$
  \left\{
  \colorbox{lightgray}{$
  \substack{
    \text{Super-torsion-free} 
    \\
    (11-k)D \,\text{ super-spacetimes with}
    \\
    \text{integral 2-flux and dilaton/dilatino pair}
  }
  $}
  \right\}
  $}
  .
\end{equation}

This describes the dimensional reduction (and reoxidation) of the off-shell \textit{super-gravitational field} content of 11D SuGra to the off-shell super-gravitational field content of its lower $(11-k)$-dimensional descendants, along with the additional $(F_2, \Phi, \lambda)$ fields due to ``winding'' along the toroidal fibers. The remaining task is to reduce the full, fluxed on-shell conditions (Thm. \ref{11DSuGraAslS4BianchiOnSuperspace}), in a manner consistent with flux quantization choices. This is achieved via the following process of \textit{rational cyclification}, and more generally \textit{k-toroidification}, of the corresponding classifying $L_\infty$-algebra $\mathfrak{l}S^4$.  

%%%%%%%%%%%%%%
\paragraph
{Reduction of closed $L_\infty$-algebra-valued forms}
%%%%%%%%%%%%%%
Following  \cite[Prop. 2.2, Outlook]{GiotopoulosSati2026}, the $L_\infty$-algebraic toroidification/oxidation bijection from \cref{ExtCycAdjunction} has a straightforward extension to the setting of \textit{symmetric} closed $L_\infty$-algebra-valued forms on super-principal $T^k$-bundle spacetimes \cref{TheKKBundle}, by utilizing a choice of connection on the total space. 

\begin{proposition}[\textbf{Supermanifold reduction/oxidation for torus bundles}] \label{NonTrivialTorusBundleToroidification/Oxidation}
Let $X$ be a supermanifold supplied with $k$ even, integral closed but potentially non-exact 2-forms $ \bracket\{{ F_2^I\in \Omega^2_{\mathrm{dR}}(X) }\}_{I=1,\cdots, k}$ and $\mathfrak{h}$ a super-$L_\infty$-algebra of finite type. For any representative $T^k$-principal bundle \cref{TheKKBundle}
with de Rham Chern classes $\bracket\{{ [F_2^I]\in H^2_{\mathrm{dR}}(X) }\}_{I=1,\cdots, k}$, fix a choice of Ehresmann connection $\ii E ' \in \Omega^1_{\mathrm{dR}}\bracket({ Y; \, i \FR^k })$
trivializing the (pullback) curvature 2-form $F_2 = (F_2^1,\cdots, F_2^k)$
\[
\dd E' \, = \, \pi^* F_2\, ,
\]
hence in particular s.t.
\[
 \iota_{\xi_J} E ' \, = \, t_J 
\]
for $\{\xi_J\}_{J=1,\cdots, k}$ the fundamental (vertical) vector fields generating the $T^k$-action on $Y$, with respect to the Lie algebra basis $\{t_I\}_{I=1,\cdots, k}\subset i \FR^k$. Then:
There is a canonically associated bijection between:
\begin{itemize}[leftmargin=4mm]  
\item $T^k$-invariant maps of super-$L_\infty$-algebroids out of the tangent Lie algebroid $TY$ into $\mathfrak{h}$,  
  
\item maps of super-$L_\infty$-algebroids out of $TX \cong T\bracket({Y/T^k})$ into $\mathrm{tor}^k(\mathfrak{h})$ that preserve the curvature 2-form $F_2$,
\vspace{-2mm} 
  \begin{equation}
    \label{LinftyAlgebroidToroidificationHomIsomorphism}
    \Big\{
    \begin{tikzcd}
      TY
      \ar[
        r,
        "{ F }"
      ]
      &
      \mathfrak{h}
    \end{tikzcd}
    \Big\}
    \begin{tikzcd}[
      column sep=50pt
    ]
      \ar[
        r,
        shift left=5pt,
        "{  
          \substack{
            \textup{\color{darkgreen}%
              reduction
            }
            \\[+1pt]
            \mathrm{rdc}_{F_2}
          }
        }",
        "{ \sim }"{swap, yshift=-2pt}
      ]
      \ar[
        r,
        <-,
        shift right=5pt,
        "{ 
          \substack{
            \mathrm{oxd}_{F_2}
            \\
            \textup{\color{darkgreen}%
              oxidation  
            }
          }
        }"{swap},
      ]
      &
      {}
    \end{tikzcd}
    \bigg\{
    \begin{tikzcd}[
      row sep=-3pt, 
      column sep=8pt
    ]
      TX
      \ar[
        rr,
        "{ \widetilde F }"
      ]
      \ar[
        dr,
        shorten=-2pt,
        "{ F_2 }"{swap}
      ]
      &&
      \mathrm{tor}^k(\mathfrak{h})
      \ar[
        dl,
        shorten=-2pt,
        "{ \omega_2 }"
      ]
      \\
      &
      b \mathbb{R}^k
    \end{tikzcd}
    \bigg\}
  \end{equation}
  given by
\begin{equation}
\label
{LinftyAlgebroidToroidificationHomBijection}
  \hspace{-.6cm}
    \left.
  \begin{tikzcd}[
    ampersand replacement=\&,
    sep=0pt
  ]
    TY
    \ar[
      rr,
      "{ F }"
    ]
    \&\&
    \mathfrak{h}
    \\
    \begin{aligned}
    F^i 
    &=  
    F^i_{\mathrm{nw}} 
    \\ 
    &
    + 
    \sum\limits_{
      \mathclap{
      \substack{
        1\leq r \leq k 
        \\ 
        1 \leq I_1 < \cdots < I_r \leq k
      }
      }
    } 
    E'^{I_r}\cdots E'^{I_1} 
    F_{I_r\cdots I_1 }  
    \end{aligned}
    \&\mapsfrom\&
    e^i
  \end{tikzcd}
  \right\}
  \leftrightsquigarrow
  \left\{
  \begin{tikzcd}[
    row sep=-4pt, 
    column sep=0pt]
    TX
    \ar[
      rr,
      "{ 
        \widetilde{F} 
      }"
    ]
    &&
    \mathrm{tor}^k(\mathfrak{h})
    \\
    F^i_{\mathrm{nw}}
    &\mapsfrom&
    e^i
    \\
    (-1)^{r(r+1)/2} F_{I_r\cdots I_1 }
    &\mapsfrom&
    \dir{I_r}{\mathrm{s}}\cdots \dir{I_1}{\mathrm{s}}\,e^i
    \\
    F_2
    &\mapsfrom&
    \omega_2
    \mathrlap{\,, }
  \end{tikzcd}
  \right.
\end{equation}
where $ F_{I_r\cdots I_1 } := \iota_{\xi_{I_r}} \cdots \iota_{\xi_{I_1}} F^i $.

\end{itemize} 
That is, associated to any choice of $\mathrm{U}(1)^k$-connection $E'$ on $Y$, there is a bijection of $T^k$-invariant, closed $\mathfrak{h}$-valued forms on $Y$ and closed $\mathrm{tor}^k(\mathfrak{h})$-valued forms on $X\cong Y/T^k$ with fixed 2-form component $F_2 := \dd E'$\footnote{This extends to a (strict) isomorphism of the corresponding $\infty$-groupoids of fluxes with their higher (globally defined) concordances 
\begin{equation*}
  \inlinetikzcd{
  \shape  
  \Omega^1_{\mathrm{cl}}
  \bracket({
    Y; \,\mathfrak{h}
  })_{\mathrm{inv}} 
  \, \cong_{E'}  
  \shape
  \Omega^1_{\mathrm{cl}}
  \bracket({
    X; \,\mathrm{tor}^k(\mathfrak{h})
  })_{F_2} 
  \ar[r, hook]
  \&
  \shape
  \Omega^1_{\mathrm{cl}}
  \bracket({
    X; \,\mathrm{tor}^k(\mathfrak{h})
  })  
  \mathrlap{\,.}
  }
\end{equation*}
}
\begin{equation}
  \Omega^1_{\mathrm{cl}}
  \bracket({
    Y; \,\mathfrak{h}
  })_{\mathrm{inv}} 
  \;\; \cong_{E'} \;\; 
  \Omega^1_{\mathrm{cl}}
  \bracket({
    X; \,\mathrm{tor}^k(\mathfrak{h})
  })_{F_2}\, .
\end{equation}
\end{proposition}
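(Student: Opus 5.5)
The plan is to prove the bijection by the usual ``basic--vertical'' decomposition of $T^k$-invariant forms on the total space, and then to check that the Chevalley--Eilenberg differential of $\mathrm{tor}^k(\mathfrak{h})$ matches the de Rham differential under this decomposition.

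\textbf{Step 1: decomposition of invariant forms.} Since $\iota_{\xi_J}E'^I=\delta^I_J$ and $E'$ is $T^k$-invariant (the torus is abelian), every $T^k$-invariant super-form $\alpha$ on $Y$ decomposes uniquely as
\[
\alpha=\alpha_{\mathrm{nw}}+\textstyle\sum_{r\geq 1}\sum_{I_1<\cdots<I_r}E'^{I_r}\cdots E'^{I_1}\,\alpha_{I_r\cdots I_1}.
\]
Here $\alpha_{\mathrm{nw}}$ and the coefficients $\alpha_{I_r\cdots I_1}=\iota_{\xi_{I_r}}\cdots\iota_{\xi_{I_1}}\alpha$ are basic, hence pullbacks of unique forms on $X$. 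This is the standard algebraic statement that the invariant forms are $\Omega^\bullet_{\mathrm{dR}}(X)\otimes\wedge^\bullet\langle E'^1,\dots,E'^k\rangle$. It holds for supermanifolds because the fibre is purely even and the statement is chartwise, via the local trivializations $\sigma_i$, with gluing along $U(1)^k$-gauge transformations that preserve basic forms. Applying this componentwise to $F^i=F(e^i)$ produces the data on the right side of \cref{LinftyAlgebroidToroidificationHomBijection}. Conversely, the formula given there reconstructs $F$, so at the level of graded linear maps the correspondence is a bijection.

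\textbf{Step 2: differential compatibility.} Using $\dd E'^I=\pi^\ast F_2^I$ and the basicness of the coefficients, compute
\[
\dd\big(E'^{I_r}\cdots E'^{I_1}F_{I_r\cdots I_1}\big).
\]
The result is a sum of two kinds of terms. The first kind replaces one $E'^{I_j}$ by $F_2^{I_j}$, with a sign. The second kind is $\pm E'^{I_r}\cdots E'^{I_1}\,\dd F_{I_r\cdots I_1}$. Closedness of $F$ means $\dd F^i=F(\dd_{\mathrm{CE}(\mathfrak{h})}e^i)$. Expand both sides in the $E'$-basis and compare coefficients. On the right side, a product of $F^j$'s expands via a Leibniz-type distribution of the $E'$ factors among the factors of the product. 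These coefficient equations should reproduce exactly the defining CE differential of $\mathrm{tor}^k(\mathfrak{h})$, recalled in \cref{ExtCycAdjunction}:
\[
\dd(\mathrm{s}_I e)=-\mathrm{s}_I(\dd e)+\omega_2^I\wedge(\ldots),
\]
with each $\mathrm{s}_I$ acting as a graded derivation. The $\omega_2$-contributions come precisely from the $\dd E'=F_2$ terms, and the generator $\omega_2\mapsto F_2$ is closed. The cleanest route is to observe that contraction $\iota_{\xi_I}$ acts on invariant forms as a derivation of degree $-1$ that anticommutes with $\dd$ (by Cartan's formula, since $L_{\xi_I}=0$ on invariants). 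The operator $\mathrm{s}_I$ is defined to have the same properties, so the correspondence intertwines $\iota_{\xi_I}\leftrightarrow\mathrm{s}_I$ up to the sign $(-1)^{r(r+1)/2}$. That sign comes from reordering the iterated contractions relative to the $E'$-product ordering.

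\textbf{Step 3: the converse direction.} Given closed $\widetilde F$ with $\widetilde F(\omega_2)=F_2$, the oxidized $F$ satisfies $\dd F=F\circ\dd_{\mathrm{CE}}$. This follows by running the Step 2 computation backwards, since the coefficient comparison was an equivalence. Invariance of $F$ is automatic because all ingredients are invariant.

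I expect the main obstacle to be the sign bookkeeping in Step 2: the interplay of the super sign rule \cref{TheSuperHomologicalSignRule} with the ordering $E'^{I_r}\cdots E'^{I_1}$ and the factor $(-1)^{r(r+1)/2}$. To handle it, I would first do the $k=1$ case, where this reduces to the known cyclification statement (cf. \cite{GiotopoulosSati2026}). I would then induct on $k$, using that $\mathrm{tor}^k=\mathrm{cyc}\circ\mathrm{tor}^{k-1}$ and $Y\to Y/T^1\to\cdots$ factorize into a tower of circle bundles, each carrying the induced connection component.
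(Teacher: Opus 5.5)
Your Steps 1--3 follow the paper's proof. The paper also reduces to the images $F^i$ of generators and decomposes each $T^k$-invariant $F^i$ via $E'$ into basic coefficients. It then gets the equivalence of the $\mathfrak{h}$- and $\mathrm{tor}^k(\mathfrak{h})$-Bianchi identities from the same formal algebra as the $L_\infty$-algebra case \cref{ExtCycAdjunction} (\cite[Prop. 2.60]{GSS25-TD}). That transfer works because invariant forms are $\Omega^\bullet_{\mathrm{dR}}(X)[\theta^1,\dots,\theta^k]$ with $\theta^I:=E'^I$ and $\dd\theta^I=F_2^I$, which is formally like $\mathrm{CE}(\widehat{\mathfrak{g}})$ over $\mathrm{CE}(\mathfrak{g})$.

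Your contraction/derivation mechanism is a correct way to do this algebra for all $k$ at once. By induction on word length, $\widetilde F(\mathrm{s}_{I_r}\cdots\mathrm{s}_{I_1}P)=(-1)^r\big(\iota_{\xi_{I_r}}\cdots\iota_{\xi_{I_1}}F(P)\big)\vert_{\theta=0}$ for all $P\in\mathrm{CE}(\mathfrak{h})$. After that, $\dd\widetilde F=\widetilde F\dd$ on generators follows from $\dd(\beta\vert_{\theta=0})=(\dd\beta)\vert_{\theta=0}-\sum_J F_2^J\,(\iota_{\xi_J}\beta)\vert_{\theta=0}$. On the sign: relative to the coefficient of $E'^{I_r}\cdots E'^{I_1}$, the factor $(-1)^{r(r+1)/2}$ is $(-1)^r$ from $\mathrm{s}_I\leftrightarrow-\iota_{\xi_I}$ times $(-1)^{r(r-1)/2}$ from reordering. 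It does not come from reordering alone.

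The step that would fail as stated is your fallback induction on $k$, because $\mathrm{tor}^k(\mathfrak{h})\neq\mathrm{cyc}(\mathrm{tor}^{k-1}(\mathfrak{h}))$. Cyclification shifts every generator, including the $\omega_2^{(r)}$. So $\mathrm{cyc}(\mathrm{tor}^{k-1}(\mathfrak{h}))$ has $k-1$ extra degree-1 generators $\mathrm{s}_k\omega_2^{(r)}$, and in it $\dd\omega_2^{(r)}=\omega_2^{(k)}\,\mathrm{s}_k\omega_2^{(r)}\neq 0$. For example, $\mathrm{cyc}(\mathrm{cyc}(\mathfrak{l}S^4))$ has 11 generators against the 10 of \eqref{torS4}.

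A tower of circle bundles can still be used, but only after one more observation. The curvatures $F_2^I$ of the circles reduced first are pulled back from $X$, so the extra generators are sent to $\iota_{\xi}F_2^I=0$. The reduced forms therefore factor through the quotient by the dg-ideal these generators span, and that quotient is $\mathrm{tor}^k(\mathfrak{h})$. This is the same caveat the paper records in its footnote on passing from IIA to 9D. Your direct argument makes the induction unnecessary anyway.

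You also do not address the footnoted extension to higher concordances. The paper gets it by applying the same bijection to $Y\times[0,1]^n\to X\times[0,1]^n$ with the pulled-back connection.
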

\begin{proof}
Morphisms of $L_\infty$-algebroids out of $TY$ into $\mathfrak{h}$ are equivalently morphisms of sCDGAs $\mathrm{CE} (\mathfrak{h}) \rightarrow \Omega^\bullet_{\mathrm{dR}}(Y)$, and hence, for $\mathfrak{h}$ of finite type, are completely determined by the image of generators of $\mathrm{CE}(\mathfrak{h})$
$$
e^i \longmapsto  F^i \,.
$$
Now, given a (fixed) choice of connection $E'= E^I t_I$ on $Y\rightarrow X$, any $T^k$-\textit{invariant} $i$-form decomposes as 
\begin{equation}
  F^i 
  \, = \, 
  F^i_\mathrm{nw} 
    +
  \;\; 
  \sum\limits_{\mathclap{\substack{
    1\leq r \leq k 
    \\ 
    1 \leq I_1< \cdots < I_r \leq k
  }}}  
  \;\;
  E'^{I_r}\cdots E'^{I_1} \, F_{I_r\cdots I_1 }\, ,
\end{equation}
where, by construction, the coefficients are \textit{basic} differential forms
\[
  \bracket\{{
    F^i_\mathrm{nw}, F_{I_r\cdots I_1 }
      := 
    \iota_{\xi_{I_r}} 
      \cdots 
    \iota_{\xi_{I_1}} F^i
  }\} 
    \, \subset \, 
  \Omega^\bullet_\mathrm{dR}(X)
  \mathrlap{\,.}
\]

The proof that the forms $\{F^i\}$ satisfy the required $\mathfrak{h}$-Bianchi identities if and only if the forms $\bracket\{{ F^i_\mathrm{nw}, F_{I_r\cdots I_1 } F^i }\}$ satisfy the $\mathrm{tor}^k(\mathfrak{h})$-Bianchis follows by the same formal algebraic steps as that of \cite[Prop 2.60]{GSS25-TD}, i.e., the source being an $L_\infty$-algebra (an $L_\infty$-algebroid over a point).

The extension  to the $\infty$-groupoids of higher concordances (cf. ftn. 7) follows by lifting the bijection \eqref{LinftyAlgebroidToroidificationHomIsomorphism} to have instead as domains the $L_\infty$-algebroids $T(Y\times [0,1]^{\times n})$ and $T(X\times [0,1]^{\times n})$ that support (higher) concordances, using the induced pulled back $\mathrm{U}(1)^k$-connection.
\end{proof}

\medskip 
We next make explicit the formulas \cref{LinftyAlgebroidToroidificationHomBijection}
for $k=1,2$ (the cases of \emph{cyclification} and \emph{toroidification}), and for target $L_\infty$-algebra specialized to the characteristic $\mathfrak{h} \defneq \mathfrak{l}S^4$ of 11D SuGra (\cref{CharacteristicLInfinityOf11DSuGra}):
% Indeed, the existence of the Kaluza--Klein (partial) gauge for $k$-toroidal symmetric super-spacetimes naturally provides the crucial ingredient of a connection \eqref{CoframeKaluzaKleinGauge}, necessary to 
We apply Prop. \ref{NonTrivialTorusBundleToroidification/Oxidation} to the 11D superspace super-fluxes $(G^s_4,G_7^s)$ \eqref{SuperFluxFormsIn11D}. Together with the correspondence from \eqref{ReductionOf11DTorsionFreeSuperSpacetimes}, this propagates the miracle of \cref{11DSuGraAslS4BianchiOnSuperspace} to all $(11-k)$D descendant supergravity theories obtained by reduction on $k$-torus fibers! We proceed with the ``miracles'' corresponding to the cases of 10D IIA and 9D supergravities, respectively.

%%%%%%%%%%%%%%%%%%
\subsubsection
{IIA superspace SuGra via cyc$(\mathfrak{l}S^4)$}
\label
{IIASuperspaceSuGra}
%%%%%%%%%%%%%%%%
The cyclification (\cref{RationalCyclification}) of the real Whitehead $L_\infty$-algebra of the 4-sphere \cref{MTheoryGaugeAlgebra} is given (cf. \cite[Ex. 2.26]{GSS25-TD}\cite[Ex. 3.3]{FSS17-Sphere}) by:
\begin{equation}\label{CEcycS4}
  \mathrm{CE}\big(
    \mathrm{cyc}(
      \mathfrak{l}S^4
    )
  \big)
  \;\simeq\;
  \FR_\dd 
  \left[
  \def\arraystretch{1}
  \def\arraycolsep{2pt}
  \begin{array}{c}
    \omega_2
    \\
    g_4
    \\
    \mathrm{s}g_4
    \\
    g_7
    \\
    \mathrm{s}g_7
  \end{array}
  \right]
  \Big/
  \left(
  \def\arraystretch{1}
  \def\arraycolsep{2pt}
  \begin{array}{ccl}
    \mathrm{d}\,
    \omega_2 
    &=&
    0
    \\
    \mathrm{d}\,
    g_4
    &=&
    \omega_2\, \mathrm{s}g_4
    \\
    \mathrm{d}
    \,
    \mathrm{s}g_4
    &=&
    0
    \\
    \mathrm{d}
    \,
    g_7
    &=&
    \tfrac{1}{2}
    g_4\, g_4
    +
    \omega_2 \, \mathrm{s}g_7
    \\
    \mathrm{d}\,
    \mathrm{s}g_7
    &=&
    -g_4\, \mathrm{s}g_4
  \end{array}
  \!\right)
  \mathrlap{\,.}
\end{equation}
As indicated earlier in  \eqref{cycS4ClosedForms}, this is the classifying $L_\infty$-algebra for the gauge sector of IIA SuGra \eqref{IIAGaugeSector}, as seen by a direct comparison with the corresponding duality symmetric NS/RR-flux density Bianchi identities.

Moreover, by the $L_\infty$-algebraic (``tangent-space-wise'') version (\cref{ExtCycAdjunction}) of Prop. \ref{NonTrivialTorusBundleToroidification/Oxidation} it follows \cite[Ex 3.10]{GSS25-TD} that the $\mathfrak{l}S^4$-cocycle $(G_4^0, G_7^0)$ \eqref{Flat11Dcocycles} of $11D$ super-Minkowski spacetime $\FR^{1,10\vert \mathbf{32}}$ dimensionally reduces to 
the $\mathrm{cyc}(\mathfrak{l}S^4)$-cocycle $(F_2^0,H_3^0, F_4^0,F_6^0, H_7^0)$ on flat IIA super-Minkowski $\FR^{1,9\vert \mathbf{16}\oplus \overline{\mathbf{16}}}$, given as: 
\begin{equation}
\label{FlatNSRRcocycles}
\hspace{-1cm} 
 \begin{tikzcd}[
    row sep=-4pt
    ,column sep=25pt
    ,/tikz/column 1/.append style={anchor=base east}
  ]
    \mathbb{R}^{
      1,9\,\vert\,
      \mathbf{16} 
        \oplus 
      \overline{\mathbf{16}}
    }
    \ar[
      rr,
      "{ 
          \begin{array}{c}
            (F_2^0, F_4^0, F_6^0, H_3^0, H_7^0)
          \end{array}
      }"
    ]
    &&
\mathrm{cyc}(
      \mathfrak{l}S^4
    )
    \\
    {\color{darkorange}  \big(\hspace{1pt}
      \overline{\psi}
      \,\Gamma_{1\!0}
      \psi
    \big) }
    \;=:\;
    \mathrm{F}_2^0
    &\longmapsfrom&
    \omega_2
    \\
     {\color{darkorange}  \tfrac{1}{2}\big(\hspace{1pt}
      \overline{\psi}
      \,\Gamma_{a_1 a_2}\,
      \psi
    \big)
    e^{a_1} e^{a_2} }
    \;=:\;
    \mathrm{F}_4^0
    &\longmapsfrom&
    g_4
    \\
    {\color{darkorange} -  \big(\hspace{1pt}
      \overline{\psi}
      \,
      \Gamma_{1\!0}
      \Gamma_{a}
      \,
      \psi
    \big)
    e^a ) }
    \;=:\;
    H_3^0
    &\longmapsfrom&
    \mathrm{s}g_4
    \\
    {\color{darkorange}  \tfrac{1}{5!}
    \big(\hspace{1pt}
      \overline{\psi}
      \,\Gamma_{a_1 \cdots a_5}\,
      \psi
    \big)
    e^{a_1} \cdots e^{a_5} }
    \;=:\;
    H^0_7
    &\longmapsfrom&
    g_7
    \\
   {\color{darkorange}  -
    \tfrac{1}{4!}
    \big(\hspace{1pt}
      \overline{\psi}
      \,\Gamma_{1\!0}
      \Gamma_{a_1 \cdots a_4}
      \,
      \psi
    \big)
    e^{a_1} \cdots e^{a_4} }
    \;=:\;
    F_6^0
    &\longmapsfrom&
    \mathrm{s}g_7 \, ,
\end{tikzcd}
\end{equation}
hence also satisfying the IIA Bianchi identities \eqref{IIAGaugeSector}.

%%%%%%%%%%%%%
\paragraph
{10D IIA Supergravity via Super-flux Bianchis}
%%%%%%%%%%%%%
The miracle from 11D superspace SuGra (Thm. \ref{11DSuGraAslS4BianchiOnSuperspace}) suggests a similar result should hold in 10D: Namely, an appropriate combination of these {\color{darkorange}\textit{flat supersymmetric}} cocycles with the \textit{curved bosonic} duality symmetric NS/RR flux densities on super-torsion-free $(10\vert \mathbf{16}\oplus\overline{\mathbf{16}})$-dimensional super-spacetimes $X$ should correspond precisely to on-shell IIA supergravity. 

Indeed, by applying Prop. \ref{NonTrivialTorusBundleToroidification/Oxidation} on $S^1$-symmetric super-fluxes $(G_4^s,G_7^s)$ \eqref{SuperFluxFormsIn11D} living on $S^1$-principal bundles $Y\rightarrow X$, in the (partial) gauge fixing resulting in the Kaluza--Klein form for its super-gravitational field $(E,\Psi,\Omega)$, one obtains the following form of super-fluxes 
on the 
 $(10\vert \mathbf{16}\oplus \overline{\mathbf{16}})$-dimensional base super-spacetime $X$:   
\begin{equation}
\label
{CurvedNS/RRCocycles}
  \begin{alignedat}{4}
    F^s_2
    & :=
    \tfrac{1}{2}
    \bracket({F_2})_{a_1 a_2} 
    \, 
    E_X^{a_1} E_X^{a_2} 
    && + 
    \mathcolor{olive}{%
      \tfrac{2}{\Phi^2} 
      \bracket({
        \overline{\Psi}_X \, \Gamma_{a} \lambda
      })\, E^a_X }
    && 
    + 
    \mathcolor{olive}{ 
      \tfrac{1}{\Phi}
    }  
    \mathcolor{darkorange}{  
      \bracket({
        \overline{\Psi}_X \,\Gamma_{1\!0}\, \Psi_X
      })
    }
    \\
    H^s_3
    &:=
    \tfrac{1}{3!}(H_{3})_{a_1 a_2 a_3} 
    \, E_X^{a_1}E_X^{a_2}E_X^{a_3} 
    &&
    +  
    \mathcolor{olive}{ 
    \bracket({
      \overline{\Psi}_X
      \,\Gamma_{a_1 a_2}
      \lambda 
      })\,
    E_X^{a_1}  E_X^{a_2} } 
    &&
    -
    \mathcolor{olive}
    { \Phi } \, 
    \mathcolor{darkorange}{  
      \bracket({
        \overline{\Psi}_X\, \Gamma_{1\!0 a}\, \Psi_X
      })
      \, E_X^a 
    }
    \\
    F^s_4
    & :=
    \tfrac{1}{4!}
    \bracket({F_4})_{a_1 \cdots a_4} 
    \,
    E_X^{a_1}\!\cdots\!E_X^{a_4}  
    && 
    &&
    +
    \mathcolor{darkorange}{ 
      \tfrac{1}{2}
      \bracket({
        \overline{\Psi}_X
        \,\Gamma_{a_1 a_2}\, 
        \Psi_X
      })
      \,  E_X^{a_1} E_X^{a_2} 
    }
    \\
    F^s_6
    & := 
    \tfrac{1}{6!}
    \bracket({F_6})_{a_1 \cdots a_6}
    \,  E_X^{a_1}\cdots E_X^{a_6} 
    && +
    \mathcolor{olive}{
    \tfrac{2}{5!}
    \bracket({
      \overline{\Psi}_X
      \,\Gamma_{a_1 \cdots a_5}
      \lambda
    }) 
    \, 
    E_X^{a_1}\!\cdots\!E_X^{a_5} } 
    && -
    \mathcolor{olive}{
      \tfrac{\Phi}{4!} 
    } \,  
    \mathcolor{darkorange}{
      \bracket({
        \overline{\Psi}_X
        \,\Gamma_{1\!0 \, a_1 \cdots a_4}\, 
        \Psi_X
      }) 
      \,  
      E_X^{a_1}\!\cdots\!E_X^{a_4}  
    }
    \\
    H^s_7
    &:= 
    \tfrac{1}{7!}
    \bracket({H_7})_{a_1 \cdots a_7}
    \, 
    E_X^{a_1} \cdots E_X^{a_7} 
    && &&
    + 
    \mathcolor{darkorange}{ 
    \tfrac{1}{5!}
    \bracket({
      \overline{\Psi}_X
      \,\Gamma_{a_1 \cdots a_5}\,
      \Psi_X
    }) 
    \, 
    E_X^{a_1} \cdots E_X^{a_5}  
    }
    \mathrlap{\,,}
  \end{alignedat}
\end{equation}
together with an integrality condition on the super de Rham (hence Chern) class of the 2-flux 
\[
  [F_2^s] \, \in  \,H^2_\mathrm{dR}(X)
\] 
and the derivative relation for the induced dilaton and dilatino fields on $X$
\[
  \dd \Phi 
  \, = \, 
  \partial_a\Phi\, 
  E_X^a \, + \, 2 
  \bracket({
    \overline{\Psi}_X 
    \,\Gamma^{1\!0-(m-1)}\,
    \lambda
  }) 
  \, ,
\] 
with the additional terms in {\color{olive} olive} arising due to the {\color{olive}winding} of the 11D super-coframe along the ``shrinking'' {\color{olive}M-theoretic circle}. Recalling then that on-shell 10D IIA SuGra on spacetime is precisely equivalent to the dimensional reduction of on-shell 11D supergravity along circle fibers, the miracle from Thm. \ref{11DSuGraAslS4BianchiOnSuperspace} immediately guarantees the following.
\begin{proposition}
[{\cite[Thm. 3.4]{GiotopoulosSati2026}}]
\label{IIASuGraAsCyclS4BianchiOnSuperspace}
On a super-torsion-free super-spacetime $X^{1,9\vert\mathbf{16}\oplus\overline{\mathbf{16}}}$ \textup{(\cref{SuperSpacetime})}, super-flux densities \cref{CurvedNS/RRCocycles}
satisfy the cyc$(\mathfrak{l}S^4)$-Bianchi identities \eqref{cycS4ClosedForms} iff the equations of motion of 10D IIA supergravity hold rheonomically on super-spacetime.
\end{proposition}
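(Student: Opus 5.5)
The proof is an oxidation argument: lift the 10D data to an $S^1$-symmetric 11D configuration, invoke \cref{11DSuGraAslS4BianchiOnSuperspace} there, and descend again.

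\textbf{Step 1: oxidize the 10D data.} Start from a super-torsion-free $X^{1,9\vert\mathbf{16}\oplus\overline{\mathbf{16}}}$ with super-fluxes of the form \cref{CurvedNS/RRCocycles}, including the dilaton/dilatino pair $(\Phi,\lambda)$ and the integral class $[F^s_2]$.

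\begin{enumerate}
\item Integrality of $[F^s_2]$ provides a principal $S^1$-bundle $\pi : Y \to X$. Choose an Ehresmann connection $E'$ on it with $\mathrm{d}E' = \pi^\ast F^s_2$.
\item Build the Kaluza--Klein super-coframe on $Y$:
\[
E = \bracket({E_X, \Phi E'}), \qquad \Psi = \Psi_X + \lambda E'.
\]
Reconstruct the missing spin-connection components from the list of conditions preceding \cref{ReductionOf11DTorsionFreeSuperSpacetimes}.
\item Read the equivalence \cref{ReductionOf11DTorsionFreeSuperSpacetimes} backwards. The 10D super-torsion-freeness, the prescribed coframe expansion \cref{Curvature2FormExpansion} of $F_2$ (the olive and orange terms of $F^s_2$ with $k=1$), and the stated derivative relation for $\Phi$ together say exactly that $Y$ is a symmetric super-torsion-free 11D super-spacetime.
\end{enumerate}

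\textbf{Step 2: match the fluxes.} Apply $\mathrm{oxd}_{F_2}$ from \cref{NonTrivialTorusBundleToroidification/Oxidation} with $k=1$ and $\mathfrak{h}=\mathfrak{l}S^4$. This gives a bijection between $S^1$-invariant closed $\mathfrak{l}S^4$-valued pairs $(G^s_4,G^s_7)$ on $Y$ and closed $\mathrm{cyc}(\mathfrak{l}S^4)$-valued forms on $X$ with fixed 2-form component $F_2$. By \cref{CEcycS4}, the latter are precisely solutions of the Bianchi identities \cref{cycS4ClosedForms}.

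Concretely, set
\[
G^s_4 = F^s_4 + E' H^s_3, \qquad G^s_7 = H^s_7 + E' F^s_6,
\]
with signs fixed by \cref{LinftyAlgebroidToroidificationHomBijection}. One must check that these have the super-flux form \cref{SuperFluxFormsIn11D} exactly when the 10D forms have the form \cref{CurvedNS/RRCocycles}. To do this, substitute $E^{10} = \Phi E'$ and $\Psi = \Psi_X + \lambda E'$ into $G^0_4$ and $G^0_7$ and expand in powers of $E'$. The $E'$-free parts give the orange terms of $F^s_4$ and $H^s_7$. The terms linear in $E'$ give the orange and olive terms of $H^s_3$ and $F^s_6$: contributions with a $\lambda$ factor produce the olive $\lambda$-terms, and contributions with $E^{10}$ produce the $\Phi$-weighted $\Gamma_{10}$ terms. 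The bosonic components match by defining, for instance,
\[
(H_3)_{a_1a_2a_3} = \Phi\,(G_4)_{10\,a_1a_2a_3}.
\]
This is the flat computation of \cref{FlatNSRRcocycles} (from \cite[Ex. 3.10]{GSS25-TD}), now with $\Phi$ and $\lambda$ reinstated. I expect it to be the main obstacle: one must track the Fierz-type rearrangements and signs under the commuting rule \cref{CommutationRelationsOfBasicSuper1Forms}, and also confirm that no $E'E'$ terms arise, which holds because $(E')^2 = 0$.

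\textbf{Step 3: transfer the equations of motion.} The 10D fluxes satisfy the $\mathrm{cyc}(\mathfrak{l}S^4)$-Bianchis if and only if their oxidation satisfies the $\mathfrak{l}S^4$-Bianchis \cref{SuperSpaceS4Bianchi} on $Y$. By \cref{11DSuGraAslS4BianchiOnSuperspace}, this holds if and only if the 11D equations \cref{EinsteinRSMaxwell} hold rheonomically on $Y$. Two further facts complete the argument:
\begin{enumerate}
\item The field content is invariant, and by \cref{GiotopoulosGauge} the covariant symmetry condition reduces to $L_\xi = 0$ in the chosen gauge. Hence the rheonomic flow equations in the odd directions commute with reduction, and WZT gauge can be chosen compatibly because it only involves odd coordinates.
\item The classical fact that $S^1$-reduction of on-shell 11D SuGra is on-shell IIA SuGra, and conversely, applied component-wise.
\end{enumerate}
Together these identify the 11D equations with the IIA equations holding rheonomically on $X$, which proves the equivalence.
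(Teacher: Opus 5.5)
Your proposal is essentially the paper's own argument (cf.\ its proof of Prop.~\ref{N29DSuGraAsTorlS4BianchiOnSuperspace} and \cite[Thm.~3.4]{GiotopoulosSati2026}). It uses the same ingredients: KK reduction of super-torsion-free super-spacetimes \eqref{ReductionOf11DTorsionFreeSuperSpacetimes}, expansion of $(G^s_4,G^s_7)$ in KK gauge, the bijection of Prop.~\ref{NonTrivialTorusBundleToroidification/Oxidation}, \cref{11DSuGraAslS4BianchiOnSuperspace}, and the classical equivalence of IIA with $S^1$-reduced 11D SuGra, merely run from the oxidation side rather than starting from symmetric 11D data. One side remark is wrong but not load-bearing: the 11D WZT gauge \eqref{WZTGauge} cannot in general be imposed simultaneously with the KK gauge, because the torsion constraint fixes the odd frame components of $\Omega^{a\,10}$ to be $\propto \Phi^{-1}\,\Gamma^a\lambda$, so $(\Omega^{(0)})^{a\,10}_\rho=0$ fails whenever the dilatino is nonzero on the body; no common gauge is needed anyway, since the superspace equations are gauge-covariant and transferring rheonomy only uses the commuting square \eqref{CommutingSquareOfBosonicInclusions} of bosonic inclusions.
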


%%%%%%%%%%%%%%%%%
\subsubsection
{9D Super-space SuGra via $\mathrm{tor}(\mathfrak{l}S^4)$}
%%%%%%%%%%%%%
The rational toroidification of the real Whitehead $L_\infty$-algebra of the 4-sphere is given by (cf. 
%Ex 2.26 
\parencites[\S 2]{SatiVoronov2025}[Ex. 2.54]{GSS25-TD}):
\begin{equation}
\label{torS4}
  \mathrm{CE}\big(
    \mathrm{tor}^2(\mathfrak{l}S^4)
  \big)
  \;\;
  \simeq
  \;\;
  \mathbb{R}_\dd
  \left[
  \def\arraystretch{1.3}
  \begin{array}{c}
    \dir{2}{\omega}_2
    \\
    \dir{1}{\omega}_2
    \\
    g_4
    \\
    \dir{2}{\mathrm{s}}
    g_4
    \\
    \dir{1}{\mathrm{s}}g_4
    \\
    \dir{2}{\mathrm{s}}
    \dir{1}{\mathrm{s}}g_4
    \\
    g_7
    \\
    \dir{2}{\mathrm{s}}g_7
    \\
    \dir{1}{\mathrm{s}}g_7
    \\
    \dir{2}{\mathrm{s}}
    \dir{1}{\mathrm{s}}g_7
  \end{array}
  \right]
  \Big/
  \left(\!
  \def\arraystretch{1.3}
  \begin{array}{ccc}
    \mathrm{d}
    \,
    \dir{2}{\omega}_2
    &=&
    0
    \\
    \mathrm{d}
    \,
    \dir{1}{\omega}_2
    &=&
    0
    \\
    \mathrm{d}
    \,
    g_4
    &=&
    \dir{1}{\omega}_2
    \,
    \dir{1}{\mathrm{s}}g_4
    +
    \dir{2}{\omega}_2
    \,
    \dir{2}{\mathrm{s}}g_4
    \\
    \mathrm{d}
    \,
    \dir{2}{\mathrm{s}}g_4
    &=&
    -
    \dir{1}{\omega}_2
    \,
    \dir{2}{\mathrm{s}}
    \dir{1}{\mathrm{s}}
    g_4
    \\
    \mathrm{d}
    \,
    \dir{1}{\mathrm{s}}g_4
    &=&
    \dir{2}{\omega}_2\,
    \dir{2}{\mathrm{s}}
    \dir{1}{\mathrm{s}}g_4
    \\
    \mathrm{d}
    \,
    \dir{2}{\mathrm{s}}
    \dir{1}{\mathrm{s}}g_4
    &=&
    0
    \\
    \mathrm{d}
    \,
    g_7
    &=&
    \tfrac{1}{2}
    g_4 g_4
    +
    \dir{1}{\omega}_2
    \,
    \dir{1}{\mathrm{s}}g_7
    +
    \dir{2}{\omega}_2
    \,
    \dir{2}{\mathrm{s}}g_7
    \\
    \mathrm{d}\,
    \dir{2}{\mathrm{s}}
    g_7
    &=&
    -g_4 \, 
      \dir{2}{\mathrm{s}}g_4
    -\dir{1}{\omega}_2\, 
     \dir{2}{\mathrm{s}}
     \dir{1}{\mathrm{s}}g_7
    \\
    \mathrm{d}
    \,\dir{1}{\mathrm{s}}g_7
    &=&
    -g_4
    \,
    \dir{1}{\mathrm{s}}g_4
    +
    \dir{2}{\omega}_2
    \,
    \dir{2}{\mathrm{s}}
    \dir{1}{\mathrm{s}}g_7
    \\
    \mathrm{d}
    \,
    \dir{2}{\mathrm{s}}
    \dir{1}{\mathrm{s}}
    g_7
    &=&
    (\dir{2}{\mathrm{s}}g_4)
    (\dir{1}{\mathrm{s}}g_4)
    +
    g_4
    \,
    \dir{2}{\mathrm{s}}
    \dir{1}{\mathrm{s}}g_4
  \end{array}
  \!\! \right),
\end{equation}
which is, again, seen to be the classifying $L_\infty$-algebra for the gauge sector of $\mathcal{N}=2$ 9D SuGra \eqref{torS4ClosedForms}.

Forgoing the tangent-space-wise motivation of the corresponding flat supercocycles on $\FR^{1,8\vert \mathbf{16}\oplus \mathbf{16}}$, which holds along the lines of \eqref{Flat11Dcocycles} and \eqref{FlatNSRRcocycles}, we apply directly the $k=2$ case of Prop. \ref{NonTrivialTorusBundleToroidification/Oxidation} to the $\mathrm{U}(1)^2$-symmetric super-fluxes \eqref{SuperFluxFormsIn11D} of 11D supergravity on $\mathrm{U}(1)^2$-principal bundles $Y^{11}\rightarrow X^{9}$.\footnote{Or, equivalently, the $k=1$ case to $S^1$-symmetric IIA super-fluxes \eqref{CurvedNS/RRCocycles}. In doing so, however, one must remember that the 2-flux $F_2^s$ from \eqref{CurvedNS/RRCocycles} actually arises as the curvature of a connection from 11D, now additionally symmetric along the second $S^1$-fiber of the total torus in 11D, hence with no legs along the corresponding $\mathrm{U}(1)$-connection; in other words, that it is basic with respect to the 9D base super-spacetime.} This yields the following combination of {\color{darkorange}\textit{flat supersymmetric}} cocycles with \textit{curved bosonic} duality-symmetric fluxes of 9D supergravity:
\begin{equation}
\label
{Curved9DCocycles}
  \begin{alignedat}{4}
    (F^s_2)^I
    :=
    &
    \tfrac{1}{2}
    \bracket({F_2})^I_{a_1 a_2} 
    \, 
    E_X^{a_1} E_X^{a_2} 
    \\
    &
    + 
    \mathcolor{olive}{%
 2(\Phi^{-1})^I{}_n \delta^{nm} (\Phi^{-1})^J{}_m  (\overline{\Psi}_X \Gamma_a \lambda_J)\, E^a_X }
    && 
    + 
    \mathcolor{olive}{ 
     (\Phi^{-1})^I{}_m
    }  
    \mathcolor{darkorange}{  
      \bracket({
        \overline{\Psi}_X \,\Gamma^{\hat{m}}\, \Psi_X
      })
    }
    \\[3pt]
    \dir{12}{F^s_2}
    :=
    &
    \tfrac{1}{2}
    \dir{12}{(F_2)}_{a_1 a_2} 
    \, 
    E_X^{a_1} E_X^{a_2} 
    \\
    & + 
    \mathcolor{olive}{%
 (\overline{\lambda}_1 \Gamma_{ab} \lambda_2) \, E^a_X E^b_X }
    && 
    + 
    \mathcolor{olive}{ 
     \tfrac{1}{2} \epsilon_{mn} \Phi^m{}_1 \Phi^n{}_2
    }  
    \mathcolor{darkorange}{  
      \bracket({
        \overline{\Psi}_X \,\Gamma^{1\!0\, 9}\, \Psi_X
      })
    }
    \\
    & -{\color{olive}  \Phi^m{}_1 (\overline{\Psi}_X \Gamma_{a \hat{m}} \lambda_2) E_X^a }  && 
    \\
    & +{\color{olive}  \Phi^m{}_2 (\overline{\psi}_X \Gamma_{a \hat{m}} \lambda_1) E_X^a}  && 
    \\[3pt]
    \dir{I}{H^s_3}
    :=
    &
    \tfrac{1}{3!}(\dir{I}{H_{3}})_{a_1 a_2 a_3} 
    \, E_X^{a_1}E_X^{a_2}E_X^{a_3} 
    \\
    &
    +  
    \mathcolor{olive}{ 
    \bracket({
      \overline{\Psi}_X
      \,\Gamma_{a_1 a_2}
      \lambda_I 
      })\,
    E_X^{a_1}  E_X^{a_2} } 
    &&
    -
    \mathcolor{olive}
    { \Phi^{m}{}_I } \, 
    \mathcolor{darkorange}{  
      \bracket({
        \overline{\Psi}_X\, \Gamma_{\hat{m}a}\, \Psi_X
      })
      \, E_X^a 
    }
    \\[3pt]
    F^s_4
    :=
    &
    \tfrac{1}{4!}
    \bracket({F_4})_{a_1 \cdots a_4} 
    \,
    E_X^{a_1}\!\cdots\!E_X^{a_4}  
    \\
    &
    &&
    +
    \mathcolor{darkorange}{ 
      \tfrac{1}{2}
      \bracket({
        \overline{\Psi}_X
        \,\Gamma_{a_1 a_2}\, 
        \Psi_X
      })
      \,  E_X^{a_1} E_X^{a_2} 
    }
    \\[3pt]
    \dir{12}{F^s_5}
    := 
    &
    \tfrac{1}{5!}
    \dir{12}{\bracket({F_5})}_{a_1 \cdots a_6}
    \,  E_X^{a_1}\cdots E_X^{a_6} 
    \\
    &
    +
    \mathcolor{olive}{
    \tfrac{2}{5!}
    \bracket({
      \overline{\lambda}_1
      \,\Gamma_{a_1 \cdots a_5}
      \lambda_2
    }) 
    \, 
    E_X^{a_1}\!\cdots\!E_X^{a_5} } 
    && \hspace{-1.5cm} -
    \mathcolor{olive}{
      \tfrac{\epsilon_{mn}}{3!} \Phi^m{}_1 \Phi^{m}{}_2 
    } \,  
    \mathcolor{darkorange}{
      \bracket({
        \overline{\Psi}_X
        \,\Gamma_{1\!0 \,9 a_1 \cdots a_3}\, 
        \Psi_X
      }) 
      \,  
      E_X^{a_1}\!\cdots\!E_X^{a_3}  
    }
    \\
    & - {\color{olive} \tfrac{2}{4!} \Phi^m{}_1 (\overline{\Psi}_X \Gamma_{a_1 \dots a_4 \hat{m}} \lambda_2) E_X^{a_1} \dots E_X^{a_4} }  &&
    \\
    &  + {\color{olive} \tfrac{2}{4!} \Phi^m{}_2 (\overline{\Psi}_X \Gamma_{a_1 \dots a_4 \hat{m}} \lambda_1) E_X^{a_1} \dots E_X^{a_4} } &&
    \\[3pt]
    \dir{I}{F^s_6}
    :=
    &
    \tfrac{1}{6!}
    \dir{I}{\bracket({F_6})}_{a_1 \cdots a_6}
    \,  E_X^{a_1}\cdots E_X^{a_6} 
    \\
    & +
    \mathcolor{olive}{
    \tfrac{2}{5!}
    \bracket({
      \overline{\Psi}_X
      \,\Gamma_{a_1 \cdots a_5}
      \lambda_I
    }) 
    \, 
    E_X^{a_1}\!\cdots\!E_X^{a_5} } 
    && \hspace{-1.5cm} -
    \mathcolor{olive}{
      \tfrac{1}{4!} \Phi^{m}{}_I
    } \,  
    \mathcolor{darkorange}{
      \bracket({
        \overline{\Psi}_X
        \,\Gamma_{\hat{m} \, a_1 \cdots a_4}\, 
        \Psi_X
      }) 
      \,  
      E_X^{a_1}\!\cdots\!E_X^{a_4}  
    }
    \\[3pt]
    H^s_7
    :=
    &
    \tfrac{1}{7!}
    \bracket({H_7})_{a_1 \cdots a_7}
    \, 
    E_X^{a_1} \cdots E_X^{a_7} 
    \\
    & &&
    + 
    \mathcolor{darkorange}{ 
    \tfrac{1}{5!}
    \bracket({
      \overline{\Psi}_X
      \,\Gamma_{a_1 \cdots a_5}\,
      \Psi_X
    }) 
    \, 
    E_X^{a_1} \cdots E_X^{a_5}  
    }
    \mathrlap{\,,}
  \end{alignedat}
\end{equation}
where $\hat{m}:= 10 -m +1$, together with integrality conditions on the de Rham classes of the super 2-fluxes 
\[
  \bracket\{{ 
    \bracket[{(F_2^s)^1}]
    ,\, 
    \bracket[{(F_2^s)^2}] 
  }\} 
    \, \subset \,
  H^2_\mathrm{dR}\bracket({
    X})
  \mathrlap{\,,}
\] 
and the derivative relation for the induced dilaton and dilatino fields on $X$
\[
  \dd^{\hat{\Omega}} \Phi^m 
    \, = \, 
  \partial_a\Phi^m\, E_X^a 
    \, + \, 
  2 \bracket({
    \overline{\Psi}_X 
    \,\Gamma^{1\!0-(m-1)}\,
    \lambda
  }) 
  \mathrlap{\,,}
\] 
with the additional terms in {\color{olive} olive} arising due to the {\color{olive}winding} of the 11D super-coframe along the ``shrinking'' {\color{olive}M-theoretic torus}.

Since on-shell  $\mathcal{N}=2$ 9D SuGra is precisely equivalent to 11D SuGra compactified along 2-torus fibers, and equivalently, to IIA 10D SuGra along circle fibers, the miracle of Thm. \ref{11DSuGraAslS4BianchiOnSuperspace} further implies that:

\begin{proposition}
\label
{N29DSuGraAsTorlS4BianchiOnSuperspace}
On a super-torsion-free super-spacetime $X^{1,8\vert\mathbf{16}\oplus\mathbf{16}}$ \textup{(\cref{SuperSpacetime})}, super-flux densities \eqref{Curved9DCocycles}
satisfy the $\mathrm{tor}(\mathfrak{l}S^4)$-Bianchi identities \eqref{torS4ClosedForms} iff the equations of motion of 9D $\mathcal{N}=2$ supergravity hold rheonomically on super-spacetime.
\end{proposition}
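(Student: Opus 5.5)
The plan is to derive the statement by oxidation to 11D rather than by a direct 9D Bianchi analysis. Given a super-torsion-free super-spacetime $X^{1,8\vert\mathbf{16}\oplus\mathbf{16}}$ with super-fluxes of the form \eqref{Curved9DCocycles}, we first use the integrality of the classes $[(F_2^s)^I]$ to choose a principal $T^2$-bundle $Y\to X$ with connection $E'$ satisfying $\dd E' = \pi^\ast F_2^s$. Next we invoke the equivalence \eqref{ReductionOf11DTorsionFreeSuperSpacetimes} for $k=2$: the base super-coframe $(E_X,\Psi_X,\Omega_X)$, the 2-fluxes, the dilaton/dilatino pair $(\Phi,\lambda)$ subject to its derivative relation, and the auxiliary $\mathrm{Spin}(2)$-connection together reassemble, in the Kaluza--Klein gauge of \cref{GiotopoulosGauge}, into a $T^2$-symmetric super-torsion-free 11D super-spacetime $Y^{1,10\vert\mathbf{32}}$. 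Here it must be checked that the olive-colored terms of $(F_2^s)^I$ in \eqref{Curved9DCocycles} coincide with the expansion \eqref{Curvature2FormExpansion}, so that the two descriptions of the 2-flux agree.

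The second step applies \cref{NonTrivialTorusBundleToroidification/Oxidation} with $k=2$ and $\mathfrak{h}=\mathfrak{l}S^4$. Closed $\mathrm{tor}^2(\mathfrak{l}S^4)$-valued forms on $X$ with prescribed $\omega_2$-components $F_2$ correspond bijectively, via $\mathrm{oxd}_{F_2}$, to $T^2$-invariant closed $\mathfrak{l}S^4$-valued forms $(G_4^s,G_7^s)$ on $Y$. The content to verify is that this bijection matches the specific shapes involved. Substituting the Kaluza--Klein decomposition $E=(E_X,\Phi\cdot E')$ and $\Psi=\Psi_X+\lambda\cdot E'$ into the 11D super-flux form \eqref{SuperFluxFormsIn11D} and contracting with $\iota_{\xi_1},\iota_{\xi_2}$ must produce exactly the components listed in \eqref{Curved9DCocycles}, with $F_4,\dir{I}{H_3},\dir{I}{F_6},\dir{12}{F_5},H_7,\dir{12}{F_2}$ sitting in the slots $g_4,\dir{I}{\mathrm{s}}g_4,\dir{I}{\mathrm{s}}g_7,\dir{2}{\mathrm{s}}\dir{1}{\mathrm{s}}g_7,g_7,\dir{2}{\mathrm{s}}\dir{1}{\mathrm{s}}g_4$, including the signs $(-1)^{r(r+1)/2}$. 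Conversely, any 9D forms of that shape oxidize to 11D forms of the shape \eqref{SuperFluxFormsIn11D}, because the bosonic components recombine into arbitrary top-bosonic $E$-components. This is a Clifford bookkeeping step: $\Gamma$-matrices split along $\hat m$, and the $\lambda$-bilinears arise from the cross terms.

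The third step chains equivalences. By the second step, the $\mathrm{tor}(\mathfrak{l}S^4)$-Bianchis on $X$ are equivalent to the $\mathfrak{l}S^4$-Bianchis \eqref{SuperSpaceS4Bianchi} on $Y$. By \cref{11DSuGraAslS4BianchiOnSuperspace}, these are equivalent to the rheonomic on-shell 11D equations on $Y$. By the standard fact that 9D $\mathcal{N}=2$ supergravity is the $T^2$-reduction of 11D supergravity (with rheonomy descending because the odd-direction flow equations commute with the $T^2$-invariance), those are in turn equivalent to the rheonomic 9D equations of motion on $X$. Alternatively, one can iterate \cref{IIASuGraAsCyclS4BianchiOnSuperspace} once more along a second circle, keeping the 10D $F_2^s$ basic as in the footnote.

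I expect the main obstacle to be the component matching in the second step for the mixed terms: the $\dir{12}{F_2}$ and $\dir{12}{F_5}$ contributions involving $(\overline{\lambda}_1\Gamma\lambda_2)$, $\epsilon_{mn}\Phi^m{}_1\Phi^n{}_2$ and $\Phi^m{}_I(\overline{\Psi}_X\Gamma\lambda_J)$, where index ordering under double contraction and the non-diagonal dilaton matrix $\Phi$ make sign and factor errors likely. To handle this, I would first take $\Phi$ diagonal, as permitted by the $\mathrm{Spin}(2)$ gauge, and then restore covariance.
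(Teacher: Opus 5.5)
Your proposal is correct and is essentially the paper's own proof. You identify \eqref{Curved9DCocycles} with the Kaluza--Klein expansion of $T^2$-symmetric $(G^s_4,G^s_7)$, use the $k=2$ case of Prop.~\ref{NonTrivialTorusBundleToroidification/Oxidation} to trade the $\mathrm{tor}^2(\mathfrak{l}S^4)$-Bianchis on $X$ for the $\mathfrak{l}S^4$-Bianchis on $Y$, apply \cref{11DSuGraAslS4BianchiOnSuperspace}, and descend along the $T^2$-reduction of 11D to 9D $\mathcal{N}=2$ SuGra; your ``rheonomy descends'' step appears in the paper as commutativity of the square \eqref{CommutingSquareOfBosonicInclusions}.

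One small correction concerns your computational shortcut. The $\mathrm{Spin}(2)$ gauge acts only on the tangent index $m$ of $\Phi^m{}_J$, so it can make $\Phi$ triangular but not diagonal in general, since the off-diagonal part of $\delta_{mn}\Phi^m{}_I\Phi^n{}_J$ is the physical shape modulus of the fiber torus. What does legitimately simplify the purely algebraic component matching is that all expressions transform covariantly under pointwise linear changes of basis in the internal index $J$, which lets you check them at $\Phi^m{}_J=\delta^m_J$.
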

\begin{proof}
This follows analogously to the $k=1$ case from IIA supergravity, spelled out in detail in \cite[Thm. 3.4]{GiotopoulosSati2026}, whose (slight) generalization we briefly summarize here as follows:

Firstly, the form of the super-fluxes \eqref{Curved9DCocycles} is obtained, via a long but straightforward calculation, by expanding the explicit form \eqref{SuperFluxFormsIn11D} of the (symmetric) 11D super-fluxes $(G_4^s, G_7^s)$ in the (partial) Kaluza-Klein gauge, where $E= (E_X, \Phi^1{}_J E'^J + \Phi^2{}_I E'^I)$ and $\Psi= \Psi_X + \lambda_1 E'^1 + \lambda_2 E'^2$. Then, by the $k=2$ case of Prop. \ref{NonTrivialTorusBundleToroidification/Oxidation}, the super-fluxes \eqref{Curved9DCocycles} on $X^{9}$ satisfy the $\mathrm{tor}^k(\mathfrak{l}S^4)$-Bianchis if and only if the corresponding $\mathrm{U}(1)^2$-symmetric fluxes $(G_4^s, G_7^s)$ satisfy the $\mathfrak{l}S^4$ Bianchis on $Y^{11}$. But by Thm. \eqref{11DSuGraAslS4BianchiOnSuperspace}, the latter holds if and only if these rheonomically encode (symmetric) 11D supergravity solutions on $\bosonic{Y}\hookrightarrow Y$. Such symmetric solutions along toric fibers, however, correspond precisely to 9D SuGra solutions on the base $\bosonic{X}\hookrightarrow{X}$. Finally, the result follows by the bi-directionality of these implications and the commutativity of the diagram 
\begin{equation}
\label
{CommutingSquareOfBosonicInclusions}
  \begin{tikzcd}[row sep=10pt]
    \bosonic{Y}^{\, 11} 
      \ar[
        r, 
        hook, 
        "{
          \epsilon_Y^
            {\mathrlap{\rightsquigarrow}}
        }"
      ] 
      \ar[
        d,
        shorten=-2pt
      ]
    & 
    Y^{11}
      \ar[d]
    \\
    \bosonic{X}^{\, 9} \ar[r, hook, "{\epsilon_X^{\mathrlap{\rightsquigarrow}}}"] & X^9 
  \end{tikzcd}
\end{equation}
of bosonic- and super-spacetime bundles. 
\end{proof}
Given that the result from Prop. \ref{NonTrivialTorusBundleToroidification/Oxidation} holds for any $k\in \mathbb{N}$, this pattern may be obviously extended to produce superspace formulations of even lower dimensional supergravities. Of course, as can already be seen from the $k=2$ case \eqref{Curved9DCocycles}, the explicit formulas for the super-fluxes will become rather unwieldy in number and form and so we do not pursue to write these out in these notes. Nevertheless, we can immediately deduce the following important statement.

%%%%%%%%%%%%%%
\paragraph
{Conclusion} 
%%%%%%%%%%%%%%
Higher $k$-toroidal dimensional reductions of the 11D super-fluxes yields a compatible tree of superspace formulations for the resulting $(11-k)$-dimensional supergravities via (uniquely determined) duality-symmetric super-fluxes valued in the rational $k$-toroidification of the 4-sphere,
$\mathfrak{l}\mathrm{Tor}^k(S^4)$.

%%%%%%%%%%%%%%%
\subsection
{Branes}
\label
{Probes}
%%%%%%%%%%%%%%%

We briefly discuss how for M5-brane probes of 11D SuGra spacetimes there is an analogue of \cref{11DSuGraAslS4BianchiOnSuperspace}, where the dynamics is equivalent to a Bianchi identity imposed on the worldvolume super-flux over super-space.

%%%%%%%%%%%%%
\paragraph
{Black and Probe Branes in Supergravity}
%%%%%%%%%%%%%

The notion of ``branes'' (a famous riff on the word \emph{membranes}, cf. \cite{Musaev2026}) in higher-dimensional SuGra is the higher-dimensional generalization of (magnetic or electric) \emph{monopoles} (which are the ``0-branes'' of ordinary electromagnetism, cf. \cite[\S 2.1]{SS25-Flux}).
Here a \emph{$p$-brane} is a $p$-dimensional object, tracing out a $1+p$-dimensional trajectory (``worldvolume'') in spacetime.

Branes come in (at least) two flavors (cf. \parencites[\S 7]{Fre2013}[\S 2]{HSS2019}):
\begin{description}

\item[Black branes] are solutions to equations of motion of gravitational theories that are higher dimensional generalizations of charged black hole spacetimes:

Here the would-be locus of the brane's worldvolume is a ``spacetime singularity'' that is actually excised from the spacetime manifold, and whose presence is felt only through the gravitational and higher gauge field emanating from it. The corresponding gauge flux density (and often, but not necessarily, also the spacetime curvature) would diverge (be ``singular'') at the locus of a black brane's worldvolume.

\item[\textbf{Probe branes}] (or: \emph{sigma-model} branes) are maps from an abstract worldvolume manifold into a spacetime that are higher-dimensional generalizations of trajectories of elementary charged particles, propagating subject to the (gravitational and higher Lorentz) forces exerted by the ambient spacetime fields, but \emph{not backreacting} on these.

\end{description}

The physical intuition is that probe branes are idealized approximations to black branes in the limit in which their mass and charge is negligible in comparison to the background fields. 

Remarkably, this intuition is brought out in fine technical detail for those branes which are preserve ``half of the maximal supersymmetry'' of the theory and are ``asymptotically Minkowskian'': For these \emph{$\sfrac{1}{2}$-BPS branes} one finds that their structure --- their worldvolume super-dimension and type of charges carried --- matches exactly between admissible black and probe brane species, even though their mathematical definitions are quite distinct.

This match is so ingrained into terminology that physics authors often do not distinguish between black and probe (and other incarnations of) branes. 

In this sense, 11D SuGra famously admits/contains:
\begin{enumerate}
\item
two species of charged branes (higher dimensional analogs of electric and magnetic monopoles, respectively, sensitive to IR-completion via flux-quantization):
\begin{enumerate}
\item 
2-branes: \emph{M2-branes} (cf. \cite[\S 2]{Duff1999World}), 

\item 5-branes: \emph{M5-branes} (cf. \cite[\S 3]{Duff1999World}), 
\end{enumerate}
\item 
two species of non-charged purely gravitational branes (cf. \cite{Figueroa-OFarrillSimon2002}):
\begin{enumerate}
\item 0-branes: \emph{M-waves} (cf. \cite{Townsend1997}) whose corresponding probe brane is the \emph{super-particle} in 11D,
\item 6-branes: \emph{KK-monopoles} (cf. \cite{HanKoh1985}), which are however not asymptotically Minkowskiand and do not have corresponding probe brane incarnation.
\end{enumerate}
\end{enumerate}
We are next concerned with the probe brane incarnation of the M5-brane. (For super-space construction of their black brane cousins see instead \cite{GSS25-Embedding}.)

%%%%%%%%%%%%%%%%%
\subsubsection
{Super-Embeddings}
%%%%%%%%%%%%%%%%%

The super-geometry that geometrizes the intricacies of 11D SuGra on super-spacetime (\cref{Equations}) turns out to also naturally know about its probe branes, via their super-worldvolume \emph{super-embeddings} (really: super-immersions).

%%%%%%%%%%%%%
\paragraph
{Literature on Super-Embedding Formalism}
%%%%%%%%%%%%%
The \emph{super-embedding formalism} for super $p$-brane probes (sigma-models) is due to \parencites{HoweSezgin1997-Superbranes}{HoweSezginWest1998}{HoweRaetzelSezgin1998}{Sorokin2000} following \cite{GatesNishino1986,BrooksMuhammaGates1986}.
Review includes \cite{BandosPastiSorokinTonin1998,Bandos1999,Sorokin2001,Bandos2011,BandosSorokin2023}.
Analysis specifically for the M5-brane originates with 
\cite{HoweSezgin1997,HoweSezginWest1997}, further discussed in \cite[\S 5.2]{Sorokin2000}. We follow \cite{GSS24-M5,GSS25-Embedding,BaSS26-MString}.

%%%%%%%%%%%%
\paragraph
{Idea of Super $p$-Brane Super-Embeddings}
%%%%%%%%%%%%

Historically (cf. \cite{Sorokin2001}), the \emph{superstring} was famously first found in its Ramond-Neveu-Schwarz (RNS) formulation. While this has manifest super-symmetry on the worldsheet, the super-symmetry of its spacetime spectrum is not manifest and came as a surprise (which fueled the early interest in string theory). Later, the Green-Schwarz (GS) superstring model made spacetime super-symmetry manifest, but now at the cost of losing manifest worldsheet super-symmetry and introducing a somewhat more unwieldy ``$\kappa$-symmetry'' constraint.
The \emph{superembedding} formulation arose as a  synthesis of the lessons learned from the RNS and the GS models: 

Here the string worldsheet is modeled as a supermanifold of superdimension $(1,1\vert 8 \cdot \mathbf{2})$, as is the target spacetime (for type IIA, say), of super-dimension $(1,9\vert \mathbf{16} \oplus \overline{\mathbf{16}})$, and the world- 
volume fields are understood as constituting supergeometric embeddings (super-immersions, really, cf. \parencites[p. 14]{GSS24-M5}) of supermanifolds:
\begin{equation}
  \substack{
    \text{\color{gray}super-worldvolume}
    \\
    \text{\color{gray} of super-string}
  }
  \;\;
  \inlinetikzcd{
    \Sigma^{1,1\vert 8 \cdot \mathbf{2}}
    \ar[
      rr,
      "{
        \text{\color{gray}super-embedding}
      }",
      "{ \Phi }"'
    ]
    \&\phantom{----}\&
    X^{1,9\vert \mathbf{16} \oplus \overline{\mathbf{16}}}
  \;\;
  \substack{
      \text{\color{gray}super-spacetime}
      \\
      \text{\color{gray} in 10D IIA SuGra}
  }
    \mathrlap{\,.}
  }
\end{equation}
Remarkably, the equations of motion of the superstring then turn out to be equivalent to the requirement that this super-embedding is ``tangent space-wise $\sfrac{1}{2}$-BPS'' in a suitable sense (stated in a moment, originally known as the \emph{geometrodynamical condition}, then as the \emph{basic embedding condition} and eventually as the \emph{superembedding condition} \cite{HoweSezginWest1998,Sorokin2000}, overview in \cite[Rem. 2.24]{GSS24-M5}). With this, super-symmetry of the model is manifest both on the worldsheet and on the target space, and the GS ``$\kappa$-symmetry'' is geometrized as the evident super-diffeomorphism invariance of the formulation \cite[\S4.3]{Sorokin2000}.

Analogous statements may be made for essentially all the super $p$-branes one encounters in string/M-theory. This is particularly striking for those branes that carry higher flux densities (``tensor fields'') on their worldvolume, such as the M5-brane with its $H_3$ flux, to which we turn in \cref{M5braneProbes}.

%%%%%%%%%%%%
\paragraph
{The Superembedding Condition}
%%%%%%%%%%%%

More in detail, the strong form of the superembedding condition that is relevant here (cf. \cite[Rem. 3.20]{GSS24-M5} and \cref{M5SuperembeddingComparisonToLiterature} below) says that the superembedding must be \emph{$\sfrac{1}{2}$-BPS} in the following sense \cite[Def. 2.19, Lem. 2.20]{GSS24-M5}: 

On  the super-vector space $\mathbb{R}^{1,d\vert \mathbf{N}}$ which is the local model of superspacetime $X^{1,d\vert \mathbf{N}}$, let 
\begin{equation}
\label
{ProjectionOperator}
  P
  :
  \inlinetikzcd{
    \mathbb{R}^{1,d\vert \mathbf{N}}
    \ar[
      rr,
      ->>
    ]
    \&\&
    \mathbb{R}^{1,p\vert \mathbf{n}}
    \ar[
      rr,
      hook
    ]
    \&\&
    \mathbb{R}^{1,d\vert \mathbf{N}}
  }
\end{equation}
be the super-projection operator onto the local model space $\mathbb{R}^{1,p\vert \mathbf{n}}$ of the super $p$-brane, with $n = N/2$ (the \emph{tangential} projection on the fixed space of a \emph{$p$-brane involution} \cite[Def. 4.4]{HSS2019}, exhibiting the local halving of the ``number of super-symmetries''). Write $P_{\!\!{}_\perp\!}$ for the orthogonal super-projector (the \emph{transversal} projection).

Then a super-immersion
\begin{equation}
\label
{TowardsHalfBPSSuperEmbedding}
  \inlinetikzcd{
    \Sigma^{1,p\vert \mathbf{n}}
    \ar[
      rr,
      "{ \Phi }"
    ]
    \&\&
    X^{1,d\vert \mathbf{N}}
  }
\end{equation}
is said to satisfy the \emph{$\sfrac{1}{2}$-BPS embedding condition} \cite[Def. 2.19, Lem. 2.20]{GSS24-M5} iff there exists an orthonormal super-coframe $\bracket({ \bracket({E^a})_{a=0}^d, \bracket({ \Psi^\alpha })_{\alpha = 1}^{N} })$ on $X^{1,d\vert \mathbf{N}}$ 
and a \emph{super-shear map}
\begin{equation}
\label
{TheShearMap}
  \mathrm{Sh}
  \in
  C^\infty\bracket({
    \widetilde{\Sigma};
    \mathrm{Hom}_{\mathbb{R}}\bracket({
      P\bracket({
        \mathbb{R}^{1,d\vert \mathbf{N}}
      }),
      P_{\!\!{}_\perp\!}\bracket({
        \mathbb{R}^{1,d\vert \mathbf{N}}
      })   
    })
  })\,,
\end{equation}
(defined on the open cover $\widetilde{\Sigma}$ of the worldvolume on which its super-coframe is defined, cf. \cite[\S 2.1]{GSS24-M5})
such that:
\begin{enumerate}
  \item
  The pullback of the tangential part of the target super-coframe is a super-coframe $(e,\psi)$ on the brane:
  \begin{equation}
  \label
  {TangentialSuperembeddingCondition}
    \Phi^\ast\bracket({
      P\bracket({
        \bracket({E^a})_{a=0}^d,
        \bracket({\Psi^\alpha})_{\alpha=1}^N
      })
    })
    =
    \bracket({
      \bracket({e^a})_{a=0}^{p},
      \bracket({ \psi^\alpha })_{\alpha=1}^{n}
    })\,.
  \end{equation}
  
  \item
  The pullback of the transversal part is expanded in this coframe by the given shear map \cref{TheShearMap}:
  \begin{equation}
  \label
  {TransversalSuperembeddingCondition}
    \begin{aligned}
    \Phi^\ast
    \bracket({
      P_{\!\!{}_\perp\!}\bracket({
        \bracket({E^a})_{a=0}^d,
        \bracket({\Psi^\alpha})_{\alpha=1}^N
      })
    })
    & = \;
    \mathrm{Sh} \cdot 
    \Phi^\ast
    \bracket({
      P\bracket({
        \bracket({E^a})_{a=0}^d,
        \bracket({\Psi^\alpha})_{\alpha=1}^N
      })
    })
    \\
    & 
    \underset{\mathclap{
      \text{\cref{TangentialSuperembeddingCondition}}
    }}{=} \;
    \mathrm{Sh} \cdot 
    \bracket({
        \bracket({e^a})_{a=0}^p,
        \bracket({\psi^\alpha})_{\alpha=1}^n
    })
    \end{aligned}
  \end{equation}
  and \emph{equivariantly so} with the transversal $\mathrm{Spin}(d-p)$-action.
\end{enumerate}

Remarkably, the transverse fermionic shear in \cref{TransversalSuperembeddingCondition} is, when it exists, the origin of nonvanishing worldvolume flux densities, in addition to the ``embedding fields'' that constitute the tangential component \cref{TangentialSuperembeddingCondition}. But the transversal equivariance condition on the shear means, by Schur's lemma, that it can only be non-vanishing if the $\mathrm{Spin}(d-p)$-representations $P(\mathbf{N})$ and $P_{\!\!{}_\perp\!}(\mathbf{N})$ share isomorphic irrep summands (cf. \cite[Rem. 2.2]{GSS24-M5}).

%%%%%%%%%%%%%%%%
\subsubsection
{M5-Brane Probes}
\label
{M5braneProbes}
%%%%%%%%%%%%%%%%

We focus on the higher gauge sector on M5-brane probes. This famously involves a 3-form flux that is often referred to as ``self-dual'' (cf. \parencites[p. 3]{Witten1997Fivebrane}). In reality, it is self-dual only in a subtle non-linear sense \parencites{HoweSezginWest1997}[Rem. 3.19]{GSS24-M5} or else asymptotically in the small field limit (linearizing the equations of motion, cf. \cite{AndrianopoliEtAl2022}).

%%%%%%%%%%%%%%
\paragraph
{The M5 Super-Worldvolume}
%%%%%%%%%%%%%

Instead of considering matrix representations of the 6d spinors, 
it is more efficient to algebraically 
carve them out  from the 11D spinors
\parencites[\S 3.2]{GSS24-M5} (cf. also \cite[\S A]{LambertPapageorgakis2010}):
\begin{equation}
  \label{TangentialTransversalDecomposition}
  \def\arraystretch{1.3}
  \def\tabcolsep{4pt}
  \begin{array}{cccccccccccl}
    &&
    \mathclap{
      \;\;\;\,
      \overbrace{
        \phantom{----.----}
      }^{
        \scalebox{.7}{
          tangential directions
        }
      }
    }
    &&&&&&
    \mathclap{
      \!\!
      \overbrace{
        \phantom{----.---}
      }^{
        \scalebox{.7}{
          transversal directions
        }
      }
    }
    \\[-10pt]
    0 
    &
    1 
    &
    2 
    &
    3 
    &
    4 
    &
    5 
    &
    5' 
    &
    6 
    &
    7 
    &
    8 
    &
    9 
    &
    \\
    \Gamma_0
    &
    \Gamma_1
    &
    \Gamma_2
    &
    \Gamma_3
    &
    \Gamma_4
    &
    \Gamma_5
    &
    \Gamma_{5'}
    &
    \Gamma_{6}
    &
    \Gamma_{7}
    &
    \Gamma_{8}
    &
    \Gamma_{9}
    &
    \in 
    \;
    \mathrm{Pin}^+(1,10)
    \;\subset\; 
    \mathrm{End}_{\mathbb{R}}(\mathbf{32})
    \\
    \gamma_0
    &
    \gamma_1
    &
    \gamma_2
    &
    \gamma_3
    &
    \gamma_4
    &
    \gamma_5
    & &&&&
    &
    \in
    \;
    \mathrm{Pin}^+(1,5)
    \;\;\,\subset\;
    \mathrm{End}_{\mathbb{R}}
    \bracket({
      2 
        \cdot 
      \mathbf{8}_+
      \oplus
      2 \cdot
      \mathbf{8}_-
    })
    \mathrlap{\, ,}
  \end{array}
\end{equation}
where
\begin{equation}
  P
  :=
  \tfrac{1}{2}
  \bracket({
    1 + 
    \Gamma_{012345}
  })
\end{equation}
serves as the 5-brane projection operator \cref{ProjectionOperator} 
and the tangent Clifford algebra generators are simply the tangential projections of the bulk Clifford algebra generators:
\begin{equation}
\label
{ProjectionOfCliffordElements}
\begin{aligned}
  \overline{P} \, \Gamma^a \, P
  & =
  \left\{\!
  \def\arraystretch{1.3}
  \begin{array}{ll}
    \gamma^a{}_{\vert_{2 \cdot \mathbf{8}_+}}
    &
    \mbox{for tangential $a$}
    \\
    0 & \mbox{for transversal $a$}
    \,,
  \end{array}
  \right.
  \\
  P \, \Gamma^a \, \overline{P}
  & =
  \left\{\!
  \def\arraystretch{1.3}
  \begin{array}{ll}
    \gamma^a{}_{\vert_{2 \cdot \mathbf{8}_-}}
    &
    \mbox{for tangential $a$}
    \\
    0 & \mbox{for transversal $a$}.
  \end{array}
  \right.
  \end{aligned}
\end{equation}

Hence the super-worldvolume of a $\sfrac{1}{2}$-BPS M5-brane has super-dimension $(1,5\vert 2 \cdot \mathbf{8}_+)$ \cite[(92)]{GSS24-M5}, commonly denoted: $D=6$, $\mathcal{N} = (2,0)$.

%%%%%%%%%%%%%
\paragraph
{The M5 Super-Embedding}
%%%%%%%%%%%%%

We are thus looking at a $\sfrac{1}{2}$-BPS super-embeddings \cref{TowardsHalfBPSSuperEmbedding} of the form
\begin{equation}
\label
{M5SuperEmbedding}
  \inlinetikzcd{
    \Sigma^{1,5\vert \mathbf{8}_+}
    \ar[rr, "{ \Phi }"]
    \&\&
    X^{1,10\vert \mathbf{32}}
  }.
\end{equation}

Remarkably,
the fermionic shear maps \cref{TheShearMap} 
for an M5 super-embedding
correspond exactly to self-dual super-forms $\tilde H_3$ on  $\Sigma^{1,5\vert 2 \cdot \mathbf{8}_+}$ (\cite[Lem. 3.16(iii)]{GSS24-M5}, cf. previous arguments via matrix representations in \parencites[(15)]{HoweSezgin1997}[(5.66)]{Sorokin2000}):
\begin{equation}
  \begin{tikzcd}[row sep=small, column sep=40pt]
    &[20pt]
    \hspace{-28pt}
    \mathrlap{
      \overbrace{
      2 \cdot \mathbf{8}_+
      \oplus
      2 \cdot \mathbf{8}_-
      }^{ 
        \mathbf{32} 
        \,\in\, 
        \mathrm{Spin}(1,10)
      }
    }
    \ar[
      d,
      ->>
    ]
    \\
    T \Sigma^{
     1,5\vert 2\cdot \mathbf{8}_+
    }
    \ar[
      ur,
      bend left=10,
      shorten >=15pt,
      "{
        \Phi^\ast \Psi
        =
        \scaledbracket({
          \psi, 
          \slashed{\tilde H}_3
          \psi
        })
      }"{
        sloped,
        pos=.3
      }
    ]
    \ar[
      r,
      "{
        \psi
      }"
    ]
    &
    2 \cdot \mathbf{8}_+
  \end{tikzcd}
  \hspace{2cm}
  \text{for}
  \;\;\;
  \begin{aligned}
  \bracket({
    \tilde H_3
  })_{a_1 a_2 a_3}
  & 
  =
  \tfrac{1}{3!}
  \epsilon_{
    a_1 a_2 a_3
    \,
    b_1 b_2 b_3
  }
  \bracket({
    \tilde H_3
  })^{b_1 b_2 b_3}
  \\
  \slashed{\tilde H}_3
  & \defneq
  \tfrac{1}{3!}
  \bracket({
    \tilde H_3
  })_{a_1 a_2 a_3}
  \gamma^{a_1 a_2 a_3}.
  \end{aligned}
\end{equation}

%%%%%%%%%%%%%
\paragraph
{The M5 Worldvolume Flux}
%%%%%%%%%%%%%

In analogy with the bulk super-flux forms from \cref{OnSuperFluxFormsIn11D,CurvedNS/RRCocycles}, we say that \emph{M5 worldvolume super-flux forms} are of this form:
\begin{equation}
\label
{M5WorldvolumeSuperFLux}
  H^s_3
  :=
  \bracket({
    H_3
  })_{a_1 a_2 a_3}
  e^{a_1}
  e^{a_2}
  e^{a_3}
  \;\;
  \in
  \Omega^3_{\mathrm{dR}}\bracket({
    \Sigma^{
      1,5\vert 2 \cdot \mathbf{8}_+
    }
  })
\end{equation}
(Hence in this case there is vanishing fermionic component, $H^0_3 = 0$.)

The brane flux analog of \cref{11DSuGraAslS4BianchiOnSuperspace} is now this statement:
\begin{proposition}[{\cite[Prop. 3.18]{GSS24-M5}}]
\label[proposition]
{M5FluxEquations}
On a $\sfrac{1}{2}$-BPS M5-embedding \cref{M5SuperEmbedding} into a super-space solution of 11D SuGra \textup{(\cref{11DSuGraAslS4BianchiOnSuperspace})}
the worldvolume super-flux density \cref{M5WorldvolumeSuperFLux} satisfies the $\mathfrak{l}_{_{S^4}}S^7$-Bianchi identity \cref{BianchiForM5WorldvolumeFlux} if the following non-linearly self-dual Maxwell equations hold rheonomically on the super-worldvolume: 
\begin{equation}
  \begin{aligned}
    &
    \Big\{
    \begin{alignedat}{2}
    \mathrm{d}
    \,
    H^s_3 
    &
    =
    \Phi^\ast G^s_4
    \;\;\;\;\;\;\;\;\;\;\;\;\;&&
    \textup{\color{darkblue}%
      ($\mathfrak{l}_{_{S^4}}S^7$-Bianchi)%
    }
    \end{alignedat}
    \\[6pt]
    \label
    {SuperspaceEOMForBField}
    \Leftrightarrow
    &
    \left\{
    \begin{alignedat}{2}
      \tfrac{1}{3!}
      \nabla_{[a_1}
      \bracket({
        H_3
      })_{a_2 a_3 a_4]}
      &
      =
      \tfrac{1}{4!}
      \bracket({
       \Phi^\ast G_4
      })_{a_1 \cdots a_4}
      &\;\;&
      \textup{\color{darkblue}%
        (Maxwell)%
      }
      \\[5pt]
      \bracket({
        H_3
      })_{abc}
      & =
      \tfrac
        {-4}
        {
          \rule{0pt}{8pt}
          1 
            - 
          \sfrac{2}{3}
          \,
          \mathrm{tr}
          ({
            \tilde H_3^2
            \cdot
            \tilde H_3^2
          })
        }
      \big(\,{
        \grayunderbrace{
          \rule[-4pt]{0pt}{0pt}%
          \tilde H_{abc}
        }{ 
          \substack{\mathclap{
            \textup{self-dual}
          }} 
        }
        + 
        \grayunderbrace{
          2\bracket({
            \tilde H_3^2
          })_a^{a'}
          \tilde H_{a'bc}
        }{ 
          \substack{
            \textup{anti self-dual}
          } 
        }
      }\, \big)
      &&
      \textup{\color{darkblue}%
        (``self-duality'')%
      }
      \\
      \mathrlap{
        \textup{
          rheonomically
        }
      }
    \end{alignedat}
    \right.
  \end{aligned}
\end{equation}
\end{proposition}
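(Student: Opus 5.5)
The plan is to expand both sides of the $\mathfrak{l}_{_{S^4}}S^7$-Bianchi identity $\mathrm{d}\,H^s_3 = \Phi^\ast G^s_4$ in the pulled-back worldvolume super-coframe $(e,\psi)$ of \cref{TangentialSuperembeddingCondition}. Then I compare components sorted by fermionic degree $(e^4)$, $(e^3\psi)$, $(e^2\psi\psi)$, $(e\,\psi^3)$, $(\psi^4)$.

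\emph{Step one: the right-hand side.} Using the M5 super-embedding \cref{M5SuperEmbedding}, write $\Phi^\ast E^a = e^a$ for tangential $a$, and $\Phi^\ast E^{a} = 0$ for transversal $a$; the latter is the bosonic shear vanishing in the strong $\sfrac{1}{2}$-BPS condition. Write also $\Phi^\ast\Psi = (\psi, \slashed{\tilde H}_3\psi)$, the fermionic shear in the lemma quoted just before the statement. Inserting these into $G^s_4$ from \cref{SuperFluxFormsIn11D} gives a purely bosonic piece $\tfrac{1}{4!}(\Phi^\ast G_4)_{a_1\cdots a_4}e^{a_1}\cdots e^{a_4}$ and a term $\tfrac12\,\overline{\Phi^\ast\Psi}\,\Gamma_{a_1a_2}\,\Phi^\ast\Psi\; e^{a_1}e^{a_2}$. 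Using the projections \cref{ProjectionOfCliffordElements}, the latter splits into cross terms linear in $\tilde H_3$ (between $2\cdot\mathbf{8}_+$ and $2\cdot\mathbf{8}_-$) and quadratic and diagonal terms. Those surviving the chirality/projector selection should come out as $(\overline\psi\,\gamma_{a_1a_2}\,\slashed{\tilde H}_3\psi)$-type expressions. There are no $\psi^3$ or $\psi^4$ terms, since $G^s_4$ has at most two $\Psi$'s.

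\emph{Step two: the left-hand side.} Compute $\mathrm{d}H^s_3$ with the worldvolume torsion equation. This equation is obtained by pulling back the super-torsion-free condition \cref{SuperTorsionIn11D}, which gives $\mathrm{d}e^a + \omega^a{}_b e^b = (\overline\psi\gamma^a\psi)$, since the transversal $\Gamma$-components project out. So
$\mathrm{d}H^s_3 = \tfrac{1}{3!}\nabla_{[a_1}(H_3)_{a_2a_3a_4]}\,e^4\cdot(\text{normalization}) + (\nabla_\alpha H_3)\,\psi\, e^3 + 3\,(H_3)_{abc}(\overline\psi\gamma^a\psi)e^be^c$.

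\emph{Step three: matching.} The $e^4$ component gives the Maxwell equation. The $e^2\psi\psi$ component equates $(H_3)_{abc}\gamma^a$ contracted on commuting spinors against the pulled-back $\Gamma_{ab}$-term. Decomposing by $\mathrm{Spin}(1,5)$-irreps via 6d Fierz identities, this should force $H_3$ to be the stated non-linear function of $\tilde H_3$, the nonlinearity coming from the $\slashed{\tilde H}_3\,\gamma\,\slashed{\tilde H}_3$ terms and the identity $\tilde H^2\tilde H^2\propto \mathrm{tr}(\tilde H^2\tilde H^2)\,\mathrm{id}$ for self-dual 3-forms. The $e^3\psi$ component determines the odd derivatives $\nabla_\alpha H_3$; this is the rheonomy flow. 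The $e\psi^3$ component must then hold identically, via a Fierz identity, and this is the consistency that the direction "Bianchi $\Leftarrow$ EoM" needs. I would also check the closure of $\mathrm{d}(\Phi^\ast G^s_4)=0$, which is automatic from \cref{11DSuGraAslS4BianchiOnSuperspace}.

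The main obstacle is step three's $e^2\psi^2$ sector: inverting the non-linear relation and verifying the precise prefactor $-4/(1-\tfrac23\mathrm{tr}(\tilde H^2\tilde H^2))$. Here I would first reduce everything to $\mathrm{SO}(1,5)$-covariant identities for self-dual 3-forms, namely $\tilde H^2$ traceless and $(\tilde H^2)^2\propto \mathrm{id}$. If that does not close, I would fall back to a computer-algebra check as in \cite{GSS24-SuGraAncillary}.
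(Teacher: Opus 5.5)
\textbf{Genuine gap: the induced worldvolume torsion in Step two is wrong, so Step three yields a linear relation instead of the stated non-linear one.}

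Your overall route is the one the cited proof takes: expand $\mathrm{d}H^s_3=\Phi^\ast G^s_4$ in the worldvolume coframe $(e,\psi)$ and match the $e^4$, $\psi e^3$ and $\psi^2 e^2$ components. The problem is the torsion. The pulled-back torsion is \emph{not} $\mathrm{d}e^a+\omega^a{}_b\,e^b=(\overline\psi\gamma^a\psi)$. From $\overline{\Gamma_a}=-\Gamma_a$ one gets $\overline{\Gamma_{012345}}=-\Gamma_{012345}$, so the spinor pairing vanishes on $2\cdot\mathbf{8}_+\times 2\cdot\mathbf{8}_+$ and on $2\cdot\mathbf{8}_-\times 2\cdot\mathbf{8}_-$. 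It only pairs opposite chiralities. A tangential $\Gamma^a$ anticommutes with $\Gamma_{012345}$. Hence in $\overline{\Phi^\ast\Psi}\,\Gamma^a\,\Phi^\ast\Psi$, with $\Phi^\ast\Psi=(\psi,\slashed{\tilde H}_3\psi)$, the \emph{cross} terms drop out and both diagonal terms survive. In particular $\overline{\slashed{\tilde H}_3\psi}\,\Gamma^a\,\slashed{\tilde H}_3\psi\neq 0$. Your remark that ``transversal components project out'' applies to $\Gamma^{a'}$ with transversal index. It does not apply to the $P_{\perp}$-part of $\Phi^\ast\Psi$ inside the tangential torsion equation. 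Using the 6d identities for self-dual 3-forms, the induced torsion is $T^a=(\overline\psi\gamma^b\psi)\,m_b{}^a$, with $m_b{}^a=\delta_b{}^a-2(\tilde H_3^2)_b{}^a$ in the normalization of the statement. This is exactly the matrix whose inverse the stated prefactor encodes.

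Conversely, the tangential $\Gamma_{a_1a_2}$ in $\Phi^\ast G^s_4$ \emph{commutes} with $\Gamma_{012345}$, so there only the cross terms survive. The $\psi^2e^2$ part of the right-hand side is therefore strictly linear in $\tilde H_3$, and the ``quadratic terms'' you leave open in Step one vanish. So your bookkeeping has no source for the $\slashed{\tilde H}_3\gamma\slashed{\tilde H}_3$ terms you invoke in Step three. Matching $(H_3)_{abc}(\overline\psi\gamma^a\psi)e^be^c$ against a term linear in $\tilde H_3$ would give $H_3\propto\tilde H_3$, contradicting the statement.

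With the correct torsion the matching goes through:
\begin{itemize}
\item Reduce $\overline\psi\,\{\gamma_{bc},\slashed{\tilde H}_3\}\,\psi$ to $(\overline\psi\gamma^d\psi)(\tilde H_3)_{dbc}$. Only its 1-index and 5-index parts survive, and they combine via chirality and self-duality.
\item The $\psi^2 e^2$ component then says $m_d{}^a(H_3)_{abc}\propto(\tilde H_3)_{dbc}$.
\item The non-linear ``self-duality'' is its inversion, $(m^{-1})_a{}^b=\big(\delta_a{}^b+2(\tilde H^2_3)_a{}^b\big)/\big(1-\tfrac23\mathrm{tr}(\tilde H_3^2\cdot\tilde H_3^2)\big)$. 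This follows from exactly the identity $(\tilde H_3^2)^2=\tfrac16\mathrm{tr}(\tilde H_3^2\cdot\tilde H_3^2)\,\delta$ you planned to use.
\end{itemize}
A minor point: neither side has $e\psi^3$ or $\psi^4$ components, so no Fierz identity is needed there.
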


\begin{remark}
[Comparison to the literature]
\label[remark]
{M5SuperembeddingComparisonToLiterature}
\begin{itemize}[itemsep=-2pt]
    \item[\textbf (i)]
  This kind of statement goes back to \cite{HoweSezgin1997}, further discussed in \cite[\S 5.2]{Sorokin2000}, with special emphasis on the subtle nature of the ``self-duality'' condition in \cite{HoweSezginWest1997} (cf. \cite[Rem. 3.19]{GSS24-M5}).
     \item[\textbf (ii)] Our statement above assumes (cf. \cite[Rem. 3.20]{GSS24-M5}) the $\sfrac{1}{2}$-BPS embedding condition \cref{TransversalSuperembeddingCondition}, which is stronger than the super-embedding condition considered there (as it forces the bosonic shear component $\tau$ to vanish, cf. \cite[(108) \& Rem. 3.17]{GSS24-M5}). 
  
     \item[\textbf (iii)] The concretely known examples of M5 super-embeddings do satisfy this stronger condition \parencites[Ex. 3.14]{GSS24-M5}{GSS25-Embedding}. But we are not claiming that the weaker condition should be disregarded, just that it leads (cf. \cite[Rem. 3.20]{GSS24-M5}) to a fermionic correction to the twisted Maxwell equation \cref{SuperspaceEOMForBField} already on superspace.%
     \footnote{
       This in contrast to the situation for 11D SuGra, where the equations of motion on superspace have the plain bosonic form \cref{11DSugraSuperspaceEoM}, with fermionic corrections appearing only in the translation to bosonic spacetime as discussed in \cref{ComputingFermionicSourceTerms}.
     }
  \end{itemize}
\end{remark}

%%%%%%%%%%%%%%
\paragraph
{Conclusion}
%%%%%%%%%%%%%%
  In summary, we have arrived at the remarkable statement that 11D SuGra with M5-probes is put on-shell by asking:
  \begin{enumerate}
    \item 
    the super-spacetime $X^{1,10\vert \mathbf{32}}$ to be super-torsion free,
    \item 
    the super-worldvolume
    $\inlinetikzcd{ \Sigma^{1,5\vert 2\cdot \mathbf{8}_+} \ar[r, "{ \Phi }"] \& X^{1,10\vert \mathbf{32}}}$ to be a $\sfrac{1}{2}$-BPS super-embedding,
    \item 
    the 
    super-flux densities \cref{SuperFluxFormsIn11D,M5WorldvolumeSuperFLux} to satisfy the
    $\big({\inlinetikzcd{\mathfrak{l}_{_{S^4}}S^7 \ar[r, "{ \mathfrak{l}h_{\mathbb{H}} }"] \& \mathfrak{l}S^4} }\big)$-Bianchi identity \cref{lS7RelativeToS4}:
    \begin{equation}
    \label
    {lhHBianchisOnSuperFluxes}
      \left(
      \begin{aligned}
        & H^s_3,
        \\
        & 
          G^s_4, G^s_7
      \end{aligned}
      \right)
      \in
      \Omega^1_{\mathrm{cl}}
      \bracket({
        \Phi;
        \mathfrak{l}h_{\mathbb{H}}
      })
      \mathrlap{\,.}
    \end{equation}
  \end{enumerate}
This result condenses the entire dynamics essentially to a constraint of purely cohomological nature \cref{lhHBianchisOnSuperFluxes}. As such, this formulation opens the door to the quantization of these flux densities in nonabelian cohomology; this is the topic of \cref{Completions}.

%%%%%%%%%%%%%%

%%%%%%%%%%%%%%
%%%%%%%%%%%%%%
\section
{Completions}
\label
{Completions}
%%%%%%%%%%%%%%

We have established in \cref{Supergravity} that, over super-torsion free super-spacetimes $X^{1,d\vert \mathbf{N}}$:
\begin{equation}
  \begin{tikzcd}[
    ampersand replacement=\&,
    sep=0pt,
    /tikz/column 3/.append style={anchor=base west}
  ]
    \left(
    \begin{aligned}
      & G^s_4
      \\
      & G^s_7
    \end{aligned}
    \right)
    \&\in\&
    \Omega^1_{\mathrm{cl}}\bracket({
      X^{1,10\vert\mathbf{32}};
      \mathfrak{l}S^4
    })
    \& \Leftrightarrow \&
    \text{11D SuGra}
    \\
    \left(
    \begin{aligned}
      & F^s_2
      \\
      & F^s_4
      \\
      & F^s_6
      \\
      & H^s_3
      \\
      & H^s_7
    \end{aligned}
    \right)
    \& \in \&
    \Omega^1_{\mathrm{cl}}\bracket({
      X^{1,9
        \vert
        \mathbf{16} \oplus \overline{16} };
      \mathfrak{l}
      \mathrm{Cyc}(S^4)
    })
    \& \Leftrightarrow \&
    \text{10D IIA SuGra}.
  \end{tikzcd}
\end{equation}

Or rather, this is \emph{1850s-style} SuGra, at the level at which Maxwell understood electromagnetism: where the (higher) gauge field content is reflected only in its flux densities (the higher analogs of the Faraday tensor).

Our aim is to promote this to \emph{1930s-style} SuGra, where the higher gauge fields are understood to be global/IR completions of local gauge potential data via gluing data (cf. \cref{GluingDataSchematics}) subject to flux quantization laws, in a generalization of how Dirac globally completed the electromagnetic field in order to identify the topological (monopole) charges that source it.

Dedicated lecture notes on the general procedure of global IR completion of higher Maxwell-type gauge theories are in the companion article \cite{SS26-HigherGauge}, quick exposition of the required higher geometry is in \cite{Schreiber2025}; here we gloss over mathematical details of \emph{higher stacks} (for which see \parencites{FSS15-Stacky}[\S 1]{FSS23-Char}) and instead try to connect more to the physics literature.

%%%%%%%%%%%%%%%
\paragraph
{Gauged Sets}
%%%%%%%%%%%%%%%

We need to properly deal with the fact that the higher gauge fields in higher-dimensional SuGra are subject to the \emph{higher gauge principle}: Field configurations are only defined up to gauge equivalences, which themselves are only defined up to gauge-of-gauge-equivalences, which themselves are only defined up to ever higher gauge equivalences. 

The gauge principle entails that collections of gauge fields \emph{do not form an ordinary set} --- because in an ordinary set any pair of elements is either equal or distinguishable. Rather, collections of gauge fields form ``gauged sets'' $\mathcal{G}$, where any pair of their elements $e_1, e_2 \in \mathcal{G}$ comes with a set $\mathcal{G}(e_1, e_2)$ 
of gauge transformations $t$ between them (empty if they are not gauge equivalent). So a generic gauged set schematically looks as follows:
\begin{equation}
\label
{GaugedSetSchematics}
  \mathcal{G}
  =
  \left\{
  \begin{tikzcd}[
    column sep=20pt
  ]
    &
    s_2
    \ar[
      dr,
      "{ 
        t' 
      }"{description}
    ]
    \\
    s_1
    \ar[
      ur,
      "{ t }"{description}
    ]
    \ar[
      ur,
      <-,
      bend left=40,
      "{ 
        t^{-1} 
      }"{description}
    ]
    \ar[
      rr,
      "{
        t' \circ t
      }"{description}
    ]
    &&
    s_3
  \end{tikzcd}
  \;,\,
  \begin{tikzcd}
    s_4
    \ar[
      out=55,
      in=180-55,
      looseness=5,
      shift left=2pt,
      "{ \,\mathclap{g}\, }"{description}
    ]
  \end{tikzcd}
  \,,\,
  \cdots
  \right\}.
\end{equation}

Since these gauge transformations have an associative composition rule $(-)\circ(-)$, include all identities and all have inverses $(-)^{-1}$, gauged sets are exactly small \emph{groupoids} \cref{GroupoidOfYangMillsFields}.

%%%%%%%%%%%%%%
\paragraph
{Higher Gauged Sets}
%%%%%%%%%%%%%%
If there are gauge-of-gauge transformations, then this story repeats: Instead of a set of gauge transformations between any pair of elements $s, s'$, we have a gauged set $\mathcal{G}(s,s')$ of gauge-of-gauge transformations. An example of a gauged set of gauge transformations between $s,s'$ may look as follows, schematically:
\begin{equation}
  \mathcal{G}(s, s')
  =
  \left\{
  \begin{tikzcd}
    s
    \ar[
      rr,
      bend left=60,
      "{ t }"{description, name=s}
    ]
    \ar[
      rr,
      bend right=60,
      "{ t'' }"{description, name=t}
    ]
    \ar[
      from=s,
      to=t,
      bend left=20,
      Rightarrow
    ]
    \ar[
      rr,
      crossing over,
      "{ 
        t'
      }"{description, name=i, pos=.27}
    ]
    \ar[
      from=s,
      to=i,
      shorten=2pt,
      Rightarrow
    ]
    \ar[
      from=i,
      to=t,
      shorten=2pt,
      Rightarrow
    ]
    &&
    s'
  \end{tikzcd}
  \right\}
  \mathrlap{\,.}
\end{equation}
But this means that the elements $s$ themselves now form a ``second-order gauged set'', known as a \emph{2-groupoid}.

Proceeding in this manner, sets involving higher-order gauge transformations up to order $n$ are known as \emph{$n$-groupoids} and generally as \emph{$\infty$-groupoids} if there is no bound on the order of higher gauge transformations. The neatest mathematical model for such higher gauged sets is known as \emph{simplicial sets} which are \emph{Kan fibrant}, for short: \emph{Kan simplicial sets} or just \emph{Kan complexes}.

Concretely:
\begin{definition}
[Higher gauged sets]
\label[definition]
{HigherGaugedSets}
 A \emph{simplicial set} $\mathcal{G}$ is:
  \begin{enumerate}

  \item
  A sequence $\bracket({\mathcal{G}_n})_{n \in \mathbb{N}}$ of sets of order=$n$ gauge transformations going between $(n-1)$-transformations arranged in the shape
  of an $n$-dimensional triangle,
  called an \emph{$n$-simplex} $\Delta^n$:
  \[
    \def\arraycolsep{10pt}
    \begin{array}{c|c|c|c|c}
    \begin{tikzcd}[
      row sep=25pt,
      column sep=13
    ]
      \scalebox{.7}{
        \color{gray}
        $(\Delta^0)$
      }
    s
   \end{tikzcd}
   &
   \begin{tikzcd}[
     row sep=25pt,
     column sep=13
   ]
      \scalebox{.7}{
        \color{gray}
        $(\Delta^1)$
      }
    s
    \ar[
      rr,
      "{ g }"
    ]
    &&
    s'
   \end{tikzcd}
   &
   \begin{tikzcd}[
     row sep=25pt,
     column sep=13
   ]
    &
    |[alias=two]|
    s'
    \ar[
      dr,
      "{ h }"
    ]
    \\
    s
    \ar[
      ur,
      "{ 
          \scalebox{.9}{
            \color{gray}
            $(\Delta^2)$
            \;\;\;\,
          }
        g 
      }"
    ]
    \ar[
      rr,
      "{ h \circ g }"{swap},
      "{\ }"{name=onethree}
    ]
    &&
    s''
    \ar[
      from=two,
      to=onethree,
      Rightarrow,
      shorten <=5pt
    ]
  \end{tikzcd}  
  &
  \begin{tikzcd}
    &[-10pt]&
    &[-50pt] 
    |[alias=two]|
    s'
    \ar[
      ddr,
      "{\ }"{name=twofour, swap}
    ]
    &[-5pt]
    \\
    \\
    |[alias=one]|
    s
    \ar[
      rrrr,
      "{\ }"{name=onefour},
    ]
    \ar[
      rrrr,
      shift left=7pt,
      shorten <=35pt,
      shorten >= 21pt,
      Rightarrow,
      nfold=3
    ]
    \ar[
      drr,
      "{\  }"{name=onethree}
    ]
    \ar[
      uurrr,
      "{
        \mathllap{
          \scalebox{.9}{
            \color{gray}
            $(\Delta^3)$
          }
        }
      }"{yshift=15pt}
    ]
    &&&&
    |[alias=four]|
    s'''
    \mathrlap{\,,}
    \\[-3pt]
    &&
    |[alias=three]|
    s''
    \ar[
      urr,
      "{\ }"{name=one}
    ]
    \ar[
      from=two, 
      to=onefour,
      Rightarrow, 
      crossing over,
      shorten=5pt
    ]
    \ar[
      from=two, 
      to=onethree,
      Rightarrow, 
      shorten <=10pt,
      shorten >=1pt,
      crossing over
    ]
    \ar[
      from=three, 
      to=onefour,
      Rightarrow, 
      crossing over,
      shorten=5pt
    ]
    \ar[
      from=two,
      to=three,
      crossing over,
      "{\ }"{name=twothree}
    ]
    \ar[
      from=three, 
      to=twofour,
      shorten <= 10pt,
      shorten >= 2pt,
      Rightarrow, 
      crossing over,
      end anchor={[xshift=5pt]}
    ]
  \end{tikzcd}
  &
  \cdots
  \\
  \substack{
    \text{elements}
  }
  &
  \substack{\color{darkblue} 
    \text{gauge}
    \\
    \text{transformations}
  }
  &
  \substack{\color{darkgreen} 
    \text{gauge-of-gauge}
    \\
    \text{transformations}
  }
  &
  \substack{\color{darkorange} 
    \text{gauge-of-gauge-of-gauge}
    \\
    \text{transformations}
  }
  &
  \end{array}
  \]

    \item 
    \emph{Face maps}
    $\inlinetikzcd{ \mathcal{G}_{n+1} \ar[r, "{ \partial_i }"] \& \mathcal{G}_n }$ sending each such $(n+1)$-simplex to its $i$th face (for $0 \leq i \leq n+1$):
    \[
      \begin{tikzcd}[column sep=large]
        \cdots
        \ar[
          r,
          shift right=12pt,
          dotted
        ]
        \ar[
          r,
          shift right=4pt,
          dotted
        ]
        \ar[
          r,
          shift left=4pt,
          dotted
        ]
        \ar[
          r,
          shift left=12pt,
          dotted
        ]
        &
        \mathcal{G}_2
        \ar[
          r,
          shift right=12pt,
          "{ 
            \partial_0 
          }"{description, pos=.45}
        ]
        \ar[
          r,
          "{ 
            \partial_1 
          }"{description, pos=.45}
        ]
        \ar[
          r,
          shift left=12pt,
          "{ 
            \partial_2 
          }"{description, pos=.45}
        ]
        &
        \mathcal{G}_1
        \ar[
          r,
          shift right=6pt,
          "{ 
            \partial_0 
          }"{description, pos=.45}
        ]
        \ar[
          r,
          shift left=6pt,
          "{ 
            \partial_1 
          }"{description, pos=.45}
        ]
        &
        \mathcal{G}_0
      \end{tikzcd}
    \]
    and satisfying the (geometrically evident) relation
    \begin{equation}
      \label
      {SimplicialIdentitiesForFaceMaps}
      \partial_i 
        \circ
      \partial_j
      =
      \partial_{j-1} 
        \circ
      \partial_i
      \;\;
      \text{  if $i < j$}.
    \end{equation}
    
    This structure may be thought of as the non-linear generalization of the structure of a chain complex of abelian groups:
    When the sets $\mathcal{G}_n$ happen to carry the structure of abelian groups such that the $\partial_i$ are group homomorphisms, then \cref{SimplicialIdentitiesForFaceMaps} implies (and is essentially equivalent to) that
    $
      \partial
      :=
      \sum_i (-1)^i \partial_i
    $ 
    satisfies $\partial \circ \partial = 0$.

    \item 
    \emph{Degeneracy maps}
    $\inlinetikzcd{ \mathcal{G}_{n} \ar[r, "{ \mathrm{s}_i }"] \& \mathcal{G}_{n+1} }$,
    sending each $n$-simplex to an identity $(n+1)$-simplex on it,
    as expressed by further \emph{simplicial identities}:
    \begin{equation}
      \mathrm{s}_i 
        \circ 
      \mathrm{s}_j
      =
      \mathrm{s}_j 
        \circ 
      \mathrm{s}_{i - 1}
      \;\;
      \text{ if $i > j$}
      \;\;\;
      \text{and}
      \;\;\;
      \partial_i 
        \circ 
      \mathrm{s}_j
      =
      \begin{cases}
        \mathrm{s}_{j - 1} \circ \partial_i
        & 
        \text{ if $i < j$}
        \\
        \mathrm{id}
        &
        \text{ if $i \in \{j, j+1\}$}
        \\
        \mathrm{s}_j \circ \partial_{i - 1}
        &
        \text{ if $i > j+1$}.
      \end{cases}
    \end{equation}
   \end{enumerate}

A simplicial set $\mathcal{G}$ is \emph{Kan} if:
\begin{itemize}
\item 
for all tuples of its $n$-simplices that form the boundary of an $(n+1)$-simplex except for one missing face, a candidate missing face and an $(n+1)$-simplex filling this boundary also exist in $\mathcal{G}$.
\end{itemize}

A map between (Kan) simplicial sets $\inlinetikzcd{\mathcal{G}_\bullet \ar[r, "{ f }"] \& \mathcal{G}'_\bullet}$ is a sequence of maps of sets $\big(\inlinetikzcd{ \mathcal{G}_n \ar[r, "{ f_n }"] \& \mathcal{G}'_n }\big)_{n \in \mathbb{N}}$ compatible with the face and degeneracy maps.

There is a fairly evident notion of higher homotopy groups $\pi_\bullet$ of a Kan simplicial set, and a map is called a (weak simplicial homotopy) \emph{equivalence} if it induces an isomorphism on all homotopy groups (this is the non-linear version of \emph{quasi-isomorphism} between chain complexes).

\end{definition}

As a slogan:
\begin{standout}
  Higher gauged sets
  are 
  \emph{$\infty$-groupoids}, which
  are presented by 
  \emph{Kan simplicial sets}, which are to be thought of
  as ``non-linear chain complexes'' in non-negative degrees.
\end{standout}

\begin{example}
  \begin{enumerate}
    \item A set $S$ regarded as a higher gauged set with no non-trivial gauge and higher gauge transformations is the constant Kan simplicial set $S_\bullet$ with $S_n \defneq S$, for all $n \in \mathbb{N}$, meaning that there are just the identity gauge transformations on the elements $s \in S$, only the identity gauge-of-gauge transformations on these identity gauge transformations, and so on. 

    \item A small groupoid $\mathcal{G}$  regarded as a higher gauge set with no non-trivial higher gauge transformations
    is the Kan simplicial set with
    \begin{enumerate}
     \item  $\mathcal{G}_0$ the set of objects of $\mathcal{G}$,

     \item $\mathcal{G}_1$ the set of morphisms of $\mathcal{G}$,

     \item $\mathcal{G}_{n \geq 1}$ the set of sequences of length $n$ of composable morphisms of $\mathcal{G}$, 
     corresponding to the identity higher gauge transformations between these and their composites.
    \end{enumerate}

   \item For a topological space $X$, its \emph{path $\infty$-groupoid} is the Kan simplicial set $\mathrm{Sing}(X)_\bullet$ whose elements are the points of $X$, whose gauge transformations are the continuous paths in $X$, whose gauge-of-gauge transformations are the continuous 2-dimensional paths in $X$, and so on.

  Remarkably: All Kan simplicial sets arise as path $\infty$-groupoids of topological spaces, up to equivalence, in fact the path $\infty$-groupoid construction identifies the homotopy theory of topological spaces with that of (Kan) simplicial sets.
  
  For this reason, one may (and we will) notationally conflate homotopy types of topological spaces with their path $\infty$-groupoids.
  \end{enumerate}
\end{example}

%%%%%%%%%%%%%%
\paragraph
{Literature}
%%%%%%%%%%%%%
For exposition of simplicial sets: \cite{Friedman2012}. For comprehensive discussion of simplicial homotopy theory: \cite{GoerssJardine2009}. Digest streamlined towards our needs here: \cite[\S 1]{FSS23-Char}.

%%%%%%%%%%%%%%%
\subsection
{Potentials}
\label
{Potentials}
%%%%%%%%%%%%%%%

%%%%%%%%%%%%%%%%%
\subsubsection
{$C$-Field Potentials in the 11D Bulk}
\label
{OnCFieldPotentials}
%%%%%%%%%%%%%%%%

%%%%%%%%%%%%
\paragraph
{Gauged Set of Local $C$-Field Potentials}
%%%%%%%%%%%

We begin with considering the local situation over a (super-)chart $U$.
Here:
\begin{definition}
\label[definition]
{CFieldGaugeStructure}
For fixed (super-)flux densities 
\[
  \begin{aligned}
    G_4 & 
    \in
    \Omega^4_{\mathrm{dR}}\bracket({
      U
    })
    \\
    G_7 & 
    \in
    \Omega^7_{\mathrm{dR}}\bracket({
      U
    })
  \end{aligned}
  \;\;\;
  \text{s.t.}
  \;\;\;
  \begin{aligned}
    \mathrm{d}\, G_4
    & = 0
    \\
    \mathrm{d}\, G_7
    & =
    \tfrac{1}{2} G_4 G_4 \,,
  \end{aligned}
\]
take:
\begin{enumerate}

\item
  \emph{$C$-field gauge potentials} to be:
  \begin{equation}
  \label{LocalCFieldPotential}
    \begin{aligned}
      C_3 
      & 
      \in
      \Omega^3_{\mathrm{dR}}
      \bracket({
        U
      })
      \\
      C_6
      & 
      \in
      \Omega^6_{\mathrm{dR}}
      \bracket({
        U
      })
    \end{aligned}
    \;\;\;
    \text{s.t.}
    \;\;\;
    \begin{aligned}
      \mathrm{d}\, C_3
      & =
      G_4
      \\
      \mathrm{d}\, C_6
      & =
      G_7 - 
      \tfrac{1}{2}
      C_3 G_4
      \mathrlap{\,,}
    \end{aligned}
  \end{equation}

\item
  \emph{$C$-field gauge transformations}
  between pairs of potentials to be:
  \begin{equation}
  \label
  {LocalCFieldGaugeTransformations}
    \begin{aligned}
      C_2 
      & 
      \in
      \Omega^2_{\mathrm{dR}}
      \bracket({
        U
      })
      \\
      C_5
      & 
      \in
      \Omega^5_{\mathrm{dR}}
      \bracket({
        U
      })
    \end{aligned}
    \;\;\;
    \text{s.t.}
    \;\;\;
    \begin{aligned}
      \mathrm{d}\, C_2
      & =
      C'_3 - C_3
      \\
      \mathrm{d}\, C_5
      & =
      C'_6 - C_6
      -
      \tfrac{1}{2}
      C'_3 C_3
      \mathrlap{\,,}
    \end{aligned}
  \end{equation}

\item 
  \emph{$C$-field gauge-of-gauge transformations}
  between parallel pairs of gauge transformations
  \begin{equation}
  \begin{tikzcd}[column sep=large]
    \bracket({
      C_3, C_6
    })
    \ar[
      rr,
      bend left=30pt,
      "{\color{darkblue}
        ({
          C_2, C_5
        })
      }"{
        description,
        name=s
      }
    ]
    \ar[
      rr,
      bend right=30pt,
      "{\color{darkblue}
        ({
          C'_2, C'_5
        })
      }"{
        description,
        name=t
      }
    ]
    \ar[
      from=s,
      to=t,
      Rightarrow,
      "{ \color{darkgreen}
        (C_1, C_4) 
      }"{description}
    ]
    &&
    \bracket({
      C'_3, C'_6
    })
  \end{tikzcd}
\end{equation}
to be:
  \begin{equation}
    \label
    {LocalCFieldGaugeOfGaugeTransformations}
    \begin{aligned}
      C_1 
      & 
      \in
      \Omega^1_{\mathrm{dR}}
      \bracket({
        U
      })
      \\
      C_4
      & 
      \in
      \Omega^4_{\mathrm{dR}}
      \bracket({
        U
      })
    \end{aligned}
    \;\;\;
    \text{s.t.}
    \;\;\;
    \begin{aligned}
      \mathrm{d}\, C_1
      & =
      C'_2 - C_2
      \\
      \mathrm{d}\, C_4
      & =
      C'_5 - C_5
      \mathrlap{\,.}
    \end{aligned}
  \end{equation}
\end{enumerate}
\end{definition}

\begin{lemma}
\label[lemma]
{GroupoidOfLocalCFieldGaugeTransformations}
The local $C$-field gauge potentials \cref{LocalCFieldPotential} with  local $C$-field gauge transformations \cref{LocalCFieldGaugeTransformations} between them form a groupoid \cref{GroupoidOfYangMillsFields} 
under the following composition operation, 
for fixed parameter $\lambda \in \mathbb{R}$:
\begin{equation}
\label
{CompositionOfCFieldGaugeTransformations}
  \begin{tikzcd}[
    row sep=20pt,
    column sep=65pt
  ]
    &
    \bracket({
      C'_3, C'_6
    })
    \ar[
      dr,
      "{ \color{darkblue}
        (C'_2, C'_5)
      }"{sloped}
    ]
    \\
    \bracket({
      C_3, C_6
    })
    \ar[
      ur,
      "{\color{darkblue}
        (C_2, C_5)
      }"{sloped}
    ]
    \ar[
      rr,
      "{\color{darkblue}
        (C'_2, C'_5)
        \circ_{_{\lambda}}
        (C_2, C_5)
      }",
      "{\color{darkblue}
        :=
        \,
        \left(
          \begin{subarray}{l}
          C_2 + C'_2
          ,
          \\
          C_5 + C'_5 -\sfrac{1}{2} \,   
          C_2 \mathrm{d}C'_2
          +
    \sfrac{\lambda}{2}\,
         \mathrm{d}( C_2 C'_2 )  
    \end{subarray}
        \right)
      }"{swap}
    ]
    && 
    \bracket({
      C''_3, C''_6
    })
    \mathrlap{\,.}
  \end{tikzcd}
\end{equation}
Moreover, for pairs of parameters $\lambda,\lambda' \in \mathbb{R}$ these composition laws are related by $C$-field gauge-of-gauge transformations \cref{LocalCFieldGaugeOfGaugeTransformations}:
$$
  \begin{tikzcd}[
    column sep=55pt
  ]
    \bracket({
      C_3, C_6
    })
    \ar[
      rr,
      bend left=30pt,
      "{\color{darkblue}
        (C'_2, C'_5)
        \circ_{_{\lambda}}
        (C_2, C_5)
      }"{
        description,
        name=s
      }
    ]
    \ar[
      rr,
      bend right=30pt,
      "{\color{darkblue}
        (C'_2, C'_5)
        \circ_{_{\lambda'}}
        (C_2, C_5)
      }"{
        description,
        name=t
      }
    ]
    \ar[
      from=s,
      to=t,
      Rightarrow,
      "{ \color{darkgreen}
        \scaledbracket({
          0, 
          \sfrac{(\lambda' - \lambda)}{2}
          \,
          C_2 C'_2
        })
      }"{description}
    ]
    &&
    \bracket({
      C''_3, C''_6
    })\,.
  \end{tikzcd}
$$
\end{lemma}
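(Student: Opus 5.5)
The plan is to verify the groupoid axioms directly from the defining equations \cref{LocalCFieldPotential} and \cref{LocalCFieldGaugeTransformations}, isolating the only nontrivial ingredient: a bilinear correction term. Write a gauge transformation as a pair $(x, y) = (C_2, C_5)$ and set
\[
  B_\lambda(x, x') \;:=\; -\tfrac{1}{2}\, x\, \mathrm{d}x' \;+\; \tfrac{\lambda}{2}\,\mathrm{d}(x\, x') ,
\]
so that the composition \cref{CompositionOfCFieldGaugeTransformations} reads $(x', y') \circ_\lambda (x, y) = \bigl(x + x',\, y + y' + B_\lambda(x, x')\bigr)$. All forms involved have even super-parity, so the sign rule \cref{TheSuperHomologicalSignRule} reduces to the ordinary graded-commutativity in form degree.

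\textbf{Step 1: the composite is a gauge transformation from $(C_3, C_6)$ to $(C''_3, C''_6)$.} The $C_2$-component is immediate: $\mathrm{d}(C_2 + C'_2) = C''_3 - C_3$. For the $C_5$-component, the term $\tfrac{\lambda}{2}\,\mathrm{d}(C_2 C'_2)$ is exact, so it drops out under $\mathrm{d}$. Using $\mathrm{d}(C_2\, \mathrm{d}C'_2) = \mathrm{d}C_2\, \mathrm{d}C'_2$, one substitutes $\mathrm{d}C_2 = C'_3 - C_3$ and $\mathrm{d}C'_2 = C''_3 - C'_3$. Two facts are then used:
\begin{itemize}
  \item $C'_3 C'_3 = 0$, since it is an odd-degree, even-parity form multiplied by itself;
  \item $C'_3 C''_3 = - C''_3 C'_3$, by graded commutativity.
\end{itemize}
After these substitutions, the cross terms $\tfrac{1}{2} C'_3 C_3$ and $\tfrac{1}{2} C''_3 C'_3$ coming from $\mathrm{d}C_5 + \mathrm{d}C'_5$ cancel. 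What remains is exactly $C''_6 - C_6 - \tfrac{1}{2} C''_3 C_3$, which is the required right-hand side.

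\textbf{Step 2: the groupoid axioms.}
\begin{itemize}
  \item \emph{Identities.} The identity on $(C_3, C_6)$ is $(0, 0)$, since $B_\lambda(0, -) = B_\lambda(-, 0) = 0$.
  \item \emph{Associativity.} Both bracketings have first component $x + x' + x''$. Their second components differ by
  \[
    B_\lambda(x, x') + B_\lambda(x + x', x'') \;-\; B_\lambda(x', x'') - B_\lambda(x, x' + x'') ,
  \]
  and this vanishes by bilinearity of $B_\lambda$ alone, with no further identities needed.
  \item \emph{Inverses.} The inverse of $(x, y)$ is $\bigl(-x,\; -y - B_\lambda(x, -x)\bigr)$. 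This two-sided inverse is forced by the composition law. That it is a legitimate gauge transformation (from $(C'_3, C'_6)$ back to $(C_3, C_6)$) can either be checked directly from \cref{LocalCFieldGaugeTransformations}, or deduced by running the computation of Step 1 backwards.
  \item \emph{Left versus right inverse.} Note that $x\,x \neq 0$ for the 2-form $x$, so the $\lambda$-term genuinely contributes to the inverse. The left and right inverses agree because $B_\lambda(x, -x) = B_\lambda(-x, x)$: the term $x\,\mathrm{d}x$ is symmetric up to $\mathrm{d}(x\,x)/2$ corrections. I expect this to be the only place where a sign needs care.
\end{itemize}

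\textbf{Step 3: comparing different values of $\lambda$.} The two composites $\circ_\lambda$ and $\circ_{\lambda'}$ have identical $C_2$-components, and their $C_5$-components differ by
\[
  \tfrac{\lambda' - \lambda}{2}\, \mathrm{d}(C_2 C'_2).
\]
Hence the pair $(C_1, C_4) = \bigl(0,\; \tfrac{\lambda' - \lambda}{2}\, C_2 C'_2\bigr)$ satisfies \cref{LocalCFieldGaugeOfGaugeTransformations} on the nose.

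The main (mild) obstacle is the bookkeeping of graded signs in Step 1 and in the inverse formula. Everything else is bilinear algebra.
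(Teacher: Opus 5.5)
Your proposal is correct and follows essentially the same route as the paper's proof: a direct verification of well-definedness of the composite, unitality, associativity, inverses, and the $\lambda$-comparison. Your bilinear form $B_\lambda$ is a tidy packaging of the paper's associativity regrouping. For the inverse, $B_\lambda(x,-x) = (\tfrac{1}{2}-\lambda)\, x\,\mathrm{d}x$ (using $\mathrm{d}(x\,x) = 2\, x\,\mathrm{d}x$), which recovers the paper's formula $\bigl(-C_2, -C_5 + (\lambda - \tfrac{1}{2})\, C_2\, \mathrm{d}C_2\bigr)$; moreover $B_\lambda(x,-x) = -B_\lambda(x,x) = B_\lambda(-x,x)$ follows from bilinearity alone, so no extra sign care is needed there.
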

\begin{proof}
It is a matter of straightforward computation to check the following.
\begin{enumerate}

\item
The composition \cref{CompositionOfCFieldGaugeTransformations} is well-defined:
\[
  \begin{aligned}
    & 
    \mathrm{d}\bracket({
      C_5 + C'_5 -  
    \sfrac{1}{2}\,
          C_2 \mathrm{d}C'_2
          +
    \sfrac{\lambda}{2}\,
         \mathrm{d}( C_2 C'_2 )
    })
    \\
    & =
    C''_6 - C_6
    -
    \tfrac{1}{2}
    C''_3 C'_3
    -
    \tfrac{1}{2}
    C'_3 C_3
    -\tfrac{1}{2}
    \bracket({
      C'_3 - C_3
    })
    \bracket({
      C''_3- C'_3
    })
    \\
    & =
    C''_6 - C_6 - 
    \tfrac{1}{2}
    C''_3 C_3
    \mathrlap{\,,}
  \end{aligned}
\]
\item 
and is unital, with identity morphisms $(0,0)$. 

\item It is associative:
\[
  \hspace{-.7cm}
  \begin{aligned}
    &
    \bracket({
       C_5 + C'_5 -    \tfrac{1}{2}\, 
          C_2 \mathrm{d}C'_2
          +
    \tfrac{\lambda}{2}\,
         \mathrm{d}( C_2 C'_2 )  
    })
    +
    C''_5
    -
    \tfrac{1}{2}\bracket({
      C_2 + C'_2
    }) 
    \mathrm{d}C''_2 
    +
    \tfrac{\lambda}{2}\mathrm{d}\big(
    \bracket({
      C_2 + C'_2
    })
    C''_2 \big)
    \\
    & =
    C_5 + C'_5 + C''_5
    -
    \tfrac{1}{2}C_2 
    \mathrm{d}C'_2
    -
    \tfrac{1}{2}
    C_2
    \mathrm{d}
    C''_2
    -
    \tfrac{1}{2}
    C'_2
    \mathrm{d}
    C''_2
    +
    \tfrac{\lambda}{2}\dd(C_2 C'_2)
    +
    \tfrac{\lambda}{2} \dd (
    C_2
    C''_2 )
    +
    \tfrac{\lambda}{2} \dd( 
    C'_2
    C''_2 )
    \\
    & =
    C_5 +
    \bracket({
      C'_5 + C''_5
      -
      \tfrac{1}{2}
      C'_2 \mathrm{d}C''_2
      +
      \tfrac{\lambda}{2}\dd(
      C'_2 C''_2
    )})- 
    \tfrac{1}{2}
    C_2 \mathrm{d} 
    \bracket({ 
       C'_2 + C''_2 
    })
    +
    \tfrac{\lambda}{2}
    \mathrm{d}\big( C_2
    \bracket({
      C'_2 + C''_2 
    }) \big)
    \mathrlap{\,,}
  \end{aligned}
\]

\item
and has inverses given by
\begin{equation}
\label
{InverseOfLocalCFieldGaugeTransformation}
  \bracket({
    C_2, C_5
  })^{-1}
  =
  \bracket({
    -C_2,
    -C_5 
    + 
    \bracket({
      \lambda- \tfrac{1}{2}
    })\, 
    C_2 \mathrm{d}C_2
  })
  \mathrlap{\,,}
\end{equation}
which are well-defined because:
\[
  \mathrm{d}\bracket({
    -C_5 
      + 
    \bracket({
      \lambda-\tfrac{1}{2}
    })
    C_2 \mathrm{d}C_2
  })
  =
  C_6 - C'_6
  -
  \tfrac{1}{2}C_3C'_3
  +
  \bracket({
    \lambda-\tfrac{1}{2}
  })
  \smash{
  \grayunderbrace{
  \bracket({\mathrm{d}C_2})^2
  }{0}
  }
  \mathrlap{\,,}
\]
and indeed satisfy:
\[
    C_5
    +
    \bracket({
      - C_5
      +
      \bracket({
        \lambda - \tfrac{1}{2}
      })
      C_2 \mathrm{d}C_2
    })
    +
    \tfrac{\lambda}{2}
    C_2 \mathrm{d}\bracket({-C_2})
    -
    \tfrac{1-\lambda}{2}
    \bracket({-C_2})\mathrm{d}C_2
    =
    0
    \mathrlap{\,.}
\]
\end{enumerate}

Finally, the difference in composition for two parameter values is indeed:
  \begin{align*}
    &
    \tfrac{\lambda'}{2}
    C'_2\mathrm{d}C_2
    -
    \tfrac{1-\lambda'}{2}
    C_2 \mathrm{d}C'_2
    -
    \bracket({
    \tfrac{\lambda}{2}
    C'_2\mathrm{d}C_2
    -
    \tfrac{1-\lambda}{2}
    C_2 \mathrm{d}C'_2    
    })
    \\
    & =
    \tfrac{\lambda' - \lambda}{2}
    \bracket({
      C'_2\mathrm{d}C_2
      +
      C_2 \mathrm{d}C'_2
    })
    \\
    & = 
    \mathrm{d}
    \bracket({
      \tfrac{\lambda'-\lambda}{2}
      C_2 C'_2
    })
    \mathrlap{\,.}
    \qedhere
  \end{align*}
\end{proof}

However, while they exist as such for any $\lambda$, the groupoids of local $C$-field gauge transformations from \cref{GroupoidOfLocalCFieldGaugeTransformations} are not physically meaningful, because they refer to gauge transformations as such without involving the gauge-of-gauge-equivalences between these. To fully reflect the latter, we are to consider a \emph{2-groupoid} and ultimately a \emph{3-groupoid} of local $C$-field gauge potentials, including also the third-order gauge transformations. 

Alternatively, to stay within the realm of 1-groupoids for the time being, we may quotient out the gauge-of-gauge transformations:
\begin{lemma}
\label[lemma]
{GroupoidOfClassesOfCFieldGaugeTransformations}
  The local $C$-field gauge potentials \cref{LocalCFieldPotential} with
  local gauge-of-gauge-equivalence classes 
  \cref{LocalCFieldGaugeOfGaugeTransformations}
  $[C_2, C_5]$
  of
  local $C$-field gauge transformations 
  $(C_2, C_5)$
  \cref{LocalCFieldGaugeTransformations} 
  between them form a groupoid \cref{GroupoidOfYangMillsFields} 
  under the unique composition operation
  that is given on representatives by \cref{CompositionOfCFieldGaugeTransformations} for any value of $\lambda \in \mathbb{R}$:
\begin{equation}
\label
{CompositionOfClassesOfCFieldGaugeTransformations}
  \begin{tikzcd}[
    row sep=10pt,
    column sep=60pt
  ]
    &
    \bracket({
      C'_3, C'_6
    })
    \ar[
      dr,
      "{\color{darkblue}
        [C'_2, C'_5]
      }"{sloped}
    ]
    \\
    \bracket({
      C_3, C_6
    })
    \ar[
      ur,
      "{\color{darkblue}
        [C_2, C_5]
      }"{sloped}
    ]
    \ar[
      rr,
      "{\color{darkblue}
        [C'_2, C'_5]
        \circ
        [C_2, C_5]
      }",
      "{\color{darkblue}
        :=
        \,
        \scaledbracket[{
          (C'_2, C'_5)
          \circ_{_{\lambda}}
          (C_2, C_5)
        }]
      }"{swap}
    ]
    && 
    \bracket({
      C''_3, C''_6
    })
    \mathrlap{\,.}
  \end{tikzcd}
\end{equation}
\end{lemma}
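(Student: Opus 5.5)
The plan is to deduce this from \cref{GroupoidOfLocalCFieldGaugeTransformations} by passing to the quotient by the gauge-of-gauge relation. First I would check that gauge-of-gauge equivalence \cref{LocalCFieldGaugeOfGaugeTransformations} is an equivalence relation on each hom-set of gauge transformations $(C_3,C_6)\to(C'_3,C'_6)$. Concretely, $(C_2,C_5)\sim(C'_2,C'_5)$ iff there exist $(C_1,C_4)$ with $\mathrm{d}C_1 = C'_2-C_2$ and $\mathrm{d}C_4 = C'_5-C_5$. Reflexivity uses $(0,0)$, symmetry uses $(-C_1,-C_4)$, and transitivity uses sums. So the classes $[C_2,C_5]$ are well-defined, and identities are $[0,0]$.

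Next comes the central step: showing that $\circ_\lambda$ descends to classes. Suppose $(C_2,C_5)\sim(\tilde C_2,\tilde C_5)$ via $(C_1,C_4)$ and $(C'_2,C'_5)\sim(\tilde C'_2,\tilde C'_5)$ via $(C'_1,C'_4)$. The 2-form components of the composites differ by $\mathrm{d}(C_1+C'_1)$. For the 5-form component, the $\tfrac{\lambda}{2}\mathrm{d}(C_2C'_2)$ term changes by an exact term, and $C_5+C'_5$ changes by $\mathrm{d}(C_4+C'_4)$. The remaining term $-\tfrac12 C_2\,\mathrm{d}C'_2$ changes by
\[
-\tfrac12\big(\mathrm{d}C_1\,\mathrm{d}C'_2 + C_2\,\mathrm{d}\mathrm{d}C'_1 + \mathrm{d}C_1\,\mathrm{d}\mathrm{d}C'_1\big)
= -\tfrac12\,\mathrm{d}\big(C_1\,\mathrm{d}C'_2\big)
\mathrlap{\,,}
\]
which is exact. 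The sign here is checked using the Leibniz rule together with $\mathrm{d}^2=0$, and this is the only computation needing care. Hence the composite classes agree, and composition descends to the quotient.

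Independence from $\lambda$ is then immediate from the last assertion of \cref{GroupoidOfLocalCFieldGaugeTransformations}. That assertion supplies the explicit gauge-of-gauge transformation $\big(0,\tfrac{\lambda'-\lambda}{2}C_2C'_2\big)$ between $\circ_\lambda$ and $\circ_{\lambda'}$, so $[\,\cdot\circ_\lambda\cdot\,]$ is the same for all $\lambda$. Uniqueness of the composition operation is then tautological: any operation given on representatives by \cref{CompositionOfCFieldGaugeTransformations} is forced to be this one.

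Finally, the groupoid axioms transfer from \cref{GroupoidOfLocalCFieldGaugeTransformations}, because the quotient map is compatible with composition. Associativity, unitality, and the inverses \cref{InverseOfLocalCFieldGaugeTransformation} all hold on representatives for a fixed $\lambda$, hence they hold on classes. The main obstacle is therefore the well-definedness computation above, and specifically the graded signs in the cross term.
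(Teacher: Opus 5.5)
Your proposal is correct and follows essentially the paper's route. The key step is the same explicit check that composing gauge-of-gauge-shifted representatives changes the composite only by exact terms, and $\lambda$-independence is inherited from \cref{GroupoidOfLocalCFieldGaugeTransformations}. The one difference is that the paper verifies directly that the inverses \cref{InverseOfLocalCFieldGaugeTransformation} respect the equivalence, whereas you correctly deduce this from uniqueness of inverses once composition descends to classes.
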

\begin{proof}
For a pair of gauge-of-gauge-transformed composable gauge transformations 
\[
  \begin{aligned}
  \bracket({
    \tilde C_2, 
    \tilde C_5
  })
  &
  :=
  \bracket({
    C_2 
    + 
    \mathrm{d} C_1
    ,\,
    C_5 + \mathrm{d} C_4
  })
  \\
  \bracket({
    \tilde C'_2, 
    \tilde C'_5
  })
  &
  :=
  \bracket({
    C'_2 
    + 
    \mathrm{d} C'_1
    ,\,
    C'_5 + \mathrm{d} C'_4
  })
  \mathrlap{\,,}
  \end{aligned}
\]
we check that their composition is gauge-of-gauge-equivalent to that of the untransformed transformations:
\[
  \begin{aligned}
    &
    \bracket({
      C_5 + \mathrm{d}C_4
    })
    +
    \bracket({
      C'_5 + \mathrm{d}C'_4
    })
    +
    \tfrac{\lambda}{2}
    \bracket({
      C'_2 + \mathrm{d}C'_1 
    })
    \mathrm{d}
      C_2
    -
    \tfrac{1-\lambda}{2}
    \bracket({
      C_2 + \mathrm{d}C_1
    })
    \mathrm{d}
      C'_2
    \\
    & =
    C_5 + C'_5 +
    \tfrac{\lambda}{2}
    C'_2 \mathrm{d}C_2
    -
    \tfrac{1-\lambda}{2}
    C_2 \mathrm{d}C'_2
    +
    \mathrm{d}\bracket({
      C_4 + C'_4
      +
      \tfrac{\lambda}{2}
      C'_1 \mathrm{d}C_2
      -
      \tfrac{1-\lambda}{2}
      C_1 \mathrm{d}C'_2
    })
    \mathrlap{\,,}
  \end{aligned}
\]
and that their inverses \cref{InverseOfLocalCFieldGaugeTransformation} are gauge-of-gauge-equivalent to those of the untransformed transformations:
\begin{align*}
    &
    -
    \bracket({
      C_5 + \mathrm{d}C_4
    })
    +
    \bracket({
      \lambda - \tfrac{1}{2}
    })
    \bracket({C_2 + \mathrm{d}C_1 }) \mathrm{d}C_2
    \\
    & =
    - C_5 
    + 
    \bracket({
      \lambda - \tfrac{1}{2}
    })
    C_2 \mathrm{d}C_2
    +
    \mathrm{d}\bracket({
      - C_4 
      +
      \bracket({
        \lambda - \tfrac{1}{2}
      })
      C_1 \mathrm{d}C_2
    })
    \mathrlap{\,.}
    \qedhere
\end{align*}
\end{proof}

%%%%%%%%%%%%%
\paragraph
{Comparison to the Literature}
%%%%%%%%%%%%%

While the above definitions of gauge and gauge-of-gauge transformations of $C$-field potentials are \emph{plausible}, they are at this point not systematically \emph{derived} from an overarching principle.  

In fact, previous literature is ambiguous on their definition:
\begin{enumerate}

 \item The $C$-field gauge transformations considered in \parencites[(2.12)]{CoimbraEtAl2014} coincide exactly with our \cref{LocalCFieldGaugeTransformations}.

 \item Previously, in \cite[(21)]{BandosEtAl2004} the authors take the local $C$-field gauge transformations to be as follows (just adjusted for our normalization convention):
 \begin{equation}
   \begin{aligned}
     C_2 
     & 
     \in
     \Omega^2_{\mathrm{dR}}
     \bracket({
       U
     })
     \\
     C_5
     & 
     \in
     \Omega^5_{\mathrm{dR}}
     \bracket({
       U
     })
   \end{aligned}
   \;\;\;
   \text{s.t.}
   \;\;\;
   \begin{aligned}
     \mathrm{d}\, 
     C_2
     & 
     =
     C'_3 - C_3
     \\
     \mathrm{d}\,
     C_5
     & =
     C'_6 - C_6
     +
     \tfrac{1}{2}
     C_2 G_4
     \mathrlap{\,.}
   \end{aligned}
 \end{equation}
 This coincides with \cref{LocalCFieldGaugeTransformations} up to gauge-of-gauge transformation \cref{LocalCFieldGaugeOfGaugeTransformations}:
 \begin{equation}
  \left(
   \begin{aligned}
     &C_2,
     \\[-2pt]
     &C_5
   \end{aligned}
   \right)
   \,\mapsto\,
  \left(
   \begin{aligned}
     &C_2,
     \\[-2pt]
     &
     C_5
     -
     \mathrm{d}
     \bracket({
       \tfrac{1}{2}
       C_2 C'_3  
     })
   \end{aligned}
   \right)
   \mathrlap{,}
 \end{equation}
 and hence agrees on gauge-of-gauge-equivalence classes \cref{CompositionOfClassesOfCFieldGaugeTransformations}.

 \item 
 However, the original authors \parencites[(2.4)]{CremmerEtAl1998}[(3.3)]{LavrinenkoEtAl999}[(14)]{KalkkinenStelle2003} on this subject (also \parencites[(4.9)]{Sati2010}[\S 2.3]{Baraglia2012}{Rosabal2026}) considered  ``gauge transformations'' that are essentially different when taken at face value,
  parameterized by \emph{shifts} of the gauge potentials:
 \begin{equation}
 \label
 {NonExactShiftsOfCFieldGaugePotential}
 \begin{aligned}
   \Lambda_3
   & 
   \in 
   \Omega^3_{\mathrm{cl}}\bracket({
     U
   })
   \\
   \Lambda_6
   & 
   \in 
   \Omega^6_{\mathrm{cl}}\bracket({
     U
   })
 \end{aligned}
 \;\;\;
 \text{s.t.}
 \;\;\;
 \begin{aligned}
   \Lambda_3 & = C'_3 - C_3
   \\
   \Lambda_6 & =
   C'_6 - C_6 
   -
   \tfrac{1}{2}
   \Lambda_3 C_3\,.
 \end{aligned}
 \end{equation}
 This definition \cref{NonExactShiftsOfCFieldGaugePotential} concerns closed (and hence exact on the chart $U$) shifts of gauge potentials, hence 
 it may be understood as defining not the gauge transformations themselves (which ought to be parametrized by potentials of $\Lambda_3$ and $\Lambda_6$) but the condition on potentials of \emph{being gauge equivalent}, at least over a chart.
 % We would argue that  \cref{NonExactShiftsOfCFieldGaugePotential} is not the right notion of gauge transformation, since all known gauge transformation laws are parametrized by potentials for the shifts. Or else, that this definition \cref{NonExactShiftsOfCFieldGaugePotential} does not define the gauge transformations themselves but just the condition on potentials of \emph{being gauge equivalent}. This is under the assumption that we really are over a chart (which is never stated in previous literature): Over a general manifold, \cref{NonExactShiftsOfCFieldGaugePotential} admits \emph{non-exact shifts}, which is ``clearly'' not right for a definition of gauge transformations.
\end{enumerate}

Hence what one needs is a general theory that allows one to systematically identify the nature of higher gauge transformations. This is what we turn to next, following \cite[\S 2.14]{GSS24-SuGra}.

%%%%%%%%%%%%%%%
\paragraph
{Relating to Higher Concordances of $\mathfrak{l}S^4$-Valued Forms}
%%%%%%%%%%%%%%%

To this end, we next observe (with \parencites{GSS24-SuGra,GSS24-M5}) that the systematic way to derive the higher gauge structure of higher gauge potentials is by reducing them to higher-order concordances $\shape \mathbf{\Omega}^1_{\mathrm{cl}}\bracket({U;\mathfrak{a}})$ \cref{ShapeOfSmoothSetOfClosedFormsInIntro}
of closed differential forms with coefficients in the characteristic $L_\infty$-algebra $\mathfrak{a}$ 
\cref{TheCharacteristicLInfinityAlgebra}
of the duality-symmetric Bianchi identities.

Let us denote the groupoid established in \cref{GroupoidOfClassesOfCFieldGaugeTransformations}
as follows (the \emph{1-truncation} $[-]_1$ of the groupoid from \cref{GroupoidOfLocalCFieldGaugeTransformations}):
\begin{equation}
\label
{1GroupoidOfLocalCFieldPotentials}
  \begin{aligned}
  \bracket[{
    C\mathrm{FldPtntl}\bracket({
      U
    })
  }]_1
  &
  :
  \colorbox{lightgray}{$
  \begin{subarray}{l}
  \text{
    $C$-Field gauge potentials on the chart $U$
  }
  \\
  \text{
    with gauge-of-gauge-equivalence classes
  }
  \\
  \text{
    of gauge transformations between them,
  }
  \end{subarray}
  $}
  \end{aligned}
\end{equation}
where we leave the fixed flux density $(G_4, G_7)$ notationally implicit.

Recall from the overview \cref{Globalization} that just from the datum of the characteristic $L_\infty$-algebra \cref{TheCharacteristicLInfinityAlgebra} we obtain a higher groupoid of gauge potentials for given flux densities on a chart, this being the homotopy fiber:
\begin{equation}
\label
{3GroupoidOfCFieldPotentialsViaConcordances}
  \mathfrak{a}
  \mathrm{Ptntl}\bracket({U})
  :=
  \mathrm{fib}
  \big({
  \inlinetikzcd{
  \{ \vec F \}
  \ar[r]
  \&
  \shape 
  \mathbf{\Omega}^1
    _{\mathrm{cl}}\bracket({U; \mathfrak{a}})
  }
  }\big)
  \mathrlap{\,.}
\end{equation}

In order to stay within the realm of 1-groupoids, for the purpose of these lecture notes, we consider again the 1-truncated 1-groupoid obtained by quotienting out 2-morphisms:
\begin{equation}
\label
{1GroupoidOfCharValuedGaugePotentials}
  \bracket[{
    \mathfrak{a}
    \mathrm{Ptntl}\bracket({U})
  }]_1
  \defneq
  \Big[{
  \mathrm{fib}
  \big({
  \inlinetikzcd{
  \{ \vec F \}
  \ar[r]
  \&
  \shape 
  \mathbf{\Omega}^1
    _{\mathrm{cl}}\bracket({U; \mathfrak{a}})
  }
  }\big)
  }\Big]_1
  \mathrlap{\,.}
\end{equation}

The point is that the groupoid \cref{1GroupoidOfCharValuedGaugePotentials}, while unwieldy at face value, is \emph{canonically defined} from just the datum of the duality-symmetric Bianchi identities, and provides the right gauge structure for general flux quantization (\cref{Charges}). Therefore the following statement is remarkable:
\begin{proposition}
\label[proposition]
{ConcordancesEquivalentToGaugeTransf}
The canonically defined groupoid \cref{1GroupoidOfCharValuedGaugePotentials} of local $\mathfrak{a}$-potentials --- for $\mathfrak{a} \defneq \mathfrak{l}S^4$ the characteristic $L_\infty$-algebra of 11D SuGra \textup{(\cref{CharacteristicLInfinityOf11DSuGra})} --- is equivalent \cref{EquivalenceOfCategories} to the explicit groupoid \cref{1GroupoidOfLocalCFieldPotentials} of local $C$-field gauge potentials \textup{(\cref{CFieldGaugeStructure,GroupoidOfClassesOfCFieldGaugeTransformations})}:
\begin{equation}
  \begin{tikzcd}
  \bracket[{
  \bracket({\mathfrak{l}S^4})
  \mathrm{Ptntl}
  \bracket({U})
  }]_1
  \ar[
    r,
    "{ \sim }"
  ]
  &
  \bracket[{
    C\mathrm{FldPtntl}\bracket({U})
  }]_1 \,.
  \end{tikzcd}
\end{equation}
\end{proposition}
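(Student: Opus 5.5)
We construct an explicit functor from the concordance groupoid to the $C$-field groupoid and check that it is essentially surjective and fully faithful. Throughout, $\shape\mathbf{\Omega}^1_{\mathrm{cl}}(U;\mathfrak{l}S^4)$ is modeled as the simplicial set whose $n$-simplices are closed $\mathfrak{l}S^4$-valued forms on $U\times\Delta^n$, with $\Delta^n$ carrying smooth (e.g.\ collared) forms. Then objects of the homotopy fiber \cref{3GroupoidOfCFieldPotentialsViaConcordances} are concordances $(\widehat G_4,\widehat G_7)$ on $U\times[0,1]$ from $0$ to $(G_4,G_7)$. Morphisms are closed forms on $U\times\Delta^2$ whose restriction to the edge $\{0\}\text{--}\{1\}$ is zero, i.e.\ homotopies between two such concordances rel endpoints. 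The 1-truncation \cref{1GroupoidOfCharValuedGaugePotentials} then quotients by 3-simplices.

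\textbf{Objects.} Decompose a concordance by the interval coordinate $t$ as
\[
\widehat G_4=g_4(t)+\mathrm{d}t\,\gamma_3(t),\qquad \widehat G_7=g_7(t)+\mathrm{d}t\,\gamma_6(t).
\]
The closure conditions $\mathrm{d}\widehat G_4=0$ and $\mathrm{d}\widehat G_7=\tfrac12\widehat G_4\widehat G_4$ then read
\[
\partial_t g_4=\mathrm{d}_U\gamma_3,\qquad \partial_t g_7=\mathrm{d}_U\gamma_6+\gamma_3\,g_4
\]
(up to sign conventions). Define the fiber integrals
\[
C_3:=\int_0^1\gamma_3\,\mathrm{d}t,\qquad C_6:=\int_0^1\Big(\gamma_6-\tfrac12\,\gamma_3\wedge\!\int_0^t\gamma_3\Big)\mathrm{d}t .
\]
Using the boundary values $g(0)=0$ and $g(1)=G$, the fiberwise Stokes formula gives $\mathrm{d}C_3=G_4$. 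Substituting $g_4(t)=\mathrm{d}\int_0^t\gamma_3$, the correction term in $C_6$ is designed so that $\mathrm{d}C_6=G_7-\tfrac12C_3G_4$, i.e.\ \cref{LocalCFieldPotential}. Pinning down this correction term (the analogue of \cite[Prop. 2.48]{GSS24-SuGra}) is a short computation.

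\textbf{Morphisms.} Integrate over $\Delta^2$ in the same way. This yields $C_2=\int_{\Delta^2}(\ldots)$ and $C_5$ with a quadratic correction built from the fiber-integrated $\gamma_3$ components. We then verify \cref{LocalCFieldGaugeTransformations} via Stokes on $\Delta^2$: the three edges give $C_3'-C_3$, and the remaining edge vanishes. Integrating 3-simplices in the same way produces gauge-of-gauge transformations \cref{LocalCFieldGaugeOfGaugeTransformations}, so the map descends to the 1-truncation. Compatibility with composition is checked by integrating over a 3-simplex whose faces are the two composites; this yields exactly the formula \cref{CompositionOfCFieldGaugeTransformations} up to a term of the form $\mathrm{d}(\ldots C_2C_2')$. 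By \cref{GroupoidOfClassesOfCFieldGaugeTransformations} that term is invisible on classes, so we obtain a functor.

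\textbf{Equivalence.} Essential surjectivity is direct: given $(C_3,C_6)$, take the linear concordance
\[
\widehat G_4=\mathrm{d}(tC_3),\qquad \widehat G_7=\mathrm{d}(tC_6)+\tfrac12\, tC_3\,\mathrm{d}(tC_3).
\]
It is closed, and under the integration map it returns $(C_3,C_6)$, possibly after a normalization check on the correction term. For fullness, apply the same trick on $\Delta^2$: given $(C_2,C_5)$, write an explicit closed form on $U\times\Delta^2$ using barycentric polynomial coefficients. Faithfulness is the main obstacle. We must show that if two 2-simplices integrate to gauge-of-gauge-equivalent $(C_2,C_5)$, then they are connected by a 3-simplex. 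The plan is to reduce this to homotopy groups. Since $U$ is contractible, both groupoids have contractible nerves in each component, so it suffices to show that the map on $\pi_1$ at each object is injective. Both $\pi_1$ groups are identified with closed $(2,5)$-form classes modulo exact forms, which vanish on a chart by the Poincaré lemma. Concretely, I would first compute $\pi_1$ of the fiber as $\pi_2$ of $\shape\mathbf{\Omega}^1_{\mathrm{cl}}(U;\mathfrak{l}S^4)\simeq\pi_2(\mathfrak{l}S^4\text{-realization})$ evaluated on contractible $U$. Then I would check that the integration map realizes this isomorphism, so that both sides are groupoids in which every object has trivial automorphisms and any two objects are isomorphic. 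The equivalence then follows from a bijection on connected components together with the already-established functor.
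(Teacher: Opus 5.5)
Your proposal is correct. Its object-level part checks out: with your correction term, $\mathrm{d}$ of the integrand of $C_6$ is exactly $\partial_t\bigl(g_7-\tfrac12 g_4\int_0^t\gamma_3\bigr)$, so $\mathrm{d}C_6=G_7-\tfrac12 C_3G_4$ on the nose, and the linear concordance is sent back to $(C_3,C_6)$ exactly.

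Where you genuinely depart from the paper is faithfulness. The paper argues constructively. The explicit functor of \cite[Prop. 2.48]{GSS24-SuGra} is surjective on objects and full, and faithfulness follows by checking, with the same explicit form manipulations, that the construction is also full on gauge-of-gauge transformations before quotienting. You instead show that over a contractible (super-)chart both 1-groupoids are connected with trivial automorphism groups. On the $C$-field side, $\mathrm{Aut}(C_3,C_6)\simeq H^2_{\mathrm{dR}}(U)\oplus H^5_{\mathrm{dR}}(U)=0$ (using $C_3C_3=0$), and any two potentials are related because $C_3'-C_3$ and $C_6'-C_6-\tfrac12 C_3'C_3$ are closed. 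On the concordance side, $\pi_0$ and $\pi_1$ of the homotopy fiber are governed by $\pi_1$ and $\pi_2$ of the realization of $\mathfrak{l}S^4$, which vanish.

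This observation does more than you use it for. Both sides are then indiscrete groupoids, so your object map extends uniquely to a functor that is automatically an equivalence. The $\Delta^2$/$\Delta^3$ integrations, the composition check and the fullness construction therefore become superfluous. Also, ``contractible nerves, so it suffices to check injectivity on $\pi_1$'' is circular as written; the actual content lies in the two vanishing statements.

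The price of your route is reliance on the nonabelian de Rham theorem: homotopy invariance under contracting $U$, and $\pi_k\simeq(\mathfrak{l}S^4)_{k-1}$ for the realization. You also lose any explicit dictionary between concordances-of-concordances and gauge parameters $(C_2,C_5)$. That dictionary is what the paper's route delivers, and what the surrounding discussion needs to derive the form of \cref{LocalCFieldGaugeTransformations} and its composition law. Your argument makes plain that the bare equivalence of 1-truncations over a chart does not by itself single these out.
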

\begin{proof}
  This is essentially the statement of 
  \cite[Prop. 2.48]{GSS24-SuGra} (cf. also \cite{Banerjee2025-Potentials}):
  Items (i) and (ii) there say that we have a functor that is surjective on objects, and item (iii) there says that this functor is full. It remains to show that it is faithful, which is equivalent to saying that before quotienting the construction extends to be full on gauge-of-gauge transformations. This one readily checks by the same method as used in the proof there.
\end{proof}

This statement is the best one can hope for in view of previous literature, where gauge-of-gauge transformations etc. have not been discussed.
Accordingly, this means that we may regard the untruncated higher groupoid $\bracket({\mathfrak{l}S^4})\mathrm{Ptntl}\bracket({U})$
\cref{3GroupoidOfCFieldPotentialsViaConcordances} as the proper local higher gauge structure of $C$-field potentials. 

%%%%%%%%%%%%%%%%%
\subsubsection
{NS/RR-Field Potentials in the 10D IIA Bulk}

By the isomorphism of $\infty$-groupoids from {\textbf{(ii)}} of Prop. \ref{NonTrivialTorusBundleToroidification/Oxidation}, it follows that the local gauge structure of the 11D C-field determines the local gauge structure of the (duality-symmetric) fields for all descendant supergravity theories. In particular, restricting to the set of $S^1$-invariant super-fluxes in 11D, this implies that the basic decompositions of all the formulas from \cref{OnCFieldPotentials} yield the corresponding gauge potentials and (higher) gauge transformations for the lower-dimensional theories (cf. \cite[Prop. 2.4]{GiotopoulosSati2026}).

For instance, in the case of IIA supergravity, this immediately yields the NS/RR local gauge structure encoded in the cyclification $\mathrm{cyc}(\mathfrak{l}S^4)$. Namely,
for fixed NS/RR (super-)flux  densities 
\[
  \begin{aligned}
    F_2 & 
    \in
    \Omega^2_{\mathrm{dR}}\bracket({
      U
    })
    \\
    H_3 & 
    \in
    \Omega^3_{\mathrm{dR}}\bracket({
      U
    })
     \\
    F_4 & 
    \in
    \Omega^4_{\mathrm{dR}}\bracket({
      U
    })
     \\
    F_6 & 
    \in
    \Omega^6_{\mathrm{dR}}\bracket({
      U
    })
     \\
    H_7 & 
    \in
    \Omega^7_{\mathrm{dR}}\bracket({
      U
    })
  \end{aligned}
  \;\;\;
  \text{s.t.}
  \;\;\;
  \begin{aligned}
  \mathrm{d}\, F_2
    & = 0
    \\
    \mathrm{d}\, H_3
    & = 0
    \\
    \mathrm{d}\, F_4
    & =
    \tfrac{1}{2} H_3 F_2
    \\
    \mathrm{d}\, F_6
    & = - 
   H_3 F_4
    \\
    \mathrm{d}\, H_7
    & =
    \tfrac{1}{2} F_4 F_4 + F_2 F_6 
  \end{aligned}
\]
then:
\begin{enumerate}
 \item
 \emph{NS/RR-field gauge potentials} are given by:
 \begin{equation}
  \label{LocalNS/RRFieldPotential}
    \begin{aligned} A_1 
      & 
      \in
      \Omega^1_{\mathrm{dR}}
      \bracket({
        U
      })
      \\
      B_2 
      & 
      \in
      \Omega^2_{\mathrm{dR}}
      \bracket({
        U
      })
      \\
      A_3 
      & 
      \in
      \Omega^3_{\mathrm{dR}}
      \bracket({
        U
      })
      \\
      A_5
      & 
      \in
      \Omega^5_{\mathrm{dR}}
      \bracket({
        U
      })
       \\
      B_6
      & 
      \in
      \Omega^6_{\mathrm{dR}}
      \bracket({
        U
      })
    \end{aligned}
    \;\;\;
    \text{s.t.}
    \;\;\;
    \begin{aligned}
    \mathrm{d}\, A_1
      & =
      F_2
      \\
      \mathrm{d}\, B_2
      & =
      H_3
      \\
      \mathrm{d}\, A_3
      & =
      F_4 - F_2 B_2
      \\
      \mathrm{d}\, A_5
      & =
      F_6 
    +\tfrac{1}{2}A_3  H_3 + \tfrac{1}{2} B_2 F_4
      \\
      \mathrm{d}\, B_6
      & =
       H_7 
    -
    \tfrac{1}{2} A_3  F_4 - F_2  A_5
      \mathrlap{\,.}
    \end{aligned}
  \end{equation}

 \item
 \emph{NS/RR-field gauge transformations} between pairs of potentials are of the form:
\begin{equation}
  \label
{LocalNS/RRFieldGaugeTransformations}
    \begin{aligned}
      N_0 
      & 
      \in
      \Omega^0_{\mathrm{dR}}
      \bracket({
        U
      })
      \\
        N_1 
      & 
      \in
      \Omega^1_{\mathrm{dR}}
      \bracket({
        U
      })
      \\
    M_2 
      & 
      \in
      \Omega^2_{\mathrm{dR}}
      \bracket({
        U
      })
      \\
      N_4 
      & 
      \in
      \Omega^4_{\mathrm{dR}}
      \bracket({
        U
      })
      \\
      M_5
      & 
      \in
      \Omega^5_{\mathrm{dR}}
      \bracket({
        U
      })
    \end{aligned}
    \;\;\;
    \text{s.t.}
    \;\;\;
    \begin{aligned}
    \mathrm{d}\, N_0
      & =
      A'_1 - A_1
      \\
      \mathrm{d}\, N_1
      & =
      B'_2 - B_2
      \\
      \mathrm{d}\, M_2
      & =
      A'_3 - A_3 + F_2 N_1 
      \\
      \mathrm{d}\, N_4
      & =
      A'_5 - A_5 - \tfrac{1}{2} (B'_2  A_3 - A'_3  B_2)
      \\
      \mathrm{d}\, M_5
      & =
      B'_6 - B_6 + F_2  N_4 - \tfrac{1}{2} A'_3   A_3 
      \mathrlap{\,.}
    \end{aligned}
  \end{equation}

 \item
 \emph{NS/RR-field gauge-of-gauge transformations} are of the form:
  \begin{equation}
    \label
{LocalNS/RRFieldGaugeOfGaugeTransformations}
    \begin{aligned}
      K_0 
      & 
      \in
      \Omega^0_{\mathrm{dR}}
      \bracket({
        U
      })
      \\
      L_1 
      & 
      \in
      \Omega^1_{\mathrm{dR}}
      \bracket({
        U
      })
      \\
      K_3 
      & 
      \in
      \Omega^3_{\mathrm{dR}}
      \bracket({
        U
      })
      \\
      L_4
      & 
      \in
      \Omega^4_{\mathrm{dR}}
      \bracket({
        U
      })
    \end{aligned}
    \;\;\;
    \text{s.t.}
    \;\;\;
    \begin{aligned}
      \mathrm{d}\, K_0
      & =
      N_1' - N_1
      \\
      \mathrm{d}\, L_1
      & =
      M_2' - M_2 - F_2 K_0
      \\
      \mathrm{d}\, K_3
      & =
      N'_4 - N_4
      \\
      \mathrm{d}\, L_4
      & =
      M'_5 - M_5 - F_2 K_3
      \mathrlap{\,.}
    \end{aligned}
  \end{equation}
\end{enumerate}

This pattern propagates similarly in an algorithmic manner to even higher gauge transformations, and moreover to the local gauge-structure of lower dimensional SuGras.
%%%%%%%%%%%%%%%%

%%%%%%%%%%%%%%%%%
\subsubsection
{$B$-Field Potentials on M5-Probes}
%%%%%%%%%%%%%%%%

The discussion of local $C$-field gauge structure (\cref{OnCFieldPotentials}) generalizes to the presence of M5-brane probes $\inlinetikzcd{ \Sigma^{1,5\vert 2\cdot \mathbf{8}_+} \ar[r, "{ \Phi }"] \& X^{1,10\vert \mathbf{32}} }$ (\cref{Probes}). 

By analysis of the higher concordances of $\mathfrak{l}_{_{S^4}} S^7$-valued differential forms from \cref{BianchiForM5WorldvolumeFlux} and \cref{M5FluxEquations}
one finds \parencites[\S 4.1]{GSS24-M5}[\S 3.1]{Banerjee2025-Potentials} in addition to the bulk field data from \cref{CFieldGaugeStructure}, the following data (on a compatible chart $U$ of $\Sigma^{1,5\vert 2\cdot \mathbf{8}_+}$) for fixed M5-worldvolume flux density
\begin{equation}
  H_3 \in \Omega^3_{\mathrm{dR}}
  \;\;
  \text{s.t.}
  \;\;
  \mathrm{d}\, H_3
  =
  \Phi^\ast G_4
  \mathrlap{\,.}
\end{equation}
\begin{enumerate}
 \item
 $B$-field gauge potentials
 alongside the $C$-field potentials $(C_3, C_6)$ from \cref{LocalCFieldPotential} are of the form:
 \begin{equation}
 \label
 {LocalBFieldPotential}
   B_2 \in \Omega^2_{\mathrm{dR}}(U)
   \;\;
   \text{s.t.}
   \;\;
   \mathrm{d}\, B_2
   =
   H_3 - C_3
   \mathrlap{\,.}
 \end{equation}

 \item
 $B$-field gauge transformations alongside the $C$-field gauge transformations $(C_2, C_5)$ from \cref{LocalCFieldGaugeTransformations} are of the form:
 \begin{equation}
   \mathrm{d}\, B_1
   =
   B'_2 - B_2 + C_2
   \mathrlap{\,.}
 \end{equation}

 \item
 $B$-field gauge-of-gauge transformations alongside the $C$-field gauge-of-gauge transformations $(C_1, C_4)$ from \cref{LocalCFieldGaugeOfGaugeTransformations} are of the form:
 \begin{equation}
   \mathrm{d}\, B_0
   =
   B'_1 - B_1 + C_1
   \mathrlap{\,.}
 \end{equation}
\end{enumerate}

\medskip

With this analysis in hand, we next find the admissible global IR completions of 11D SuGra (with M5-probes), and the corresponding topological brane charges: \cref{Charges}.

%%%%%%%%%%%%%%%%%%%
\subsection
{Charges}
\label
{Charges}
%%%%%%%%%%%%%%%%%%

We have now seen that the local on-shell field and higher gauge structure of 11D SuGra is entirely  controlled by the M-theory gauge $L_\infty$-algebra $\mathfrak{l}S^4$ (\cref{CharacteristicLInfinityOf11DSuGra}), which in the presence of M5-brane probes is enhanced to the fibration $\inlinetikzcd{ \mathfrak{l}_{_{S^4}}S^7 \ar[r, "{ \mathfrak{l}h_{\mathbb{H}} }"] \& \mathfrak{l}S^4 }$ \cref{lS7RelativeToS4}. Here we finally explain how this characterization allows for the global IR-completion of the higher gauge fields, thereby defining the topological charges of the singular (black) branes of the theory (the \emph{M-brane} charges).  

Dedicated lecture notes on the mathematical details of this claim are in the companion article \cite{SS26-HigherGauge}, to which we refer the interested reader. Here we give a slightly sketchier motivation from Dirac charge quantization and then highlight the nature of the topological brane charges that accompany each such choice of global completion.

%%%%%%%%%%%%%
\paragraph
{Comparison to the Literature: Models of the $C$-field}
%%%%%%%%%%%%%

The idea that the $C$-field of 11D SuGra needs a global definition originates with \parencites[(3.3)]{Evslin_2003}{AschieriJurco2004}[\S2.7]{HopkinsSinger2005}{DFM2007} (the latter authors speak of a ``model of the $C$-field''), further considered in \cite[\S 2]{Sati2010} and \parencites[(2.1.15)]{FSS14-7D}[\S 4.1]{FSS15-ModuliStack}[(2.1)]{DonagiWijnholt2023}: In these articles, the magnetic flux density $G_4$ is quantized in a (shifted) form of (differential) ordinary integral cohomology (cf. recalled in \cref{OnHigherDiracChargesInDiffOrdCohomology}), while quantization of the electric flux density $G_7$ is not considered (but see \cite{Sati2005}). Accordingly, the peculiar mixed electromagnetic nature of the gauge transformations \cref{LocalCFieldGaugeTransformations} of the local $C$-field potentials (\cref{CFieldGaugeStructure}) is not reflected in these models.  
 
On this point compare the famous proposal of quantizing the RR-flux of 10D II SuGra in topological K-theory (cf. \parencites{Freed2002}[\S 3]{Szabo2013}{GS22-KTheory}):  While not traditionally advertised as such, this is tacitly a joint quantization of both the magnetic RR-fluxes, $F_{\leq 5}$, and of the electric RR-fluxes, $F_{\geq 5}$ (cf. \cite[\S 2.4 \& \S 4.1]{SS25-Flux}).
If 10D IIA SuGra is still to originate from 11D (cf. \cref{SomeReductionsOf11DSuGra}),  globally subject to this flux quantization, then the quantization of $G_7$ in 11D cannot be ignored.
Despite the title of \cite{DMW2003,DMW2000}, this problem has not been addressed before \cite{BaSS26-UnstableK, GiotopoulosSati2026}.

The general notion of what it means to globally complete a (Maxwell-type) higher gauge theory by proper electromagnetic flux quantization was laid out in \parencites{SS24-Phase}, based on the results of \cite{FSS23-Char} and surveyed in \cite{SS25-Flux,SS25-Complete,SS26-HigherGauge}. The application of this method to the $C$-Field in 11D SuGra is considered in: \parencites{FSS15-M5WZW,FSS20-H,FSS21-Hopf,SS20-Tad,GS21,FSS22-Twistorial,Grady2025,BaSS26-UnstableK} 
following \cite[\S2.5]{Sati2018}.

%%%%%%%%%%%%%%%%%
\subsubsection
{Higher Dirac charges in Differential Ordinary Cohomology}
\label
{OnHigherDiracChargesInDiffOrdCohomology}
%%%%%%%%%%%%%%%%%

The one special case of flux quantization that has become more widely appreciated is the higher generalization of Dirac charge quantization, 
where we have a flux density $F_{n+1}$ (in degree $n+1$) subject to the simple Bianchi identity
\begin{equation}
  \mathrm{d}\,
  F_{n+1}
  =
  0
  \mathrlap{\,,}
\end{equation}
and where the corresponding gauge field is globally completed as a 
cocycle in \emph{differential ordinary cohomology}, equivalently modeled by \emph{Deligne cohomology}, \emph{Cheeger-Simons differential characters} and \emph{higher $\mathrm{U}(1)$-bundle gerbes}.

We now briefly review a construction of differential ordinary cohomology that naturally leads to the more general differential nonabelian cohomology that we need for higher-dimensional supergravity.

%%%%%%%%%%%%%%
\paragraph
{Presheaves of Chain Complexes}
%%%%%%%%%%%%%%
To understand this flux quantization, we consider \emph{{\v C}ech cohomology} with coefficients in \emph{sheaves of chain complexes} (exposition for physicists: \cite{Alvarez1985}). The idea of this is the following:

By a \emph{presheaf of chain complexes} $A_\bullet$, we mean here an assignment which to each (super-) Cartesian space $U$ --- hence to each chart of any given (super-) spacetime --- assigns a chain complex in non-negative degrees of abelian groups $A_n(U)$:
\begin{equation}
\label
{PresheafOfChainComplexes}
  A_\bullet(U)
  \defneq
  \Big(
    \inlinetikzcd{
      \cdots
      \ar[
        rr,
        "{ \partial^A_2 }"
      ]
      \&\&
      A_2(U)
      \ar[
        rr,
        "{ \partial^A_1 }"
      ]
      \&\&
      A_1(U)
      \ar[
        rr,
        "{ \partial^A_0 }"
      ]
      \&\&
      A_0(U)
    }
  \Big)
\end{equation}
and to each map of charts a chain map in the other direction
\begin{equation}
\label
{FunctorialityOfPresheafOfChainComplexes}
  \begin{tikzcd}[
    row sep=0pt, 
    column sep=3pt
  ]
    U 
    \ar[
      dd,
      "{ \phi }"
    ]
      &\mapsto&
    A_\bullet(U)
    \\
    &\mapsto&
    \\
    U' &\mapsto&
    A_\bullet\bracket({U'})
    \ar[
      uu,
      "{ \phi^\ast }"
    ]
  \end{tikzcd}
\end{equation}
such that composition of maps and identity maps are respected.

Key examples include: 
\begin{enumerate}
\item
The $k$-fold suspension ($k \in \mathbb{N}$) of an abelian group $A$, regarded as a constant presheaf in degree $k$:
\begin{equation}
\label
{ShiftedAbelianGroupComplexOfSheaves}
  A[k](-)
  :=
  \big(
    \inlinetikzcd{
      A
      \ar[r, "{ 0 }"]
      \&
      0
      \ar[r, "{ 0 }"]
      \&
      0
      \ar[r, ]
      \&
      \cdots
      \ar[r]
      \&
      0
    }
  \big)
  \mathrlap{\,,}
\end{equation}
and, generally,
\begin{equation}
  A_\bullet[k]
  :=
  \begin{cases}
    A_{\bullet-k}
    &
    \text{if $\bullet-k \geq 0$}
    \\
    0 & \text{otherwise.}
  \end{cases}
\end{equation}

\item 
The $k$-shifted de Rham complex:
\begin{equation}
\label{DeRhamComplexOfSheaves}
  \Omega^{k-\bullet}_{\mathrm{dR}}(-)
  :=
  \Big(
    \inlinetikzcd{
      \Omega^0_{\mathrm{dR}}(-)
      \ar[r, "{ \mathrm{d} }"]
      \&
      \cdots
      \ar[r, "{ \mathrm{d} }"]
      \&
      \Omega^{k-2}_{\mathrm{dR}}(-)
      \ar[r, "{ \mathrm{d} }"]
      \&
      \Omega^{k-1}_{\mathrm{dR}}(-)
      \ar[r, "{ \mathrm{d} }"]
      \&
      \Omega^k
        _{\mathcolor{purple}{\mathrm{cl}}}(-)
    }
  \Big)
\end{equation}
ending in the sheaf of \emph{closed} $k$-forms.
\end{enumerate}

%%%%%%%%%%%%%
\paragraph
{Natural Chain Maps}
%%%%%%%%%%%%%
A map of presheaves of chain complexes (\emph{natural chain map}) is, for each chart, a sequence of maps that commute with the differentials,
\begin{equation}
  \begin{tikzcd}[row sep=10pt,
    column sep=15pt
  ]
  A_\bullet(U)
  \ar[
    d,
    "{ f_\bullet(U) }"{swap}
  ]
  &[-14pt]
  \defneq 
  \Big(
      \cdots
      \ar[
        r,
        "{ \partial^A_2 }"
      ]
      &[15pt]
      A_2(U)
      \ar[
        r,
        "{ \partial^A_1 }"
      ]
      \ar[
        d,
        "{ f_2(U) }"{swap, pos=.45}
      ]
      &[15pt]
      A_1(U)
      \ar[
        r,
        "{ \partial^A_0 }"
      ]
      \ar[
        d,
        "{ f_1(U) }"{swap, pos=.45}
      ]
      &[15pt]
      A_0(U)
      \ar[
        d,
        "{ f_0(U) }"{swap, pos=.45}
      ]
  &[-24pt]
  \Big)
  \\
  B_\bullet(U)
  &
  \defneq
  \Big(
      \cdots
      \ar[
        r,
        "{ \partial^B_2 }"
      ]
      &
      B_2(U)
      \ar[
        r,
        "{ \partial^B_1 }"
      ]
      &
      B_1(U)
      \ar[
        r,
        "{ \partial^B_0 }"
      ]
      &
      B_0(U)
      &[-17pt]
  \Big)
  \mathrlap{\,,}
  \end{tikzcd}
\end{equation}
and with the pullbacks \cref{FunctorialityOfPresheafOfChainComplexes}.
This is called a \emph{natural quasi-isomorphism} (also: \emph{global quasi-isomorphism}, since we do not require passage to stalks), denoted ``$\inlinetikzcd{{} \ar[r, "{ \sim }"] \& {}}$", if it induces isomorphisms on all chain homology groups for all $U$. 

For example, there is a canonical chain inclusion
from $\mathbb{R}[n]$ \cref{ShiftedAbelianGroupComplexOfSheaves} into $\Omega^{n-\bullet}_{\mathrm{dR}}$ \cref{DeRhamComplexOfSheaves},
\begin{equation}
  \begin{tikzcd}[row sep=10pt,
    column sep=4pt
  ]
  \mathbb{R}[n]
  \ar[
    d,
  ]
  &
  \defneq
  \big(
  &[-12pt]
      \mathbb{R}
      \ar[
        r,
        "{ 0 }"
      ]
      \ar[
        d,
        hook
      ]
      &[10pt]
      \cdots
      \ar[
        r,
        "{ 0 }"
      ]
      &[15pt]
      0
      \ar[
        r,
        "{ 0 }"
      ]
      \ar[
        d,
        "{ 0 }"{swap}
      ]
      &[15pt]
      0
      \ar[
        r,
        "{ 0 }"
      ]
      \ar[
        d,
        "{ 0 }"{swap}
      ]
      &[15pt]
      0
      \ar[
        d,
        "{ 0 }"{swap}
      ]
  &[-12pt]
  \big)
  \\
  \Omega^{n-\bullet}_{\mathrm{dR}}
  &
  \defneq
  \big(
  &
      \Omega^0_{\mathrm{dR}}
      \ar[
        r,
        "{ \mathrm{d} }"
      ]
      &
      \cdots
      \ar[
        r,
        "{ \mathrm{d} }"
      ]
      &
      \Omega^{n-2}_{\mathrm{dR}}
      \ar[
        r,
        "{ \mathrm{d} }"
      ]
      &
      \Omega^{n-1}_{\mathrm{dR}}
      \ar[
        r,
        "{ \mathrm{d} }"
      ]
      &
      \Omega^n_{\mathrm{cl}}
      &
  \big)
  \mathrlap{\,,}
  \end{tikzcd}
\end{equation}
which is a natural quasi-isomorphism, by the Poincar{\'e} lemma.

One says that a diagram of presheaves of chain complexes of the form shown in the following on the left here is a \emph{homotopy fiber product} of the form shown on the right:
\begin{equation}
\label
{HomotopyPullbackOfChainComplexes}
  \begin{tikzcd}[row sep=small]
    X_\bullet 
      \times_{\widehat{B}_\bullet} 
    \widehat{Y}_\bullet
    \ar[r]
    \ar[d]
    \ar[
      dr,
      phantom,
      "{ \lrcorner }"{pos=.1}
    ]
    &
    \widehat{Y}_\bullet
    \ar[
      d,
      ->>
    ]
    \ar[
      r,
      <-,
      "{ \sim }"
    ]
    &
    Y_\bullet
    \ar[d]
    \\
    X_\bullet
    \ar[r]
    &
    \widehat{B}_\bullet
    \ar[
      r,
      <-,
      "{ \sim }"
    ]
    &
    B_\bullet
  \end{tikzcd}
  \;\;\;
  \Rightarrow
  \;\;\;
  \begin{tikzcd}[row sep=small, 
    column sep=25pt
  ]
    X_\bullet
    \times^h_{B_\bullet}
    Y_\bullet
    \ar[r]
    \ar[d]
    &
    Y_\bullet
    \ar[d]
    \ar[
      dl,
      Rightarrow,
      shorten=15pt
    ]
    \\
    X_\bullet
    \ar[r]
    &
    B_\bullet
    \mathrlap{\,.}
  \end{tikzcd}
\end{equation}

%%%%%%%%%%%%%
\paragraph
{{\v C}ech Cohomology}
%%%%%%%%%%%%
Given a presheaf of chain complexes $A_\bullet$ \cref{PresheafOfChainComplexes}, then for $X$ a \text{(super-)} manifold and $\inlinetikzcd{ U_i \ar[r, hook, "{ \iota_i }"] \& X }$ a good open cover, meaning an open cover such that all chart intersections
\begin{equation}
  U_{i_1 \cdots i_k}
  :=
  U_{i_1} 
    \cap \cdots \cap
  U_{i_k}
\end{equation}
are themselves charts (if not empty), the \emph{{\v C}ech cohomology} of $X$ with coefficients in $A_\bullet$ is the chain homology of the \emph{{\v C}ech complex} with entries
\begin{equation}
  C_n\bracket({X;A_\bullet})
  :=
  \bigoplus_{l-k = n}
  \,
  \bigoplus_{i_0 \cdots i_k}
  A_{l}\bracket({
    U_{i_0 \cdots i_k}
  })  
\end{equation}
and differential
\begin{equation}
  \begin{aligned}
    \bracket({
      \partial a
    })_{i_0 \cdots i_{k+1}}
    & :=
    \bracket({
      \partial^A a
      +
      (-1)^n
      \delta a
    })_{i_0 \cdots i_{k+1}}
    \\[2pt]
    & :=
      \partial^A 
      a_{i_0 \cdots i_{k+1}}
      +
      (-1)^n
      \sum_{
        0 \leq j \leq k+1
      }
      (-1)^j
      a_{
        i_0 \cdots 
          i_{j-1} i_{j+1}
          \cdots
          i_{k+1}
      }
      \vert_{
        U_{i_0 \cdots i_{k+1}}
      }
      \mathrlap{\,.}
  \end{aligned}
\end{equation}
We write:
\begin{equation}
\label
{CechCohomology}
  H^n\bracket({
    X;
    A_\bullet
  })
  :=
  \frac
   {
     Z_0\bracket({
       X;
       A_\bullet[n]
      })
  }
  {
    B_0\bracket({
      X; 
      A_\bullet[n]
    })
  }
  :=
  \frac{
    \mathrm{ker}
    \big(
    \inlinetikzcd{
      C_0\bracket({X;A_\bullet[n]})
      \ar[r, "{ \partial_{-1} }"]
      \&
      C_{-1}\bracket({X;A_\bullet[n]})
    }
    \big)
  }{
    \mathrm{im}
    \big(
    \inlinetikzcd{
      C_1\bracket({X;A_\bullet[n]})
      \ar[r, "{ \partial_0 }"]
      \&
      C_{0}\bracket({X;A_\bullet[n]})
    }
    \big)    }
    \mathrlap{\,.}
\end{equation}

Examples of {\v C}ech cohomologies:
\begin{enumerate}
\item
\emph{Integral ordinary cohomology} is {\v C}ech cohomology with coefficients in $\mathbb{Z}$ \cref{ShiftedAbelianGroupComplexOfSheaves};
\begin{equation}
\label
{IntegralCohomologyAsCechCohomology}
  H^n\bracket({
    X; \mathbb{Z}
  })
  \simeq
  H^0\bracket({
    X; \mathbb{Z}[n]
  })
  \mathrlap{\,.}
\end{equation}
The cocycles are
\begin{equation}
  \big(
    n_{i_0 i_1 \cdots i_n}
    \in
    \mathbb{Z}
  \big)_{\! i_j \in I}
  \text{ s.t. }
  \left\{
    \begin{aligned}
    & n_{i_1 i_2 i_3 \cdots i_{n+1}}
    \\
    - &
    n_{i_0 i_2 i_3 \cdots i_{n+1}}
    \\
    + &
    n_{i_0 i_1 i_3 \cdots i_{n+1}}
    \\
    - & \cdots
    \end{aligned}
    =
    0
  \right.
  \text{
    on
    $U_{i_0 i_1 \cdots i_{n+1}}$.
  }
\end{equation}
For instance, consider flat $1+d$-dimensional spacetime $\mathbb{R}^{1,d}$ with the worldvolume of a $p$-brane removed. One finds:
\begin{equation}
\label
{OrdinaryCohomologyOfHigherDiracMonopoles}
  H^{n}\big({
    \underbrace{
    \mathbb{R}^{1,d}
    \setminus
    \mathbb{R}^{1,p}
    }_{
    \mathclap{
    \mathbb{R}^{1,p}
    \,\times\, 
    \mathbb{R}_{> 0}
    \,\times\,
    S^{d-p-1}    
    }
    }
    \,;
    \mathbb{Z}
  }\big)
  \simeq
  H^n\bracket({
    S^{d-p-1}
    ;
    \mathbb{Z}
  })
  =
  \begin{cases}
    \mathbb{Z}
    &
    \text{ if $n = d-p-1$}
    \\
    0 & \text{ otherwise.}
  \end{cases}
\end{equation}
These are the possible charges of \emph{higher Dirac monopoles} with charges in ordinary cohomology. We are next after  higher gauge fields that reflect these charges.

\item
Given a sheaf $A$ of abelian groups regarded as a chain complex concentrated in degree 0, then 
\begin{equation}
  H^0\bracket({X; A})
  \simeq
  A(X)
\end{equation}
is simply the group of global sections of that sheaf. For instance
\begin{equation}
  H^0\bracket({
    X; 
    \Omega^{k}_{\mathrm{dR}}
  })
  \simeq
  \Omega^k_{\mathrm{dR}}\bracket({X})
  \mathrlap{\,.}
\end{equation}

\end{enumerate}

%%%%%%%%%%%%%%
\paragraph
{Deligne Cohomology and Dirac Charge Quantization}
%%%%%%%%%%%%%

Differential ordinary cohomology is, as a {\v C}ech cohomology theory, represented by the smooth \emph{Deligne complex} (cf. \parencites[\S I.5]{Brylinski1993}{Gajer1997}, also \parencites[\S 2.5]{FSS13-CupCS}[\S 3.1]{FSS15-Stacky}{Grady2017}{Grady2019}), which is a kind of combination of \cref{DeRhamComplexOfSheaves} and \cref{ShiftedAbelianGroupComplexOfSheaves}, but ending in not-necessarily closed forms:
\begin{equation}
\label
{TheDeligneComplex}
  \mathbb{Z}[n+1]_{\mathrm{dff}}(-)
  :=
  \Big(
    \inlinetikzcd{
      \mathbb{Z}
      \ar[r, hook]
      \&
      \Omega^0_{\mathrm{dR}}(-)
      \ar[r, "{ \mathrm{d} }"]
      \&
      \cdots
      \ar[r, "{ \mathrm{d} }"]
      \&
      \Omega^{n-2}_{\mathrm{dR}}(-)
      \ar[r, "{ \mathrm{d} }"]
      \&
      \Omega^{n-1}_{\mathrm{dR}}(-)
      \ar[r, "{ \mathrm{d} }"]
      \&
      \Omega^n_{\mathrm{dR}}(-)
    }
  \Big)
  \mathrlap{\,.}
\end{equation}
The {\v C}ech cohomology with these coefficients is \emph{differential ordinary cohomology} in the guise of smooth \emph{Deligne cohomology}:
\begin{equation}
  H^{n+1}_{\mathrm{dff}}\bracket({
    X;
    \mathbb{Z}
  })
  \simeq
  H^0\bracket({
    X;
    \mathbb{Z}[n+1]_{\mathrm{dff}}
  })
  \mathrlap{\,.}
\end{equation}

For instance, a 0-cycle with coefficients in $\mathbb{Z}_{\mathrm{dff}}[2]$ is
\begin{equation}
\label
{0CocycleInCechDeligne2Cohomology}
\hspace{-2.5mm} 
  \left(
  \begin{aligned}
    A_{i_0} 
    &
    \in
    \Omega^1_{\mathrm{dR}}\bracket({
      U_{i_0}
    })
    \\
    \lambda_{i_0 i_1}
    & 
    \in
    \Omega^0_{\mathrm{dR}}\bracket({
      U_{i_0 i_1}
    })
    \\
    n_{i_0 i_1 i_2}
    &
    \in
    \mathbb{Z}
  \end{aligned}
  \! \right)_{\!\!\mathrlap{i_j \in I}}
  \;\;\,\text{s.t.} \,
  \left\{
  \begin{alignedat}{2}
    \mathrm{d}\lambda_{i_0 i_1}
    &
    +
    A_{i_1} 
      - 
    A_{i_0}
    &&
    = 
    0
    \;
    \text{
      on
      $U_{i_0 i_1}$
    }
    \\
    n_{i_0 i_1 i_2}
    &+ 
    \lambda_{i_1 i_2}
    -
    \lambda_{i_0 i_2}
    +
    \lambda_{i_0 i_1}
    && 
    =
    0
    \text{
      on
      $U_{i_0 i_1 i_2}$
    }
    \\
    & 
    n_{i_1 i_2 i_3}
    - 
    n_{i_0 i_2 i_3}
    +
    n_{i_0 i_1 i_3}
    -
    n_{i_0 i_1 i_2}
    &&
    = 0
    \text{
      on
      $U_{i_0 i_1 i_2 i_3}$.
    }
  \end{alignedat}
  \right.
\end{equation}
This is exactly the {\v C}ech data for the electromagnetic field under ordinary flux quantization. 

Similarly, a 1-chain between a pair of such 0-cycles is
\begin{equation}
\hspace{-2mm} 
  \left(
  \begin{aligned}
    a_{i_0} 
    & \in 
    \Omega^0_{\mathrm{dR}}\bracket({
      U_{i_0}
    })
    \\
    \kappa_{i_0 i_1} & \in
    \mathbb{Z}
  \end{aligned}
  \right)_{\!\!\mathrlap{i_j \in I}}
  \;\;\;\;\text{s.t.}\;
  \left\{
  \begin{alignedat}{5}
    \mathrm{d}\, a_{i_0}
    &
    &&=
    A'_{i_0} - A_{i_0}
    &\;&
    \text{on
      $U_{i_0}$
    }
    \\
    \kappa_{i_0 i_1}
    &
    -
    a_{i_1} + a_{i_0}
    &&
    =
    \lambda'_{i_0 i_1} - 
    \lambda_{i_0 i_1}
    &&
    \text{on
      $U_{i_0 i_1}$
    }
    \\
    &
    -\kappa_{i_1 i_2}
    +
    \kappa_{i_0 i_2}
    -
    \kappa_{i_0 i_1}
    &&
    =
    n'_{i_0 i_1 i_2}
    - 
    n_{i_0 i_1 i_2}
    &&
    \text{on
      $U_{i_0 i_1 i_2}$,
    }
  \end{alignedat}
  \right.
\end{equation}
which is the global data of a gauge transformation between a pair of globally completed electromagnetic fields.

\begin{figure}[htb]
\caption{\label{DeligneComplexAsPullback}
Shown is how the Deligne complex \cref{TheDeligneComplex} completes a (homotopy) pullback diagram \cref{HomotopyPullbackOfChainComplexes} of chain complexes which presents the homotopy pullback \cref{DeligneComplexAsHomotopyPullback}, cf. \cite[proof of Prop. 9.5]{FSS23-Char}.
}
\centering
\adjustbox{
  rndfbox=4pt
}{
$
  \begin{tikzcd}[column sep=50pt,
    ampersand replacement=\&
  ]
    \left(
    \begin{aligned}
      &
      \;\mathbb{Z}
      \\
      &
      \;
      \rotatebox[origin=c]{-90}
        {$\hookrightarrow$}
      \\
      &
      \Omega^0_{\mathrm{dR}}
      \\
      &
      \downarrow\!\!
      \mathrlap{\scalebox{.7}{$\mathrm{d}$}}
      \\
      &
      \Omega^1_{\mathrm{dR}}
      \\
      &
      \downarrow\!\!
      \mathrlap{\scalebox{.7}{$\mathrm{d}$}}
      \\
      &
      \;\,
      \vdots
      \\
      &
      \downarrow\!\!
      \mathrlap{\scalebox{.7}{$\mathrm{d}$}}
      \\
      &
      \Omega^{n-1}_{\mathrm{dR}}
      \\
      &
      \downarrow\!\!
      \mathrlap{\scalebox{.7}{$\mathrm{d}$}}
      \\
      &
      \Omega^n_{\mathrm{dR}}
    \end{aligned}
    \right)
    \ar[
      dr,
      phantom,
      "{ \lrcorner }"{pos=.4}
    ]
    \ar[
      d,
      "{
        \scaledbracket({
          \substack{
            0
            \\
            0
            \\
            \vdots
            \\
            0
            \\
            \mathrm{d}
          }
        })
      }"
    ]
    \ar[
      r,
      "{ i_1 }"
    ]
    \&[20pt]
    \left(
    \def\arraycolsep{0pt}
    \def\arraystretch{1.05}
    \begin{array}{ccc}
      \;\mathbb{Z}
      &
      \oplus
      &
      \Omega^0_{\mathrm{dR}}
      \\
      \rotatebox[origin=c]{-90}
        {$\hookrightarrow$}
      &
      \rotatebox[origin=c]{+45}
        {$\overset{\!\!\pm\mathrm{id}}{\longleftarrow}$}
      &
      \downarrow\!\!
      \mathrlap{\scalebox{.7}{$\mathrm{d}$}}
      \\
      \Omega^0_{\mathrm{dR}}
      &\oplus&
      \Omega^1_{\mathrm{dR}}
      \\
      \downarrow\!\!
      \mathrlap{\scalebox{.7}{$\mathrm{d}$}}
      &
      \rotatebox[origin=c]{+45}
        {$\overset{\!\!\mp\mathrm{id}}{\longleftarrow}$}
      &
      \downarrow\!\!
      \mathrlap{\scalebox{.7}{$\mathrm{d}$}}
      \\
      \Omega^1_{\mathrm{dR}}
      &
      \oplus
      &
      \Omega^2_{\mathrm{dR}}
      \\
      \downarrow\!\!
      \mathrlap{\scalebox{.7}{$\mathrm{d}$}}
      &
      \rotatebox[origin=c]{+45}
        {$\overset{\!\!\pm\mathrm{id}}{\longleftarrow}$}
      &
      \downarrow\!\!
      \mathrlap{\scalebox{.7}{$\mathrm{d}$}}
      \\
      \vdots
      &
      \,
      \vdots
      &
      \vdots
      \\
      \downarrow\!\!
      \mathrlap{\scalebox{.7}{$\mathrm{d}$}}
      &
      \rotatebox[origin=c]{+45}
        {$\overset{\!\!-\mathrm{id}}{\longleftarrow}$}
      &
      \downarrow\!\!
      \mathrlap{\scalebox{.7}{$\mathrm{d}$}}
      \\
      \Omega^{n-1}_{\mathrm{dR}}
      &\oplus&
      \Omega^n_{\mathrm{dR}}
      \\
      \downarrow\!\!
      \mathrlap{\scalebox{.7}{$\mathrm{d}$}}
      &
      \rotatebox[origin=c]{+45}
        {$\overset{\!\!+\mathrm{id}}{\longleftarrow}$}
      &
      \\
      \Omega^n_{\mathrm{dR}}
    \end{array}
    \right)
    \ar[
      d,
      ->>,
      "{
        \scaledbracket({
          \substack{
            \mathrm{pr}_2
            \\
            \mathrm{pr}_2
            \\
            \vdots
            \\
            \mathrm{pr}_2
            \\
            \mathrm{d}
          }
        })
      }"
    ]
    \&[20pt]
    \left(
    \begin{aligned}
      &
      \;\mathbb{Z}
      \\
      &
      \downarrow
      \\
      &
      \;
      0
      \\
      &
      \downarrow
      \\
      &
      \;
      0
      \\
      &
      \downarrow
      \\
      &
      \;\,
      \vdots
      \\
      &
      \downarrow
      \\
      &
      \;
      0
      \\
      &
      \downarrow
      \\
      &
      \;
      0
    \end{aligned}
   \, \right)
    \ar[
      l,
      "{ \sim }"{swap}
    ]
    \ar[
      d,
      "{
        \bracket({
        \substack{
          i
          \\
          0
          \\
          \vdots
          \\
          0
          \\
          0
        }
        })
      }"
    ]
    \\[35pt]
    \left(
    \begin{aligned}
      &
      \;\,0
      \\
      &
      \,
      \downarrow
      \\
      &
      \;\,
      0
      \\
      &
      \,
      \downarrow
      \\
      &
      \;\,
      0
      \\
      &
      \,
      \downarrow
      \\
      &
      \;\,
      \vdots
      \\
      &
      \,
      \downarrow
      \\
      &
      \;\,
      0
      \\
      &
      \,
      \downarrow
      \\
      &
      \Omega^{n+1}_{\mathrm{cl}}
    \end{aligned}
    \right)
    \ar[
      r,
      "{ 
        \bracket({
        \substack{
          0
          \\
          0
          \\
          \vdots
          \\
          0
          \\
          \mathrm{id}
        }
        })
      }"{description}
    ]
    \&
    \left(
    \begin{aligned}
      &
      \Omega^0_{\mathrm{dR}}
      \\
      &
      \downarrow\!\!
      \mathrlap{\scalebox{.7}{$\mathrm{d}$}}
      \\
      &
      \Omega^1_{\mathrm{dR}}
      \\
      &
      \downarrow\!\!
      \mathrlap{\scalebox{.7}{$\mathrm{d}$}}
      \\
      &
      \;\,
      \vdots
      \\
      &
      \downarrow\!\!
      \mathrlap{\scalebox{.7}{$\mathrm{d}$}}
      \\
      &
      \Omega^{n-1}_{\mathrm{dR}}
      \\
      &
      \downarrow\!\!
      \mathrlap{\scalebox{.7}{$\mathrm{d}$}}
      \\
      &
      \Omega^n_{\mathrm{dR}}
      \\
      &
      \downarrow\!\!
      \mathrlap{\scalebox{.7}{$\mathrm{d}$}}
      \\
      &
      \Omega^{n+1}_{\mathrm{cl}}
    \end{aligned}
    \right)
    \&
    \left(
    \begin{aligned}
      &
      \;\mathbb{R}
      \\
      &
      \downarrow
      \\
      &
      \;
      0
      \\
      &
      \downarrow
      \\
      &
      \;
      0
      \\
      &
      \downarrow
      \\
      &
      \;\,
      \vdots
      \\
      &
      \downarrow
      \\
      &
      \;
      0
      \\
      &
      \downarrow
      \\
      &
      \;
      0
    \end{aligned}
   \, \right)
    \ar[
      l,
      "{ \sim }"{swap}
    ]
  \end{tikzcd}
$
}
\end{figure}

Under the cohomology operation induced by the evident chain map projecting out the integer piece in \cref{TheDeligneComplex},
\begin{equation}
  \begin{tikzcd}[
    row sep=0pt
  ]
    \mathbb{Z}[n+1]_{\mathrm{dff}}
    \ar[r, "{ \boldsymbol{\chi} }"]
    &
    \mathbb{Z}[n+1]
    \\
    H^{n+1}_{\mathrm{dff}}\bracket({X; \mathbb{Z}})
    \ar[r, "{ \chi }"]
    &
    H^{n+1}\bracket({X; \mathbb{Z}})
    \mathrlap{\,,}
  \end{tikzcd}
\end{equation}
these differential cocycles project onto the integral cocycles \cref{IntegralCohomologyAsCechCohomology}. In particular, we see that underlying the globalized electromagnetic field \cref{0CocycleInCechDeligne2Cohomology} are integrally quantized charges of magnetic monopoles \cref{OrdinaryCohomologyOfHigherDiracMonopoles}. This is the modern form of Dirac's original magnetic charge/flux quantization argument.

%%%%%%%%%%%%
\paragraph
{Differential Ordinary Cohomology as a Fiber Product}
%%%%%%%%%%%%

The point for our purposes now is a slick reformulation of Deligne cohomology in a way that generalizes to flux quantization of Bianchi identities with non-linear self-couplings.
To this end, we observe (in \cref{DeligneComplexAsPullback}) that the Deligne complex 
\cref{TheDeligneComplex} is a homotopy fiber product \cref{HomotopyPullbackOfChainComplexes}
of this form:
\begin{equation}
\label
{DeligneComplexAsHomotopyPullback}
  \begin{tikzcd}[row sep=15pt,
    column sep=25pt]
    \mathbb{Z}[n]_{\mathrm{dff}}
    \ar[
      r
    ]
    \ar[d]
    \ar[
      dr,
      phantom,
      "{ \lrcorner }"{pos=0}
    ]
    &
    \mathbb{Z}[n]
    \ar[d]
    \ar[
      dl,
      Rightarrow,
      shorten=10pt
    ]
    \\
    \Omega^{n}_{\mathrm{cl}}
    \ar[r]
    &
    \mathbb{R}[n]
  \end{tikzcd}
  \;\;\;
  \text{hence}
  \;\;\;
  \mathbb{Z}[n]_{\mathrm{dff}}
  \simeq
  \Omega^n_{\mathrm{dR}}
  \underset{\mathbb{R}[n]}{\times^h}
  \mathbb{Z}[n]
  \mathrlap{\,.}
\end{equation}

This makes nicely manifest how passage to Deligne cohomology is the result of combining 
\begin{enumerate}
\item
closed differential forms (abelian flux densities) with 
\item
integral cohomology classes (the corresponding Dirac charges), 
\end{enumerate}
compatibly with their joint image in real cohomology.

\clearpage

%%%%%%%%%%%%%
\subsubsection
{General Charges in Differential Nonabelian Cohomology}
\label
{OnGeneralChargesInDiffNonabCohomology}
%%%%%%%%%%%%%

We now need to generalize the above situation (\cref{OnHigherDiracChargesInDiffOrdCohomology}) from chain complexes to their non-linear generalization to higher gauged sets (\cref{HigherGaugedSets}).

%%%%%%%%%%%%%
\paragraph
{From Chain Complexes to Higher Gauged Sets}
%%%%%%%%%%%%%

As we generalize from higher Maxwell/Dirac theory with its linear Bianchi identities $\mathrm{d} F_{n+1} = 0$, to general higher Maxwell\emph{-type} Bianchi identities (\cref{OnHigherMaxwellTypeBianchiIdentities}) characterized by non-abelian $L_\infty$-algebras $\mathfrak{a}$ \cref{TheCharacteristicLInfinityAlgebra}, the set of solutions to these Bianchis is no longer an abelian group $\Omega^{n+1}_{\mathrm{cl}}(X) \simeq \Omega^1_{\mathrm{cl}}\bracket({X; \mathbb{R}[n]})$, but just a set $\Omega^1_{\mathrm{cl}}\bracket({X;\mathfrak{a}})$: The sum of solutions to a non-linear equation like \cref{CFieldEoM} is no longer a solution, in general.

Hence we are to generalize from presheaves of chain complexes \cref{PresheafOfChainComplexes} to presheaves of Kan simplicial sets (cf. \cref{FromChainComplexesToSimplicialSets}).

\begin{SCfigure}[1][htb]
\caption{\label
{FromChainComplexesToSimplicialSets}
The generalization from linear to non-linear Bianchi identities requires generalizing the presheaves (over Cartesian charts) of chain complexes, which describe ordinary Dirac flux quantization, to presheaves of (Kan) simplicial sets. 
}
\centering
\adjustbox{scale=0.88, 
  rndfbox=4pt
}{
\begin{tikzcd}[
  row sep=20pt
]
\colorbox{lightgray}{$
\text{presheaves of chain complexes} 
$}
\ar[
  d,
  "{
    \text{generalize}
  }"{swap},
  "{
    \text{to}
  }" 
]
\\
\colorbox{lightgray}{$
\text
{presheaves of simplicial sets}
$}
\ar[
  d,
  "{
    \text{are models}
  }"{swap},
  "{
    \text{for}
  }"
]
\\
\colorbox{lightgray}{$
\text
{smooth $\infty$-groupoids/stacks}
$}
\end{tikzcd}
}
\end{SCfigure}

Regarded under stalkwise homotopy equivalence, these presheaves of simplicial sets on charts are called \emph{smooth $\infty$-groupoids} (or \emph{smooth $\infty$-stacks}), cf. \parencites[\S 1]{FSS23-Char}{FSS15-Stacky}[\S 4]{SS26-Bun}{SS25-Orient}.

For instance, the chain complex $A[n]$ is really a stand-in for the Eilenberg-MacLane space traditionally denoted $K\bracket({A,n})$, which we denote $B^n A$ (cf. \cref{Homotopy}).
A classical result (cf. \cite[Ex. 2.1]{FSS23-Char}) identifies this as the classifying space for ordinary cohomology, in that ordinary cohomology classes are equivalently homotopy classes of maps into such a EM-space:
\begin{equation}
  H^n\bracket({
    X;
    \mathbb{Z}
  })
  \simeq
  \pi_0
  \,\mathrm{Map}\bracket({
    X, B^n \mathbb{Z}
  })
  \mathrlap{\,.}
\end{equation}
This shows that the space $B^n \mathbb{Z}$ indeed plays the same role as the chain complex $\mathbb{Z}[n]$ in \cref{IntegralCohomologyAsCechCohomology}.

In this sense, the above presheaves of chain complexes generalize to presheaves of spaces.
In this more general context, the homotopy pullback \cref{DeligneComplexAsHomotopyPullback} hence reads equivalently:
\begin{equation}
  \begin{tikzcd}[row sep=15pt,
    column sep=25pt
  ]
    \bracket({
      B^n \mathbb{Z}
    })_{\mathrm{dff}}
    \ar[
      r
    ]
    \ar[d]
    \ar[
      dr,
      phantom,
      "{ \lrcorner }"{pos=.1}
    ]
    &
    B^n \mathbb{Z}
    \ar[d]
    \ar[
      dl,
      Rightarrow,
      shorten=15pt
    ]
    \\
    \Omega^{n}_{\mathrm{cl}}
    \ar[r]
    &
    B^n \mathbb{R}
    \mathrlap{\,.}
  \end{tikzcd}
\end{equation}

In this formulation, cocycles in the cohomology theories represented by the vertices of this diagram are represented by maps into these spaces:
\begin{equation}
\label
{AbelianDiracQuantizationDiagrammatically}
  \begin{tikzcd}[row sep=small]
    &
    & 
    B^n \mathbb{Z}
    \ar[dd]
    \\
    X
    \ar[
      urr,
      bend left=20,
      "{ \chi }"{
        swap,
        name=charge
      }
    ]
    \ar[
      dr,
      "{ F_{n} }"{
        name=flux
      }
    ]
    \ar[
      Rightarrow,
      from=charge,
      to=flux,
      "{\color{darkgreen} \widehat{A} }"
    ]
    \\
    &
    \Omega^{n}_{\mathrm{cl}}
    \ar[
      r
    ]
    &
    B^n \mathbb{R}
    \mathrlap{\,.}
  \end{tikzcd}
\end{equation}

%%%%%%%%%%%%%
\paragraph
{IR-Completed Fields in Differential Nonabelian Cohomology}
%%%%%%%%%%%%%
This perspective now makes it clear how to generalize Dirac flux quantization from linear to non-linear Bianchi identities characterized by possibly non-abelian  $L_\infty$-algebras $\mathfrak{a}$. We need:
\begin{enumerate}

  \item 
  a classifying space $\mathcal{A}_{\mathbb{R}}$ for $\mathfrak{a}$-valued non-abelian de Rham cohomology,

  \item 
  a canonical map
  \[
    \inlinetikzcd{
      \Omega^1_{\mathrm{cl}}(-;\mathfrak{a})
      \ar[r]
      \&
      \mathcal{A}_{\mathbb{R}}
    }
    \mathrlap{\,,}
  \]
  which classifies sending $\mathfrak{a}$-valued closed forms to their non-abelian de Rham cohomology class,

  \item
  a classifying \emph{nonabelian character map}
  \[
    \begin{tikzcd}
      \mathcal{A}
      \ar[
        r,
        "{ \mathbf{ch}^{\mathcal{A}} }"
      ]
      &
      \mathcal{A}_{\mathbb{R}}
      \mathrlap{\,,}
    \end{tikzcd}
  \]
  which extracts exactly those aspects of $\mathcal{A}$-valued charges that are reflected by $\mathfrak{a}$-valued differential forms,
\end{enumerate}
for any classifying space $\mathcal{A}$ which is compatible with the given characteristic $L_\infty$-algebra $\mathfrak{a}$, in that
\begin{equation}
\label
{CompatibilityForCLassifyingSpace}
  \mathcal{A}_{\mathbb{R}}
    \simeq 
  \shape \Omega^1_{\mathrm{dR}}(-;\mathfrak{a})
  \mathrlap{\,.}
\end{equation}

The fact that such \emph{character maps in nonabelian cohomology} exist, and how they are constructed, is the content of the eponymous monograph \cite{FSS23-Char}, using classical results of rational homotopy theory. Expository surveys are given in \parencites[\S 3]{SS25-Flux}[\S 4.1.2]{SS26-HigherGauge}.
This shows that the compatibility condition \cref{CompatibilityForCLassifyingSpace} is equivalent to the demand that the \emph{real Whitehead-bracket} $L_\infty$-algebra $\mathfrak{l}(-)$ of $\mathcal{A}$ \cref{WhiteheadBracketLInfinityAlgebra} is $\mathfrak{a}$:
\begin{equation}
\label
{WhBracketCoincidingWithCharacteristic}
  \mathfrak{l}\mathcal{A}
  \simeq
  \mathfrak{a}
  \mathrlap{\,.}
\end{equation}

With this in hand, the now evident generalization of the previous situation \cref{AbelianDiracQuantizationDiagrammatically} yields the desired global data for a higher gauge field with duality-symmetric Bianchi identities characterized by $\mathfrak{a}$ and charges classified by $\mathcal{A}$ subject to the compatibility condition \cref{CompatibilityForCLassifyingSpace,WhBracketCoincidingWithCharacteristic}; such a field configuration is the data of the dashed maps in a diagram of smooth $\infty$-groupoids (\cref{FromChainComplexesToSimplicialSets}) of the following form (cocycles in \emph{differential $\mathcal{A}$-cohomology} \cite[Def. 9.3]{FSS23-Char}): 
\begin{equation}
\label{DifferentialNonabelianCohomologyDiagram}
\hspace{-3mm}
  \begin{tikzcd}[
   row sep=0pt, column sep=5pt
  ]
    &[10pt]
    &[10pt]
    &
    \mathcal{A}
    \ar[
      dd,
      "{ \mathbf{ch}^{\mathcal{A}} }"
    ]
    \\[35pt]
    X
    \ar[
      urrr,
      dashed,
      bend left=20,
      "{ \chi }"{
        swap,
        name=charge,
        pos=.3
      },
      "{
        \text{\color{darkblue}charges}
      }"{sloped, pos=.3}
    ]
    \ar[
      ddr,
      dashed,
      "{ \vec F }"{
        name=flux
      },
      "{
        \text{\color{darkblue}fluxes}
      }"{sloped, swap}
    ]
    \ar[
      Rightarrow,
      from=charge,
      to=flux,
      dashed,
      "{ \widehat{A} }"{swap},
      "{
        \text{\color{darkgreen}potentials}
      }"{sloped, , yshift=-2pt, swap}
    ]
    \\
    &&&
    \mathcal{A}_{\mathbb{R}}
    \\
    &
    \Omega^{1}_{\mathrm{cl}}({
      -;\mathfrak{a}
    })
    \ar[
      r
    ]
    &
    \shape \Omega^1_{\mathrm{cl}}({
      -;\mathfrak{a}
    })
    \ar[
      ur,
      phantom,
      "{ \simeq }"{sloped}
    ]
  \end{tikzcd}
  \;\;
  \Leftrightarrow
  \;
  \begin{tikzcd}[sep=35pt]
    X
    \ar[
      r, 
      dashed,
      "{
        (\vec F, \hat A, \chi)
      }"
    ]
    &
    \mathcal{A}_{\mathrm{dff}}
    :=
    \Omega^1_{\mathrm{cl}}(
      -;
      \mathfrak{a}
    )
    \underset{
      \mathcal{A}_{\mathbb{R}}
    }{
      \times^h
    }
    \mathcal{A}
    \mathrlap{\,.}
  \end{tikzcd}
\end{equation}

\begin{example}
  The global completions of 11D SuGra are parameterized by classifying spaces $\mathcal{A}$ whose $\mathbb{R}$-rational homotopy type is that of the 4-sphere \cref{MTheoryGaugeAlgebra}:
  \begin{equation}
  \label
  {AdmissibilityConditionFor11DSuGra}
    \mathfrak{l}\mathcal{A}
    \simeq
    \mathfrak{l}S^4
    \,.
  \end{equation}
  For any such choice, a global $C$-field configuration according to \cref{DifferentialNonabelianCohomologyDiagram} is given with respect to an open cover $\{ \inlinetikzcd{ U_i \ar[r, "{ \iota_i }"] \& X } \}_{i \in I}$ of (super-) spacetime by {\v C}ech data of the following form:
  \begin{enumerate}
    \item
    On charts $U_i$ potentials \cref{LocalCFieldPotential} for the restriction of the flux densities to these charts:
    \begin{equation}
      \left(
      \begin{aligned}
        C^{(i)}_3
        \\
        C^{(i)}_6
      \end{aligned}
      \right)
      \;\; \text{ s.t. } \;\;
      \begin{aligned}
        \mathrm{d}\, C^{(i)}_3 & = G_4
        \\
        \mathrm{d}\, C^{(i)}_6 & = G_7 
        - \tfrac{1}{2} C^{(i)}_3 G_4 
        \mathrlap{\,.}
      \end{aligned}
    \end{equation}

    \item
    On double intersections $U_{i j}$ gauge transformations \cref{LocalCFieldGaugeTransformations} between the restrictions of the gauge potentials on $U_i$ and $U_j$, respectively:
    \begin{equation}
      \begin{aligned}
        \mathrm{d}\, C^{(ij)}_2
        & =
        C_3^{(j)} - C_3^{(i)}
        \\
        \mathrm{d}\, C^{(ij)}_5
        & =
        C_6^{(j)} - C_6^{(i)}
        -
        \tfrac{1}{2}C^{(j)}_3
        C_3^{(i)}
        \mathrlap{\,.}
      \end{aligned}
    \end{equation}

    \item 
    On triple intersections $U_{i j k}$ gauge-of-gauge transformations \cref{LocalCFieldGaugeOfGaugeTransformations} 
        \[
      \begin{tikzcd}[
        row sep=50pt
      ]
        &
        \left(
        \substack{
          C_3^{(j)}
          \\
          C_6^{(j)}
        }
        \right)
        \ar[
          dr,
          "{
            \left(\color{darkblue}
            \substack{
              C_2^{(jk)}
              \\
              C_5^{(jk)}
            }
            \right)
          }"
        ]
        \ar[
          d,
          Rightarrow,
          shorten=5pt,
          "{\color{darkgreen}
            \left(
            \substack{
              C_1^{(ijk)}
              \\
              C_4^{(ijk)}
            }
            \right)
          }"{description}
        ]
        \\
        \left(
        \substack{
          C_3^{(i)}
          \\
          C_6^{(i)}
        }
        \right)
        \ar[
          ur,
          "{\color{darkblue}
            \left(
            \substack{
              C_2^{(ij)}
              \\
              C_5^{(ij)}
            }
            \right)
          }"
        ]
        \ar[
          rr,
          "{\color{darkblue}
            \left(
            \substack{
              C_2^{(ik)}
              \\
              C_5^{(ik)}
            }
            \right)
          }"{swap}
        ]
        &{}&
        \left(
        \substack{
          C_3^{(k)}
          \\
          C_6^{(k)}
        }
        \right)
      \end{tikzcd}
      \;\;\;\;\;\;
      \text{on $U_{i j k}$}
    \]
    between the composition \cref{CompositionOfCFieldGaugeTransformations}, now for any (non-necessarily fixed) $\lambda \in \mathbb{R}$, of the restriction of the gauge transformations on $U_{i j}$ and $U_{jk}$ to the restriction of the gauge transformation on $U_{i k}$:
    \begin{equation}
      \begin{aligned}
        \mathrm{d}\,
        C_1^{(i j k)}
        & =
        C_2^{(i k)}
        -
        C_2^{(i j)} - C_2^{(j k)}
        \\
        \mathrm{d}\,
        C_4^{(i j k)}
        & =
        C_5^{(i k)}
        -
        C_5^{(i j)}
        -
        C_5^{(j k)}
        +
        \tfrac{1}{2}
        C_2^{(ij)}
        \mathrm{d}
        C_2^{(j k)}
        -
        \tfrac{\lambda}{2}\dd \big(
        C_2^{(i j)}
        C_2^{(jk)} \big) 
        \mathrlap{\,.}
      \end{aligned}
    \end{equation}
    
    \item 
    On higher intersections 
    the corresponding higher order differential form data \emph{and} compatibly further locally constant data of a cocycle in $\mathcal{A}$-cohomology.  
  \end{enumerate}

  This is the generalization for the 11D SuGra $C$-field of the {\v C}ech cohomology data \cref{0CocycleInCechDeligne2Cohomology} for the ordinary electromagnetic field. 

  However, in practice, one will tend to obtain the global $C$-field data \cref{DifferentialNonabelianCohomologyDiagram} using other methods. A slick construction of global $C$-field configurations (globalized in 4-Cohomotopy, cf. \cref{IRCompletionAccordingToHypothesisH}) analogous to multi-center ADHM instantons is given in \cite{GSS26-Conf}.

\end{example}

%%%%%%%%%%%%%%%%%%%
\subsubsection
{Brane Charges in Nonabelian Cohomology}
\label
{BraneChargesInNonabelianCohomology}
%%%%%%%%%%%%%%%%%%

The foremost effect of global IR-completion \cref{DifferentialNonabelianCohomologyDiagram} of SuGra theories is that it determines the topological charges carried by the singular branes of the theory (the higher analogs of Dirac monopoles). In particular, it determines 
\begin{enumerate}
\item
how these charges are discretized (``quantized''), reflecting indecomposable fundamental brane sources;
\item
the presence of torsion charges (a finite sum of which may vanish), reflecting the presence of \emph{fractional brane species}. 
\end{enumerate}

If $\mathcal{A}$ classifies the flux quantization law of the chosen IR completion \cref{WhBracketCoincidingWithCharacteristic}, then the charges \cref{DifferentialNonabelianCohomologyDiagram} are in the \emph{nonabelian cohomology} (\parencites[Def. 6.0.6]{Toen2002}[Def. 6]{Lurie2014}[\S 2]{FSS23-Char}, cf. \parencites[\S 1]{SS25-TEC}[\S 4]{SS26-Orb}) of spacetime with coefficients in $\mathcal{A}$:
\begin{equation}
  \begin{aligned}
    & 
    H^1\bracket({
      X; 
      \Omega \mathcal{A}
    })
    \;\;
    \substack{\text{(if $\mathcal{A}$ is connected)}}
    \\
    & \defneq
    H^0\bracket({
      X; 
      \mathcal{A}
    })
    \;\;
    \substack{\text{(generally)}}
    \\
    &
    :=
    \pi_0\, \mathrm{Map}\bracket({
      X; \mathcal{A}
    })
    \simeq
    \left\{
    \colorbox{lightgray}{$
      \substack{
        \text{Homotopy classes of}
        \\
        \text{maps $\inlinetikzcd{X \ar[r]\&  \mathcal{A}}$}
      }
      $}
    \right\}
    \mathrlap{\,.}
  \end{aligned}
\end{equation}
Concretely, if one measures these charges on near-horizon geometries of the black $p$-branes in $d$-dimensional space, in a generalization of \cref{OrdinaryCohomologyOfHigherDiracMonopoles}, then they take values in the homotopy groups $\pi_\bullet$ of the classifying space $\mathcal{A}$ (cf. \cite{SS23-Mf}, where we are now assuming that $\mathcal{A}$ is simply connected):
\begin{equation}
  \begin{aligned}
  H^1\bracket({
    \mathbb{R}^{1,d}
    \setminus
    \mathbb{R}^{1,p}
    ;
    \Omega\mathcal{A}
  })
  &
  \simeq
  H^1\bracket({
    S^{d-p-1};
    \Omega\mathcal{A}
  })
  \defneq
  \pi_0 \mathrm{Map}\bracket({
    S^{d-p-1},
    \mathcal{A}
  })
  \\
  & 
  \simeq
  \pi_{d-p-1}\bracket({
    \mathcal{A}
  })
  \mathrlap{\,.}
  \end{aligned}
\end{equation}

%%%%%%%%%%%%%
\paragraph
{Bulk Charges in 11D SuGra}
%%%%%%%%%%%%%

\begin{example}
[IR-Completion of 11D SuGra according to \emph{Hypothesis H}]
\label[example]
{IRCompletionAccordingToHypothesisH}
There is an IR-completion of 11D SuGra given by choosing in \cref{AdmissibilityConditionFor11DSuGra} the classifying space to be the 4-sphere itself:
\begin{equation}
\label
{HypothesisH}
  \mathcal{A}
  \defneq
  S^4
  \mathrlap{\,.}
\end{equation}
This choice was first considered in \parencites[\S 2.5]{Sati2018}[\S 4]{FSS15-M5WZW} and the hypothesis that this is the right choice for ``M-theory'' was called ``Hypothesis H'' (for quantization in co-\emph{H}omotopy theory) in \cite{FSS20-H,SS20-Tad,FSS21-Hopf} (where a list of consistency checks is presented supporting this hypothesis).

In any case, the choice \cref{HypothesisH} is ``weakly initial'' among IR-completions of 11D SuGra in that any other admissible CW-complex $\mathcal{A}$ \cref{AdmissibilityConditionFor11DSuGra} will be obtained by attaching further cells to the 4-sphere and hence will come with a canonical comparison map (including the 4-cell)
\begin{equation}
  \begin{tikzcd}
    S^4 
    \ar[r, "{ \iota }"]
    &
    \mathcal{A}
    \mathrlap{.}
  \end{tikzcd}
\end{equation}
which induces a canonical nonabelian cohomology operation
\begin{equation}
  \begin{tikzcd}
    \pi^4\bracket({X})
    \defneq
    H^1\bracket({
      X; \Omega S^4
    })
    \ar[
      r,
      "{ \iota_\ast }"
    ]
    &
    H^1\bracket({
      X; \Omega \mathcal{A}
    })
    \mathrlap{\,.}
  \end{tikzcd}
\end{equation}
from the nonabelian cohomology theory called \emph{4-Cohomotopy} (cf. \parencites[\S VII]{STHu59}[Ex. 2.7]{FSS23-Char}). This means that the M-brane charges according to Hypothesis H canonically map to the brane charges of any other IR-completion of 11D SuGra.

Now, according to Hypothesis H, fundamental M-brane charges are given by the homotopy groups of the 4-sphere. This implies, as shown in \cref{MBraneChargesAccordingToHypothesisH},  that Hypothesis H predicts integrally charged M2- and M5-branes, as it should be, among various fractional branes \cite[(22)]{SS23-Mf}.

\begin{table}[htb]
\caption{\label
{MBraneChargesAccordingToHypothesisH}
Possible charges of singular (``black'') M$p$-branes (as measured near-horizon) predicted by the IR-completion of 11D SuGra according to \emph{Hypothesis H} (\cref{IRCompletionAccordingToHypothesisH}).
Besides the expected integral charges carried by M2- and M5-branes, various fractional M-brane species may appear (which are invisible before IR-completion).
}
\adjustbox{scale=0.97,
  rndfbox=4pt
}{
\begin{tblr}{
  colspec = {rccccccccccl},
  row{1} = {bg=gray!30}
}
 $p =$ 
 & $0$
 & $1$
 & $2$ 
 & $3$
 & $4$ 
 & $5$ 
 & $6$
 & $7$ 
 & $8$
 & $9$
 \\
 & $0$ 
 & $0$ 
 & $\mathcolor{purple}{\mathbb{Z}}$
 & $0$ 
 & $0$ 
 & $\mathcolor{purple}{\mathbb{Z}}$
 & $0$ 
 & $0$
 & $0$ 
 & $0$
 &
 {\small (integral)}
 \\[-7pt]
 $
   \smash{
   \left.
   \begin{aligned}
   H^1\bracket({
   \mathbb{R}^{1,10}
   \setminus
   \mathbb{R}^{1,p};
   \Omega S^4
   })
   \\
   \simeq
   \pi_{9-p}\bracket({S^4})
   \end{aligned}
   \right\}
   }
 \simeq$
 & $\oplus$
 & $\oplus$
 & $\oplus$
 & $\oplus$
 & $\oplus$
 & $\oplus$
 & $\oplus$
 & $\oplus$
 & $\oplus$
 & $\oplus$
 \\[-2pt]
 & $\mathbb{Z}_{/2}^2$
 & $\mathbb{Z}_{/2}^2$
 & $\mathbb{Z}_{/12}$
 & $\mathbb{Z}_{/2}$
 & $\mathbb{Z}_{/2}$
 & $0$
 & $0$
 & $0$
 & $0$
 & $0$
 &
 {\small(fractional)}
\end{tblr}
}

\end{table}

If one feels that the fractional brane charges seen in \cref{MBraneChargesAccordingToHypothesisH} do not match desiderata on the global behavior of 11D SuGra, then one needs to choose another IR-completion given by another classifying space $\mathcal{A}$ of the rational homotopy type of the 4-sphere \cref{AdmissibilityConditionFor11DSuGra}.
\end{example}

%%%%%%%%%%%%%
\paragraph
{In the presence of Probe M5-Branes}
%%%%%%%%%%%%%

In the presence of a probe M5-brane worldvolume $\inlinetikzcd{ \Sigma^{1,5\vert 2 \cdot \mathbf{8}_+} \ar[r, "{ \Phi }"] \& X^{1,10\vert \mathbf{32}}}$ we have seen in \cref{M5braneProbes} that the joint bulk/brane Bianchi identities that put the theory on-shell are characterized by a  commuting diagram of this form \cite{GSS24-M5}:
\begin{equation}
\label
{M5BianchisAsCommutingDiagram}
  \begin{tikzcd}[column sep=huge]
    \Sigma^{1,5\vert 2 \cdot \mathbf{8}_+}
    \ar[
      r, 
      "{ H_3 }",
      dashed
    ]
    \ar[d, "{ \Phi }" ]
    &
    \Omega^1_{\mathrm{cl}}\bracket({
      -; 
      \mathfrak{l}_{_{S^4}}
      S^7
    })
    \ar[
      d,
      "{
        (\mathfrak{l}h_{\mathbb{H}})_\ast
      }"
    ]
    \\
    X^{1,10\vert \mathbf{32}}
    \ar[
      r, 
      "{  (G_4, G_7)  }",
      dashed
    ]
    &
    \Omega^1_{\mathrm{cl}}\bracket({
      -;
      \mathfrak{l}S^4
    })
  \end{tikzcd}
\end{equation}
saying that the $\mathbb{R}$-rational content of the fibration must be exactly that of the fibration of $L_\infty$-algebras characterizing the Bianchi identities \cref{M5BianchisAsCommutingDiagram}.

For such twisted relative Bianchi identities the global IR-completion from \cref{OnGeneralChargesInDiffNonabCohomology} accordingly generalizes from a single classifying space $\mathcal{A}$ for the bulk charges to a fibration $\inlinetikzcd{ \mathcal{B} \ar[r, "{ \mathcal{P} }"] \& \mathcal{A} }$ of classifying spaces, subject to the compatibility condition
\begin{equation}
  \mathfrak{l}\,
  \bracket({
    \inlinetikzcd{
      \mathcal{B}
      \ar[d, "{ \mathcal{B} }"]
      \\
      \mathcal{A}
    }
  \,})
  \simeq \;
  \inlinetikzcd{
    \mathfrak{l}_{_{S^4}}S^7
    \ar[d, "{ \mathfrak{l}h_{\mathbb{H}} }"]
    \\
    \mathfrak{l}S^4
    \mathrlap{\,.}
  }
\end{equation}
Given such a choice of fibration, the combined bulk/brane charges are classified by commuting diagrams of maps of this form \cite[(4.9)]{BaSS26-MString}:
\begin{equation}
\label
{RelativeClassifyingMaps}
  \begin{tikzcd}[row sep=10pt, column sep=35pt]
    \Sigma
    \ar[
      r,
      dashed
    ]
    \ar[
      d,
      "{ \Phi }"
    ]
    &
    \mathcal{B}
    \ar[
      d,
      "{ \mathcal{P} }"
    ]
    \\
    X
    \ar[
      r, dashed
    ]
    &
    \mathcal{A}
    \mathrlap{\,.}
  \end{tikzcd}
\end{equation}

\begin{example}
 The minimal IR-completion of probe M5-branes
 which is naturally compatible with the bulk IR-completion according to Hypothesis H (\cref{IRCompletionAccordingToHypothesisH}) is classified by the quaternionic Hopf fibration $h_{\mathbb{H}}$,
 with charges of the form (\parencites[\S 3.7]{FSS20-H}{FSS21-Hopf}[\S 4.2]{GSS24-M5}):
\begin{equation}
\label
{M5BraneChargesCommutingDiagram}
  \begin{tikzcd}[row sep=10pt, column sep=35pt]
    \Sigma
    \ar[
      r,
      dashed
    ]
    \ar[
      d,
      "{ \Phi }"
    ]
    &
    S^7
    \ar[
      d,
      "{ h_{\mathbb{H}} }"
    ]
    \\
    X
    \ar[
      r, dashed
    ]
    &
    S^4
    \mathrlap{\,.}
  \end{tikzcd}
\end{equation}

For instance, in the special case that the bulk charge is trivial, this implies that the charge carried by the ``self-dual'' field on the M5-brane is classified by the fiber of the quaternionic Hopf fibration and hence is in 3-Cohomotopy:
\[
  \begin{tikzcd}[row sep=small]
    \Sigma
    \ar[
      r,
      dashed
    ]
    \ar[
      d,
      "{ \Phi }"
    ]
    &
    S^3
    \ar[r]
    \ar[d]
    &
    S^7
    \ar[
      d,
      "{ h_{\mathbb{H}} }"
    ]
    \\
    X
    \ar[
      rr, 
      dashed,
      downhorup
    ]
    \ar[
      r,
    ]
    &
    \ast
    \ar[r]
    &
    S^4
    \mathrlap{\,.}
  \end{tikzcd}
\]
This reproduces the expected integral quantization of the charges carried by the ``self-dual string'' (the singular M-string), cf. \cref{MStringChargesAccordingToHypothesisH}.

\begin{table}[htb]
\caption{\label
{MStringChargesAccordingToHypothesisH}
Possible charges of singular (``black'') $q$-branes \emph{inside} an M5-brane probe (as measured near-horizon) predicted by the IR-completion of M5-probes of 11D SuGra according to the twisted relative extension \cref{M5BraneChargesCommutingDiagram} of  \emph{Hypothesis H} (\cref{IRCompletionAccordingToHypothesisH}).
Besides the expected integral charges carried by M1-strings, there is the possibility of fractional 0-brane charge (invisible before IR-completion).
}
\centering
\adjustbox{
  rndfbox=4pt
}{
\begin{tblr}{
  colspec = {rcccccl},
  row{1} = {bg=gray!30}
}
 $q =$ 
 & $0$
 & $1$
 & $2$ 
 & $3$
 & $4$ 
 \\
 & $0$ 
 & $\mathcolor{purple}{\mathbb{Z}}$
 & $0$
 & $0$ 
 & $0$ 
 &
 {\small (integral)}
 \\[-7pt]
 $
   \smash{
   \left.
   \begin{aligned}
   H^1\bracket({
   \mathbb{R}^{1,5}
   \setminus
   \mathbb{R}^{1,q};
   \Omega S^3
   })
   \\
   \simeq
   \pi_{4-p}\bracket({S^3})
   \end{aligned}
   \right\}
   }
 \simeq$
 & $\oplus$
 & $\oplus$
 & $\oplus$
 & $\oplus$
 & $\oplus$
 \\[-2pt]
 & $\mathbb{Z}_{/2}$
 & $0$
 & $0$
 & $0$
 & $0$
 &
 {\small(fractional)}
\end{tblr}
}

\end{table}

\[
  H^1\bracket({
    \mathbb{R}^{1,5}
    \setminus
    \mathbb{R}^{1,1}
    ;
    \Omega S^3
  })
  \simeq
  \pi_3\bracket({S^3})
  \simeq
  \mathbb{Z}
  \mathrlap{\,.}
\]

\end{example}

%%%%%%%%%%%%%%
\paragraph
{Chern-Simons-Type Charges}
%%%%%%%%%%%%%%

The above construction \cref{DifferentialNonabelianCohomologyDiagram} of global field configurations admits the following generalization: 

Given a choice of $L_\infty$-algebra inclusion 
\begin{equation}
\label
{CharacteristicLInfinityAlgebraInclusion}
  \begin{tikzcd}
    \mathfrak{a} 
    \ar[r, hook] 
    & 
    \tilde {\mathfrak{a}}
  \end{tikzcd}
\end{equation}
we may take the charges to be classified by spaces $\tilde {\mathcal{A}}$ such that \cref{WhBracketCoincidingWithCharacteristic}
\begin{equation}
  \mathfrak{l} \tilde {\mathcal{A}}
  \simeq
  \tilde {\mathfrak{a}}
  \mathrlap{\,,}
\end{equation}
and hence generalize the construction \cref{DifferentialNonabelianCohomologyDiagram} of global fields to
\begin{equation}
\label
{DifferentialNonabCohomologyWithCSFields}
\hspace{-4mm} 
  \begin{tikzcd}[
     row sep=1pt, 
     column sep=15pt
  ]
    &[10pt]
    &[10pt]
    &
    \tilde{\mathcal{A}}
    \ar[
      dd,
      "{ 
        \mathbf{ch}^{\tilde{\mathcal{A}}} 
      }"
    ]
    \\[35pt]
    X
    \ar[
      urrr,
      dashed,
      bend left=20,
      "{ \chi }"{
        swap,
        name=charge,
        pos=.3
      },
      "{
        \text{\color{darkblue}charges}
      }"{sloped, pos=.3}
    ]
    \ar[
      ddr,
      dashed,
      "{ \vec F }"{
        name=flux
      },
      "{
        \text{\color{darkblue}fluxes}
      }"{sloped, swap}
    ]
    \ar[
      Rightarrow,
      from=charge,
      to=flux,
      dashed,
      "{ \widehat{A} }"{swap},
      "{
        \text{\color{darkgreen}potentials}
      }"{sloped, , yshift=-2pt, swap}
    ]
    \\
    &&&
    \tilde {\mathcal{A}}_{\mathbb{R}}
    \\
    &
    \Omega^{1}_{\mathrm{cl}}({
      -;\mathfrak{a}
    })
    \ar[
      r
    ]
    &
    \shape \Omega^1_{\mathrm{cl}}({
      -;\tilde {\mathfrak{a}}
    })
    \ar[
      ur,
      phantom,
      "{ \simeq }"{sloped}
    ]
  \end{tikzcd}
  \;\;\;
  \Leftrightarrow
  \;\;\;
  \begin{tikzcd}[sep=35pt]
    X
    \ar[
      r,
      "{ (\vec F, \hat A, \chi) }"
    ]
    &
    \Omega^1_{\mathrm{cl}}(-;\mathfrak{a})
    \underset{
      \tilde {\mathcal{A}}_{\mathbb{R}}
    }{\times}
    \tilde {\mathcal{A}}
    \mathrlap{\,.}
  \end{tikzcd}
\end{equation}
This amounts to flux quantization in $\tilde {\mathcal{A}}$-cohomology while constraining the $\tilde{\mathfrak{a}}$-flux densities that are not in the subalgebra \cref{CharacteristicLInfinityAlgebraInclusion} to vanish.

\begin{example}
[Magnetized M5-Brane Probes]
\label[example]
{MagentizedM5BraneProbes}
  In the case of M5-branes \cref{M5BianchisAsCommutingDiagram} we may consider including the characteristic
  of the worldvolume flux into the Whitehead $L_\infty$-algebra of the \emph{twistor fibration} $t_{\mathbb{C}}$ (\cite{FSS22-Twistorial}):
  \begin{equation}
    \label
    {InclusionOflHopfFibIntolTwistorFib}
    \begin{tikzcd}[row sep=15pt, column sep=25pt]
      \mathfrak{l}_{_{S^4}}
      S^7
      \ar[
        d,
        "{ \mathfrak{l}h_{\mathbb{H}} }"
      ]
      \ar[
        r,
        hook
      ]
      &
      \mathfrak{l}_{_{S^4}}
      \mathbb{C}P^3
      \ar[
        d,
        "{
          \mathfrak{l}t_{\mathbb{C}}
        }"
      ]
      \\
      \mathfrak{l}S^4
      \ar[
        r,
        equals
      ]
      &
      \mathfrak{l}S^4
      \mathrlap{\,.}
    \end{tikzcd}
  \end{equation}

Here the Bianchi identities characterized by $\mathfrak{l}_{_{S^4}} \mathbb{C}P^3$ are
\[
  \begin{aligned}
    \mathrm{d}\, 
    \mathcolor{purple}{F_2}
    & = 0
    \\
    \mathrm{d}\, H_3
    & =
    \Phi^\ast G_4 
      - 
    \mathcolor{purple}{F_2 F_2}\,,
  \end{aligned}
\]
with an extra 2-flux density $F_2$, and the inclusion \cref{InclusionOflHopfFibIntolTwistorFib} enforces the further on-shell constraint
\begin{equation}
  F_2 = 0
  \mathrlap{\,.}
\end{equation}

Under the twisted relative version of the global field construction \cref{DifferentialNonabCohomologyWithCSFields} this means that besides the worldvolume $B$-field there appears also an ``$A$-field'' which has vanishing flux density but possibly topological global effects, like an abelian Chern-Simons field.

Indeed, the $\mathfrak{l}_{_{S^4}} \mathbb{C}P^3$ characteristic turns out (\cite[p. 38]{SS25-Srni}\cite[(83)]{Banerjee2025-Potentials}) to change the local $B$-field  potential \cref{LocalBFieldPotential} to
\begin{equation}
  \mathrm{d}\,
  B_2
  =
  H_3
  -
  C_3
  -
  \mathcolor{purple}{
  \mathrm{CS}\bracket({A_1})
  }
  \,,
\end{equation}
including the Chern-Simons form of a 1-form potential $A_1$. 
Expressions of this form for the ``M-theory 3-form'' $C_3$ have previously been considered in 
\cite[(2.1)]{DonagiWijnholt2023}\cite[(2.1.15)]{FSS14-7D} \cite[\S 4.1]{FSS15-ModuliStack}, following \cite{DFM2007}, and earlier in \cite[(3.3)]{Evslin_2003}.

Here (as for actual Chern-Simons theory), the Chern-Simons term actually vanishes on-shell (cf. \parencites[p. 7]{SS25-Seifert}[\S 2.2]{BaSS26-MString}), but the topological effect of the Chern-Simons-type field influences the topological charges seen in the IR-completion of the M5-brane. 

Concretely, the evident IR-completion of this situation is given by taking the classifying fibration to be the actual twistor fibration $\inlinetikzcd{ 
  \mathbb{C}P^3 
  \ar[
    r, 
    ->>, 
    "{ t_{\mathbb{C}} }"
  ] 
  \& S^4 
}$.

We mention some of the resulting effects in \cref{Applications} below.

\end{example}

%%%%%%%%%%%%%%%
\subsubsection
{Orbifold Charges in Equivariant Nonabelian Cohomology}
%%%%%%%%%%%%%%

%%%%%%%%%%%%%
\paragraph
{Equivariant Classifying Maps}
%%%%%%%%%%%%%
When generalizing (super-)spacetimes to orbifolds, the brane charges are measured in  \emph{orbifold cohomology} (cf. \cite{SS26-Orb}). For global orbifold quotients this is modeled by equipping both the domain spaces and the classifying spaces with continuous actions of a group $G$ and then constraining the classifying maps $c$ to be \emph{$G$-equivariant}
\begin{equation}
  \text{$G$-orbifold charges}
  \;\in\;
  \pi_0\,
  \mathrm{Map}\bracket({X,\mathcal{A}})^G
  :=
  \pi_0
  \Bigg\{\!
  \adjustbox{raise=-6pt}{
  \begin{tikzcd}
    X
    \ar[
      out=60,
      in=180-60,
      shift right=-1pt,
      looseness=4,
      "{ 
        \,\mathclap{G}\,
      }"{description}
    ]
    \ar[
      r,
      dashed,
      "{ c }"
    ]
    &
    \mathcal{A}
    \ar[
      out=60,
      in=180-60,
      looseness=4,
      shift right=-1pt,
      "{ 
        \,\mathclap{G}\,
      }"{description}
    ]
  \end{tikzcd}
  }
  \!\!\Bigg\}
  \mathrlap{,}
\end{equation}
in that for all $x \in X$ and $g \in G$ we have:
\begin{equation}
  c(g \cdot x) = g \cdot c(x)
  \mathrlap{\,.}
\end{equation}

An important consequence of this equivariance condition is that \emph{$G$-fixed points} 
\begin{equation}
  \label{GFixedLocus}
  X^G
  :=
  \bracketmid
    \{{x \in X}{\forall_g : g \cdot x = x}\}
  \subset 
  X
\end{equation}
must be mapped to $G$-fixed points:
\begin{equation}
  \label{InducedMapOnFixedPoints}
  \begin{tikzcd}[row sep=small, column sep=large]
    X
    \ar[
      out=60,
      in=180-60,
      shift right=-1pt,
      looseness=4,
      "{ 
        \,\mathclap{G}\,
      }"{description}
    ]
    \ar[
      r,
      dashed,
      "{ c }"
    ]
    &
    \mathcal{A}
    \ar[
      out=60,
      in=180-60,
      looseness=4,
      shift right=-1pt,
      "{ 
        \,\mathclap{G}\,
      }"{description}
    ]
    \\
    X^G
    \ar[u, hook]
    \ar[
      r,
      dashed,
    ]
    &
    \mathcal{A}^G
    \mathrlap{\,.}
    \ar[u, hook]
  \end{tikzcd}
\end{equation}
In particular, the relative mapping spaces \eqref{RelativeClassifyingMaps} between $G$-spaces become themselves $G$-spaces by the evident $G$-conjugation action, and their $G$-fixed loci are the spaces of relative equivariant maps:
\begin{equation}
  \label{DoublyRelativeEquivariantMappingSpace}
  \mathrm{Map}
  \bracket({
    \Phi
    ,
    \mathcal{P}
  })^G
  :=
  \biggggg\{ \!\!
  \begin{tikzcd}[row sep=17pt]
    \ar[
      out=53+45,
      in=180-53+45,
      shift left=6pt,
      looseness=4,
      "{ 
        \,\mathclap{G}\;
      }"{description}
    ]
    \phantom{A}
    \ar[
      d,
      ->>,
      shorten=-4pt,
      "{ \Phi }"
    ]
    \ar[
      r, 
      shorten=-4pt,
      dashed
    ]
    & 
    \phantom{A}
    \ar[
      out=53-45,
      in=180-53-45,
      shift left=6pt,
      looseness=4,
      "{ 
        \,\mathclap{G}\,
      }"{description}
    ]
    \ar[
      d,
      ->>,
      shorten=-4pt,
      "{ \mathcal{P} }"
    ]
    \\
    \phantom{A}
    \ar[
      out=53+135,
      in=180-53+135,
      shift left=5pt,
      looseness=4,
      "{ 
        \,\mathclap{G}\;
      }"{description}
    ]
    \ar[
      r,
      shorten=-4pt,
      dashed
    ]
    &
    \phantom{A}
    \ar[
      out=53-135,
      in=180-53-135,
      shift left=5pt,
      looseness=4,
      "{ 
        \,\mathclap{G}\;
      }"{description}
    ]
  \end{tikzcd}
  \!\! \biggggg\}.
\end{equation}
The joint homotopy classes of such equivariant maps are the (relative) orbifold charges (cf. \cite[(3)]{SS20-Tad}\cite[Fig. 7]{SS25-Orient}\cite{SS26-Orb}).

%%%%%%%%%%%%%%%%%%%
\subsection
{Applications}
\label
{Applications}
%%%%%%%%%%%%%%%%%%

We briefly highlight an application of IR-completed super-gravity with brane probes that concretely relates to contemporary questions in experimental topological quantum materials research. Here the relation between (IR-completed) super-gravity and (topological) experimental solid state physics is through ``geometric engineering'' of the latter:

%%%%%%%%%%%%%
\paragraph
{The Idea of Geometric Engineering}
%%%%%%%%%%%%%

By \emph{geometric engineering} of quantum field theories (\parencites{KatzVafa1996}{KatzKlemmVafa1996}{KatzVafa1997}{BelhajDrissiRasmussen2003}, cf. \parencites{GiveonKutasov1998}{Duplij2017}) one refers to the identification of a given QFT (or typically of suitable sectors in suitable limits) inside the worldvolume dynamics of branes propagating inside an auxiliary higher-dimensional ambient spacetime, typically at orbi-singularities. The idea is that field content and couplings of the QFT of interest thereby become encoded in the (orbifold) geometric nature of the brane configuration, thus allowing for possibly better analytic control. 

Geometric engineering is typically considered on single probe branes or on a small number $N$ of coincident branes. The limiting case of a large number of $\inlinetikzcd{N \ar[r] \& \infty}$ coincident branes, making a massive \emph{black} brane configuration with gravitational backreaction, is the topic of \emph{holography} (AdS/CFT duality \cite{AharonyEtAl1999}, cf. \parencites{HerzogKlebanov2006}{Baggioli2019}), where the black brane's worldvolume QFT may dually be described by (classical, in the appropriate limit) super-gravity of the ambient spacetime.

%%%%%%%%%%%%%%
\paragraph
{Literature on M5-Brane models in Solid State Physics}
%%%%%%%%%%%%%%

The idea to engineer strongly coupled condensed matter systems on M-branes in 11D SuGra came to prominence with the holographic model of superconductors due to
\cite{Herzog2007,GauntlettSonnerWiseman2009,GubserPufuRocha2010,GauntlettetalSonnerWiseman2010}, further discussed by \cite{DonosetalEtAl2013,DonosGauntlettPantelidou2013}, cf. review in \cite[\S 4.3]{Baggioli2019}.
This approach relies on a large $N$-limit. 

A suggestion that M5-branes may geometrically engineer topological quantum order was made in \cite{ChoGangKim2020}, aspects of which were further discussed in \cite{CuiQiuWang2023,BonettiSchaeferNamekiWu2024}.

Our rigorous geometric engineering of fractional quantum Hall topological order on $N=1$ IR-completed M5-branes at $A_n$-singularities is due \parencites{SS25-Seifert}[\S 4]{BaSS26-MString}[\S 5]{SS26-BBC}, based on results in \cite{SS25-AbelianAnyons,SS25-FQH}, survey in \cite{SS25-Srni}.

%%%%%%%%%%%%%%
\paragraph
{Geometric Engineering of FQH Order on IR-Completed M5-Probes}
%%%%%%%%%%%%%%
With topological charges properly defined on IR-completed M5-brane probes of 11D Sugra (\cref{BraneChargesInNonabelianCohomology}), we gain access to the proper \emph{geometric engineering} of topological phases of matter on single M5-branes.

In particular, after global completion of 11D SuGra in Cohomotopy  (\cref{IRCompletionAccordingToHypothesisH}), the topological quantum states of flux on ``magnetized'' M5-branes (\cref{MagentizedM5BraneProbes}) probing A-type orbisingularities reproduce in fine detail known and expected properties of topological quantum order of fractional quantum Hall (FQH) systems and make further predictions of experimental interest
(\cite{SS25-Rickles,SS25-Srni,SS25-ISQS29,SS26-BBC} based on \cite{SS25-AbelianAnyons,SS25-FQH}). 

Strikingly, beyond the topological sector, FQH excitations are observed, in the long-wavelength (IR) limit, to exhibit:
\begin{enumerate}
 \item \emph{$W_\infty$-symmetry} (cf. \parencites{IKS1992}{CTZ1993}{CTZ1994}[\S B]{Wang2023}[\S IV.B]{Du2025}{SS26-SDiff}), 

 hence area-preserving diffeomorphism symmetry, reflecting the incompressibility of the FQH liquid,
 
 \item \emph{super-symmetry} (cf. \parencites{Hasebe:2007}{GromovMartinecRyu2020}{Liu2024Resolving}{Pu2023Signatures}),

 under which the \emph{magneto-roton mode} is a spin-2 graviton-like super-partner to a spin-3/2 gravitino-like \emph{neutral graviton mode}. 
\end{enumerate}

We highlight that: \emph{These are exactly the characteristic symmetries also of super $p$-branes!} 

For super-symmetry this is clear.
The volume-preserving diffeomorphism symmetry of general super $p$-branes in light-cone gauge is due to \cite{BSTT1990}, see also \parencites{MatsuoShibusa2000}{HoKawaiLiao2026}.
The specific case of the M2-branes in Hamiltonian formulation was previously discussed in \parencites{FloratosIliopoulos1988}[p. 553]{dWHH1988}{dWMH1990}, discussion in covariant formulation is in \parencites{Katagiri2026a}{Katagiri2026b}. The specific case of the M5-brane is further discussed in \cite{BandosTownsend2008} and a special formulation for strong flux backgrounds is due to \cite{HIMS2008}, reviewed in \parencites[\S 3.2.1]{Ho2009}[\S 4.3]{HoMatsuo2016}.

This coincidence suggests that the geometric engineering of FQH systems on IR-completed M5-branes may be useful also beyond the topological sector.

%%%%%%%%%%%%%
\paragraph
{Literature on Effective Super-Gravity for FQH Systems}
%%%%%%%%%%%%%

A previous suggestion that FQH excitations are effectively described in the IR by super-gravity runs as follows:  
\begin{enumerate}
\item the bosonic excitation looks like a (chiral) \emph{graviton}: \cite{Liou2019} (theoretically) and \cite{Liang2024} (experimentally),

\item the fermionic excitation looks like a \emph{gravitino}: \parencites{Haldane2013}[p. 8]{Wang2023},

\item this pair of modes exhibits an emergent (broken) \emph{supersymmetry}: \cite{GromovMartinecRyu2020,Liu2024Resolving} (theoretically)  and \cite{Pu2023Signatures} (experimentally),

\item 
whose key properties may be modeled by \emph{supergravity}: \cite{NPBG2023,Du2025Chiral}.

\end{enumerate}
These proposals of effective SuGra for FQH excitations are arguably tentative and explorative. We suggest that the following rigorous anchoring at least of the topological sector of FQH systems modeled on IR-completed M5-branes may help clarify the situation.

%%%%%%%%%%%%
\paragraph
{IR-Completed M5-Branes Probing A-type singularities}
%%%%%%%%%%%%

We now make an M5-brane configuration concrete which, when IR-completed in twistorial Cohomotopy, provably engineers FQH topological order in fine detail.

An \emph{$A_{n-1}$-type orbi-singularity} has as isotropy group the finite cyclic group of order $n$, 
\[
  G 
    := 
  \CyclicGroup{n}
  \mathrlap{\,,}
\] 
and the orbifold is the quotient of  $\mathbb{C}^2$ by the action for which $[k] \in \CyclicGroup{n}$ acts as follows:
\begin{equation}
  \label{TheATypeAction}
  \begin{tikzcd}[
    column sep=-3pt
  ]
    \mathbb{H}
    \ar[
      out=60,
      in=180-60,
      shift left=1pt,
      looseness=4,
      "{ 
        \,\mathclap{G}\,
      }"{description}
    ]
    &\simeq_{{}_\mathbb{R}}&
    \mathbb{C}
    \ar[
      out=60,
      in=180-60,
      shift left=1pt,
      looseness=4,
      "{ 
        \,\mathclap{G}\,
      }"{description}
    ]
    &\times&
    \overline{\mathbb{C}}
    \ar[
      out=62,
      in=180-62,
      shift left=1pt,
      looseness=3,
      "{ 
        \,\mathclap{G}\,
      }"{description}
    ]
  \end{tikzcd}
  \hspace{15pt}
  :
  \hspace{15pt}
  \begin{tikzcd}
    {[k]} \cdot (z_1, z_2)
    :=
    \bracket({
      e^{2\pi\mathrm{i} k/n}
      z_1,
      \,
      e^{-2\pi\mathrm{i} k/n}
      z_2
    })
    \mathrlap{\,.}
  \end{tikzcd}
\end{equation}
As indicated, this action is in fact right $\mathbb{H}$-linear with respect to the identification $\mathbb{C}^2 \simeq_{{}_{\mathbb{R}}} \!\mathbb{H}$ of two copies of the complex numbers with the quaternions, being the left multiplication action with unit quaternions in the image of the inclusion
\begin{equation}
  \mathrm{U}(1)
  \subset 
  \mathrm{SU}(2)
  \simeq
  S({\mathbb{H}})
  \mathrlap{\,.}
\end{equation}
To reflect this domain orbifold in the coefficients of our flux-quantizing cohomology theory, we take the $G$-action on the classifying twistor fibration
$\inlinetikzcd{\mathbb{C}P^3 \ar[r, "{ t_{\mathbb{H}} }"] \& S^4 }$ (\cref{MagentizedM5BraneProbes}) to be ``of the same form'' as on spacetime, namely with $G$ acting as in \cref{TheATypeAction} on one of the two $\mathbb{C}^2 \simeq \mathbb{H}$-factors from which the projective spaces are defined:
\begin{equation}
  \begin{tikzcd}[
    column sep=-5pt,
    row sep=6pt
  ]
    &&
    \mathbb{C}P^3
    \ar[
      out=60,
      in=180-60,
      shift left=1pt,
      looseness=4,
      "{ 
        \,\mathclap{G}\,
      }"{description}
    ]
    \ar[
      rrrr, 
      ->>, 
      "{ t_{\mathbb{H}} }"
    ]
    \ar[d, equals]
    &&[27pt]&&
    S^4
    \ar[
      out=60,
      in=180-60,
      shift left=1pt,
      looseness=4,
      "{ 
        \,\mathclap{G}\,
      }"{description}
    ]
    \ar[d, equals]
    \\
    \mathllap{\big((}
    \mathbb{C}^2 
    \ar[
      out=60-180,
      in=-60,
      shift left=5pt,
      looseness=4,
      "{ 
        \,\mathclap{G}\,
      }"{description}
    ]
    &\times& \!\!
    \mathbb{C}^2)
    &\!\!\!\!\setminus \{0\}
    \big)\big/\mathbb{C}^\times
    \ar[
      r, 
      ->>,
      shorten >=10pt
    ]
    &[10pt]
    \mathllap{\big((}
    \mathbb{C}^2 
    \ar[
      out=60-180,
      in=-60,
      shift left=5pt,
      looseness=4,
      "{ 
        \,\mathclap{G}\,
      }"{description}
    ]
    &\times&
    \mathbb{C}^2
    )
    &\!\setminus \{0\}
    \big)\big/\mathbb{H}^\times
    \mathrlap{\,.}
  \end{tikzcd}
\end{equation}
With this, the resulting fibration of fixed loci \cref{InducedMapOnFixedPoints} is the fibration of the 2-sphere over the point:
\begin{equation}
  \label{FixedLocusInClassifyingFibrations}
  \begin{tikzcd}[
    column sep=35pt,
    row sep=7pt
  ]
    (\mathbb{C}P^3)^G
    \ar[
      d,
      equals
    ]
    \ar[
      r,
      ->>,
      "{
        t_{\mathbb{C}}^G
      }"
    ]
    &
    (S^4)^G
    \ar[
      d,
      equals
    ]
    \\
    S^2
    \ar[
      d,
      equals
    ]
    \ar[
      r,
      ->>
    ]
    &
    \ast
    \ar[
      d,
      equals
    ]
    \\
    \bracket({
      \mathbb{C}^2
      \setminus
      \{0\}
    })\big/\mathbb{C}^\times
    \ar[
      r,
      ->>
    ]
    &
    \bracket({
      \mathbb{C}^2
      \setminus
      \{0\}
    })\big/\mathbb{H}^\times
    \mathrlap{\,.}
  \end{tikzcd}
\end{equation}

Consider then an M5-brane worldvolume embedding $\Phi$ wrapping just the \emph{half-space} 
(also considered in \cite[p. 5]{CouzensLuscherSparks2025})
\begin{equation}
  \begin{tikzcd}[
    column sep=2pt
  ]
  \mathbb{C} 
    \ar[
      out=60,
      in=180-60,
      shift right=-1pt,
      looseness=4,
      "{ 
        \,\mathclap{G}\,
      }"{description}
    ]
    &\subset& 
  \mathbb{C}^2
    \ar[
      out=60,
      in=180-60,
      shift right=-1pt,
      looseness=3,
      "{ 
        \,\mathclap{G}\,
      }"{description}
    ]
  \end{tikzcd}
\end{equation}
of an A-type singularity \cref{TheATypeAction}, hence of the form
\begin{equation}
\label
{TheATypeOrbiConfiguration}
  \begin{tikzcd}[row sep=15pt, column sep=-4pt]
    \Sigma^{1,5}
    \ar[
      d,
      hook,
      "{ \Phi }"
    ]
    &:=&
    \Sigma^{1,3}
    \ar[
      d,
      equals
    ]
    &
    \times
    &
    \mathbb{C}
    \ar[
      out=60,
      in=180-60,
      shift right=-1pt,
      looseness=4,
      "{ 
        \,\mathclap{G}\,
      }"{description}
    ]
    \ar[
      d,
      equals
    ]
    \\
    X^{1,10}
    &:=&
    \Sigma^{1,3}
    &
    \times
    &
    \mathbb{C}
    \ar[
      out=60-180,
      in=-60,
      shift right=-1pt,
      looseness=4.7,
      "{ 
        \,\mathclap{G}\,
      }"{description}
    ]
    &\times&
    \overline{\mathbb{C}}
    \ar[
      out=60-180,
      in=-60,
      shift right=-1pt,
      looseness=4.2,
      "{ 
        \,\mathclap{G}\,
      }"{description}
    ]
    &\times&
    X^3
    \mathrlap{\,.}
  \end{tikzcd}
\end{equation}

This configuration is evidently $G$-equivariantly homotopic to its $G$-fixed locus \eqref{GFixedLocus} (by linearly contracting the non-compact singularity to the point, $\mathbb{C}^2 \sim \ast$). Since the equivariant relative mapping spaces \eqref{DoublyRelativeEquivariantMappingSpace} are equivariant homotopy-invariant, the orbifold charges reduce, by \eqref{InducedMapOnFixedPoints}, to plain charges on the fixed locus with $G$-fixed coefficients \eqref{FixedLocusInClassifyingFibrations}, cf. \parencites[p. 7]{SS25-Seifert}[\S 4]{SS25-TEC}:
\begin{equation}
\label
{ReducingTo2CohomotopyChargesOnM5}
  \Biggggg\{
  \hspace{-10pt}
  \begin{tikzcd}[
    ampersand replacement=\&,
    row sep=20pt
  ]
    \Sigma^{1,5}
    \ar[
      out=53+45,
      in=180-53+45,
      shift left=3pt,
      looseness=4,
      "{ 
        \,\mathclap{G}\;
      }"{description}
    ]
    \ar[
      d,
      ->>,
      "{ \Phi }"
    ]
    \ar[
      r, 
      dashed
    ]
    \& 
    \mathbb{C}P^3
    \ar[
      out=53-45,
      in=180-53-45,
      shift left=6pt,
      looseness=4,
      "{ 
        \,\mathclap{G}\,
      }"{description}
    ]
    \ar[
      d,
      ->>,
      "{ t_{\mathbb{C}} }"
    ]
    \\
    X^{1,10}
    \ar[
      out=53+135,
      in=180-53+135,
      shift left=5pt,
      looseness=4,
      "{ 
        \,\mathclap{G}\;
      }"{description}
    ]
    \ar[
      r,
      dashed
    ]
    \&
    S^4
    \ar[
      out=53-135,
      in=180-53-135,
      shift left=2pt,
      looseness=4,
      "{ 
        \,\mathclap{G}\;
      }"{description}
    ]
  \end{tikzcd}
  \hspace{-10pt}
  \Biggggg\}
  \simeq
  \big\{
  \begin{tikzcd}[
    sep=15pt
  ]
    \Sigma^{1,3}
    \ar[r, dashed]
    &
    S^2
  \end{tikzcd}
  \big\}
  \mathrlap{.}
\end{equation}

But this is, on the right, exactly the \emph{Hopfion} charge structure (\parencites{Wilczek1983}[\S II.C]{Forte1992})
of topological quantum order as seen via anyonic braiding phases (cf. \cite{Goldin2023}) in Fractional Quantum (Anomalous) Hall (FQ(A)H) systems.

Proceeding in this manner, one finds the following dictionary for geometric engineering of FQH quantum order on IR-completed M5-brane probes: 

\begin{description}

\item
\textbf{$A_n$-Singularities Engineer FQH Order:}

The above discussion shows how placing the IR-completed magnetized M5-brane on the half-space of an $A_n$-singularity collapses the classifying fibration to a 2-sphere. 

As a classifying space for topological charges, the latter turns out to encode in fine detail the expected structure of FQH topological order with abelian anyons, including the root-of-unity braiding phases that have in recent years been seen in experiment (beginning with \parencites{Bartolomei2020}{Nakamura2020}, for recent developments see \parencites{Ghosh2025}). 

The details of this analysis are due to \parencites{SS25-AbelianAnyons}{SS25-FQH}{SS25-FQAH}{SS25-ISQS29}{KSS26-HigherDimAnyons}, review includes \parencites{SS25-Srni}{SS25-ISQS29}{SS25-Rickles}.

On this backdrop one may now consider variants of the above brane configuration and thereby pass from retro-dicting FQH properties to making experimentally relevant predictions:

\item 
\textbf{Punctures Engineer Nonabelian Anyons:}

More concretely, the algebra of quantum observables on topological charges of the above form \cref{ReducingTo2CohomotopyChargesOnM5} is computed by \emph{topological light cone quantization} (\parencites{SS24-Obs}[\S 2]{SS25-Complete}) on 4D worldvolumes of the form
\begin{equation}
  \Sigma^{1,3}
  \simeq
  \mathbb{R}^{1,1}
  \times
  \Sigma^2
  \mathrlap{\,,}
\end{equation}
for $\Sigma^2$ a surface that models the FQH liquid.

In general, if this surface is not compact, then the \emph{solitonic} topological observables are measured via \emph{pointed} maps out of the one-point compactification of this surface (cf. \cite[\S 2.2]{SS25-Flux})
\[
  \big\{
    \inlinetikzcd{
      \Sigma^2_{\cpt}
      \ar[
        rr,
        dashed,
        "{
          \mathrm{pntd}
        }"{swap}
      ]
      \&\&
      S^2
    }
  \big\}
  \,,
\]
which encodes exactly the condition that charges \emph{vanish at infinity}.

But this means that if we consider the surface to be punctured, then the punctures become all identified with the point at infinity, modelling a situation where flux is expelled from their vicinity. In the laboratory this corresponds to doping of the FQH liquid by super-conducting islands expelling magnetic flux lines. But then the \emph{general covariance} (diffeomorphism equivariance) of the brane model predicts that the resulting charges transform under the mapping class group of the punctured surface --- which is a braid group signifying possibly non-abelian anyonic quantum states. This is the situation discussed in \cite{SS26-Islands}.

\item 
\textbf{M-Strings Engineer Edge Modes:}
One may further include \emph{M-string} probes \cite{HaghighatEtAl2015} of the M5-brane probe. The resulting system of iterated probe branes then admits \cite{BaSS26-MString} IR-completion classified by the quaternionic Hopf fibration factored through the twistor fibration. Under this completion the M-string turns out to geometrically engineer gapped nodal lines \cite[\S 2.2, Fig. 2.3]{SS26-Orb} exhibiting \cite{SS26-BBC} edge mode excitations of the kind (cf. \parencites[\S 2.5]{Wen1992}[\S 6.1.2]{Tong2016}) expected in FQH systems. 

\end{description}

This close match between the topological order of FQH systems and the topological charges of IR-completed M5-brane probes of 11D SuGra suggests that beyond its topological sector the IR-completed M5-brane model in 11D SuGra may geometrically engineer also the excitation spectrum  of FQH quantum liquids. This remains to be discussed elsewhere.

%%%%%%%%%%%%%%
\paragraph
{Conclusion}
%%%%%%%%%%%%%

IR-completions of super-gravity provide in particular the brane charge structure of the theory, including subtle topological effects beyond the charges that are reflected already by flux density forms. Beyond general theoretical interest, the above application to geometric engineering of FQH topological order highlights that this topological charge structure subsumes effects that can and have been measured in laboratory experiments on condensed matter systems. 

It is conceivable that better understanding of topological charge sectors may also inform questions encountered in high-energy particle physics, notably as concerns the non-perturbative QCD vacuum. For instance the best available physics-level explanation of confinement in QCD (one of the \emph{Millennium Problems} pronounced by the Clay Math Institute, cf. \cite{nLab:YMMassGap}) is the \emph{dual superconductor} model (cf. \cite{nLab:DualSuperconductorModel}), which argues that the compression of confining color flux tubes is caused by a  condensate of effective Dirac monopole charges. If true, this is a subtle topological effect whose proper discussion likely requires IR-complete theories.

%%%%%%%%%%%%%%

%%%%%%%%%%%%%%
\appendix

%%%%%%%%%%%%%%
\section
{Background}
%%%%%%%%%%%%%%

For reference in the main text here we briefly recall some background material with pointers to further reading.

%%%%%%%%%%%%%%%
\subsection
{Categories}
\label
{Categories}
%%%%%%%%%%%%%%

Super-geometry cannot be well-understood without a bare minimum of category-theoretic vocabulary (\cref{Superspace}). More serious higher category theory (homotopy theory, \cref{Homotopy}) is needed for understanding IR-completions of SuGra (\cref{Completions}). Therefore here we briefly recall some basics (for more cf. \cite{Geroch1985,AbramskyTzevelekos2011,Sc18-ToposLectures,Richter2020}).

%%%%%%%%%%%%%
\paragraph
{Arrows Denote Maps}
%%%%%%%%%%%%%

We are dealing with different kinds of \emph{objects} $X$, $A$, ... (spaces, super-spaces, algebras, super-algebras,...). An arrow like
\begin{equation}
  \inlinetikzcd{
    X
    \ar[rr, "{ f }"]
    \&\&
    A
  }
\end{equation}
denotes a \emph{map} (a mapping, a function, a transformation, ..., a \emph{morphism}) from one to another. This in the appropriate sense: 

If $X$ and $A$ are topological spaces then a map is a continuous map, if they are smooth manifolds then a map is a smooth map, if they are algebras then a map is an algebra homomorphism, if they are super-algebras then a map is a grade-preserving homomorphism, etc.

When maps are defined on elements (some kinds of maps are not!), then a \emph{barred arrow} ``$\mapsto$'' indicates where a map sends an element:
\begin{equation}
  \begin{tikzcd}[sep=0pt]
    X \ar[rr, "{ f }"] && A
    \\
    x &\mapsto& f(x)
    \mathrlap{\,.}
  \end{tikzcd}
\end{equation}

For any such notion of objects with maps between them to make good sense, we just need the most basic consistency conditions; the \emph{composition} of composable maps 
\begin{equation}
  \inlinetikzcd{
    X 
    \ar[
      r, 
      "{f}"
    ] 
    \ar[
      rr,
      uphordown,
      "{ g\circ f }"{description}
    ]
    \& 
    Y 
    \ar[r, "{ g }"] 
    \&
    Z
  }
\end{equation}
should:
\begin{enumerate}
\item 
  exist (be again a well-defined map of the appropriate kind),
\item 
  be associative 
\item and unital.
\end{enumerate}
The last one may feel too obvious to even state, but here we go: For every object $X$ there is a map $\inlinetikzcd{ X \ar[r, "{ \mathrm{id}_X }"] \& X}$ whose composition with any map $f$ out of or into $X$ is $f$.

Some jargon:
\begin{enumerate}
\item
When a map $f$ has an inverse under this composition, it is called an \emph{isomorphism}. 

\item
When we draw two paths of arrows between the same pair of objects (a \emph{diagram}), then we mean that the two composite maps agree (the \emph{diagram commutes}):
\begin{equation}
\label
{CommutingSquare}
  \begin{tikzcd}
    X 
      \ar[r, "{f_1}"] 
      \ar[d, "{g_1}"]
    & 
    A
      \ar[d, "{g_2}"]
    \\
    Y \ar[r, "{f_2}"] & B
  \end{tikzcd}
  \;\;\;\;
  \text{means that}
  \;\;
  g_2 
    \circ
  f_1
  =
  f_2
    \circ 
  g_1 
  \mathrlap{\,.}
\end{equation}
\end{enumerate}

%%%%%%%%%%%%%%
\paragraph
{Categories}
%%%%%%%%%%%%%

A collection of objects with such a kind of maps between them is called a \emph{category}. As in \emph{categorizing} objects: A \emph{kind} of objects.

The archetypical example is the category of sets:
\footnote{
A set-theoretic remark which the casual reader should ignore: The collection of all sets is not itself a set (\emph{Russell's paradox}) but a proper class. Categories in general have a proper class of objects. If a category happens to have an actual set of objects then it is called a \emph{small category}. Only groupoids are usually assumed to be small by default. 
}
\begin{equation}
  \label{CategoryOfSets}
  \begin{aligned}
    \colorbox{lightgray}{$ \mathrm{Set}$} : &
    \text{
      Objects are sets, maps are ordinary maps (functions).
    }
  \end{aligned}
\end{equation}

Basic examples of categories are \emph{concrete categories} of sets with extra structure. Beyond sets themselves \cref{CategoryOfSets}, examples of concrete categories are: 
\begin{subequations}
\label
{BasicExamplesOfConcrete}
  \begin{align}
    \label{CategoryOfTopologicalSpaces}
    \colorbox{lightgray}{$ \mathrm{TopSp}$}
    :
    &
    \text{
      Topological spaces with continuous maps between them.
    }
    \\
    \label{CategoryOfSmoothManifolds}
    \colorbox{lightgray}{$ \mathrm{SmthMfd} $}
    :
    &
    \text{
    Smooth manifolds with smooth maps
    between them.
    }
    \\
    \label
    {CategoryOfVectorSpaces}
    \colorbox{lightgray}{$ \mathrm{Vec}_{_{\mathbb{K}}} $}
    : 
    &
    \text{
      $\mathbb{K}$-Vector spaces with linear maps between them.
    }
    \\
    \colorbox{lightgray}{$ \mathrm{Alg}_{_{\mathbb{K}}} $}
    :
    &
    \text{
      Associative $\mathbb{K}$-algebras with homomorphisms between them.
    }
    \\
   \colorbox{lightgray}{$  \mathrm{CAlg}_{_{\mathbb{K}}} $}
    :
    &
    \text{
      Commutative $\mathbb{K}$-algebras with homomorphisms between them.
    }
  \end{align}
\end{subequations}

But categories do not need to be concrete in this way. Notably, super-manifolds (\cref{SmoothSupermanifoldsAsSmoothSuperSet}) are not just sets with extra structure; the maps between them are \emph{super-fields}:
\begin{equation}
    \label{CategoryOfSuperhManifolds}
    \begin{aligned}
  \colorbox{lightgray}{$   \mathrm{SprMfd} $}
    :
    &
    \text{
    Super-manifolds with super-functions between them (\cref{Superspace}).
    }
    \end{aligned}
\end{equation}

Another example of non-concrete categories is the category of gauged sets \cref{GaugedSetSchematics} of Yang-Mills gauge field configurations with gauge transformations between them (cf. \cite{BeniniSchenkelSchreiber2018}):
\begin{equation}
\label
{GroupoidOfYangMillsFields}
  \begin{aligned}
   \colorbox{lightgray}{$  G\mathrm{Cnn}_X $}
    & :%
    \text{
      $G$-Yang-Mills fields on $X$
      with gauge transformations 
      between them.
    }
  \end{aligned}
\end{equation}

Such gauge transformations are invertible. Generally, invertible maps in a category are called \emph{isomorphisms}, denoted $\inlinetikzcd{ c \ar[r, "{ \sim }"] \& c' }$, and may be thought of as generalized gauge transformations.
The isomorphisms from an object $c$ in a category $\mathcal{C}$ to itself form a group, the \emph{symmetry group} or \emph{automorphism group} of that object:
\begin{equation}
\label
{AutomorphismGroup}
  \mathrm{Aut}_{\mathcal{C}}(c)
  :=
  \Big\{\!
  \inlinetikzcd{
    c
    \ar[
      out=55+90,
      in=180-55+90,
      looseness=5pt,
      shift left=4pt,
      "{ \sim }"{
        sloped, 
        yshift=-1pt
      }
    ]
  }
  \Big\}
  \mathrlap{\,.}
\end{equation}
Therefore, 
categories in which \emph{all} maps are isomorphisms are called \emph{groupoids}.

For each group $G$ there is a groupoid with a single abstract object that has $G$ as its automorphisms, the \emph{delooping groupoid} of $G$:
\begin{equation}
\label
{DeloopingGroupoid}
  \begin{aligned}
 \colorbox{lightgray}{$    \mathbf{B}G $}
    &
    :
    \text{
      Single abstract object with 
      $G$ worth of automorphisms.
    }
  \end{aligned}
\end{equation}

An abstract class of examples: With $\mathcal{C}$ any category, there is its \emph{opposite category}:
\begin{equation}
\label
{OppositeCategory}
  \begin{aligned}
 \colorbox{lightgray}{$    \mathcal{C}^{\mathrm{op}} $}
    :
    \text{
      Objects of $\mathcal{C}$ with 
      maps those of $\mathcal{C}$ but regarded as pointing the other way.
    }
  \end{aligned}
\end{equation}
The objects of $\mathcal{C}^{\mathrm{op}}$ are also called the \emph{formal duals} of the objects of $\mathcal{C}$.
For instance, the basic example of the \emph{duality between geometry and algebra} (cf. \parencites[\S2]{Street2007}[\S 3.1]{Sc18-Durham}[Tbl. 4]{JurcoEtAl2019}{Scholze2026-Geometry}) is:
\begin{equation}
\label
{AffineSchemes}
  \begin{aligned}
  \colorbox{lightgray}{$   \mathrm{CAlg}_{_{\mathbb{K}}}
      ^{\mathrm{op}} $}
    :
    &
    \text{
      Affine schemes over $\mathbb{K}$
      with scheme maps between them.
    }
  \end{aligned}
\end{equation}
This duality is the basis for defining super-Cartesian spaces $\mathbb{R}^{n\vert q}$ (\cref{Charts}).

Other general constructions: For a pair of categories $\mathcal{C}$ and $\mathcal{D}$ we have their product and their disjoint union categories:
\begin{subequations}
\begin{align}
  \label
  {ProductCategory}
    \mathcal{C}
    \times
    \mathcal{D}
    :
    &
    \text{
      Pairs of objects, from $\mathcal{C}$ and $\mathcal{D}$, 
      with pairs of maps between them.
    }
    \\
  \label
  {DisjointUnionCatgegory}
    \mathcal{C}
    \sqcup
    \mathcal{D}
    :
    &
    \text{
      Objects from either $\mathcal{C}$ or $\mathcal{D}$, with their maps between them.
    }
\end{align}
\end{subequations}

%%%%%%%%%%%%
\paragraph
{Functors}
%%%%%%%%%%%%

Functors are the maps between categories $\mathcal{C}$, $\mathcal{D}$ themselves. A functor $\inlinetikzcd{ \mathcal{C} \ar[r, "F"] \& \mathcal{D}}$ sends objects to objects and maps to maps, such that composition and identities are preserved:
\begin{equation}
\label
{FunctorPreservingComposition}
  \begin{tikzcd}[
    row sep=0pt, column sep=15pt
  ]
    \mathcal{C}
    \ar[rr, "{ F }"]
    &&
    \mathcal{D}
    \\
    c_1 
    \ar[
      dd,
      "{f}"
    ]
    \ar[
      dddd,
      leftvertright,
      "{ g \circ f }"{
        rotate=90,
        pos=.5,
        description,
      }
    ]
    &\mapsto& F(c_1)
    \ar[
      dd,
      "{ F(f) }"{swap}
    ]
    \ar[
      dddd,
      rightvertleft,
      "{ F(g \circ f) }"{
        rotate=90,
        pos=.5,
        description,
      }
    ]
    \\
     &\mapsto&
    \\
    c_2 
    \ar[
      dd,
      "{g}"
    ]
    &\mapsto& F(c_2)
    \ar[
      dd,
      "{ F(g) }"{swap}
    ]
    \\
     &\mapsto&
    \\
    c_3 &\mapsto& F(c_3)
  \end{tikzcd}
\end{equation}

For example, forming tangent bundles is a functor from the category of smooth manifolds \cref{CategoryOfSmoothManifolds} to itself, sending smooth maps to their derivatives:
\begin{equation}
  \begin{tikzcd}[
    sep=0pt
  ]
    \mathrm{SmthMfd}
    \ar[
      rr,
    ]
    &&
    \mathrm{SmthMfd}
    \\
    X
    \ar[
      dd,
      "{ f }"
    ]
    &\mapsto& 
    T X
    \ar[
      dd,
      "{ d f }"'{swap}
    ]
    \\
      &\mapsto&
    \\
    Y 
      &\mapsto& 
    T Y
    \mathrlap{\,.}
  \end{tikzcd}
\end{equation}
That this indeed respects composition \cref{FunctorPreservingComposition} is equivalently the \emph{chain rule} of calculus.

A basic example: Functors from a delooping groupoid \cref{DeloopingGroupoid} of a group $G$ to vector spaces \cref{CategoryOfVectorSpaces} are linear $G$-representations $\rho$ on a vector space $V$:
\begin{equation}
\label
{LinearRepAsFunctor}
  \begin{tikzcd}[row sep=0pt, column sep=5pt]
    \mathbf{B}G
    \ar[
      rr,
    ]
    &&
    \mathrm{Vec}_{_{\mathbb{K}}}
    \\
    \bullet
    \ar[
      dd,
      "{ g }"
    ]
    &\mapsto&[6pt]
    V
    \ar[
      dd,
      "{ \rho(g) }"
    ]
    \\[2pt]
    &\mapsto&
    \\[-2pt]
    \bullet
    &\mapsto&
    V
    \mathrlap{\,.}
  \end{tikzcd}
\end{equation}

An important abstract example is the \emph{hom-functor} of a category $\mathcal{C}$, which sends pairs $c,d$ of its objects to the \emph{hom set} $\mathrm{Hom}_{\mathcal{C}}(c, d)$ of maps $\inlinetikzcd{c \ar[r] \& d}$, and pairs of maps to their pre/postcomposition:
\begin{equation}
\label
{HomFunctor}
  \begin{tikzcd}[row sep=0pt, column sep=5pt]
    \mathcal{C}^{\mathrm{op}}
    \times
    \mathcal{C}
    \ar[
      rr,
      "{ 
        \mathrm{Hom}_{\mathcal{C}} 
      }"
    ]
    &&
    \mathrm{Set}
    \\
    (c_1, d_1)
    \ar[
      dd,
      <-,
      shift right=7pt,
      "{ (\gamma }"'
    ]
    \ar[
      dd,
      phantom,
      "{ , }"{pos=.57}
    ]
    \ar[
      dd,
      shift left=7pt,
      "{ \delta) }"
    ]
    &\mapsto&
    \mathrm{Hom}_{\mathcal{C}}\bracket({
      c_1, d_1
    })
    \ar[
      dd,
      "{
        \substack{
          f
          \\
          \rotatebox[origin=c]{-90}{$\mapsto$}
          \\
          \delta \circ f \circ \gamma
        }
      }"
    ]
    \\[+10pt]
    &\mapsto&
    \\[+10pt]
    (c_2, d_2)
    &\mapsto&
    \mathrm{Hom}_{\mathcal{C}}\bracket({
      c_2, d_2
    })
    \mathrlap{\,.}
  \end{tikzcd}
\end{equation}

Some important jargon:
\begin{enumerate}

\item
A functor that induces bijections between all hom-sets is said to exhibit a \emph{full subcategory inclusion} and to be \emph{fully faithful}, denoted:
\begin{equation}
\label{FullSubcategoryEmbedding}
  \inlinetikzcd{
    \mathcal{C}
    \ar[
      r, 
      hook, 
      "{ F }"
    ]
    \&
    \mathcal{D}
  }
  \;\;\;\;
  \Leftrightarrow
  \;\;\;\;
  \forall_{c_1, c_2}
  \;
  \inlinetikzcd{
    \mathrm{Hom}_{\mathcal{C}}(c_1,c_2)
    \ar[
      r,
      "{ F }",
      "{ \sim }"'
    ]
    \&
    \mathrm{Hom}_{\mathcal{D}}\bracket({
      F(c_1),
      F(c_2)
    })
    \mathrlap{\,.}
  }
\end{equation}

\item A functor that is surjective on isomorphism classes of objects is called \emph{essentially surjective}.

\item A functor that is both essentially surjective as well as fully-faithful is called an \emph{equivalence} of categories, denoted: 
\begin{equation}
\label
{EquivalenceOfCategories}
  \inlinetikzcd{
    \mathcal{C}
    \ar[r, "F", " \sim "{swap} ]
    \&
    \mathcal{D}
  }
  \;\;\;\;\;
  \Leftrightarrow
  \;\;\;\;\;
  \text{$F$ is essentially surjective and fully-faithful}
\end{equation}
\end{enumerate}
Such equivalence is the notion of \emph{meta-gauge transformations} between categories themselves.

For example,
every groupoid $\mathcal{G}$ is equivalent to its \emph{skeleton}, the disjoint union \cref{DisjointUnionCatgegory} 
over isomorphism classes $[c]$
of delooping groupoids \cref{DeloopingGroupoid} of automorphism groups \cref{AutomorphismGroup}:
\begin{equation}
\label
{SkeletalizationOfGroupoids}
  \inlinetikzcd{
    \bigsqcup_{ [c] }
    \mathbf{B}
    \mathrm{Aut}_{\mathcal{G}}(c)
    \ar[
      r,
      "{ \sim }"
    ]
    \&
    \mathcal{G}
    \mathrlap{\,,}
  }
\end{equation}
via the functor which includes the full subcategory on one representative object $c$ in each isomorphism class $[c]$.
This expresses the fact that gauge equivalent objects are operationally indistinguishable.%
\footnote{
  Beware that as stated, \cref{SkeletalizationOfGroupoids} holds for the \emph{geometrically discrete} groupoids discussed so far, not in general for topological and Lie groupoids: Because a set with topology, even in the absence of any isomorphisms between its elements, is not in general the disjoint union of its elements.
}

%%%%%%%%%%%%
\paragraph
{Natural Transformations}
%%%%%%%%%%%%

Natural transformations are the maps between parallel functors 
$
  \inlinetikzcd{
    X 
      \ar[r, shift left=2pt, "{F}"]
      \ar[r, shift right=2pt, "{G}"']
    \&
    A
}$. 
A transformation $\inlinetikzcd{ F \ar[r, "{ \phi }"] \&  G}$ is an assignment of maps $\phi(c)$ of $\mathcal{D}$ to objects $c$ of $\mathcal{C}$ such that all the following square diagrams commute \cref{CommutingSquare}:
\begin{equation}
  \begin{tikzcd}[
    column sep=10pt
  ]
    c_1 
    \ar[
      d,
      "{ f }"
    ]
      &\mapsto&
    F(c_1)
    \ar[
      d,
      "{ F(f) }"
    ]
    \ar[
      rr, 
      "{ \phi(c_1) }"
    ]
    &&[12pt]
    G(c_1)
    \ar[
      d,
      "{ G(f) }"
    ]
    \\
    c_2 
      &\mapsto&
    F(c_2)
    \ar[
      rr, 
      "{ \phi(c_2) }"
    ]
    &&
    G(c_2)
    \mathrlap{\,.}
  \end{tikzcd}
\end{equation}

These transformations compose by composing their components:
\begin{equation}
  \phi_2 \circ \phi_1 (c)
  \defneq
  \phi_2(c) \circ \phi_1(c)
  \mathrlap{\,.}
\end{equation}

Therefore, functors $\inlinetikzcd{\mathcal{C} \ar[r] \& \mathcal{D}}$ are the objects of a category whose maps are natural transformations, called the \emph{functor category} between $X$ and $A$:
\begin{equation}
\label
{FunctorCategory}
  \begin{aligned}
  \mathrm{Fnctr}\bracket({
    \mathcal{C},
    \mathcal{D}
  })\!\!:
  &
  \text{
  Functors $\inlinetikzcd{\mathcal{C} \ar[r] \& \mathcal{D}}$ with natural transformations between them.}
  \end{aligned}
\end{equation}

For instance, natural transformations between functors $\rho$, $\rho'$ from a delooping groupoid to vector spaces are intertwiners $\phi$ between the corresponding linear $G$-representations \cref{LinearRepAsFunctor}:
\begin{equation}
  \begin{tikzcd}[
    column sep=10pt
  ]
    \bullet 
    \ar[
      d,
      "{ g }"
    ]
      &\mapsto&
    V
    \ar[
      d,
      "{ \rho(g) }"
    ]
    \ar[
      rr, 
      "{ \phi }"
    ]
    &&[12pt]
    V'
    \ar[
      d,
      "{ \rho'(g) }"
    ]
    \\
    \bullet
      &\mapsto&
    V
    \ar[
      rr, 
      "{ \phi }"
    ]
    &&
    V'
    \mathrlap{\,.}
  \end{tikzcd}
\end{equation}
Therefore the functor category from $\mathbf{B}G$ to $\mathrm{Vec}_{_{\mathbb{K}}}$ is the \emph{representation category} of $G$:
\begin{equation}
\label
{RepGAsFunctorCategory}
  \begin{aligned}
    \mathrm{Fnctr}\bracket({
      \mathbf{B}G,
      \mathrm{Vec}_{_{\mathbb{K}}}
    })
    & :
    \text{
      Linear $G$-representations with 
      intertwiners between them.
    }
  \end{aligned}
\end{equation}

A key re-incarnation of functor categories are \emph{presheaf categories}, see \cref{CategoryOfPresheaves} below.

%%%%%%%%%%%%
\paragraph
{Adjunctions}
%%%%%%%%%%%%

Adjunctions are what category theory is really about (\emph{duality} and \emph{universal constructions}) --- but since we avoid entering actual category theory in these lecture notes (instead just using the basic concept formation), we only mention them briefly.

A pair of back-and-forth functors being \emph{adjoint}, denoted
\begin{equation}
\label
{AdjointFunctors}
  \begin{tikzcd}
    \mathcal{C}
    \ar[
      rr,
      shift left=5pt,
      "{ L }"
    ]
    \ar[
      rr,
      phantom,
      "{ \bot }"{scale=.7}
    ]
    \ar[
      rr,
      <-,
      shift right=5pt,
      "{ R }"'
    ]
    &&
    \mathcal{D}
    \mathrlap{\,,}
  \end{tikzcd}
\end{equation}
means that there is a natural isomorphism (``forming adjuncts'')
\begin{equation}
  \begin{tikzcd}
  \mathrm{Hom}_{\mathcal{D}}\bracket({
    L(-),
    -
  })
  \ar[
    r,
    <->,
    "{ \widetilde{(-)} }"
  ]
  &
  \mathrm{Hom}_{\mathcal{C}}\bracket({
    -,
    R(-)
  })
  \mathrlap{\,,}
  \end{tikzcd}
\end{equation}
between the hom-functors \cref{HomFunctor} precomposed with the \emph{left adjoint} functor $L$ on the left and with the \emph{right adjoint} functor $R$ on the right, respectively.

The adjuncts of identities are called the \emph{units} and \emph{counits} of the adjunction, respectively, denoted:
\begin{subequations}
  \begin{align}
    \label{AdjunctionUnit}
    \inlinetikzcd{
      c
      \ar[
        r,
        "{ \eta_c }"
      ]
      \&
      R \circ L(c)
    }
    &
    := 
    \widetilde {\mathrm{id}_{L(c)}}
    \\
    \label{AdjunctionCounit}
    \inlinetikzcd{
      L \circ R(d)
      \ar[
        r,
        "{ \epsilon_d }"
      ]
      \&
      d
    }
    &
    := 
    \widetilde {\mathrm{id}_{R(d)}}
    \mathrlap{\,.}
  \end{align}
\end{subequations}

%%%%%%%%%%%%%
\paragraph
{Presheaves}
%%%%%%%%%%%%%

Given a small category $\mathcal{C}$, the functor category
 \cref{FunctorCategory}
from its opposite \cref{OppositeCategory} to sets \cref{CategoryOfSets}, is called the category of \emph{presheaves} over $\mathcal{C}$:
\begin{equation}
\label
{CategoryOfPresheaves}
  \mathrm{PSh}\bracket({
    \mathcal{C}
  })
  :=
  \mathrm{Fnctr}\bracket({
    \mathcal{C}^{\mathrm{op}},
    \mathrm{Set}
  })
  \mathrlap{\,.}
\end{equation}
If we think of the objects of $\mathcal{C}$ as some kind of spaces, then we may think of presheaves $\mathbf{X} \in \mathrm{PSh}\bracket({\mathcal{C}})$ as assigning sets of maps $\mathbf{X}(c)$ from $c$ to a generalized space $\mathbf{X}$ \emph{probeable} by the spaces in $\mathcal{C}$, and defined by these sets of probes (cf. exposition in \cite{Schreiber2025,GS25-FieldsI}).

In particular, the original spaces give such presheaves/generalized spaces by the \emph{Yoneda embedding}:
\begin{equation}
  \begin{tikzcd}[
    sep=0pt
  ]
    \mathcal{C}
    \ar[
      rr,
      hook,
      "{ y }"
    ]
    &&
    \mathrm{PSh}\bracket({
      \mathcal{C}
    })
    \\
    c 
      &\mapsto&
    \mathrm{Hom}_{\mathcal{C}}(-,c)
    \mathrlap{\,.}
  \end{tikzcd}
\end{equation}

With this it turns out, remarkably, that the sets $\mathbf{X}(c)$ which we \emph{thought of} as maps from $c$ to $\mathbf{X}$ actually \emph{are} these maps --- this is the statement of the  ubiquitous \emph{Yoneda lemma}, saying that there are natural isomorphisms
\begin{equation}
  \begin{tikzcd}
    \mathrm{Hom}_{_{\mathrm{PSh}(\mathcal{C})}}
    \bracket({
      y(-), \mathbf{X}
    })
    \ar[
      r,
      "{ \sim }"
    ]
    &
    \mathbf{X}(-)
    \mathrlap{\,.}
  \end{tikzcd}
\end{equation}

%%%%%%%%%%%%%
\paragraph
{Sites}
%%%%%%%%%%%%%

Often the probe spaces $c \in \mathcal{C}$ have a notion of how several \emph{glue to} or \emph{cover} another one. 
To make this precise: A \emph{coverage} on a small category $\mathcal{C}$ is 
\begin{enumerate}
\item
for each object $c \in \mathcal{C}$ a set of \emph{covering sets} of morphisms $\{ \inlinetikzcd{ c_i \ar[r, "{ \iota_i }"] \& c }\}_{i \in I}$,

\item
such that the ``pullbacks'' of these covering sets along any map $\inlinetikzcd{ c' \ar[r] \& c }$ may be refined by coverings $\big\{ \inlinetikzcd{ c'_{i'} \ar[r, "{ \iota'_{i'} }"] \& c' }\big\}_{i' \in I'}$ in that 
\begin{enumerate}

\item 
for every $i \in I$ there exists $i' \in I'$ and a map $\inlinetikzcd{c'_{i'} \ar[r, dashed] \& c_i}$
\item
such that the following diagram commutes \cref{CommutingSquare}:
\begin{equation}
  \begin{tikzcd}[row sep=15pt, column sep=25pt]
    c'_{i'}
    \ar[
      r,
      dashed
    ]
    \ar[
      d,
      "{ \iota'_{i'} }"
    ]
    & 
    c_i
    \ar[
      d,
      "{ \iota_i }"
    ]
    \\
    c'
    \ar[r]
    &
    c
    \mathrlap{\,.}
  \end{tikzcd}
\end{equation}
\end{enumerate}
\end{enumerate}
A category $\mathcal{C}$ equipped with such a coverage is called a \emph{site}.

For example, the category $\mathrm{SmthMfd}$ \cref{CategoryOfSmoothManifolds} becomes a site with covering sets the \emph{open covers}.

%%%%%%%%%%%%%
\paragraph
{Sheaves}
%%%%%%%%%%%%%

If we still think of presheaves $\mathbf{X}$ over $\mathcal{C}$ \cref{CategoryOfPresheaves} as generalized spaces probeable by objects of $\mathcal{C}$, then in view of a given coverage on $\mathcal{C}$ (as above) this picture is only consistent if probes of $\mathbf{X}$ respect this notion of gluing. 

Concretely, for $\{\inlinetikzcd{ c_i \ar[r, "{ \iota_i }"] \& c }\}_{i \in I}$ a covering set, one says that plots $\bracket\{{\phi_i \in \mathbf{X}(c_i)}\}_{i \in I}$ are \emph{compatible} (match where covering probes intersect) if for all $i, i' \in I$ and pairs of dashed maps making the following diagram commute  
\begin{equation}
  \begin{tikzcd}[row sep=15pt, column sep=25pt]
    d
    \ar[
      r,
      dashed,
      "{ f }"
    ]
    \ar[
      d,
      dashed,
      "{ f' }"
    ]
    & c_i
    \ar[
      d,
      "{ \iota_i }"
    ]
    \\
    c_{i'}
    \ar[
      r,
      "{ \iota_{i'} }"
    ]
    &
    c
  \end{tikzcd}
  \;\;\;\;
  \text{we have}
  \;\;\;\;
  \mathbf{X}\bracket({f})\bracket({\phi_i})
  =
  \mathbf{X}\bracket({f'})\bracket({\phi_{i'}})
  \mathrlap{\,.}
\end{equation}

The presheaf $\mathbf{X}$ is called a \emph{sheaf} if for each such compatible family there exists a glued plot $\phi \in \mathbf{X}(c)$ in that 
\begin{equation}
  \forall_{i \in I}
  \;\;
  \mathbf{X}({ \iota_i })(\phi)
  =
  \phi_i
  \mathrlap{\,.}
\end{equation}
The full subcategory \cref{FullSubcategoryEmbedding} of sheaves among presheaves is denoted
\begin{equation}
\label
{CategoryOfSheaves}
  \inlinetikzcd{
   \mathrm{Sh}\bracket({\mathcal{C}})
   \ar[r, hook]
   \&
   \mathrm{PSh}
   \bracket({\mathcal{C}})
  }
  \mathrlap{\,.}
\end{equation}

This is how we conceive of general super-spaces in \cref{Spaces}, by taking $\mathcal{C}$ to be the site of super-Cartesian spaces $\mathbb{R}^{n\vert q}$.

%%%%%%%%%%%%%%
\subsection
{\texorpdfstring
  {$L_\infty$-Algebra}
  {L-infinity Algebra}
}
\label
{LInfinityAlgebra}
%%%%%%%%%%%%%%

We briefly recall the dual incarnation of finite-type $L_\infty$-algebras in the guise of their Chevalley-Eilenberg dgc-algebras (cf. \parencites{LadaStasheff1992}[\S 6.1]{SatiSchreiberStasheff2009}[\S 4]{FSS23-Char}). This dual formulation is much more concise and transparent than the ``direct'' one. In particular, it makes immediate the generalization to super-$L_\infty$-algebras \parencites{FSS15-WZW}[\S 3]{FSS2019}[\S 2.1]{GSS25-TD}, which in this dual guise are (mis-)named ``FDAs'' in the supergravity literature  (cf. \cite{CDF1991,CastellaniDAuria2025}).

Our ground field is the real numbers, $\mathbb{R}$. By \emph{graded} we mean $\mathbb{Z}$-graded (though all constructions considered here happen to be just $\mathbb{N}$-graded). We say \emph{gc-algebras} for graded-commutative algebras and \emph{dgc-algebras} for differential graded-commutative algebras with differential of degree +1.

%%%%%%%%%%%%
\paragraph
{The base case of ordinary Lie algebras}
%%%%%%%%%%%
For $\bracket({\mathfrak{g},[-,-]})$ a finite-dimensional Lie algebra, recall the construction of its \emph{Chevalley-Eilenberg algebra} $\mathrm{CE}(\mathfrak{g})$:

Write $\mathfrak{g}^\vee := \mathfrak{g}^\ast$ for the linear dual of the underlying vector space and 
\begin{equation}
\label
{GrassmannAlgebraOnDualLieAlgebra}
  \wedge^\bullet \mathfrak{g}^\vee
  :=
  \mathrm{Sym}\bracket({
    \mathfrak{g}^\vee[1]
  })
\end{equation}
for its \emph{Grassmann algebra}, the free graded-commutative algebra for which $\wedge^n \mathfrak{g}^\vee$ is the subspace of degree $n$.

A graded derivation $\mathrm{d}$ on $\wedge^\bullet \mathfrak{g}^\vee$ is thus determined by its restriction $\inlinetikzcd{ \wedge^1 \mathfrak{g}^\vee \ar[r] \& \wedge^\bullet \mathfrak{g}^\vee }$. Taking this restriction to be the linear dual of the Lie bracket,
$
  \mathrm{d}
    _{\vert \wedge^1 \mathfrak{g}^\vee}
  :=
  [-,-]^\ast
$
hence yields a graded derivation of degree 1. One checks that its nilpotency is equivalent to the Jacobi identity of the bracket:
\begin{equation}
  \mathrm{d}^2 = 0
  \;\;
  \Leftrightarrow
  \;\;
  \text{
   Jacobi identity on $[-,-]$}
  \mathrlap{\,.}
\end{equation}
The resulting dgc-algebra
is the \emph{Chevalley-Eilenberg algebra} of the Lie algebra:
\begin{equation}
\label
{CEAlgebraOfLieAlgebra}
  \mathrm{CE}\bracket({
    \mathfrak{g},
    [-,-]
  })
  :=
  \bracket({
    \wedge^\bullet
    \mathfrak{g}^\vee,
    \mathrm{d} 
    :=
    [-,-]^\ast
  })
  \mathrlap{\,.}
\end{equation}
The key observation now is that this construction is a \emph{fully faithful} functor on the category $\mathrm{LieAlg}^{\mathrm{fd}}_{_{\mathbb{R}}}$ of finite-dimensional Lie algebras: 
\begin{equation}
  \begin{tikzcd}
    \mathrm{LieAlg}
      ^{\mathrm{fd}}
      _{_{\mathbb{R}}}
    \ar[r, hook, "{ \mathrm{CE} }"]
    &
    \mathrm{dgcAlg}^{\mathrm{op}}
     _{_{\mathbb{R}}}
    \mathrlap{\,.}
  \end{tikzcd}
\end{equation}

This means that we may work (op-)equivalently with the dgc-algebras $\mathrm{CE}(\mathfrak{g})$. But for these there is an evident generalization, simply by allowing the vector space $\mathfrak{g}$ to be $\mathbb{N}$-graded:

%%%%%%%%%%%%
\paragraph
{The case of finite-type $L_\infty$-algebras}
%%%%%%%%%%%%

For $\mathfrak{g}$ an $\mathbb{N}$-graded vector space of finite type (meaning that each degree $\mathfrak{g}_n$ is a finite-dimensional vector space), write $\mathfrak{g}^\vee$ for its degreewise dual vector space, hence with 
\begin{equation}
  \bracket({
    \mathfrak{g}^\vee
  })_n
  :=
  \bracket({\mathfrak{g}_n})^\ast
\end{equation}
and write
\begin{equation}
\label
{GradedGrassmannAlgebra}
  \wedge^\bullet \mathfrak{g}^\vee
  :=
  \mathrm{Sym}\bracket({
    \mathfrak{g}^\vee[1]
  })
\end{equation}
for the graded Grassmann gc-algebra, which in low degrees is
\begin{equation}
  \bracket({
    \wedge^\bullet \mathfrak{g}^\vee
  })_1
  =
  \wedge^1 \mathfrak{g}^\vee_0
  \,,
  \;\;\;
  \bracket({
    \wedge^\bullet \mathfrak{g}^\vee
  })_2
  =
  \wedge^2 \mathfrak{g}^\vee_0
  \oplus
  \wedge^1 \mathfrak{g}^\vee_1
  \,,
  \;\;
  \text{etc.}
\end{equation}

Then the category of (connective) finite-type $L_\infty$-algebras is the full subcategory of $\mathrm{dgcAlg}^{\mathrm{op}}_{_{\mathbb{R}}}$ on those whose underlying gc-algebra is of the form \cref{GradedGrassmannAlgebra}:
\begin{equation}
\label
{LInfinityFaithfulInDGCAlgOp}
  \begin{tikzcd}
    L_\infty \mathrm{Alg}^{\mathrm{ft}}
      _{_{\mathbb{R}}}
    \ar[
      r,
      hook,
      "{ \mathrm{CE} }"
    ]
    &
    \mathrm{dgcAlg}^{\mathrm{op}}
      _{_{\mathbb{R}}}
    \mathrlap{\,.}
  \end{tikzcd}
\end{equation}
Unwinding this: Since a differential $\mathrm{d}$  on \cref{GradedGrassmannAlgebra} is again determined by its restriction to $\wedge^1 \mathfrak{g}^\vee$, it is given by a sequence of linear duals of $n$-ary graded-skew symmetric linear maps for $n \in \mathbb{N}$:
\begin{equation}
  \mathrm{d}_{\vert
    \wedge^1\mathfrak{g}^\vee
  }
  =
  [-]^\ast
  +
  [-,-]^\ast
  +
  [-,-,-]^\ast
  +
  \cdots
  :
  \inlinetikzcd{
    \wedge^1 \mathfrak{g}^\vee
    \ar[
      r
    ]
    \&
    \wedge^1 \mathfrak{g}^\vee
    \oplus
    \wedge^2 \mathfrak{g}^\vee
    \oplus
    \wedge^3 \mathfrak{g}^\vee
    \oplus 
    \cdots
    ,
  }  
\end{equation}
and the nilpotency condition $\mathrm{d}^2 = 0$ is hence equivalently a (somewhat complicated looking) generalization of the Jacobi identity on this system of brackets. Such a system of brackets on graded vector spaces $\mathfrak{g}$ that  satisfies these conditions is exactly what makes $\bracket({ \mathfrak{g}, \bracket({[-], [-,-], [-,-,-], \cdots  }) })$ an $L_\infty$-algebra.

This means that finite-type $L_\infty$-algebras $\mathfrak{h}$ may be defined by giving a list of generators and differential relations for their CE-algebras, like this:
\begin{equation}
\label{GenericLInfinityPresentation}
  \mathrm{CE}(\mathfrak{h})
  \;\simeq\;
  \mathrm{Free}_{\mathrm{dgcAlg}}  
  \bracket({
    (e^i)_{i \in I}
  })
  \big/
  \big(
    \mathrm{d}\, e^i
    =
    P^i(\vec e\,)
  \big)_{i \in I}
  \mathrlap{\,,}
\end{equation}
for graded-symmetric polynomials $\bracket({P_i})_{i \in I}$ in variables $e^i$ of degrees $\mathrm{deg}(e^i)$.

%%%%%%%%%%%%%
\paragraph
{Examples of $L_\infty$-Algebras}
%%%%%%%%%%%%%

The simplest nontrivial example is:
\begin{example}
  For $n \in \mathbb{N}_{> 0}$ the \emph{line Lie $n$-algebra} $\mathbb{R}[n]$ is that whose CE-algebra has a single generator in degree $n$ which is closed:
  \begin{equation}
    \mathrm{CE}\bracket({
      \mathbb{R}[n]
    })
    \simeq
    \mathrm{Free}_{\mathrm{dgcAlg}}
    ({
      e_n
    })
    \big/
    \big(
      \mathrm{d}\, e_n = 0
    \big)
    \mathrlap{\,.}
  \end{equation}
\end{example}

The key class of examples for the purpose of flux quantization is the following (which subsumes all connected \emph{nilpotent} $L_\infty$-algebras of finite type):
\begin{example}
[Real Whitehead-bracket algebras (cf {\cite[Prop. 5.13, Rem. 5.4]{FSS23-Char}})]
\label[example]
{WhiteheadBracketLInfinityAlgebra}
Let $\mathcal{A}$ be a connected and simply-connected topological space of rational finite type (meaning that all its rational cohomology groups are finite-dimensional). Then there is an $L_\infty$-algebra, 
\begin{equation}
  \mathfrak{l}\mathcal{A}
  \in 
  L_\infty \mathrm{Alg}
    ^{\mathrm{ft}}
    _{_{\mathbb{R}}}
  \mathrlap{\,,}
\end{equation}
unique up to isomorphism,
whose Chevalley-Eilenberg algebra \cref{LInfinityFaithfulInDGCAlgOp}:
\begin{enumerate}
  \item 
  is generated from the $\mathbb{R}$-rationalized homotopy groups of $\Omega \mathcal{A}$:
  \begin{equation}
    \bracket({
      \mathfrak{l}\mathcal{A}
    })_\bullet
    \simeq
    \pi_\bullet\bracket({
      \Omega \mathcal{A}
    })
    \otimes_{_{\mathbb{Z}}}
    \mathbb{R}
    \mathrlap{\,,}
  \end{equation}

  \item 
  is quasi-isomorphic to the dgc-algebra $\Omega^\bullet_{\mathrm{PS}}\bracket({\mathcal{A}})$ of ``piecewise smooth'' differential forms on $\mathcal{A}$, namely of differential forms on all singular simplices $\inlinetikzcd{ \Delta^n_{\mathrm{top}} \ar[r, "{ \phi }"] \& \mathcal{A} }$ ($n \in \mathbb{N}$) compatible with the face and degeneracy maps,
  hence in particular:

  \item 
  has
  cochain cohomology isomorphic to the real cohomology of $\mathcal{A}$:
  \begin{equation}
    H^\bullet\bracket({
      \mathrm{CE}\bracket({
        \mathfrak{l}\mathcal{A}
      })
    })
    \simeq
    H^\bullet\bracket({
      \mathcal{A};
      \mathbb{R}
    })
    \mathrlap{\,.}
  \end{equation}

\end{enumerate}

This $\mathrm{CE}\bracket({\mathfrak{l}\mathcal{A}})$ is called the \emph{minimal Sullivan model} of $\mathcal{A}$ (here: over $\mathbb{R}$). The binary bracket $[-,-]$ of $\mathfrak{l}\mathcal{A}$ is the \emph{Whitehead bracket} on the $\mathbb{R}$-rationalized homotopy groups, and the higher brackets are the corresponding \emph{higher Whitehead brackets}.

Special cases of this class of examples may be found listed in \parencites{Menichi2015}[p. 21]{SS25-Flux}.
\end{example}

An important class of constructions of new $L_\infty$-algebras from a given one is the content of \cref{RationalCyclification,RationalToroidification}.

%%%%%%%%%%%%%
\paragraph
{Closed $L_\infty$-Algebra Valued Forms}
%%%%%%%%%%%%%
Consider $X \in \mathrm{SmthMfd}$.
Given again an ordinary Lie algebra $\mathfrak{g}$ of finite dimension, it is immediate that the $\mathfrak{g}$-valued differential 1-forms on $X$ are equivalently gc-algebra homomorphisms out of the dual Grassmann algebra \cref{GrassmannAlgebraOnDualLieAlgebra}, like this:
\begin{equation}
  \Omega^1_{\mathrm{dR}}\bracket({
    X;
    \mathfrak{g}
  })
  \simeq
  \mathrm{Hom}_{\mathrm{gcAlg}}
  \bracket({
    \wedge^\bullet \mathfrak{g}^\vee,
    \Omega^\bullet_{\mathrm{dR}}\bracket({
      X
    })
  })
  \mathrlap{\,.}
\end{equation}
Moreover, it is immediate that the \emph{closed} (meaning: \emph{flat}, \emph{Maurer-Cartan}) $\mathfrak{g}$-valued forms among these correspond exactly to those gc-homomorphisms which are in fact dgc-homomorphisms out of the Chevalley-Eilenberg algebra \cref{CEAlgebraOfLieAlgebra}: 
\begin{equation}
  \begin{aligned}
  \Omega^1_{\mathrm{cl}}\bracket({
    X;
    \mathfrak{g}
  })
  &
  :=
  \bracketmid\{{
    A 
      \in 
    \Omega^1_{\mathrm{dR}}\bracket({
      X;
      \mathfrak{g}
    })
  }{
    \mathrm{d}\, A 
    + 
    \tfrac{1}{2}
    [A \wedge A]
    = 0
  }\}
  \\
  & 
  \simeq
  \mathrm{Hom}_{\mathrm{gcAlg}}
  \bracket({
    \mathrm{CE}(\mathfrak{g}),
    \Omega^\bullet_{\mathrm{dR}}\bracket({
      X
    })
  })
  \mathrlap{\,.}
  \end{aligned}
\end{equation}

Taking this characterization as the definition, it immediately generalizes to the case where $\mathfrak{g}$ is a finite-type $L_\infty$-algebra. We say that dgc-homomorphisms out of its Chevalley-Eilenberg algebra \cref{LInfinityFaithfulInDGCAlgOp} into the de Rham dgc-algebra are the closed (flat) $\mathfrak{g}$-valued differential forms:
\begin{equation}
\label
{ClosedLInfinityValuedDifferentialForms}
  \Omega^1_{\mathrm{cl}}\bracket({
    X;
    \mathfrak{g}
  })
  :=
  \mathrm{Hom}_{\mathrm{dgcAlg}}
  \bracket({
    \mathrm{CE}(\mathfrak{g}),
    \Omega^\bullet_{\mathrm{dR}}(X)
  })
  \mathrlap{\,.}
\end{equation}
Dually, these are maps of \emph{$L_\infty$-algebroids} \cite[\S A.1]{SSS12} out of the \emph{tangent Lie algebroid} $T X$ of $X$ into the 
$L_\infty$-algebra $\mathfrak{g}$:
\begin{equation}
  \Omega^1_{\mathrm{cl}}\bracket({
    X; 
    \mathfrak{g}
  })
  \simeq
  \mathrm{Hom}_{
    L_\infty\mathrm{Algbrd}
  }\bracket({
    T X
    ,
    \mathfrak{g}
  })
\end{equation}
(viewed as fibered over $X$ and over the point $*$, respectively), since $\mathrm{CE}(TX) \simeq \Omega^\bullet_\mathrm{dR}(X)$. 

While this is well known in itself, we emphasize the role of such closed $L_\infty$-valued forms as flux densities that solve duality-symmetric higher Maxwell-type Bianchi identities (\cref{HigherMaxwellEquations,Characteristics}).

%%%%%%%%%%%%%
\paragraph
{$L_\infty$-Algebraic Dimensional Reduction/Oxidation Adjunction}
%%%%%%%%%%%%%

The following \cref{ExtCycAdjunction} is the description of \emph{double dimensional reduction} (and its inverse: ``oxidation'') for closed $L_\infty$-algebra valued differential forms, going back to \cite[\S 3]{FSS18-TD}.

First consider the following classes of examples of $L_\infty$-algebras:

\begin{example}
[$L_\infty$-cyclification {{\parencites[Def. 3.3]{FSS18-TD}[Def. 2.23]{GSS25-TD}}}]
\label[example]
{RationalCyclification}
 Given $\mathfrak{h} \,\in\, \mathrm{sL}_\infty\mathrm{Alg}^{\mathrm{ft}}_\FR$ with presentation \cref{GenericLInfinityPresentation},
its {\textit cyclification} $\mathrm{cyc}(\mathfrak{h}) \,\in\, \mathrm{sL}_\infty\mathrm{Alg}^{\mathrm{ft}}_\FR$ is given by
\begin{equation}
    \label{cykCE}
    \mathrm{CE}\big(
      \mathrm{cyc}(\mathfrak{h})
    \big)
    \;
    :=
    \;
    \mathrm{Free}_{\mathrm{dgcAlg}}
    \biggg(
      \adjustbox{
        raise=-4pt
      }{$
      \def\arraystretch{1.5}
      \begin{array}{l}
        (e^i)_{i \in I}, \;
        \grayoverbrace{
          \omega_2
        }{
          \mathclap{
          \mathrm{deg} 
          \,=\,
          (2,\mathrm{evn})
          }
        },
        \\
        (\,
        \grayunderbrace{
          \mathrm{s} e^i
        }{
         \mathclap{
           \scalebox{.7}{$
           \def\arraystretch{.9}
           \begin{array}{c}
           \mathrm{deg}
           \,=\,
           \\
           \mathrm{deg}(e^i) - (1,\mathrm{evn})
           \end{array}
           $}
          }
        }
        )_{i \in I}
      \end{array}
      $}
    \biggg)
    \Big/
    \left(
    \def\arraystretch{1}
    \def\arraycolsep{2pt}
    \begin{array}{lcl}
      \mathrm{d}\, 
      \omega_2
      &=&
      0
      \\
      \mathrm{d}\, e^i
      &=&
      \mathrm{d}_{\mathfrak{h}}
      \, e^i
      +
      \omega_2\, 
      \mathrm{s} e^i
      \\
      \mathrm{d}\, \mathrm{s} e^i
      &=&
      -
      \mathrm{s}(
        \mathrm{d}_{\mathfrak{h}}
        \,
        e^i
      )
    \end{array}
   \!\! \right)
    ,
  \end{equation}
  where in the last line on the right the shift $s$ is uniquely extended to a super-graded derivation of degree $(-1, \mathrm{evn})$.
\end{example}
More generally
\begin{example}
[$L_\infty$-$n$-toroidification {\parencites[Thm. 2.6]{SatiVoronov2025}[Def. 2.38]{GSS25-TD}}]
\label[example]
{RationalToroidification}
 For $n \in \mathbb{N}$ the $n$-\textit{toroidification} of $\mathfrak{h} \,\in\, \mathrm{sL}_\infty\mathrm{Alg}^{\mathrm{ft}}_\FR$ with presentation \cref{GenericLInfinityPresentation}
  is given by
  \begin{align}
    \label{CEAlgebraOfNToroidification}
  &
  \hspace{-4mm} 
   \mathrm{CE}\big(
      \mathrm{tor}^n(\mathfrak{h})
    \big)
    \;
    \simeq
    \\
    &
   \hspace{-4mm}  
    \FR_\dd
    \bigg[
      \smash{
      \big(
        \grayunderbrace{
        \dir{r}{\omega}_2
        }{
          \mathclap{
            \adjustbox{scale=.6}{$
            \def\arraystretch{1.1}
            \begin{array}{c}
              \mathrm{deg}
              \,=
              \\
              (2,\mathrm{evn})
            \end{array}
          $}
          }
        }
      \big)_{r =1}^n
      }
      ,\;
      \smash{
      \big(
        \grayunderbrace{
        \dir{i_r}{\mathrm{s}}
        \cdots
        \dir{i_2}{\mathrm{s}} \,
        \dir{\hspace{5pt}i_1}{\mathrm{s}} \, 
        e^i
        }{
          \mathclap{
          \adjustbox{
            scale=.6
          }{$
          \def\arraystretch{.9}
          \def\arraycolsep{0pt}
          \begin{array}{c}
            \mathrm{deg}
            \,=\,
            \\
            \mathrm{deg}(e^i)
            -
            (r, \mathrm{evn})
           \end{array}
         $}
         }
        }
      \big)_{
        \adjustbox{
          scale=.6
        }{$
        \def\arraystretch{1.1}
        \def\arraycolsep{0pt}
        \begin{array}{l}
          i \in I,
          \;\;\;\;
          0 \leq r \leq n,
          \\
          n \geq 
            i_r > \cdots > i_2 > i_1
        \geq 1
        \end{array}
      $}
      }
      }
    \,\bigg]
    \Big/
    \left(
    \def\arraystretch{1.3}
    \def\arraycolsep{2pt}
    \begin{array}{l}
      \mathrm{d}\,
      \dir{r}{\omega}_2
      =
      0
      \\[-2pt]
      \mathrm{d}
      \,
      e^i
      =
      \mathrm{d}_{\mathfrak{h}}
      e^i
      \,+\,
      \sum_{r=1}^n
      \,
      \dir{r}{\omega}_2
      \, 
      \dir{r}{\mathrm{s}}
      e^i
      \\
      \mathrm{d}\circ 
      \dir{r}{\mathrm{s}}
      =
      -
      \dir{r}{\mathrm{s}}
      \circ
      \mathrm{d}
      ,\;\;
      \dir{r}{\mathrm{s}}
        \circ 
      \dir{r'}{\mathrm{s}}
      =
      -
      \dir{r'}{\mathrm{s}}
      \circ
      \dir{r}{\mathrm{s}}
    \end{array}
    \right)
    .
    \nonumber 
  \end{align}

\smallskip 
\noindent
Here, on the right of \eqref{CEAlgebraOfNToroidification} the differential $\mathrm{d}$ is extended to the shifted generators by the rule that it graded-commutes with all the shift operators $\dir{r}{\mathrm{s}}$, which in turn are regarded as uniquely extended to graded derivations of degree $\mathrm{deg} = -1$, anti-commuting among each other. 
\end{example}

\begin{proposition}
[{\parencites[Thm. 3.8]{FSS18-TD}[Prop. 2.40]{GSS25-TD}}]
\label[proposition]
{ExtCycAdjunction}
Let $\mathfrak{g}$ be a (super-) $L_\infty$-algebra of finite type supplied with $n$ distinct 2-cocycles $\{\dir{k}{c}_1\}_{k=1,\cdots, n}$, which define its $n$-fold central extension \, $\dir{1\cdots n}{\widehat{\mathfrak{g}}}$\;\;:
\[
  \mathfrak{g} \,\in\, \mathrm{sLieAlg}_\infty
  \,,
  \hspace{.4cm}
  \dir{1}{c}_1
  ,\,
  \cdots,
  \dir{n}{c}_1
  \;\in\;
  \mathrm{CE}(\mathfrak{g})
  \,,
  \;\;\;
  \mathrm{d}\, 
  \dir{k}{c}_1
  \,=\,
  0
  \hspace{1cm}
  \begin{tikzcd}[row sep=small]
    \dir{1\cdots n}{\widehat{\mathfrak{g}}}
    \ar[
      d
    ]
    \\
    \mathfrak{g}
    \ar[
      r,
      "{
        \dir{1\cdots n}{c}_1
      }"
    ]
    &
    b\mathbb{R}^n 
    \, .
  \end{tikzcd}
\]
Then, for any target super $L_\infty$-algebra $\mathfrak{h}$ of finite type with $n$-toroidification $\mathrm{tor}^n(\mathfrak{h})$ \textup{(\cref{RationalToroidification})}, there is a canonical bijection of sets of $L_\infty$-homomorphisms :
\begin{equation}\label{nToroidificationHomIsomorphism}
 \begin{tikzcd}[column sep=30pt]
  \Big\{\;\,
      \dir{1\cdots n}{\widehat{g}}\;
      \ar[
        rr,
        "{ f }"
      ]
      &&
      \mathfrak{h}
  \Big\}
 \end{tikzcd}
 \begin{tikzcd}[column sep=45pt]
    \ar[
      r,
      shift left=5pt,
      "{
        \overset{
          \textup{\color{darkgreen}%
            \textbf{reduction}}
        }{
         \mathrm{rdc}_{\;
           \dir{1\cdots n}{c}_1
         }
        }
      }",
    ]
    \ar[
      r,
      phantom,
      "{ \sim }"
    ]
    \ar[
      r,
      shift right=4pt,
      <-,
      "{
        \underset{
          \smash{\textup{\color{darkgreen}%
            \textbf{oxidation}
          }}
        }{
        \mathrm{oxd}_{
          \;
          \dir{1\cdots n}{c}_1
        }
        }
      }"{swap}
    ]
    &[20pt]
    {}
  \end{tikzcd}
  \Bigg\{
    \begin{tikzcd}[
      row sep=-6pt, 
      column sep=20pt
    ]
      \mathfrak{g}
      \ar[
        rr,
        "{
          \widetilde f
        }"
      ]
      \ar[
        dr,
        "{
         \dir{1\cdots n}{c}_1
        }"{swap}
      ]
      &&
      \mathrm{tor}^n(\mathfrak{h})
      \ar[
        dl,
        "{
          \dir{1\cdots n}{\omega}_2
        }"
      ]
      \\
      &
      b \mathbb{R}^n
    \end{tikzcd}
 \Bigg\},
\end{equation}
  given by
\begin{equation}
  \label{nToroidificationHomBijection}
  \hspace{-4mm} 
  \left.
  \begin{tikzcd}[sep=0pt]
    \dir{1\cdots n}{\widehat{\mathfrak{g}}}
    \ar[
      rr,
      "{ f }"
    ]
    &&
    \mathfrak{h}
    \\
    \alpha^i_{\mathrm{bas}}
    + 
    \;
    \sum\limits_{\mathclap{\substack{1\leq r \leq n \\ 1 \leq i_1< \cdots < i_r \leq n}}}
    \;
    \dir{i_r}{e}\cdots \dir{i_1}{e} \cdot \alpha^i_{i_r \cdots i_1} 
    &\longmapsfrom&
    e^i
  \end{tikzcd}
  \right\} 
%  \hspace{.1cm}
  \leftrightsquigarrow
%  \hspace{.1cm}
  \left\{
  \begin{tikzcd}[
    row sep=-1pt, 
    column sep=0pt]
    \mathfrak{g}
    \ar[
      rr,
      "{ 
        \widetilde{f} 
      }"
    ]
    &&
    \mathrm{tor}^n(\mathfrak{h})
    \\
    \alpha^i_{\mathrm{bas}}
    &\longmapsfrom&
    e^i
    \\
    (-1)^{r(r+1)/2} \cdot \alpha^i_{i_r \cdots i_1}
    &\longmapsfrom&
    \dir{i_r}{\mathrm{s}}\cdots \dir{i_1}{\mathrm{s}}e^i
    \\
    \dir{r}{c}_1
    &\longmapsfrom&
    \dir{r}{\omega}_2
    \mathrlap{\,.}
  \end{tikzcd}
  \right.
\end{equation}
\end{proposition}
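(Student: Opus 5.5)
The plan is to build the bijection \cref{nToroidificationHomBijection} directly at the level of Chevalley--Eilenberg algebras, reading $L_\infty$-homomorphisms as dgc-algebra maps in the opposite direction \cref{LInfinityFaithfulInDGCAlgOp}.

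\textbf{Normal form.} The first step records that
$\mathrm{CE}\big(\dir{1\cdots n}{\widehat{\mathfrak{g}}}\big) \simeq \mathrm{CE}(\mathfrak{g})[\dir{1}{e},\dots,\dir{n}{e}]$, with odd degree-1 generators satisfying $\mathrm{d}\,\dir{k}{e} = \dir{k}{c}_1$. Hence every element has a unique expansion $\alpha_{\mathrm{bas}} + \sum \dir{i_r}{e}\cdots\dir{i_1}{e}\,\alpha_{i_r\cdots i_1}$ with coefficients in $\mathrm{CE}(\mathfrak{g})$. The coefficients are recovered by the contraction derivations $\iota_k$ dual to $\dir{k}{e}$, which kill $\mathrm{CE}(\mathfrak{g})$ and satisfy $\iota_k \dir{l}{e}=\delta_{kl}$. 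These contractions anticommute, and since the $\dir{k}{c}_1$ lie in $\mathrm{CE}(\mathfrak{g})$, each $\iota_k$ graded-commutes with $\mathrm{d}$ up to sign: $[\mathrm{d},\iota_k]=0$.

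\textbf{Reduction map.} A graded-algebra map $f^\ast$ is fixed by the images $f^\ast(e^i)$. Expanding these in the normal form, define $\widetilde f^\ast$ on generators exactly as in \cref{nToroidificationHomBijection}. In particular $\dir{i_r}{\mathrm{s}}\cdots\dir{i_1}{\mathrm{s}}e^i \mapsto (-1)^{r(r+1)/2}\iota_{i_r}\cdots\iota_{i_1}f^\ast(e^i)$, restricted to its $\mathrm{CE}(\mathfrak{g})$-part, and $\dir{r}{\omega}_2\mapsto\dir{r}{c}_1$. The oxidation map runs the same assignment backwards. On the level of generator data the two assignments are evidently mutually inverse, so the substance of the proof is the claim that $f^\ast$ is a chain map if and only if $\widetilde f^\ast$ is.

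\textbf{Chain-map equivalence.} Here one uses the identity $f^\ast = \sum_{r} \dir{i_r}{e}\cdots\dir{i_1}{e}\,(\iota_{i_r}\cdots\iota_{i_1} f^\ast)\vert_{\mathrm{bas}}$ and applies $\mathrm{d}$. The terms where $\mathrm{d}$ hits a $\dir{k}{e}$ produce $\dir{k}{c}_1$ times a once-fewer-contracted coefficient. On the toroidification side these are exactly the terms $\dir{k}{\omega}_2\,\dir{k}{\mathrm{s}}e^i$ in $\mathrm{d}e^i$ and their shifted versions $\mathrm{d}\circ\dir{r}{\mathrm{s}}=-\dir{r}{\mathrm{s}}\circ\mathrm{d}$. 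The terms where $\mathrm{d}$ hits coefficients match $\dir{r}{\mathrm{s}}$ applied to $\mathrm{d}_{\mathfrak{h}}e^i$. That uses the fact that $\iota_k$ is a derivation compatible with $f^\ast$ of products, mirroring the derivation property of $\dir{k}{\mathrm{s}}$. Comparing coefficients of each monomial $\dir{i_r}{e}\cdots\dir{i_1}{e}$ in $\mathrm{d}f^\ast(e^i)=f^\ast(\mathrm{d}_{\mathfrak{h}}e^i)$ then gives the equations $\mathrm{d}\widetilde f^\ast(\dir{i_r}{\mathrm{s}}\cdots e^i)=\widetilde f^\ast(\mathrm{d}\,\dir{i_r}{\mathrm{s}}\cdots e^i)$ one at a time, and conversely. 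Triangle commutativity over $b\mathbb{R}^n$ is the condition on $\dir{r}{\omega}_2$, compatible because $\mathrm{d}\,\dir{k}{c}_1=0$.

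\textbf{Main obstacle.} The hardest part is sign bookkeeping with super-gradings: confirming that $(-1)^{r(r+1)/2}$ is exactly the sign that converts ordered products of $\dir{k}{e}$ and iterated contractions into the anticommuting shifts. I would first settle $n=1$, which is the cyclification case of \cref{RationalCyclification}, and then induct on $n$, using $\mathrm{tor}^n(\mathfrak{h}) \simeq \mathrm{cyc}\big(\mathrm{tor}^{n-1}(\mathfrak{h})\big)$ relative to the last extension. This reduces the general sign to the $n=1$ sign composed with the reordering sign.
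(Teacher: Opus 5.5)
Your generator-level mechanism is right: the normal form in the $\dir{k}{e}$, the contractions $\iota_k$ with $[\mathrm{d},\iota_k]=0$, and the matching of the $\dir{k}{c}_1$-terms with the $\dir{k}{\omega}_2\,\dir{k}{\mathrm{s}}$-terms. However, the step you rely on to settle the signs fails as stated, because $\mathrm{tor}^n(\mathfrak{h})$ is \emph{not} isomorphic to $\mathrm{cyc}\big(\mathrm{tor}^{n-1}(\mathfrak{h})\big)$. Cyclification (\cref{RationalCyclification}) shifts \emph{every} generator, including the $\dir{k}{\omega}_2$ for $k<n$. So $\mathrm{cyc}\big(\mathrm{tor}^{n-1}(\mathfrak{h})\big)$ acquires closed degree-1 generators $\mathrm{s}\,\dir{k}{\omega}_2$ and the twisted differential $\mathrm{d}\,\dir{k}{\omega}_2=\dir{n}{\omega}_2\,\mathrm{s}\,\dir{k}{\omega}_2\neq 0$. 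Concretely, cyclifying the 5-generator algebra \cref{CEcycS4} yields 11 generators, one of them in degree 1, whereas \cref{torS4} has 10 and none in degree 1. What is true is that the $\mathrm{s}\,\dir{k}{\omega}_2$ generate a dg-ideal whose quotient is $\mathrm{CE}\big(\mathrm{tor}^n(\mathfrak{h})\big)$. So $\mathrm{tor}^n(\mathfrak{h})$ is only a sub-$L_\infty$-algebra of the iterated cyclification (equivalently, a cyclification relative to $b\mathbb{R}^{n-1}$ that leaves the $\dir{k}{\omega}_2$ unshifted, if that is what ``relative to the last extension'' was meant to say).

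Either way, your induction needs an extra argument. Reduce first along $\dir{1}{e},\dots,\dir{n-1}{e}$ and then along $\dir{n}{e}$. The generator $\mathrm{s}\,\dir{k}{\omega}_2$ is then sent, up to sign, to the $\dir{n}{e}$-coefficient of $\dir{k}{c}_1$. That coefficient vanishes because $\dir{k}{c}_1$ is pulled back from $\mathfrak{g}$, and only this makes the reduced map factor through $\mathrm{tor}^n(\mathfrak{h})$.

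Separately, your contraction formula carries the wrong sign for $r=2,3$. With $\iota_{i_1}$ applied first, the basic part of $\iota_{i_r}\cdots\iota_{i_1}f^\ast(e^i)$ is $(-1)^{r(r-1)/2}\alpha^i_{i_r\cdots i_1}$. So the prefactor has to be $(-1)^r$ rather than $(-1)^{r(r+1)/2}$, unless you compose the contractions in the opposite order.

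You can in fact avoid the induction, since your direct computation pins down the signs once two identities are recorded.
\begin{itemize}
\item Let $\pi$ set all $\dir{k}{e}$ to zero (an algebra map, not a chain map). Then $\pi\circ\mathrm{d}=\mathrm{d}\circ\pi+\sum_k\dir{k}{c}_1\,\pi\circ\iota_k$.
\item Let $F\colon\mathrm{CE}\big(\mathrm{tor}^n(\mathfrak{h})\big)\to\mathrm{CE}\big(\dir{1\cdots n}{\widehat{\mathfrak{g}}}\big)$ be the algebra map with $F(\dir{k}{\omega}_2)=\dir{k}{c}_1$ and $F(\dir{i_r}{\mathrm{s}}\cdots\dir{i_1}{\mathrm{s}}e^i)=\iota_{i_r}\cdots\iota_{i_1}f^\ast(e^i)$. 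It satisfies $F\circ\dir{k}{\mathrm{s}}=\iota_k\circ F$, as one checks on generators.
\end{itemize}
Write $\mathrm{d}=\mathrm{d}_0+W$ on $\mathrm{CE}\big(\mathrm{tor}^n(\mathfrak{h})\big)$, with $W:=\sum_k\dir{k}{\omega}_2\,\dir{k}{\mathrm{s}}$. Applying $\pi\,\iota_{i_r}\cdots\iota_{i_1}$ to $\mathrm{d}f^\ast(e^i)-f^\ast(\mathrm{d}_{\mathfrak{h}}e^i)$ shows that $f^\ast$ is a dg-map iff $\pi\circ F$ intertwines the differential of $\mathrm{CE}(\mathfrak{g})$ with $\mathrm{d}_0-W$.

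Let $\theta$ be the automorphism multiplying each $r$-fold shifted generator by $(-1)^r$; it commutes with $\mathrm{d}_0$ and anticommutes with $W$. Hence the correct reduction is $\widetilde f^\ast=\pi\circ F\circ\theta$. This gives $\widetilde f^\ast(\dir{i_r}{\mathrm{s}}\cdots\dir{i_1}{\mathrm{s}}e^i)=(-1)^{r}(-1)^{r(r-1)/2}\alpha^i_{i_r\cdots i_1}=(-1)^{r(r+1)/2}\alpha^i_{i_r\cdots i_1}$, exactly as in \cref{nToroidificationHomBijection}.

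Equivalently, verify once that $e^i\mapsto\exp\!\big(-\sum_k\dir{k}{e}\,\dir{k}{\mathrm{s}}\big)e^i$ is a dg-map $\mathrm{CE}(\mathfrak{h})\to\mathrm{CE}\big(\mathrm{tor}^n(\mathfrak{h})\big)[\dir{1}{e},\dots,\dir{n}{e}]$ with $\mathrm{d}\,\dir{k}{e}=\dir{k}{\omega}_2$; this is the counit of the adjunction. Then obtain $f^\ast$ by composing it with $\widetilde f^\ast$ extended by the identity on the $\dir{k}{e}$.
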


The promotion of this phenomenon of \emph{double dimensional reduction} from characteristic $L_\infty$-algebras to full-blown supergravity theories is the topic of \cref{Reductions}.

%%%%%%%%%%%%%%%
\subsection
{Homotopy}
\label
{Homotopy}
%%%%%%%%%%%%%%%

In its modern generality of $\infty$-category theory:
\begin{standout}
Homotopy theory is the mathematics of the higher gauge principle.
\end{standout}
Here objects are defined only up to gauge transformations (\emph{homotopy equivalences}) that themselves are only defined up to gauge-of-gauge transformations (\emph{homotopies}), which themselves are only defined up to higher-order gauge transformations (higher homotopies), and so on.

Introductions include \parencites{Richter2020}[\S 1]{FSS23-Char}.
For the present lectures, we gloss over actual homotopy theory and here just highlight a handful of the most fundamental notions mentioned in the main text.

%%%%%%%%%%%%%
\paragraph
{Topological Homotopy}
%%%%%%%%%%%%%

In its original form, homotopy theory is about deformations of continuous maps between topological spaces. An introduction aimed at physicists includes \parencites{Schwarz1994}{Pal2019}.

\begin{enumerate}
  \item
  A \emph{homotopy} $\eta$ between a parallel pair of maps $\inlinetikzcd{ X \ar[r, shift left=2pt, "{ f }"] \ar[r, shift right=2pt, "{ g }"']\& Y }$ between a pair of topological spaces, suggestively denoted
  \[
    \begin{tikzcd}
      X
      \ar[
        rr,
        bend left=30,
        "{ f }",
        "{ \  }"'{name=s}
      ]
      \ar[
        rr,
        bend right=30,
        "{ g }"',
        "{ \  }"{name=t}
      ]
      \ar[
        from=s,
        to=t,
        Rightarrow,
        "{ \eta }"
      ]
      &&
      Y
      \mathrlap{\,,}
    \end{tikzcd}
  \]
  is itself given by a continuous map out of the cylinder $[0,1] \times X $ over $X$ which continuously interpolates between $f$ and $g$:
  \begin{equation}
  \label
  {TopologicalLeftHomotopy}
    \begin{tikzcd}
      {[0,1]}
      \times
      X 
      \ar[r, "{ \eta }"]
      &
      Y
    \end{tikzcd}
    \;\;
    \text{s.t.}
    \;\;
    \left\{
    \begin{aligned}
      \eta(0,-) & = f(-)
      \\
      \eta(1,-) & = g(-)
      \mathrlap{\,.}
    \end{aligned}
    \right.
  \end{equation}

  \item

  For example, if $X = \ast$ is the singleton, then maps $\inlinetikzcd{\ast \ar[r] \& Y}$ are points of $Y$ and homotopies between them are continuous paths in $Y$. 

  Being connected by a continuous path is an equivalence relation on points of $Y$, and 
  the set of \emph{connected components} of $Y$ is the set of these equivalence classes:
  \begin{equation}
  \label{ConnectedComponents}
    \pi_0(Y)
    :=
    \{
      \inlinetikzcd{
        \ast \ar[r, dashed] \& Y
      }
    \}/\mathrm{homotopy}
    \mathrlap{\,.}
  \end{equation}

  \item 
  
  The set of all maps from $X$ to $Y$ is itself naturally a topological space, the \emph{mapping space}
  \begin{equation}
    \mathrm{Map}\bracket({
      X, Y
    })
    \defneq
    \{
      \begin{tikzcd}[sep=15pt]
        X
        \ar[r, dashed]
        &
        Y
      \end{tikzcd}
    \}
    \,,
  \end{equation}
  which is such that the continuous paths inside it are the homotopies \cref{TopologicalLeftHomotopy}. Hence its connected components \cref{ConnectedComponents} are exactly the \emph{homotopy classes} of maps 
  \begin{equation}
    \pi_0
    \, 
    \mathrm{Map}\bracket({
      X,Y
    })
    \simeq
    \{
      \begin{tikzcd}[sep=15pt]
        X
        \ar[r, dashed]
        &
        Y
      \end{tikzcd}
    \} 
    /
    \mathrm{homotopy}
    \mathrlap{\,.}
  \end{equation}
\end{enumerate}

%%%%%%%%%%%%%%

% endmatter

%%%%%%%%%%%%%%%%%%%%
\printbibliography
%%%%%%%%%%%%%%%%%%%%

\end{document}